\documentclass[%
 reprints,nofootinbib
]{revtex4-2}
\usepackage{iceberg}

\usepackage[caption=false]{subfig}
\graphicspath{
    {figures/}
}

\usepackage{quiver}
\usetikzlibrary{arrows,decorations.pathmorphing}
\usetikzlibrary{shapes.geometric,calc}

\def\llb{\llbracket}
\def\rrb{\rrbracket}
\def\bra{\langle}
\def\ket{\rangle}
\def\lr{\leftrightarrow}
\def\backslash{\symbol{92}}
\def\ve{\varepsilon}
\def\weight{\mathfrak{w}}
\def\qubit{\mathfrak{q}}
\def\bg{\bm{\mathrm{g}}}
\def\Weight{\mathfrak{W}}
\def\pauli{P}
\def\be{\bm{\mathrm{e}}}
\DeclareMathOperator{\XZ}{XZ}
\DeclareMathOperator{\sgn}{sign}
\DeclareMathOperator{\cone}{cone}

\DeclareMathAlphabet{\dutchcal}{U}{dutchcal}{m}{n}

\usepackage{accents}
\newcommand{\rh}[1]{\accentset{\rightharpoonup}{#1}}
\newcommand{\lh}[1]{\accentset{\leftharpoonup}{#1}}

\makeatletter
\renewenvironment{algorithm}[1][]{%
  \par\addvspace{\intextsep}%
  \refstepcounter{algorithm}%
  \hrule height.8pt depth0pt \kern2pt%
  \def\caption##1{{\raggedright\textbf{\ALG@name~\thealgorithm}\ ##1\par}%
    \addcontentsline{loa}{algorithm}{\protect\numberline{\thealgorithm}##1}%
    \kern2pt\hrule\kern2pt\nobreak}%
}{%
  \kern2pt\hrule\par\addvspace{\intextsep}%
}
\makeatother

\begin{document}

\title{Non-CSS Quantum Code Embedding}

\author{Andrew C. Yuan}
\author{Nouédyn Baspin}
\affiliation{%
 Iceberg Quantum\\
 Sydney, Australia
}%
\author{{\scriptsize\texttt{\{andrew, nouedyn\}@iceberg-quantum.com}}}

\begin{abstract}
    We generalize the unified framework in \cite{yuan2026unified} to accommodate arbitrary stabilizer codes (referred as non-CSS for emphasis).
    The generalization has immediate consequences in areas including logical measurement, quantum weight reduction and Euclidean embedding.
    For example, currently CSS (pure $X,Z$-type) logical measurements are well-understood based on the (height-1) cone, while non-CSS (e.g., $Y$-type) logical measurements are ad hoc (e.g., rely on local Clifford transform).
    Similarly, quantum weight reduction and optimal Euclidean embeddings only exist for CSS input codes.
    Here, we show that our generalized framework addresses this issue, and thus many constructions, including qLDPC surgery, Layer Codes and quantum weight reduction generalize to non-CSS codes in a relatively straightforward fashion.
    The previously mentioned examples are derived in detail for clarity.
\end{abstract}

\maketitle

{\centering
\setlength{\fboxsep}{8pt}%
\fbox{\begin{minipage}{0.791\columnwidth}
\small
\textbf{On the length of this paper.}
The main result is Theorem~\ref{thm:symplectic-embedding}, whose statement and proof together occupy roughly two pages (Section~\ref{sec:result}); with the symplectic preliminaries of Section~\ref{sec:prelim}, it is self-contained.
Everything after that derives its applications in detail -- non-CSS surgery, Layer codes and weight reduction in Sections~\ref{sec:surgery}, \ref{sec:layer} and \ref{sec:weight}, with fault tolerance and low-level implementation in the appendices.
These are largely independent of one another and may be read in any order.
The reader should not be deterred by the length of this paper.
\end{minipage}}\par}
\vspace{\baselineskip}

\tableofcontents

\section{Introduction}
\label{sec:intro}

Quantum code embedding (or iterative mapping cones) \cite{yuan2026unified} has emerged as a common structural theme across several seemingly distinct problems in quantum error correction (QEC).
In many settings, one begins with an input code and modifies it by adding physical qubits and stabilizer checks. The purpose of the modification may vary: to measure a logical operator fault-tolerantly \cite{cohen2022low,ide2025fault,williamson2026low,he2025extractors,swaroop2026universal,cowtan2026parallel,baspin2025fast,chang2026constant}, to embed a quantum low-density parity-check (qLDPC) code into Euclidean space \cite{portnoy2023local,williamson2023layer,lin2023geometrically,li2026almost,yuan20264d,balasubramanian2026passive} or to reduce stabilizer weight and qubit degree \cite{hastings2016weight,hastings2021quantum,hsieh2025simplified,yuan2026quantum}.
Yet in all cases the central requirement is the same.
The output code must retain the logical content of the input code in a controlled and natural way, so that the embedding changes the physical realization of the code without inadvertently changing the encoded information.

The unified framework for quantum code embedding provides an algebraic language for this requirement.
For Calderbank–Shor–Steane (CSS) codes, the separation between $X$- and $Z$- type stabilizers allows the code to be represented as a chain complex, with logical operators identified through the (co)homology.
Code modifications can then be organized as an (iterative) mapping-cone-type construction, which naturally preserves the logical degrees of freedom under general regularity conditions.

In the CSS setting, fault-tolerant measurement of an $X$- or $Z$-type logical operator is well understood through the conventional (height-1) cone \cite{ide2025fault}.
One attaches an ancilla complex to the data code so that the target logical operator becomes a stabilizer whose value can be inferred by syndrome extraction.
This gives a clean homological interpretation of code surgery and explains why logical measurement is closely related to quantum weight reduction: a high-weight logical operator is effectively promoted to a stabilizer check and then replaced by a low-weight, fault-tolerant measurement ancilla.

The same embedding perspective has also played a central role in the search for geometrically local quantum codes that saturate the Bravyi–Poulin–Terhal (BPT) bounds \cite{bravyi2009nogo, bravyi2010tradeoffs}. After the discovery of \textit{good} qLDPC codes \cite{panteleev2022asymptotically,leverrier2022quantum,dinur2023good}, whose encoding rates and relative distance are constants, this problem became one of \textit{optimal} embedding: how can an abstract qLDPC code be realized in Euclidean space while preserving its logical degrees of freedom and asymptotic parameters?
Important progress came from random embedding methods which established nearly -- up to polylogarithmic corrections -- optimal Euclidean embeddings \cite{portnoy2023local,li2026almost,lin2023geometrically}, and from Layer codes which gave an explicit, optimal and modular construction for CSS qLDPC codes using surface-code layers and local defects \cite{williamson2023layer,yuan20264d}.
From the viewpoint of the unified framework, these developments illustrate how logical preservation, geometric locality, and optimal parameter scaling can be controlled within a common homological language.

Beyond optimal code parameters, one may also prefer embedded quantum codes to possess stronger physical properties.
A particularly important property is (passive) self-correction, where encoded quantum information is protected thermally without active error correction.
The Layer codes, for example, were shown not to be fully self-correcting \cite{baspin2025free}, but only partially self-correcting \cite{williamson2023layer,gu2026layer}, despite their optimal parameter scaling in Euclidean space.
A recent breakthrough goes beyond this limitation by constructing a three-dimensional self-correcting quantum memory \cite{balasubramanian2026passive}, thereby resolving the longstanding problem of whether self-correction can exist in three spatial dimensions.
Conceptually, this construction can again be viewed as an embedding procedure: the code is iteratively modified while preserving a single logical qubit, maintaining locality in $\R^3$, and increasing the syndrome cost of both $X$- and $Z$-type errors across scales. Thus, the unified embedding framework provides a natural algebraic lens through which both BPT-saturating embeddings and self-correcting constructions can be understood.

Quantum weight reduction provides another important area where the embedding framework is useful.
Even when a code is qLDPC in the asymptotic sense, the constant check weights and qubit degrees may still be too large for practical implementation; weight reduction aims to sparsify the code while preserving its logical subspace and distance.
A central requirement is that this sparsification incurs low ancilla overhead.
Hastings pioneered quantum weight reduction for CSS codes, transforming codes with check weight $\weight$ and qubit degree $\qubit$ into constant weight and degree codes, though with polynomial overhead $O(\poly (\weight,\qubit))$ and, in early versions, a loss in distance \cite{hastings2016weight,hastings2021quantum}.
Subsequent developments improved this picture: Hsieh \emph{et al.} \cite{hsieh2025simplified} achieved substantially lower overhead $O(\weight \qubit \log (\weight \qubit))$ while preserving distance up to a constant factor, while Yuan \emph{et al.} \cite{yuan2026quantum} provides a simple and explicit construction with (state of the art) check weight and total qubit degree $=6$, at the cost of larger asymptotic overhead $O(\weight^4 \qubit^4)$.
These constructions again fit naturally into the embedding viewpoint: high-weight checks are replaced by auxiliary low-weight structures, and the main algebraic task is to guarantee that the original logical information is preserved.

\begin{table*}
    \centering
    \begin{tabular}{ |l||l|l|l| }
    \hline
    \textbf{Application} & \textbf{CSS (Prior)} & \textbf{Non-CSS (Prior)} & \textbf{Non-CSS (This work)} \\
    \hline
    Logical Measurement & Height-1 cone \cite{ide2025fault} & Ad hoc; Clifford to CSS & Section \ref{sec:surgery} \\
    & & representative \cite{cowtan2026parallel,yuan2026parsimonious} & \\
    \hline
    Euclidean Embedding & Layer codes \cite{williamson2023layer,yuan20264d}; & None known & Section \ref{sec:layer} \\
    & Random embeddings \cite{portnoy2023local,lin2023geometrically,li2026almost} & & \\
    \hline
    Weight Reduction & Hastings weight reduction \cite{hastings2016weight,hastings2021quantum}; & None known & Section \ref{sec:weight} \\
    & Low-overhead variants \cite{hsieh2025simplified,yuan2026quantum} & & \\
    \hline
    \end{tabular}
    \caption{The CSS-only restriction, organized by application.
    The generalization presented here resolves the qualitative restriction in all three rows simultaneously.
    It does not, by itself, improve upon the best CSS-specific parameters, which are compared in Table \ref{tab:applications}.}
    \label{tab:css-restriction}
\end{table*}

\subsection{Existing Limitations}
\label{sec:existing-limitations}
Despite its breadth, the framework -- and most constructions it unifies -- is restricted to CSS codes.
For logical measurement there is at least a stopgap: a single-qubit Clifford reduces a non-CSS logical to a CSS representative, after which CSS surgery applies \cite{cowtan2026parallel,yuan2026parsimonious}.
This is a workaround due to the limitation of the framework, not an account of non-CSS logical measurement in its own terms.
Nor does it survive the passage from a single operator to a whole code: there need be no single-qubit Clifford making every generator simultaneously pure $X$- or $Z$-type, since generators overlapping on a qubit can demand incompatible rotations there.
The five-qubit code $\llb 5,1,3\rrb$ is the standard example: no single-qubit Clifford maps it to a CSS code.
In particular, the stopgap does not transfer to Euclidean embedding and weight reduction, and thus non-CSS analogues of the CSS constructions are simply unknown.
Table \ref{tab:css-restriction} summarizes the situation across the three applications.

The limitation also bears on Majorana fermion codes, whose structure shares the symplectic character of non-CSS codes (see Example \ref{ex:majorana}).
A Majorana code on $2n$ operators $c_1,\dots,c_{2n}$ is specified by even-weight products $\prod_{i\in S}c_i$ generating an isotropic subspace of a quadratic form over $\F_2$ -- the same data as the commutation structure of a Pauli group, with the Majorana operators playing the role of the $2n$ symplectic coordinates rather than $n$ qubit pairs. No distinguished $X$/$Z$ split is available\footnote{A Jordan--Wigner transformation restores an $X$/$Z$-like split, but only after choosing an ordering of the $2n$ modes, so the split depends on that choice rather than being canonical; its locality also degrades once the modes are not laid out along a 1D chain, which is why it is used mainly in one dimension, with auxiliary modes needed to restore locality in higher dimensions -- see \cite{verstraete2005mapping}.}, so a Majorana code is closer to a generic non-CSS code than to a CSS one.
Bravyi \emph{et al.} \cite{bravyi2010majorana} maps any Majorana code locally to a CSS code, but at the cost of doubling the logical degrees of freedom, so the encoded information is not preserved exactly.
Whether logical measurement, embedding, and weight reduction extend without this doubling is a question a genuinely non-CSS framework is positioned to address, though we do not attempt it here.

\subsection{Result}

In this manuscript, we generalize the CSS framework \cite{yuan2026unified} to general stabilizer codes, which we refer as non-CSS for emphasis.
Specifically, Theorem \ref{thm:symplectic-embedding} formalizes the statement in terms of symplectic complexes, which describes both non-CSS codes (Example \ref{ex:nonCSS}) and Majorana fermion codes (Example \ref{ex:majorana}).

Similar to \cite{yuan2026unified}, the result unifies multiple constructions across distinct problems in QEC.
In particular, while the conventional (height-1) cone was sufficient to describe CSS logical measurement \cite{ide2025fault}, we show in Section \ref{sec:surgery} that for non-CSS logical measurement, the \textit{symplectic} height-2 cone presents as a more natural language.
Moreover, in contrast to \cite{yuan2026unified}, this manuscript is not merely organizational.
In fact, it permits the immediate generalization of optimal embeddings and weight reduction to non-CSS codes, which we describe utilizing the example of conventional CSS Layer codes in Section \ref{sec:layer} and \ref{sec:weight}, respectively.
These examples are tabulated in Table \ref{tab:applications}.
Given the generality of the symplectic formalism, we expect that many other CSS constructions can similarly be extended to non-CSS stabilizer codes, and potentially to Majorana fermion codes.

\begin{table*}
    \centering
    \begin{tabular}{ |l||l|l| }
    \hline
    \textbf{Subfield} & \textbf{CSS (Existing)} & \textbf{Non-CSS (This work)} \\
    \hline
    Single Measurement & $O(W\log W)$ \cite{yuan2026parsimonious}  &  $O(W\log W)$\\
    \hline
    3D Layer Code & Each edge hosts $\le 3$ qubits & Each edge hosts $\le 3$ qubits \\
    & $O(\qubit^2)n$ qubits & $O(\weight \qubit^3)n$ qubits \\
    &Reduced weight, total qubit degree $=6$ \cite{williamson2023layer} & Reduced weight $=9$, total qubit degree $=8$\\
    \hline
    Weight Reduction & Overhead $O(\weight^4 \qubit^4)n$ qubits & Overhead $O(\weight^4 \qubit^4)n$ qubits\\
    &Reduced weight, total qubit degree $=6$ \cite{yuan2026quantum} & Reduced weight $=9$, total qubit degree $=8$ \\
    \hline
    \end{tabular}
    \caption{Comparison between constructions for CSS and their generalizations to non-CSS codes. For single logical measurement, $W$ is the weight of the targeted logical, so that the entries are the ancilla size.
    For 3D Layer codes and weight reduction, $n,\weight,\qubit$ is the number of qubits, maximum check weight, and total qubit degree of the input code, respectively.
    }
    \label{tab:applications}
\end{table*}

We emphasize that the main theorem is self-contained in Section~\ref{sec:result}, assuming only familiarity with the preliminaries of symplectic forms reviewed in Section~\ref{sec:prelim}.
The applications are then developed separately: explicit examples of non-CSS qLDPC surgery, Layer codes, and weight reduction are presented in Sections~\ref{sec:surgery}, \ref{sec:layer}, and \ref{sec:weight}, respectively.
As in Ref.~\cite{yuan2026unified}, these application sections are largely independent and may be read in any order.

\subsection{Outlook and Open Questions}

One open question is whether the symplectic framework enables parallel non-CSS logical measurement natively, at low spatial overhead: given a set $\cL^{\star}$ of commuting non-CSS logicals, possibly densely overlapping at qubits, can they be measured in parallel without additional $Y$-type ancillas?
Cowtan \emph{et al.} \cite{cowtan2026parallel} address this mostly for the case where $\cL^{\star}$ is \textit{quasi-CSS}:

\begin{definition}[Quasi-CSS]
    \label{def:quasi-css}
    A commuting set of logicals $\cL^{\star}$ is \textit{quasi-CSS} if there is a single-qubit Clifford $U=\bigotimes_i u_i$ such that
    \begin{equation}
        \label{eq:quasi-css}
        U\,\ell\,U^{\dagger} \text{ is pure $X$- or $Z$-type for every } \ell\in\cL^{\star}.
    \end{equation}
\end{definition}

A single non-CSS logical is always quasi-CSS on its own. The content of the definition is that \emph{one} Clifford works for the whole set simultaneously, which fails as soon as two logicals act as $X$ and $Y$ on the same qubit.
Whether the more complicated (non-quasi-CSS) case can be addressed remains open.

This also bears on constant-time (fast) surgery \cite{baspin2025fast,chang2026constant}, which measures multiple commuting logicals in $O(1)$ rounds instead of the usual $O(d)$ \cite{gottesman2013fault}.
Although the framework for fast surgery \cite{baspin2025fast} generalizes to non-CSS codes using the symplectic framework (see Appendix \ref{sec:fast-surgery}), the actual ancilla \cite{baspin2025fast,chang2026constant} is expected to only generalize for quasi-CSS $\cL^{\star}$; for arbitrarily commuting sets, the analogue ancilla construction is not immediate and needs further study.

We consider this the framework's most consequential open question. It is not hypothetical: \cite{cowtan2026parallel}, \cite{baspin2025fast}, and \cite{chang2026constant} all appeared within the past year, all are organized around measuring many logicals in parallel or in constant time, and the quasi-CSS restriction is a bottleneck they hit rather than one we anticipate on their behalf. There is also a precedent for what a unifying framework can and cannot do. \cite{yuan2026unified} did not itself derive the 4D and 5D Layer Codes of \cite{yuan20264d} -- those needed a genuinely new combinatorial idea (color routing), absent from the framework entirely; what \cite{yuan2026unified} supplied was the language in which the harder problem could be posed and checked.
If the symplectic cone plays that role for parallel non-CSS measurement, that would be the clearest evidence that generalizing past CSS matters.

A second, more practical question concerns Majorana fermion codes.
The CSS literature's focus is partly a matter of available tools, i.e., chain complexes making homological methods directly applicable; the applications here show those tools survive the passage to the symplectic setting, and Majorana codes carry the same structure (Section~\ref{sec:existing-limitations}).
Admitting no CSS description at all without the logical doubling of \cite{bravyi2010majorana}, they are arguably a more natural target for the symplectic formalism than qubit non-CSS codes are.
This raises the broader question of whether symplectic methods can make Majorana codes more useful in practical fault-tolerant architectures.

\subsection{Acknowledgment: Use of AI}
\label{sec:ai-statement}

The results of this paper -- Theorem \ref{thm:symplectic-embedding}, its proof, and the constructions and analyses derived from it -- were obtained by the authors without the use of artificial intelligence.
AI tools were used for writing and polishing only: improving the exposition, and checking the manuscript for notational consistency and typographical errors.
All such changes were reviewed by the authors, who take full responsibility for the content of this paper.
\section{Main Result}
\label{sec:result}
As demonstrated in Section \ref{sec:symplectic-complex} (and, specifically, Example \ref{ex:nonCSS}), every non-CSS code can be written in the form of a symplectic complex with basis, and thus we formulate the code embedding theorem as follows.
\begin{theorem}[Symplectic Embedding]
    \label{thm:symplectic-embedding}
    Let
    \begin{equation}
        D=S^{D} \xrightarrow{\sigma^D} P^D\xrightarrow{\hat{\sigma}^D} \bar{S}^{D}
    \end{equation}
    be a symplectic complex with basis, symplectic map $\lambda^{D}$ so that $P^D=Q_X^D\oplus Q_Z^D$, and isometry (Definition \ref{def:isometry}) $\phi^D:S^D \to \bar{S}^D$.
    Let ($A^\top$) $A$ be a (co)complex with basis, and with (co)differential $\partial^A$ ($\delta^A$) given by
    \begin{align}
        A^{\top} &=X^A\to Q^A_X\to \bar{Z}^{A} \\
        A        &=Z^A \to Q_Z^A \to \bar{X}^{A}
    \end{align}
    where $\bar{X}^A,\bar{Z}^A$ are (isometric) copies of $X^A,Z^A$, respectively, and $Q_X^A,Q_Z^A$ are (isometric) copies of, say, $Q^A$.

    Consider the height-2 cone $C=S\to P\to \bar{S}$ \cite{yuan2026unified} given by the following diagram
    \begin{equation}
    \begin{tikzpicture}[baseline]
    \matrix(a)[matrix of math nodes, nodes in empty cells, nodes={minimum size=25pt},
    row sep=2em, column sep=2em,
    text height=1.25ex, text depth=0.25ex]
    {&& X^{A}  & Q^{A}_X & \bar{Z}^{A}\\
    & S^{D}  & P^{D}  & \bar{S}^{D} &\\
    Z^{A} & Q^{A}_Z & \bar{X}^{A} &&\\};
    \path[->,font=\scriptsize]
    (a-1-3) edge node[above]{$\delta^A$}  (a-1-4)
    (a-1-4) edge node[above]{$\delta^A$}  (a-1-5)
    (a-2-2) edge node[above]{$\sigma^{D}$}  (a-2-3)
    (a-2-3) edge node[above]{$\hat{\sigma}^{D}$}  (a-2-4)
    (a-3-1) edge node[above]{$\partial^A$}  (a-3-2)
    (a-3-2) edge node[above]{$\partial^A$}  (a-3-3);
    \path[->,font=\scriptsize]
    (a-1-3) edge node[right]{$g$}  (a-2-3)
    (a-1-4) edge node[right]{$h$}  (a-2-4)
    (a-2-2) edge node[right]{$\hat{h}=h^{\top}\phi^D$}  (a-3-2)
    (a-2-3) edge node[right]{$\hat{g}=g^{\top}\lambda^D$}  (a-3-3);
    \path[->,dashed,black!30!green,font=\scriptsize]
    (a-1-3) edge[bend right=80] node[left]{$p$} (a-3-2)
    (a-1-4) edge[bend left=80] node[right]{$p^{\top}$} (a-3-3);
    \end{tikzpicture}
    \end{equation}
    where gluing maps $(g,h)$ is a chain map\footnote{And thus, by definition,  $(\hat{h},\hat{g})$ is also a chain map}, and $g,\hat{g}$ are compatible with defect map\footnote{Equivalently, $(p,p^{\top})$ is a chain homotopy for $\hat{g}g$ and the zero map} $p$, i.e., $\hat{g}g=\partial^A p+ p^{\top} \delta^A$.
    Equip the height-2 cone $C$ with (mixed) symplectic map
    \begin{equation}
        \lambda =
        \begin{pmatrix}
            0 & 0 & I\\
            0 & \lambda^{D} & 0\\
            I & 0 & 0
        \end{pmatrix}
    \end{equation}
    so that $C$ has symplectic basis $P=Q_X\oplus Q_Z$ where
    \begin{equation}
        \label{eq:symplectic-decomposition}
        Q_X=Q^{A}_X\oplus Q_X^D, \quad Q_Z=Q^{A}_Z\oplus Q_Z^D
    \end{equation}
    and with isometry $\phi:S\to \bar{S}$ via
    \begin{equation}
        \phi=
        \begin{pmatrix}
        0&0&I\\
        0&\phi^D&0\\
        I&0&0
        \end{pmatrix}
    \end{equation}
    Equip the embedded column complex $C^{\bg}$ below
    \begin{equation}
        C^{\bg}= H^0(A) \xrightarrow{[g]} H_1(D) \xrightarrow{[\hat{g}]} H_{0}(A)
    \end{equation}
    with the \textit{induced}\footnote{Lemma \ref{lem:induced-symplectic}} symplectic form $[\Lambda^{D}]$ and \textit{canonically identify}\footnote{Lemma \ref{lem:cohomology-iso}} $H_0(A)=H^0(A)^*$.
    Then $C,C^{\bg}$ are both symplectic complexes.
    In particular, as in \cite{yuan2026unified}, if $H_1(A)=0$, then $H_1(C)\cong H_1(C^{\bg})$ with the standard isomorphism.
\end{theorem}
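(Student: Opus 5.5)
The plan is to verify the claims in the order they are stated: first that $C$ is a chain complex, then that it is symplectic for the given $(\lambda,\phi)$, then that $C^{\bg}$ is a symplectic complex. The statement $H_1(C)\cong H_1(C^{\bg})$ will then be imported from the height-2 cone theorem of \cite{yuan2026unified}. The only real work is blockwise bookkeeping, so I first fix the block matrices of the two differentials.

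Order the spaces as $S=X^A\oplus S^D\oplus Z^A$, $P=Q_X^A\oplus P^D\oplus Q_Z^A$ and $\bar S=\bar Z^A\oplus\bar S^D\oplus\bar X^A$. Reading off the diagram, the differentials are
\begin{equation*}
\sigma=\begin{pmatrix}\delta^A&0&0\\ g&\sigma^D&0\\ p&\hat h&\partial^A\end{pmatrix},\qquad
\hat\sigma=\begin{pmatrix}\delta^A&0&0\\ h&\hat\sigma^D&0\\ p^{\top}&\hat g&\partial^A\end{pmatrix}.
\end{equation*}
The product $\hat\sigma\sigma$ then has the following entries:
\begin{itemize}
\item On the diagonal, $\delta^A\delta^A=0$, $\hat\sigma^D\sigma^D=0$ and $\partial^A\partial^A=0$.
\item In the subdiagonal slots, $h\delta^A+\hat\sigma^D g=0$ and $\hat g\sigma^D+\partial^A\hat h=0$. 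These are exactly the chain-map conditions on $(g,h)$ and, as noted in the footnote, on $(\hat h,\hat g)$; we work over $\F_2$.
\item In the corner $X^A\to\bar X^A$, the entry is $p^{\top}\delta^A+\hat g g+\partial^A p$, which vanishes by the defect compatibility $\hat g g=\partial^A p+p^{\top}\delta^A$.
\end{itemize}
So $C$ is a complex.

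Next comes the symplectic structure. The form $\lambda$ is alternating and nondegenerate. Indeed, $\lambda^D$ is, and the off-diagonal identity blocks pair $Q_X^A$ with $Q_Z^A$ with zero diagonal, which is the standard form; this also gives the stated symplectic basis \eqref{eq:symplectic-decomposition}. Likewise $\phi$ is an isometry in the sense of Definition~\ref{def:isometry}: it is block-antidiagonal with identity blocks on the isometric copies $X^A\cong\bar X^A$ and $Z^A\cong\bar Z^A$, and equals $\phi^D$ in the middle.

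The defining compatibility of a symplectic complex relates $\hat\sigma$ to $\sigma^{\top}$ through $\lambda$ and $\phi$; I will use it in the form $\hat\sigma=\phi\,\sigma^{\top}\lambda$, transporting $\phi$ to whichever side the definition in Section~\ref{sec:symplectic-complex} requires. I compute $\phi\sigma^{\top}\lambda$ blockwise:
\begin{itemize}
\item The $A$-blocks use $\delta^A=(\partial^A)^{\top}$ on the identified copies, and the antidiagonal $I$'s in $\lambda$ and $\phi$ swap $\delta^A$ and $\partial^A$ back into the correct positions.
\item The middle block is $\phi^D(\sigma^D)^{\top}\lambda^D=\hat\sigma^D$, by the hypothesis that $D$ is symplectic.
\item The mixed blocks produce $(\hat h)^{\top}$ transported by $\phi^D$, which equals $h$ because $\hat h=h^{\top}\phi^D$ and $\phi^D$ is an isometry. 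They also produce $g^{\top}\lambda^D=\hat g$, which is the definition of $\hat g$.
\item The corner produces $p^{\top}$.
\end{itemize}
This recovers $\hat\sigma$ exactly. I expect this step to be the main obstacle. The difficulty is not conceptual but consists in keeping track of which identifications ($\bar X^A\cong X^A$, $Q_X^A\cong Q_Z^A$, $\phi^D$ versus $(\phi^D)^{-\top}$) are applied where, so that the transposes land in the right slots. I would write every block map explicitly as a matrix in the given bases to avoid sign-free but index-sensitive mistakes.

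For $C^{\bg}$, Lemma~\ref{lem:induced-symplectic} gives the nondegenerate induced form $[\Lambda^D]$ on $H_1(D)$. The maps $[g]$ and $[\hat g]$ are well defined because $(g,h)$ and $(\hat h,\hat g)$ are chain maps. Their composite is $[\hat g][g]=[\hat g g]=[\partial^A p+p^{\top}\delta^A]=0$: on a cocycle $x\in X^A$ the term $p^{\top}\delta^A x$ vanishes, and $\partial^A p x$ is a boundary. Under the canonical pairing $H_0(A)=H^0(A)^*$ of Lemma~\ref{lem:cohomology-iso}, the relation $\hat g=g^{\top}\lambda^D$ descends to $[\hat g]=[g]^{\top}[\Lambda^D]$, since both sides are computed on representatives and the pairing is induced from the basis pairing. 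This is precisely the symplectic-complex condition for $C^{\bg}$.

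Finally, the underlying chain complex of $C$ is the height-2 cone of \cite{yuan2026unified}, and $C^{\bg}$ is its embedded column complex. Hence, when $H_1(A)=0$, the standard isomorphism $H_1(C)\cong H_1(C^{\bg})$ is inherited verbatim, and it is compatible with the forms by construction.
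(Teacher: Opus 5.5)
Your proposal is correct and follows essentially the paper's proof. It uses the same block matrices for $\sigma$ and $\hat\sigma$, the same blockwise check $\hat\sigma=\phi\sigma^{\top}\lambda$ via $\hat h=h^{\top}\phi^D$ and $\hat g=g^{\top}\lambda^D$, the same descent of $\hat g=g^{\top}\lambda^D$ to $[\Lambda^D]$ under the canonical pairing $H_0(A)=H^0(A)^*$, and imports the homology isomorphism from \cite{yuan2026unified}; the only addition is that you explicitly verify $\hat\sigma\sigma=0$ and $[\hat g][g]=0$ from the chain-map and defect-compatibility conditions, which the paper leaves implicit in the height-2 cone construction.
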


\begin{remark}[Basis]
    Note that Theorem \ref{thm:symplectic-embedding} can be formulated in a canonical (basis independent) manner (similar to Theorem 1 of \cite{yuan2026unified}). However, for applications in non-CSS (especially the code distance), a corresponding basis is always chosen, and thus we've chosen the slightly more concrete formulation with basis.
    In particular, the distance $d(C)$ of $C$ can be lower bounded in the conventional manner using the Cleaning Lemma in \cite{yuan2026unified}.
\end{remark}

\begin{remark}[Canonical Necessity]
    Note that the the embedded column $C^{\bg}$ technically does not possess a basis even if the construction of $C$ is equipped with a basis, and thus it's also necessary to understand the terminology in a canonical manner.
    The relation between the canonical formalism and the basis dependent formalism is elaborated in the Prelimary Section \ref{sec:prelim}.
    However, for most practical applications, the embedded code corresponds to an input code with chosen basis, and thus the convention is to choose a basis for the embedded code which matches that of the input code.
\end{remark}

\begin{remark}[Ancilla Dressing]
    \label{rem:ancilla-dressing}
    Note that the ancilla $A$ is CSS on its own: before gluing, $X^A$ acts purely on $Q_X^A$ (via $\delta^A$) and $Z^A$ acts purely on $Q_Z^A$ (via $\partial^A$), with no coupling between the two sectors.
    This changes once $C$ is assembled.
    The map $g,p$ indicates that a generator $x\in X^A$ -- an ancilla check that was pure $X$-type in isolation -- has a contribution $gx$ to the data code $D$, and, whenever $p\neq 0$, an additional $Z$-type action $px$ on the ancilla's \textit{own} qubits $Q_Z^A$.
    In other words, the ancilla itself generically becomes non-CSS once attached, its former $X$-checks dressed by $Z$-type support from the defect map $p$.
    See Fig. \ref{fig:surface-ancilla} for an example.
\end{remark}

\begin{remark}[Homological Perturbation Lemma]
    As noted in Ref. \cite{balasubramanian2026passive}, the CSS framework \cite{yuan2026unified} can be understood via the homological perturbation lemma instead of the original iterative mapping cone picture.
    It's expected that a similar observation applies to the non-CSS (symplectic) framework.
\end{remark}
\begin{proof}[Proof of Theorem \ref{thm:symplectic-embedding}]
    Most of the proof follows similarly as in \cite{yuan2026unified}. The key observation and improvement is the construction of the height-2 cone in a manner which preserves the \textit{symplectic duality.}
    Hence, it's sufficient to prove that the height-2 cone $C$ and the embedded complex complex $C^{\bg}$ are symplectic complexes.

    First note that $\Lambda$ is indeed a symplectic form, since $\lambda$ is non-degenerate (bijective).
    Further note that the generating map $\sigma:S\to P$ of $C$ thus has block-matrix form
    \begin{equation}
        \sigma =
        \begin{pmatrix}
            \delta^A & 0 & 0\\
            g & \sigma^{D} & 0\\
            p & \hat{h} & \partial^A
        \end{pmatrix}
    \end{equation}
    Similarly, the map $\ve :P\to \bar{S}$ of $C$ has block-matrix form
    \begin{equation}
        \ve =
        \begin{pmatrix}
            \delta^A & 0 & 0\\
            h & \hat{\sigma}^{D} & 0\\
            p^{\top} & \hat{g} & \partial^A
        \end{pmatrix}
    \end{equation}
    Utilizing the isometry $\phi:S\to \bar{S}$, it's straightforward to check that $\ve = \phi \sigma^{\top} \lambda \equiv \hat{\sigma}$ and thus $C$ is indeed a symplectic complex (with basis).

    Now let us consider the embedded column complex.
    By Lemma \ref{lem:cohomology-iso} and \ref{lem:induced-symplectic}, we see that if $[\ell^{D}]\in H_1(D)$ and $[a^{\top}]\in H^0(A)=\ker \delta^A$, then the canonical isomorphism $\psi:H_{0}(A)\to H^0(A)^*$ implies that
    \begin{align}
        (\psi[\hat{g}][\ell^{D}]) [a^{\top}]&=(\psi [\hat{g}\ell^{D}])[a^{\top}]\\
        &=a^{\top}(\hat{g}\ell^{D})\\
        &=\Lambda^{D}(\ell^{D},g a^{\top})\\
        &=[\Lambda^D]([\ell^D],[g][a^{\top}])
    \end{align}
    where we utilized the fact that $[g]$ maps $H^0(A)$ into $H_1(D)$.
    Hence, $C^{\bg}$ is also a symplectic complex.
\end{proof}
\section{Preliminaries}
\label{sec:prelim}

\subsection{General}
\begin{definition}
    We write $[n]=\{1,...,n\}$ and $[n)=\{1,...,n-1\}$.
    We also write $[a,b]$ to only denote the integer values within the interval and similarly for $[a,b),(a,b],(a,b)$.
\end{definition}


In the following sections, we will often deal with vector spaces $V,W$ equipped with some basis, such that there is a one-to-one correspondence between basis elements, and thus warrants the following (slightly different from convention) definition.
\begin{definition}[Isometry]
    \label{def:isometry}
    Let $V,W$ be spaces equipped with basis and $\phi:V\to W$ be an isomorphism that maps the the basis of $V,W$ in a one-to-one correspondence, then we refer to $\phi$ as an \textbf{isometry}, and $V,W$ are \textbf{isometric}, denoted by $V\cong W$.
\end{definition}

\subsection{Complexes}

\subsubsection{Canonical Formalism}
\begin{definition}
\label{def:chain-complex}
A \textbf{(chain) complex} $C$ (of \textbf{length} $n$) is a sequence of (finite-dimensional) $\F_2$-vector spaces $C_{i}$ (whose elements are referred as $i$-\textbf{chains}) together with linear $\partial_{i}:C_{i} \to C_{i-1}$, called the \textbf{differentials} of $C$, such that $\partial_{i}\partial_{i+1}=0$ where the subscripts are often omitted. We write
\begin{equation}
    C = C_n\cdots \to  C_{i} \xrightarrow{\partial_{i}} C_{i-1} \to \cdots  C_0
\end{equation}
Note that $\im \partial_{i+1} \subseteq \ker \partial_i$ for all $i$, and thus the \textbf{$i$-homology} of $C$ is defined as
\begin{equation}
    H_i(C)\equiv  \ker \partial_i/\im \partial_{i+1}
\end{equation}
Denote the equivalences class of $i$-chain $\ell$ as $[\ell]\in H_i(C)$ -- conversely, we may also write $[\ell] \in H_i(C)$ without specifying the representation $\ell$.
\end{definition}

\begin{definition}[Cocomplex]
Let $C$ denote a complex.
Then the \textbf{cocomplex} $C^{\top}$ is defined as
\begin{equation}
    C^\top = C_n^* \cdots \leftarrow C_{i}^* \xleftarrow{\delta_{i}} C_{i-1}^* \leftarrow \cdots C_0^*
\end{equation}
where $C_i^*$ is the dual space of $C_i$, whose elements are referred as $i$-\textbf{cochains}, and \textbf{codifferential} $\delta_i$ is defined as
\begin{equation}
    (\delta_i c_{i-1}^{\top})(c_i)= c_{i-1}^{\top} (\partial_i c_i)
\end{equation}
where $c_i$ is a chain of $C$ and $c_{i-1}^{\top}$ is a cochain.
The \textbf{$i$-cohomology} of $C$ is $H^i(C) = \ker \delta_{i+1}/\im \delta_i$
\end{definition}

\begin{lemma}[(Co)homology Isomorphism]
    \label{lem:cohomology-iso}
    Let $C$ be a complex. Then $\delta \delta =0$ and there exists a canonical isomorphism $H_i(C)\to H^i(C)^*$.
\end{lemma}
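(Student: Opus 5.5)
The statement to prove is Lemma~\ref{lem:cohomology-iso}. It has two parts: the identity $\delta\delta=0$, and a canonical isomorphism $H_i(C)\to H^i(C)^*$.

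\textbf{Step 1: $\delta\delta=0$.} This is a one-line dualization. For a cochain $c^{\top}_{i-1}$ and a chain $c_{i+1}$, the definition of the codifferential applied twice gives
\begin{equation}
(\delta_{i+1}\delta_i c^{\top}_{i-1})(c_{i+1}) = (\delta_i c^{\top}_{i-1})(\partial_{i+1}c_{i+1}) = c^{\top}_{i-1}(\partial_i\partial_{i+1}c_{i+1}) = 0 .
\end{equation}
Hence $\delta$ is simply the transpose $\partial^{\top}$. In particular $C^{\top}$ is a complex and $H^i(C)$ is well defined.

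\textbf{Step 2: the pairing.} Define $\langle\cdot,\cdot\rangle : H^i(C)\times H_i(C)\to\F_2$ by $\langle [c^{\top}],[c]\rangle = c^{\top}(c)$, for $c^{\top}\in\ker\delta_{i+1}$ and $c\in\ker\partial_i$. This is independent of representatives:
\begin{itemize}
\item changing $c$ by $\partial_{i+1}b$ changes the value by $c^{\top}(\partial_{i+1} b)=(\delta_{i+1}c^{\top})(b)=0$;
\item changing $c^{\top}$ by $\delta_i a^{\top}$ changes it by $a^{\top}(\partial_i c)=0$.
\end{itemize}
The pairing is clearly bilinear. It therefore induces a linear map $\Psi:H_i(C)\to H^i(C)^*$, $[c]\mapsto\langle\cdot,[c]\rangle$, and this map is canonical because no choice of basis entered.

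\textbf{Step 3: bijectivity (the main point).} Everything is finite-dimensional over $\F_2$, so it suffices to prove injectivity and then compare dimensions.

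\emph{Injectivity.} Suppose $c\in\ker\partial_i$ and $c^{\top}(c)=0$ for every $c^{\top}\in\ker\delta_{i+1}$. We need $c\in\im\partial_{i+1}$. The key fact is the annihilator identity
\begin{equation}
\ker\delta_{i+1} = \ker(\partial_{i+1}^{\top}) = (\im\partial_{i+1})^{\perp}.
\end{equation}
So $c\in((\im\partial_{i+1})^{\perp})^{\perp}$, and by the double-annihilator theorem for finite-dimensional spaces, $((\im\partial_{i+1})^{\perp})^{\perp}=\im\partial_{i+1}$. Thus $c\in\im\partial_{i+1}$ and $[c]=0$.

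\emph{Dimension count.} By rank–nullity,
\begin{equation}
\dim H_i = \dim\ker\partial_i - \operatorname{rk}\partial_{i+1}.
\end{equation}
Since $\operatorname{rk}\delta_j=\operatorname{rk}\partial_j$ and $\dim\ker\delta_{i+1}=\dim C_i-\operatorname{rk}\partial_{i+1}$,
\begin{equation}
\dim H^i = (\dim C_i-\operatorname{rk}\partial_{i+1}) - \operatorname{rk}\partial_i = \dim H_i .
\end{equation}
So the injective map $\Psi$ is an isomorphism.

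The only real content is this step, namely the identification $\ker\partial^{\top}=(\im\partial)^{\perp}$ together with the rank equality $\operatorname{rk}\partial^{\top}=\operatorname{rk}\partial$. Both are standard linear algebra over the field $\F_2$ and use finite-dimensionality.
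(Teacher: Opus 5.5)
Your proof is correct and follows the same plan as the paper: the evaluation pairing $([c^{\top}],[c])\mapsto c^{\top}(c)$, its independence of representatives, injectivity, and the dimension count $\dim H_i(C)=\dim H^i(C)$. The only difference is in the injectivity step. The paper proves injectivity of the transposed map $H^i(C)\to H_i(C)^*$ by hand: it extends $f(\partial c)=c^{\top}(c)$ from $\im\partial$ to all of $C_{i-1}$, which amounts to reproving $(\ker\partial_i)^{0}=\im\delta_i$, and then passes to $H_i(C)\to H^i(C)^*$ by double duality. You instead obtain injectivity on the homology side directly from the double-annihilator theorem applied to $\ker\delta_{i+1}=(\im\partial_{i+1})^{0}$, which is slightly more direct.
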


\begin{proof}
    It's straightforward to check that $\delta \delta=0$ and thus let us restrict our attention to the canonical isomorphism.
    Without loss of generality, we shall consider the canonical isomorphism $\phi :H^i(C)\to H_i(C)^*$, since the dual of a dual is canonically isomorphic to itself.
    Specifically, define $\phi$ as follows
    \begin{equation}
        (\phi [c^{\top}]) [c] =c^{\top}(c)
    \end{equation}
    where $c,c^{\top}$ are arbitrary representations of the equivalence class $[c],[c^{\top}]$, respectively.
    First note that $\phi$ is well-defined. Indeed, note that
    \begin{align}
        (c^{\top} +\delta f^{\top})(c+\partial g) &= c^{\top}(c) +f^{\top}(\partial c)\\
        & +(\delta c^{\top})(g) +f^{\top}(\partial^2 g) \\
        &= c^{\top}(c)
    \end{align}
    where we used the fact that $\partial c=0,\delta c^{\top}=0$ and $\partial^2=0$.
    It's straightforward to check that $\phi$ is linear. 
    
    Note that if $\phi[c^{\top}]=0$, then $c^{\top}(c)=0$ for any $c \in \ker \partial$.
    Define $f:\im \partial \to \F_2$ as $f(\partial c)=c^{\top} (c)$, which is well-defined since if $\partial c=\partial c'$, then $c^{\top}(c+c')=0$. Using simple basis extension, we can extend $f$ to some linear map on $C_i\to \F_2$, and thus $f(\partial c)=(\delta f)(c)$.
    Hence, $c^{\top} =\delta f\in \im \delta$, and thus $[c^{\top}]=0$.
    Hence, $\phi$ is injective.
    
    One can then easily check that $\dim H_i(C)^* =\dim H^i(C)$ and thus $\phi$ is an isomorphism.
    Note that although we chose a basis in the proof (for simple basis extension), the definition of $\phi$ is basis independent and thus canonical.
    Using the fact that the dual of dual is canonically isomorphic to itself, one can show that the corresponding canonical isomorphism $\psi:H_i(C)\to H^i(C)^*$ is given by
    \begin{equation}
        (\psi [c])[c^{\top}] = c^{\top} (c)
    \end{equation}
    
\end{proof}

\subsubsection{Basis Formalism}
\begin{definition}[Basis]
\label{def:complex-basis}
A complex $C$ is equipped with a \textbf{basis} $\cC$ (assumed henceforth) if each $C_i$ is equipped basis elements in $\cC_i$ referred as \textbf{$i$-cells}, which induces a nondegenerate bilinear form $\bra \cdot|\cdot\ket$ on $C_i$ (or equivalently, each $C_i=\F_2^{n_i}$ and the $i$-cells are the standard basis).
The  \textbf{(Hamming) weight} $|\ell|$ is then defined for any $\ell \in C_{i}$.
We say that cells $c_{i},c_{i-1} $ are \textbf{adjacent}  if $\bra c_{i-1}|\partial c_{i}\ket \ne 0$, and write $c_i \sim c_{i-1}$. 
\end{definition}

\begin{remark}[Riesz]
    If complex $C$ is equipped with a non-degenerate bilinear form, then by Riesz representation, the map $c\mapsto \bra c|\cdots \ket$ from $C_i\mapsto C_i^*$ is an isomorphism.
    This isomorphism also induces the relation $\delta = \partial^{\top}$ where $\partial^{\top}$ is that defined by the nondegenerate bilinear form.
    Hence, when complex $C$ is equipped with a basis, we shall identify the cochains in $C^{\top}$ as chains in $C$ and $\delta = \partial^{\top}$.
\end{remark}

\begin{definition}[Systolic Distance]
    Let $C$ be a complex. Then the $i$\textbf{-systolic distance} $d_i(C)$ is 
    \begin{equation}
        d_i(C) = \min_{\ell:0\ne [\ell]\in H_i(C)} |\ell|
    \end{equation}
    Similarly, define the \textbf{(co)systolic distance} $d^{i}(C)$ via $H^{i}(C)$.
\end{definition}
\begin{example}[Repetition Code]
    \label{ex:rep}
    The \textbf{repetition code} on $L$ bits is the complex $R \equiv R(L)$ with differential $\partial^{R}$.
    The 1-cells are denoted via $|i^+\ket$ for $i\in [L)$ where $i^\pm =i\pm 1/2$, and 0-cells via $|i\ket$ for $i\in [L]$ so that
    \begin{equation}
        \partial^{R} |i^+\ket = |i\ket +|i+1\ket
    \end{equation}
    We implicitly assume that $|i\ket, |i^\pm\ket =0$ if the label is not within the previous parameters.
\end{example}

\begin{example}[Graphs]
    Given a graph $\cG=(\cE,\cV)$, there is an associated complex $G=E\to V$ where $E,V$ are the $\F_2$ span of $\cE,\cV$ and the differential map is the adjacency relation. 
    Conversely, a complex $G =E\to V$ can define a graph $\cG=(\cE,\cV)$ if $|\partial e|=2$ for all 1-cells $e\in E$. In either case, we refer to $G$ as a \textbf{graph} complex with associated graph $\cG$.
    For example, the repetition code $R(L)$ is a graph complex. 
    We further call complex $C=C_2 \to C_1 \to C_0$ a \textbf{cell} complex if $C_1 \to C_0$ is a graph complex and $G$ is a \textbf{subgraph} of $C$ if $G$ is a subgraph of $C_1\to C_0$.
\end{example}


\begin{example}[CSS Codes]
    \label{ex:CSS}
    Consider a CSS code as a complex $A=Z\to Q\to X$ and convention that $Z,Q,X$ are the $Z$-type checks, qubits, $X$-type checks, respectively. 
    Then the (co)homology $H_1(C)$ ($H^1(C)$) corresponds to the collection of equivalence classes of $Z$- ($X$-) type logical operators, while the 1-(co)systolic distance $d_1(C)$ ($d^1(C)$) is the $Z$- ($X$-) type code distance.
\end{example}

\subsection{Symplectic Complex}
\label{sec:symplectic-complex}

\subsubsection{Canonical Formalism}
\begin{definition}
    A \textbf{symplectic form} on a $\F_2$ vector space $\pauli$ is a map $\Lambda:\pauli\times \pauli\to \F_2$ which is bilinear, alternating ($\Lambda(\ell,\ell)=0$ for any $\ell$) and non-degenerate ($\Lambda(\ell,\cdot)$ is an isomorphism on $\pauli$ for any $\ell$).
    When a symplectic form is given, write $V^{\perp}$ to be the \textbf{symplectic complement} of subspace $V\subseteq Q$, i.e., the collection of all $\ell$ such that $\Lambda(\ell,\ell')=0$ for all $\ell'\in V$. Note that since $\Lambda$ is non-degenerate, $V^\perp$ satisfies $\dim V + \dim V^\perp = \dim \pauli$.
\end{definition}
As we shall soon see in Example \ref{ex:nonCSS}, general (non-CSS) stabilizer codes can be written in the form of a complex, with additional properties.
Hence, this provides motivation for the following definition

\begin{definition}[Symplectic Complex]
    A 3-term (length-2) \textbf{symplectic complex} is a complex (not necessarily with basis)
    \begin{equation}
        C = S\xrightarrow{\sigma} \pauli\xrightarrow{\hat{\sigma}} S^*
    \end{equation}
    such that  $S^*$ is the dual space of $S$ and $\pauli$ is equipped with a symplectic form $\Lambda$  and  $\hat{\sigma}\ell = \Lambda(\ell,\sigma \cdots)$.
    We refer to $\sigma$ as the \textbf{stabilizer} map and $\hat{\sigma}$ as the \textbf{syndrome} map.
\end{definition}

\begin{lemma}[Induced Symplectic Form]
    \label{lem:induced-symplectic}
    Given a symplectic complex $C=S\to \pauli\to S^*$ with symplectic form $\Lambda$ on $Q$. Then $\Lambda$ induces a symplectic form $[\Lambda]$ on $H_1(C)$ such that 
    \begin{equation}
        [\Lambda]([\ell],[\ell'])=\Lambda(\ell,\ell')
    \end{equation}
\end{lemma}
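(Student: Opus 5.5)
To prove Lemma~\ref{lem:induced-symplectic}, I will verify in turn that $[\Lambda]$ is well-defined, that it is bilinear and alternating, and that it is non-degenerate; the last step carries the real content.

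\emph{Well-definedness.} The key identity comes directly from the definition of a symplectic complex. For any $s\in S$ and any $\ell\in\pauli$,
\begin{equation}
\Lambda(\ell,\sigma s)=(\hat\sigma\ell)(s).
\end{equation}
So if $\ell\in\ker\hat\sigma$, then $\Lambda(\ell,\sigma s)=0$ for every $s$, which means $\ker\hat\sigma\subseteq(\im\sigma)^\perp$. Conversely, if $\ell\in(\im\sigma)^\perp$, then $\hat\sigma\ell=\Lambda(\ell,\sigma\,\cdot)$ vanishes, so in fact
\begin{equation}
\ker\hat\sigma=(\im\sigma)^\perp.
\end{equation}
Now take representatives $\ell,\ell'\in\ker\hat\sigma$ and boundaries $\sigma s,\sigma s'$. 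Expanding by bilinearity,
\begin{equation}
\Lambda(\ell+\sigma s,\ell'+\sigma s')=\Lambda(\ell,\ell')+\Lambda(\ell,\sigma s')+\Lambda(\sigma s,\ell')+\Lambda(\sigma s,\sigma s').
\end{equation}
The second and third terms vanish by the identity above. For the third term this also uses the symmetry $\Lambda(a,b)=\Lambda(b,a)$, which holds because an alternating form over $\F_2$ is symmetric. The last term vanishes because $\sigma s\in\im\sigma\subseteq\ker\hat\sigma$, using $\hat\sigma\sigma=0$. Hence $[\Lambda]$ depends only on the homology classes.

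\emph{Bilinearity and alternation.} Both are inherited immediately from $\Lambda$, since $[\Lambda]([\ell],[\ell])=\Lambda(\ell,\ell)=0$.

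\emph{Non-degeneracy.} Suppose $[\ell]$ satisfies $\Lambda(\ell,\ell')=0$ for every $\ell'\in\ker\hat\sigma$. Then
\begin{equation}
\ell\in(\ker\hat\sigma)^\perp=\big((\im\sigma)^\perp\big)^\perp.
\end{equation}
By the dimension formula $\dim V+\dim V^\perp=\dim\pauli$, which holds because $\Lambda$ is non-degenerate, we have $V^{\perp\perp}\supseteq V$ with equal dimensions, so $V^{\perp\perp}=V$. Therefore $\ell\in\im\sigma$ and $[\ell]=0$. On the finite-dimensional space $H_1(C)$, a trivial left kernel is the same as non-degeneracy.

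The main obstacle is this double-complement step, and more precisely establishing the equality $\ker\hat\sigma=(\im\sigma)^\perp$ rather than mere inclusion. Once that equality is in hand, the rest is routine.
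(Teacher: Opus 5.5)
Your proof is correct and follows the paper's argument. Both show well-definedness by expanding $\Lambda(\ell+\sigma s,\ell'+\sigma s')$, then get non-degeneracy from $\ker\hat\sigma=(\im\sigma)^\perp$, which forces any class orthogonal to all of $H_1(C)$ into $\im\sigma$. Your version additionally spells out steps the paper leaves implicit: the symmetry of an alternating form over $\F_2$, the derivation of $\ker\hat\sigma=(\im\sigma)^\perp$ from $\hat\sigma\ell=\Lambda(\ell,\sigma\,\cdot)$, and $V^{\perp\perp}=V$ via the dimension formula.
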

\begin{proof}
    Given $\ell,\ell' \in \ker \hat{\sigma}$, note that
    \begin{align}
        \Lambda(\ell+\sigma g,\ell'+\sigma g') &= \Lambda(\ell,\ell') \\
        &\quad+\Lambda(\sigma g,\ell') +\Lambda(\ell,\sigma g) \\
        &\quad +\Lambda(\sigma g,\sigma g')\\
        &=\Lambda(\ell,\ell')
    \end{align}
    Hence, $([\ell] ,[\ell'])\mapsto \Lambda(\ell,\ell')$ is a well-defined map $H_1(C)\times H_1(C)\to\F_2$.
    It's straightforward to check that the map is bilinear and alternating and thus it remains to show that it's non-degenerate.

    Fix $[\ell]\in H_1(C)$ with representation $\ell\in \ker \hat{\sigma}$. 
    If $\Lambda(\ell,\ell')=0$ for all $\ell'\in \ker \hat{\sigma}$, then $\ell$ is in the symplectic complement of $\ker \hat{\sigma}$.
    However, note that $\ker \hat{\sigma}$ is the symplectic complement of $\im \sigma$ in $\pauli$, and thus $\ell \in \im \sigma$.
    Hence, the induced symplectic form is injective. 
    Since $\dim H_1(C)<\infty$, the induced symplectic form is an isomorphism. 
\end{proof}

\begin{lemma}[Symplectic Duality]
    \label{lem:symplectic-duality}
    Given a symplectic complex $C=S\to \pauli\to S^*$ with symplectic form $\Lambda$ on $Q$.
    Then its cocomplex $C^\top$ is also a symplectic complex with symplectic form $\Lambda^{\top}:P^*\times P^* \to \F_2$ defined as
    \begin{equation}
        \Lambda^{\top}(p^{\top},q^{\top})=\Lambda(p,q)
    \end{equation}
    where $p\in P$ is the unique element such that $p^{\top}=\Lambda(p,\cdots)$ and similarly for $q,q^\top$.
    Moreover, $C$ is isomorphic to its cocomplex in a canonical manner, given by the chain isomorphism below
    \begin{equation}
        \begin{tikzpicture}[baseline]
        \matrix(a)[matrix of math nodes, nodes in empty cells, nodes={minimum size=25pt},
        row sep=1.5em, column sep=3em,
        text height=1.25ex, text depth=0.25ex]
        {S &P & S^*\\
         S^{**} & P^* & S^*\\};
        \path[->,font=\scriptsize]
        (a-1-1) edge node[above]{$\sigma$} (a-1-2)
        (a-1-2) edge node[above]{$\hat{\sigma}$} (a-1-3)
        (a-2-1) edge node[below]{$\hat{\sigma}^\top$} (a-2-2)
        (a-2-2) edge node[below]{$\sigma^\top$} (a-2-3)
        (a-1-1) edge node[right]{$\iota$} (a-2-1)
        (a-1-2) edge node[right]{$p\mapsto \Lambda(p,\cdots)$} (a-2-2)
        (a-1-3) edge node[right]{$\mathrm{id}$} (a-2-3);
        \end{tikzpicture}
    \end{equation}
    where $\iota:S\to S^{**}$ is the canonical isomorphism $(\iota s)(s^{\top})=s^{\top}(s)$ where $s^{\top}\in S^*, s\in S$.
\end{lemma}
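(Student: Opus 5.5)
The plan is to transport everything along the map $\Phi:\pauli\to\pauli^*$, $\Phi(p)=\Lambda(p,\cdot)$, and to reduce every claim to the defining relation $\hat{\sigma}\ell=\Lambda(\ell,\sigma\,\cdot)$ of a symplectic complex. Two elementary facts will be used repeatedly. First, $\Phi$ is an isomorphism, because $\Lambda$ is non-degenerate and $\dim\pauli<\infty$. Second, over $\F_2$ an alternating form is symmetric: expand $0=\Lambda(p+q,p+q)$ to get $\Lambda(p,q)=\Lambda(q,p)$.

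\textbf{Step 1: $\Lambda^\top$ is a symplectic form on $\pauli^*$.}
Since $\Phi$ is bijective, the element $p$ with $p^\top=\Lambda(p,\cdot)$ is unique, so $\Lambda^\top$ is well-defined. By construction $\Lambda^\top(\Phi p,\Phi q)=\Lambda(p,q)$. Hence bilinearity, the alternating property and non-degeneracy all pull back from $\Lambda$ through the linear isomorphism $\Phi$.

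\textbf{Step 2: commutativity of the diagram.}
All three vertical maps are isomorphisms: $\iota$ because $S$ is finite-dimensional, $\Phi$ by Step 1, and $\mathrm{id}$ trivially. So it only remains to check the two squares.

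For the right square, take $p\in\pauli$ and $s\in S$. Then
\begin{equation}
(\sigma^\top\Phi p)(s)=(\Phi p)(\sigma s)=\Lambda(p,\sigma s)=(\hat{\sigma}p)(s),
\end{equation}
which is exactly the defining relation.

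For the left square, take $s\in S$ and $q\in\pauli$. Then
\begin{equation}
(\hat{\sigma}^\top\iota s)(q)=(\iota s)(\hat{\sigma}q)=(\hat{\sigma}q)(s)=\Lambda(q,\sigma s)=\Lambda(\sigma s,q)=(\Phi\sigma s)(q),
\end{equation}
where the fourth equality is the symmetry noted above. Thus $(\iota,\Phi,\mathrm{id})$ is a chain isomorphism. Since the maps involved are defined without any choice of basis, it is canonical.

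\textbf{Step 3: $C^\top$ is a symplectic complex.}
The cocomplex $C^\top$ is $S^{**}\xrightarrow{\hat{\sigma}^\top}\pauli^*\xrightarrow{\sigma^\top}S^*$. To present it in the form $S'\to\pauli'\to S'^*$, take $S'=S^{**}$ and identify $S'^*=S^{***}$ with $S^*$ canonically through $\iota^*$. We must then verify
\begin{equation}
\sigma^\top p^\top=\Lambda^\top(p^\top,\hat{\sigma}^\top\,\cdot)
\end{equation}
under this identification, that is, $(\sigma^\top p^\top)(s)=\Lambda^\top(p^\top,\hat{\sigma}^\top\iota s)$ for every $s\in S$. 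Write $p^\top=\Phi p$. By Step 2, $\hat{\sigma}^\top\iota s=\Phi\sigma s$, so the right-hand side equals $\Lambda^\top(\Phi p,\Phi\sigma s)=\Lambda(p,\sigma s)$. This agrees with the left-hand side by the right-square computation.

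\textbf{Main obstacle.}
The only delicate point is bookkeeping: keeping the double and triple duals straight, and making sure the identification $S^{***}\cong S^*$ used in Step 3 is the same one implicitly used in the diagram. The symmetry of $\Lambda$ in characteristic $2$ removes any sign issues.
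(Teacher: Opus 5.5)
Your proof is correct and follows the same route as the paper: $\Lambda^\top$ is symplectic because $p\mapsto\Lambda(p,\cdot)$ is an isomorphism, and both squares commute by unwinding the defining relation $\hat{\sigma}\ell=\Lambda(\ell,\sigma\,\cdot)$. You also make explicit two points the paper leaves implicit: the symmetry $\Lambda(q,\sigma s)=\Lambda(\sigma s,q)$ needed for the left square, and your Step 3 check (via $S^{***}\cong S^*$) that $C^\top$ actually satisfies the symplectic-complex condition.
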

\begin{proof}
    It's straightforward to check that $\Lambda^{\top}$ is a symplectic form since $p\mapsto p^{\top}$ is an isomorphism.
    Note that for $s\in S,p\in P$, we have
    \begin{align}
        (\hat{\sigma}^\top \iota  s)(p)= (\hat{\sigma}p)(s) = \Lambda(p,\sigma s)
    \end{align}
    Similarly, note that
    \begin{equation}
        (\sigma^{\top}\Lambda(p,\cdots))(s)=\Lambda(p,\sigma s)=(\hat{\sigma}p)(s)
    \end{equation}
    Hence, the diagram is commuting.
    Note that the chain map is indeed an isomorphism since $\Lambda$ is non-degenerate.
\end{proof}

\subsubsection{Basis Formalism}

It's well know that
\begin{lemma}[Symplectic Basis]
    Let $\pauli$ be equipped with symplectic $\Lambda$. 
    Then $\dim \pauli$ is even and there exists a \textbf{symplectic basis}, i.e., basis $q_{X,1},...,q_{X,n},q_{Z,1},...,q_{Z,n}$ such that
    \begin{align}
        \Lambda(q_{X,i},q_{Z,j})&=\delta_{ij} \\
        \Lambda(q_{X,i},q_{X,j})&=\Lambda(q_{Z,i},q_{Z,j}) = 0
    \end{align}
\end{lemma}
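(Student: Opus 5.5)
We need a plan for the last statement: the Symplectic Basis lemma. For $\pauli$ over $\F_2$ with an alternating, non-degenerate $\Lambda$, we must show $\dim \pauli$ is even and that a symplectic basis exists. The approach is the standard inductive Gram--Schmidt-type construction, splitting off hyperbolic planes one at a time.

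\emph{Base case and first pair.} If $\pauli=0$ there is nothing to prove, with $n=0$. Otherwise, pick any nonzero $q_{X,1}\in\pauli$. By non-degeneracy, $\Lambda(q_{X,1},\cdot)$ is a nonzero functional, so there is some $q_{Z,1}$ with $\Lambda(q_{X,1},q_{Z,1})=1$. Since $\Lambda$ is alternating, $\Lambda(q_{X,1},q_{X,1})=\Lambda(q_{Z,1},q_{Z,1})=0$. In particular $q_{Z,1}\neq q_{X,1}$, so the two vectors are linearly independent: a dependence over $\F_2$ would force $q_{Z,1}=q_{X,1}$, giving $\Lambda=0$ on them. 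Let $H=\mathrm{span}(q_{X,1},q_{Z,1})$. The restriction of $\Lambda$ to $H$ has Gram matrix $\begin{pmatrix}0&1\\1&0\end{pmatrix}$, which is invertible, so $\Lambda|_H$ is non-degenerate.

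\emph{Splitting.} I then show $\pauli=H\oplus H^{\perp}$.
\begin{itemize}
\item $H\cap H^\perp=0$, because $\Lambda|_H$ is non-degenerate.
\item $\dim H+\dim H^\perp=\dim\pauli$, by the complement-dimension fact stated with the definition of a symplectic form.
\end{itemize}
Concretely, each $v\in\pauli$ decomposes as
\begin{equation}
v=\Lambda(v,q_{Z,1})\,q_{X,1}+\Lambda(v,q_{X,1})\,q_{Z,1}+w,
\end{equation}
where $w\in H^\perp$. Here signs are irrelevant over $\F_2$, and the decomposition uses that $\Lambda$ is symmetric, which follows from being alternating in characteristic 2.

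Next, $\Lambda|_{H^\perp}$ is still alternating. It is also non-degenerate: if $w\in H^\perp$ pairs to zero with all of $H^\perp$, then it pairs to zero with $H$ as well, hence with all of $\pauli$, so $w=0$.

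\emph{Induction.} Apply induction on $\dim\pauli$ to $H^\perp$, which has dimension $\dim\pauli-2$. This yields a symplectic basis $q_{X,2},\dots,q_{X,n},q_{Z,2},\dots,q_{Z,n}$ of $H^\perp$, with $\dim H^\perp=2(n-1)$. Adjoining $q_{X,1},q_{Z,1}$ gives the required relations, since the cross terms between $H$ and $H^\perp$ vanish by definition of $H^\perp$. It also gives $\dim\pauli=2n$, which is even.

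\emph{Main obstacle.} The only delicate point is the characteristic-2 subtlety. Over $\F_2$, symmetry and alternation differ, and a merely symmetric form could have $\Lambda(v,v)=1$; the example $\Lambda(x,y)=xy$ on $\F_2$ is odd-dimensional. So the alternating hypothesis is exactly what guarantees that $q_{X,1}$ and $q_{Z,1}$ are independent and that the Gram matrix on $H$ has zero diagonal. I would make sure to invoke alternation explicitly at both of these points.
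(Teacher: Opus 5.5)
Your argument is correct. It is the standard proof by splitting off a hyperbolic plane $H$ and inducting on $H^{\perp}$. The paper gives no proof of this lemma at all: it quotes the result as well known and then uses it to write $\lambda$ in standard form. So there is no competing route, and your proof supplies exactly what the paper omits.

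One cosmetic point: since $q_{X,1}\neq 0$, a nontrivial dependence forces $q_{Z,1}\in\{0,q_{X,1}\}$, not only $q_{Z,1}=q_{X,1}$. Both cases contradict $\Lambda(q_{X,1},q_{Z,1})=1$, so your conclusion is unaffected.
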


\begin{definition}[Basis]
    Similar to Definition \ref{def:complex-basis}, a symplectic complex $C$ is equipped with a \textbf{basis}  (assumed henceforth) $\cC$ if $S$ is equipped with a basis $\cS$ that induces a nondegenerate bilinear form, and $S^*$ is equipped with a basis $\cS^*$ which defines an isometry\footnote{Recall we only deal with finite-dimensional spaces}  $\phi:S\to S^*$, and that $\pauli$ is equipped with a symplectic basis $\cQ_X\sqcup \cQ_Z$ (which also induces a nondegenerate bilinear form $\bra\cdot|\cdot\ket$) and we write $\pauli=Q_X\oplus Q_Z$ with the understanding that basis elements $\cQ_X,\cQ_Z$ are in one-to-one correspondence with say ordered elements $\cQ$, which we denote implicitly when writing $q_X\in \cQ_X,q_Z\in \cQ_Z$ and $q\in \cQ$. 
\end{definition}

\begin{remark}[Riesz]
    Given a nondegenerate bilinear form $\bra \cdot|\cdot\ket$ and a symplectic form $\Lambda$ on $\pauli$, there exists a unique linear map $\lambda:\pauli\to \pauli$ such that $\bra \ell |\lambda \ell'\ket =\Lambda(\ell,\ell')$, and thus we will use $\Lambda$ and $\lambda$ interchangeably, where $\lambda$ will be referred to as the \textbf{symplectic map}.
    With respect the symplectic basis, $\lambda$ is in \textbf{standard form}
    \begin{equation}
        \lambda =
        \begin{pmatrix}
            0 &I\\
            I &0
        \end{pmatrix}
    \end{equation}
    Conversely, given matrix $\lambda$ in standard form relative to some basis, the corresponding $\Lambda(p,q)=p^{\top}\lambda q$ is a well-defined symplectic form and the basis is a symplectic basis.
    
    In particular, a symplectic complex (with basis) can be rewritten as
    \begin{equation}
        C=S\xrightarrow{\sigma} \pauli \xrightarrow{\hat{\sigma}} \bar{S}
    \end{equation}
    where $\bar{S},S$ are isometric via the isometry $\phi:S\to \bar{S}$ and $\hat{\sigma}=\phi \sigma^{\top}\lambda$. 
\end{remark}


\begin{example}[Non-CSS Codes]
    \label{ex:nonCSS}
    For a general stabilizer code, it is not necessarily possible to find parity checks that are purely $X$- or $Z$- type Paulis and thus will be referred as \textbf{non-CSS} codes for emphasis, despite containing CSS codes as a special case.
    Non-CSS codes can then be written as a symplectic complex with basis 
    \begin{equation}
        C=S\xrightarrow{\sigma} \pauli\xrightarrow{\hat{\sigma}} \bar{S}
    \end{equation}
    so that $S$ is the space of \textbf{checks} with support defined by the stabilizer map $\sigma$, (i.e., each column of $\sigma$ denotes the abelianization of the corresponding check), $P$ denotes the \textbf{Pauli space} $=\F_2^{2n}$ with symplectic basis $P=Q_X\oplus Q_Z$ so that $Q_X,Q_Z$ are isometric copies of $Q=\F_2^n$, and $\bar{S}$ denote the space of \textbf{syndromes}, isometric to $S$.
    
    With this identification, $H_1(C)$ is the collection of (equivalence class of) logical operators of the non-CSS code, with the 1-systolic distance $d_1(C)=d(C)$ equal to the code distance.
    Note that by duality in Lemma \ref{lem:symplectic-duality}, the 1-systolic and 1-cosystolic distance are equal.
    Since stabilizer codes have commuting parity checks, $\hat{\sigma}\sigma =0$ and thus $C$ is indeed a complex.

\end{example}

\begin{example}[CSS Codes as Symplectic Complex]
    \label{ex:CSS-symplectic}
    By Example \ref{ex:CSS}, a CSS code can be written as a complex $C_{\rm{CSS}}=Z\to Q_Z\to \bar{X}$ with (co)differential $\partial$ ($\delta$). 
    By Example \ref{ex:nonCSS}, it can also be rewritten as a symplectic complex
    \begin{equation}
        C=\underbrace{X\oplus Z}_{S} \xrightarrow{\sigma} \underbrace{Q_X\oplus Q_Z}_{\pauli}\xrightarrow{\hat{\sigma}} \underbrace{\bar{Z}\oplus \bar{X}}_{\bar{S}}
    \end{equation}
    where the bar notation is the differentiate between checks and syndromes, and $Q_X\cong Q_Z\cong Q$ denote the $X,Z$-sectors of qubits $Q$.
    In particular, we may write $x,z$ to denote checks (cells) in $X,Z$ and $\bar{z},\bar{x}$ to denote syndrome (cells) in $\bar{Z},\bar{X}$, respectively.
    Similarly, write $q_X,q_Z$ to denote qubits (cells) in the $X,Z$-sector $Q_X,Q_Z$ corresponding to a physical qubit $q\in Q$, respectively.
    
    Further note that the symplectic map $\lambda$ is in standard form, and stabilizer map
    \begin{equation}
        \sigma =
        \begin{pmatrix}
            \partial & 0\\
            0 & \delta
        \end{pmatrix}
    \end{equation}
    and syndrome map given by
    \begin{equation}
        \hat{\sigma} = 
        \phi
        \sigma^{\top} \lambda 
        =
        \begin{pmatrix}
            \partial & 0\\
            0 & \delta
        \end{pmatrix}
    \end{equation}
    where $\phi:S\to \bar{S}$ is the isometry given by
    \begin{equation}
        \phi = 
        \begin{pmatrix}
            0 & I\\
            I & 0
        \end{pmatrix}
    \end{equation}
\end{example}

\begin{example}[Majorana Codes]
    \label{ex:majorana}
    Although this will not be the topic of focus, it's worth mentioning that Majorana codes can also be written as a symplectic complex. Specifically, the space of \textbf{Majorana modes} $M =\F_2^{2n}$ (with standard basis $|i\ket$) has symplectic form
    \begin{equation}
        \Lambda(a,b)=\bra a|b\ket + \bra a|\cM\ket \bra \cM|b\ket
    \end{equation}
    where $\cM$ is the sum of all basis elements in $M$. A possible symplectic basis is given by the Jordan-Wigner transform
    \begin{align}
        X_i \equiv  \sum_{s < 2i} |s\ket, \quad Z_i\equiv |2i-1\ket+ |2i\ket
    \end{align}
    
\end{example}



\section{Non-CSS Surgery}
\label{sec:surgery}

In this section, we apply the main result in Theorem \ref{thm:symplectic-embedding} to qLDPC surgery, and show that non-CSS (e.g., $Y$-type) logical measurements falls within this framework.
For concreteness, we restrict our attention to the measurement of a single non-CSS logical. 
To guide the reader on the abstract construction, we shall repeatedly consider the example of measuring a $\bar{Y}$-logical of the standard surface code, as shown in Fig. \ref{fig:surface-surgery}.

As recognized initially in \cite{ide2025fault}, for CSS (pure $X$- or $Z$-type) logical measurement of CSS qLDPC codes, the conventional (height-1) cone is sufficient. 
However, it was previously unknown whether a similar formulation can be achieved for non-CSS logicals, even when restricted to CSS codes, since the addition of non-CSS logicals results in the \textit{deformed code} to be non-CSS.
Here, we show that our non-CSS embedding framework addresses this problem -- in fact, it also permits the discussion of non-CSS logical measurement on non-CSS codes.

Note that \cite{yuan2026parsimonious} provides the currently known lowest ancilla overhead, i.e., $O(\Weight \log \Weight)$ given logical of weight $\Weight$, and thus serves as the fundamental basis of a wide variety of known qLDPC CSS surgery.
Hence, we shall generalize the construction to qLDPC non-CSS surgery.
For simplicity (and ease of comparison with current literature), we shall restrict our attention to the case where the data code is CSS -- the extension to measuring non-CSS logicals on a non-CSS code is straightforward and collected in the last Subsection \ref{sec:ext-non-CSS}.

\begin{figure}[ht]
\subfloat[\label{fig:surface}]{%
    \centering
    \includegraphics[width=0.3\columnwidth]{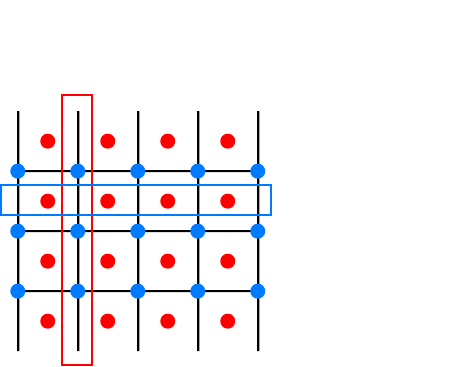}
}
\quad
\subfloat[\label{fig:surface-ancilla}]{%
    \centering
    \includegraphics[width=0.3\columnwidth]{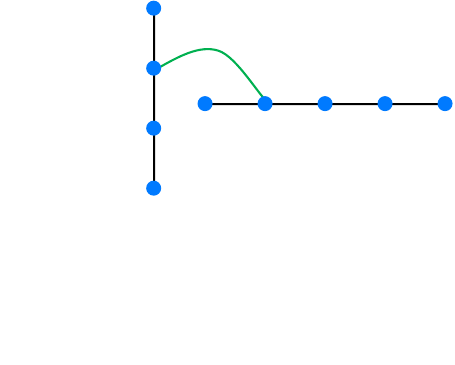}
}
\\
\subfloat[\label{fig:surface-cone}]{%
    \centering
    \includegraphics[width=0.3\columnwidth]{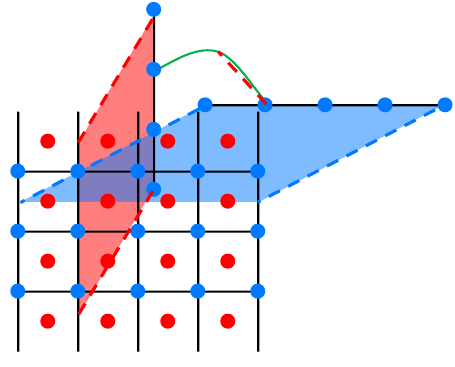}
}
\quad
\subfloat[\label{fig:surface-cone-stabilizer}]{%
    \centering
    \includegraphics[width=0.3\columnwidth]{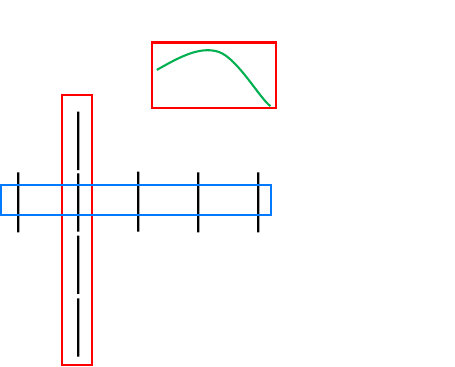}
}
\caption{$\bar{Y}$ Measurement. (a) denotes a surface code with edges hosting qubits, where blue vertices (red faces) indicates $X$ ($Z$) checks acting on adjacent edges. Blue (red) solid lines encompass a $\bar{X}$ ($\bar{Z}$) logical representation so that the product denotes a $\bar{Y}$ logical. (b) denotes the measurement graph as constructed in Definition \ref{def:measurement-graph}, where for simplicity, we set $\cE_{\rm{extra}} =\varnothing$. Note that the green edge denotes $\cE_{X\wedge Z}$ since the $\bar{X},\bar{Z}$ representations overlap.
Note that this measurement graph has no cycles and thus we can skip Theorem \ref{thm:parsimonious-cone}, and take the ancilla code to be the measurement graph so that blue vertices indicate $X$-checks acting on adjacent edges hosting qubits. 
(c) denotes the symplectic cone constructed in Theorem \ref{thm:symplectic-embedding} or \ref{thm:surgery}. The red shaded region indicates that checks in the ancilla (data) code further act as a $Z$ Pauli on the corresponding qubit in the data (ancilla) code. Similarly, the blue shaded region indicates that checks in the ancilla (data) code further act as an $X$ Pauli on the corresponding qubit in the data (ancilla) code.
The red dashed line connecting the green edge indicates that the neighboring blue vertex must also act as a $Z$ Pauli on the green edge, corresponding to the defect $p$ in Theorem \ref{thm:symplectic-embedding}.
(d) denotes the product of all ancilla checks in Fig. \ref{fig:surface-cone} (and thus a stabilizer), where qubits encompassed by red (blue) lines are acted upon by a $Z$ ($X$)-Pauli.
See Remark \ref{rem:recovery-and-threshold}.
}
\label{fig:surface-surgery}
\end{figure}

\subsection{CSS Data Code}
By Example \ref{ex:CSS-symplectic}, let the data CSS be written as a symplectic complex 
\begin{equation}
    \label{eq:data-symplectic}
    D=X^D\oplus Z^D\xrightarrow{\sigma^D} Q_X^D\oplus Q_Z^D \xrightarrow{\hat{\sigma}^D} \bar{Z}^D\oplus \bar{X}^D
\end{equation}
with checks $x,z$ in $X^D,Z^D$ (and syndromes $\bar{x},\bar{z}$ differentiated by the bar notation), and qubits $q$ with corresponding $X,Z$-sector qubits $q_X,q_Z$.
Note that we use the superscript $D$ to emphasize that the corresponding objects (e.g., stabilizer map, syndrome map, symplectic form, etc) are in the data code.
As an example, consider $D$ to be the surface code in Fig. \ref{fig:surface}.

\subsection{Ancilla Code}

As discussed in Section II of \cite{yuan2026parsimonious}, the first step of constructing an ancilla $A$ for the data code $D$ is to consider the measurement graph induced by the logical.
Here, we generalize the idea to non-CSS logicals of $D$.
Specifically, let $\ell^{\star} =\ell_X^{\star}\oplus \ell_Z^{\star}\in P^D$ denote the abelianization of a logical operator where we also regard $\ell_X^{\star} \subseteq \cQ_X^D\cong \cQ^D$ as a subset, and similarly for $\ell_Z^{\star}$.
Let $\Weight=|\ell^{\star}|=|\ell_X^{\star}|+|\ell_Z^{\star}|$ denote the weight of the logical.
In the example of Fig. \ref{fig:surface}, the solid blue and red lines indicate $\ell_X^\star,\ell_Z^{\star}$, respectively.

\begin{definition}[Measurement (Multi)Graph, Fig. \ref{fig:surface-ancilla}]
    \label{def:measurement-graph}
    Let $\ell^{\star}=\ell_X^{\star}\oplus \ell_Z^{\star}$ be a logical, i.e., $\ell^{\star} \in \ker \hat{\sigma}^D$, with weight $\Weight$.
    Let $\cV_X$ denote a copy of collection $\ell_X^{\star} \subseteq \cQ_X^D$ with elements (vertices) denotes as $|q_X\ket$ for $q_X\in \ell_X$.
    For every $z$-check $z\in Z^D$ with support overlapping with $\ell_X^{\star}$, we note that the overlap must be even in cardinality by commutation relations (i.e., $\ell^{\star} \in \ker \hat{\sigma}$) and thus can be paired up arbitrarily $q_{X}\tilde{q}_{X}$. 
    Let $\cE_X$ denote the collection 
    $\|q_{X}\tilde{q}_{X};z\ket$ over all pairs and over all overlapping $z$ checks so that $\cG_X=(\cE_X,\cV_X)$ is a (multi) graph.
    Similarly define $\cV_Z,\cE_Z$ so that $\cG_Z=(\cE_Z,\cV_Z)$ is a (multi) graph.

    Consider the disjoint union of graphs $\cG_X^D \sqcup \cG_Z^D$.
    If $\ell_X^{\star},\ell_Z^{\star}$ regarded as subsets of $\cQ^D$ has nontrivial overlap $\ell_X^{\star}\wedge \ell_Z^{\star} \ne \varnothing$, then add an edge connecting $|q_X\ket \in \cV_X$ and $|q_Z\ket \in \cV_Z$ where $q_X,q_Z$ correspond to the same qubit in the overlap $\ell_X\wedge \ell_Z$.
    Let $\cE_{X\wedge Z}$ denote the collection of such edges $\|q\ket$ where $q\in \cQ^D$ corresponds to $q_X,q_Z$. 
    An example of the resulting (multi)graph is shown in Fig. \ref{fig:surface-ancilla}.

    Further add arbitrary edges $\cE_{\rm{extra}}$ while keeping the max degree fixed so that the resulting (multi) graph $\cG=(\cE,\cV)$ is connected and expanding with $\Omega(1)$ Cheeger constant\footnote{This can be achieved by adding random edges, or using more sophisticated methods~\cite{lubotzky1988ramanujan,hoory2006expander,alon2008elementary}.}.
    Then $\cG$ is the \textbf{measurement (multi) graph} associated with $\ell^{\star}$, with associated graph complex $G:E\to V$.
\end{definition}

Note that 
\begin{align}
    \cV&= \cV_X \sqcup  \cV_Z \\
    \cE&= \cE_X\sqcup \cE_Z \sqcup \cE_{X\wedge Z} \sqcup  \cE_{\rm{extra}}
\end{align}
Also note that the max degree $\Delta(G) \le \qubit+1$ where $\qubit$ is the total qubit degree of data code $D$.
By \cite{yuan2026parsimonious}, we have the following ancilla construction
\begin{theorem}[Parsimonious Cone]
    \label{thm:parsimonious-cone}
    Let $G=E\to V$ denote a graph complex with bounded degree $\Delta = \Delta(G)$.
    Then there exists cell complex $C$ such that $G$ is a subgraph of $C$ with 
    \begin{equation}
        \dim C_0 = O(|\cV| \log|\cV|),
    \end{equation}
    and
    \begin{align}
        H_2(C)=H_1(C)=0, &\quad H_0(C) \cong \F_2\\
        \label{eq:reduced-weights-ancilla}
        C_2 \xrightleftharpoons[4+\Delta]{5} C_1 \xrightleftharpoons[9+\Delta]{2} C_0.
    \end{align}
\end{theorem}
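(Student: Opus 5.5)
The plan is to build $C$ in three stages: thicken $G$ along a repetition code, choose a sparse cycle basis, and fill each basis cycle with a disk made of bounded-size faces. The homology claims then follow from a Mayer--Vietoris style count. Since every graph to which we apply the theorem is connected (the measurement graph of Definition~\ref{def:measurement-graph} is connected by construction), we assume $G$ connected, which gives $H_0(C)\cong\F_2$ at once. The whole task is therefore to kill $H_1$ without creating $H_2$, while keeping the weight and degree bounds in \eqref{eq:reduced-weights-ancilla} and the size bound $O(|\cV|\log|\cV|)$.

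\emph{Step 1 (cycle basis).} First I would pick a basis $\gamma_1,\dots,\gamma_m$ of $H_1(G)=\ker(E\to V)$, with $m=|\cE|-|\cV|+1=O(\Delta|\cV|)$. I would invoke the Freedman--Hastings decongestion lemma to obtain a basis whose total length is $O(|\cV|\log|\cV|)$. The goal is also that each edge lies in only $O(\log|\cV|)$ of the cycles, after contracting a forest and handling the remaining low-degree part greedily.

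\emph{Step 2 (thickening and layering).} Next I would form the product of $G$ with a repetition code $R(L)$ (Example~\ref{ex:rep}), with $L=O(\log|\cV|)$. This gives $L$ copies of $G$ joined by vertical edges and square faces $|e\ket\otimes|i^+\ket$; the first copy is the subgraph $G$. The cycles are distributed among the layers so that each edge copy is used by at most one basis cycle. The distribution is done by a greedy or coloring assignment, using the congestion bound from Step~1.

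\emph{Step 3 (filling).} Each cycle $\gamma_j$, placed on its layer, is filled by a fan or strip triangulation: a path of new vertices runs alongside $\gamma_j$ and is capped at the ends, so every face has weight at most $5$. With this choice each edge lies in at most two product squares plus one filling face, and the counts for the edge-to-face and vertex-to-edge degrees, namely $4+\Delta$, $2$ and $9+\Delta$, are routine bookkeeping that I would read off the local pictures.

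\emph{Step 4 (homology).} The product $G\times R(L)$ has homology equal to that of $G$, since $R(L)$ is acyclic with $H_0=\F_2$ (Künneth). Attaching the disks one at a time then works as follows. Each disk is contractible and is glued along a cycle that still represents an independent class, so each attachment lowers $\dim H_1$ by one and leaves $H_2$ unchanged. Because the $\gamma_j$ form a basis, we end with $H_1(C)=0$ and $H_2(C)=0$. An Euler-characteristic check confirms the dimension count. The size is $L|\cV|$ plus the total filling length, which is $O(|\cV|\log|\cV|)$.

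The main obstacle is Steps 1 and 2 together. Naively, decongestion gives only $O(\log^2)$-type overheads, and bounding the edge-to-face degree by a constant forces the layers to be assigned so that cycles sharing edges never sit on the same layer. A naive assignment would need $L\sim\log^2|\cV|$. I would first try a routing argument: each cycle is allowed to change layers along vertical edges, so that only the per-edge load $O(\log|\cV|)$ enters. This is the step where I would defer to the detailed construction of \cite{yuan2026parsimonious}.
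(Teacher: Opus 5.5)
The paper gives no argument for this theorem; it is imported from \cite{yuan2026parsimonious}. Your proposal defers exactly the hard step to the same reference, so it is not yet a proof. Your Step 4 homology argument is fine. The quantitative part you do spell out fails, however, and the obstruction is sharper than a layer-assignment issue.

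Consider the chain map $\pi\colon G\otimes R(L)\to G$ given by $e\otimes|i\ket\mapsto e$, $v\otimes|i\ket\mapsto v$, $v\otimes|i^+\ket\mapsto 0$. Since $G$ has no $2$-cells, $H_1(C)=H_2(C)=0$ forces the projections $\pi\tilde\gamma_j$ of your (possibly layer-switching) disk boundaries $\tilde\gamma_j$ to form a basis of $\ker(E\to V)$. Moreover, $e\in\pi\tilde\gamma_j$ only if $\tilde\gamma_j$ uses one of the $L$ copies of $e$, and that copy then lies in a face of that disk. With $O(1)$ filling faces per horizontal edge copy, the basis $\{\pi\tilde\gamma_j\}$ therefore has per-edge congestion $O(L)=O(\log|\cV|)$. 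Rerouting through vertical edges does not change $\pi\tilde\gamma_j$, so it cannot help.

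Your scheme therefore needs as input a cycle basis whose \emph{per-edge} congestion is $O(\log|\cV|)$. Freedman--Hastings only gives $O(\log^2|\cV|)$ per edge, so the second logarithm is not an artifact of the layer assignment, contrary to your last paragraph. For bounded-degree expanders of logarithmic girth, $O(\log|\cV|)$ would already be optimal. Step 1 is thus asserting a strengthened, optimal decongestion lemma that you neither prove nor reduce to anything known. As written, your construction yields either $\Theta(\log^2|\cV|)$ layers, hence $\Theta(|\cV|\log^2|\cV|)$ vertices, or unbounded edge-to-face degree.

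Even granting such a basis, the constants in \eqref{eq:reduced-weights-ancilla} are not routine bookkeeping. Your assignment only limits the load on horizontal edge copies. A vertex copy $v\otimes|i\ket$ can still lie on roughly $\Delta/2$ disk boundaries in one layer, each possibly adding a rung there. Any layer switching also loads vertical edges, which already lie in up to $\Delta$ product squares. The additive constants in $4+\Delta$ and $9+\Delta$ require controlling both of these loads.

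Separately, the theorem does not assume $G$ is connected. You must join its components (e.g.\ by an extra tree in a fresh layer) before claiming $H_0(C)\cong\F_2$, and this must be included in the degree count.

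To close the gap you have two options. You can prove an $O(\log|\cV|)$-congestion cycle basis together with a layer assignment that also bounds vertex and vertical-edge loads. Alternatively, you can abandon independently attached disks for a filling whose $2$-cells are shared among many cycles (a genuinely bounded-degree cone over $G$), to which the counting above does not apply.
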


Given measurement graph $G$ of logical $\ell^{\star}$, we can then apply Theorem \ref{thm:parsimonious-cone} to construct connected cell complex 
\begin{equation}
    A=Z^A \to Q_Z^A \to \bar{X}^A 
\end{equation}
with faces, edges, and vertices correspond to $Z$-check, qubits, and $X$-checks, respectively. Similarly, the cocomplex is given by
\begin{equation}
    A^{\top} = X^A \to Q_X^A \to \bar{Z}^A
\end{equation}
where the notation is chosen to be consistent with the main result in Theorem \ref{thm:symplectic-embedding}.

\subsection{Attaching Ancilla to Data}
\subsubsection{Gluing Maps}

Consider the following diagram, copied from Theorem \ref{thm:symplectic-embedding} with the ancilla $A$ and data $D$,
\begin{equation}
    \label{eq:surgery}
    \begin{tikzpicture}[baseline]
    \matrix(a)[matrix of math nodes, nodes in empty cells, nodes={minimum size=25pt},
    row sep=2em, column sep=2em,
    text height=1.25ex, text depth=0.25ex]
    {&& X^{A}  & Q^{A}_X & \bar{Z}^{A}\\
    & S^{D}  & P^{D}  & \bar{S}^{D} &\\
    Z^{A} & Q^{A}_Z & \bar{X}^{A} &&\\};
    \path[->,font=\scriptsize]
    (a-1-3) edge node[above]{$\delta^A$}  (a-1-4)
    (a-1-4) edge node[above]{$\delta^A$}  (a-1-5)
    (a-2-2) edge node[above]{$\sigma^{D}$}  (a-2-3)
    (a-2-3) edge node[above]{$\hat{\sigma}^{D}$}  (a-2-4)
    (a-3-1) edge node[above]{$\partial^A$}  (a-3-2)
    (a-3-2) edge node[above]{$\partial^A$}  (a-3-3);
    \path[->,font=\scriptsize]
    (a-1-3) edge node[right]{$g$}  (a-2-3)
    (a-1-4) edge node[right]{$h$}  (a-2-4)
    (a-2-2) edge node[right]{$\hat{h}$}  (a-3-2)
    (a-2-3) edge node[right]{$\hat{g}$}  (a-3-3);
    \path[->,dashed,black!30!green,font=\scriptsize]
    (a-1-3) edge[bend right=80] node[left]{$p$} (a-3-2)
    (a-1-4) edge[bend left=80] node[right]{$p^{\top}$} (a-3-3);
    \end{tikzpicture}
\end{equation}
Let $g$ be the inclusion map which maps vertex
\begin{equation}
    \label{eq:g-map}
    |q_\alpha\ket^{A} \mapsto q_\alpha, \quad \alpha=X,Z
\end{equation}
And zero otherwise, and $h$ be that which maps the edge
\begin{equation}
    \label{eq:h-map}
    \|q_X\tilde{q}_X;z\ket_X^A \mapsto \bar{z}, \quad \|q_Z\tilde{q}_Z;x\ket_X^A \mapsto \bar{x}
\end{equation}
Then it's straightforward to check that
\begin{lemma}[Chain Map]
    \label{lem:chain-surgery}
    $g,h$ is a chain map
\end{lemma}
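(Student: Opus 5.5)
The plan is to verify directly that the single square of diagram~\eqref{eq:surgery} involving $g$ and $h$ commutes, namely
\begin{equation}
    \hat{\sigma}^D g = h\,\delta^A \quad \text{as maps } X^A \to \bar{S}^D .
\end{equation}
The top row of the diagram contributes only this square: $Q_X^A\to\bar{Z}^A$ sits over $\bar{S}^D$, which is the end of the data row. By linearity it suffices to check this identity on basis elements of $X^A$, i.e.\ on the vertices of the cell complex $A$ produced by Theorem~\ref{thm:parsimonious-cone}. I would split into three cases: vertices $|q_X\ket\in\cV_X$, vertices $|q_Z\ket\in\cV_Z$, and any additional vertices introduced by the parsimonious cone. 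On the last class $g$ vanishes, so only their images under $h\delta^A$ need attention.

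First I compute the left-hand side. By Example~\ref{ex:CSS-symplectic}, $\hat{\sigma}^D=\phi\sigma^{D\top}\lambda$. Hence an $X$-sector basis element $q_X\in Q_X^D$ is sent to $\sum_{z\ni q}\bar{z}$, the sum over $Z$-checks $z$ of $D$ whose support contains $q$. This is just the syndrome of a single-qubit $X$ error. Symmetrically, $q_Z\mapsto\sum_{x\ni q}\bar{x}$. I would carry out this block-matrix calculation explicitly once, to make sure the isometry $\phi$ sends $Z$-checks to $\bar{Z}$ and not to $\bar{X}$.

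Next I compute the right-hand side. For a vertex $|q_X\ket$, $\delta^A|q_X\ket$ is the sum of all edges of $A$ incident to it. By \eqref{eq:h-map}, $h$ is nonzero only on edges in $\cE_X\sqcup\cE_Z$. It kills edges of $\cE_{X\wedge Z}$, of $\cE_{\rm extra}$, and every edge added by Theorem~\ref{thm:parsimonious-cone}, since $G$ is only a subgraph of $A$. The edges of $\cE_X$ incident to $|q_X\ket$ are exactly the edges $\|q_X\tilde{q}_X;z\ket$, one for each $Z$-check $z$ overlapping $\ell_X^\star$ at $q$. There is exactly one such edge per $z$ because, for each $z$, the even set $\operatorname{supp} z\cap\ell_X^\star$ was partitioned into pairs, so $q$ lies in exactly one pair. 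Applying $h$ gives $\sum_{z\ni q}\bar{z}$, which matches the left-hand side. The $q_Z$ case is identical, with $\cE_Z$ and $X$-checks in place of $\cE_X$ and $Z$-checks.

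Finally, for any vertex of $A$ outside $\cV$ (if the parsimonious cone adds vertices), the incident edges all lie outside $\cE_X\sqcup\cE_Z$, so both sides vanish.

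The main obstacle is bookkeeping rather than mathematics. The step that actually needs care is the claim that each $z$ contributes exactly one incident edge at $q$. That relies on the pairing being a perfect matching of $\operatorname{supp} z\cap\ell_X^\star$, which in turn uses $\ell^\star\in\ker\hat{\sigma}^D$ to ensure the overlap has even size. I would state this explicitly, and also note that $\ell_X^\star\wedge\ell_Z^\star$ edges are invisible to $h$. Once the vertex convention (vertices $=X^A$, the domain of $\delta^A$) is pinned down, the identity follows.
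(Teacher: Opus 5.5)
Your proposal is correct and follows the paper's own proof. Both check the single square $\hat{\sigma}^D g = h\,\delta^A$ on the vertices $|q_X\rangle^A$ and $|q_Z\rangle^A$, computing $q_X\mapsto q_Z\mapsto\sum_{z\sim q}z\mapsto\sum_{z\sim q}\bar{z}$ on one side and, on the other, the unique paired edge $\|q_X\tilde{q}_X;z\rangle$ per check, with every other incident edge killed by $h$; your extra case for vertices added by the parsimonious cone and your perfect-matching justification just spell out what the paper leaves implicit.
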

\begin{proof}
    Let $\phi^D:S^D \to \bar{S}^D$ denote the isomorphism.
    Note that $\hat{\sigma}^D g$ maps
    \begin{align}
        |q_X\ket^{A} &\xmapsto{g} q_X \\
        &\xmapsto{\lambda^{D}} q_Z \\
        &\xmapsto{\sigma^{D\top}} \sum_{z\sim q} z \\
        &\xmapsto{\phi^{D}} \sum_{z\sim q} \bar{z}
    \end{align}
    where $z\sim q$ is the adjacency relation to the CSS code $D_{\rm{CSS}}$ and $q\in \cQ^D$ is the qubit corresponding to $q_Z$.
    Also note that $\hat{h}\sigma^S$ maps
    \begin{align}
        |q_X\ket^{A} &\xmapsto{\sigma^S} \sum_{z\sim q} \|q_X \tilde{q}_X;z\ket^{A}_X  +\cdots\\
        &\xmapsto{h} \sum_{z\sim q} \bar{z}
    \end{align}
    where $\tilde{q}_X$ is the unique other qubit that was paired up with $q_X$, and $\cdots$ denotes possible other edges adjacent to $|q_X\ket^{A}$, which are irrelevant since $h$ maps those edges to zero.
    The case is similar for $|q_Z\ket^{A}$ and thus the statement follows.
\end{proof}

\subsubsection{Defect Map}

Note that $\hat{g}g$ maps
\begin{align}
    \label{eq:compat-g-start}
    |q_X\ket^{A} &\xmapsto{g} q_X \\
    &\xmapsto{\lambda^{Q}} q_Z \\
    &\xmapsto{g^{\top}} |\bar{q}_Z\ket^{A} 1\{q\in \ell_X\wedge \ell_Z\}
    \label{eq:compat-g-end}
\end{align}
where $q\in \cQ^D$ is the qubit corresponding to $q_X$, and recall the bar notation\footnote{Technically, should be written as $\overline{|q_Z\ket}^A$.} is to indicate a syndrome. The case for $|q_Z\ket^{A}$ is similar.
Hence, $\hat{g}g$ indicates the overlap $\ell_X^{\star}\wedge \ell_Z^{\star} \subseteq \cQ^D$.
If $\ell_X^{\star}\wedge \ell_Z^{\star} =\varnothing$, then $\hat{g}g=0$ and thus a defect map $p$ can be set $=0$.
However, more generally, a defect line similar to the Layer Codes \cite{williamson2023layer,yuan2026unified,yuan2026quantum} must be introduced to define $p$.
In particular, if vertex $q$ (corresponding to $q_X,q_Z$) is in the overlap $\ell_X^{\star}\wedge \ell_Z^{\star}$, then let $p$ map
\begin{equation}
    \label{eq:defect-map}
    |q_X\ket^{A} \mapsto \|q\ket_Z^A
\end{equation}
and be the zero map otherwise.

\begin{lemma}[Compatibility]
    \label{lem:compatible-surgery}
    The gluing maps $g,\hat{g}$ are compatible with the defect map $p$, i.e.,
    \begin{equation}
        \hat{g}g= \partial^A p+ p^{\top} \delta^A
    \end{equation}
\end{lemma}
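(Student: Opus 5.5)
The plan is to verify the identity $\hat{g}g=\partial^A p+p^{\top}\delta^A$ directly on the basis of $X^A$. Both sides are linear maps $X^A\to\bar{X}^A$, and the basis of $X^A$ consists of the vertices $|q_X\ket^A$ for $q_X\in\ell_X^{\star}$ and $|q_Z\ket^A$ for $q_Z\in\ell_Z^{\star}$ of the measurement graph; the vertices of $A$ are those of $G$ by Theorem~\ref{thm:parsimonious-cone}. So it suffices to compute each side on each vertex, splitting according to whether the underlying data qubit $q\in\cQ^D$ lies in the overlap $\ell_X^{\star}\wedge\ell_Z^{\star}$.

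For the left-hand side I would reuse the computation \eqref{eq:compat-g-start}--\eqref{eq:compat-g-end}. It gives
\[
\hat{g}g|q_X\ket^A=|\bar{q}_Z\ket^A\,1\{q\in\ell_X^{\star}\wedge\ell_Z^{\star}\},
\]
and, symmetrically (with $\lambda^D$ sending $q_Z\mapsto q_X$),
\[
\hat{g}g|q_Z\ket^A=|\bar{q}_X\ket^A\,1\{q\in\ell_X^{\star}\wedge\ell_Z^{\star}\}.
\]
In particular, both sides must vanish on vertices whose qubit is not in the overlap.

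For the right-hand side, I first need $p^{\top}$ explicitly. Using the Riesz identification (transpose with respect to the cell bases, with $Q_X^A\cong Q_Z^A$ and $\bar{X}^A\cong X^A$ isometric copies), the definition \eqref{eq:defect-map} gives $p^{\top}\|q\ket=|\bar{q}_X\ket^A$ on the $\cE_{X\wedge Z}$ edges of overlap qubits, and $p^{\top}$ is zero on all other edges. The two terms on each vertex are then as follows.
\begin{itemize}
\item On $|q_X\ket^A$ with $q$ in the overlap: $\partial^A p|q_X\ket^A=\partial^A\|q\ket=|\bar{q}_X\ket^A+|\bar{q}_Z\ket^A$, since the edge $\|q\ket\in\cE_{X\wedge Z}$ joins exactly those two vertices. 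Also $\delta^A|q_X\ket^A$ is the sum of edges incident to $|q_X\ket$, and among these only $\|q\ket$ survives $p^{\top}$, giving $|\bar{q}_X\ket^A$. The $|\bar{q}_X\ket$ terms cancel mod $2$, leaving $|\bar{q}_Z\ket^A$, which matches the left-hand side.
\item On $|q_Z\ket^A$: here $p|q_Z\ket^A=0$. The only incident edge with nonzero $p^{\top}$ is $\|q\ket$ (present iff $q$ is in the overlap), so the right-hand side is $|\bar{q}_X\ket^A\,1\{q\in\ell_X^{\star}\wedge\ell_Z^{\star}\}$, again matching.
\item On $|q_X\ket^A$ with $q$ not in the overlap: $p|q_X\ket^A=0$, and no incident edge lies in the support of $p^{\top}$, so both sides vanish.
\end{itemize}

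The only delicate point, and the step I expect to be the main obstacle, is bookkeeping rather than mathematics. One must check that the edge sets $\cE_X,\cE_Z,\cE_{\rm{extra}}$ and the extra cells added by Theorem~\ref{thm:parsimonious-cone} never lie in the support of $p^{\top}$; this holds because $p$ only hits $\cE_{X\wedge Z}$ edges and $G$ is a subgraph of $A$. One must also confirm that the transpose is taken with the correct isometric identifications (vertices versus barred syndromes), so that the mod-$2$ cancellation of $|\bar{q}_X\ket^A$ is legitimate.
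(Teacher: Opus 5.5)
Your argument is correct and matches the paper's proof. Both evaluate $\hat g g$, $\partial^A p$ and $p^{\top}\delta^A$ vertex by vertex, and on $|q_X\ket^A$ for an overlap qubit the two $|\bar q_X\ket^A$ contributions cancel; you also spell out the vanishing cases that the paper leaves implicit.

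One small correction. Theorem~\ref{thm:parsimonious-cone} only makes $G$ a subgraph of $A$ and allows $\dim C_0=O(|\cV|\log|\cV|)$, so $A$ may have vertices outside $\cV$. On any such vertex, $g$ and $p$ vanish, and no $\cE_{X\wedge Z}$ edge is incident to it, since those edges keep their endpoints $|q_X\ket^A,|q_Z\ket^A$. Hence both sides are zero there as well.
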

\begin{proof}
    By Eq. \eqref{eq:compat-g-start}-\eqref{eq:compat-g-end}, we see that if $q\in \ell_X\wedge \ell_Z$ (with corresponding $q_X,q_Z$), then
    \begin{align}
        |q_X\ket^{A} &\xmapsto{\hat{g}g} |\bar{q}_Z\ket^A\\
        |q_Z\ket^{A} &\xmapsto{\hat{g}g} |\bar{q}_X\ket^A
    \end{align}
    Consider the map $\partial^A p$ acting on
    \begin{align}
        |q_X\ket^A &\xmapsto{p} \|q \ket^A_Z \\
        &\xmapsto{\partial^A} |\bar{q}_X \ket^A + |\bar{q}_Z\ket^A
    \end{align}
    And acting on
    \begin{equation}
        |q_Z\ket^{A} \xmapsto{p} 0  \xmapsto{\partial^A} 0
    \end{equation}
    Conversely, consider the map $p^{\top}\delta^{A}$ as follows
    \begin{align}
        |q_X\ket^{A} &\xmapsto{\delta^{A}} \|q\ket^{A}_X +\cdots \\
        &\xmapsto{p^{\top}} |\bar{q}_X\ket^A
    \end{align}
    where technically, $\delta^{A}$ maps $|q_X\ket^{A}$ also to other edges. However, since $p^{\top}$ maps the other edges to zero, they are irrelevant and thus omitted as $\cdots$.
    The map also acts as follows
    \begin{align}
        |q_Z\ket^{A} &\xmapsto{\delta^A} \|q \ket^{A} +\cdots \\
        &\xmapsto{p^{\top}} |\bar{q}_X\ket^{A}
    \end{align}
    Hence, the statement then follows.
\end{proof}

\subsection{Non-CSS Logical Measurement}

By Theorem \ref{thm:symplectic-embedding}, logical measurement follows.

\begin{theorem}[Fig. \ref{fig:surface-cone}]
    \label{thm:surgery}
    Let $D$ be a $\llb n,k,d\rrb$ CSS qLDPC code in Eq. \eqref{eq:data-symplectic}, and $\ell^{\star}=\ell_X^{\star}\oplus \ell_Z^{\star}$ be a logical operator in $D$ with weight $\Weight$.
    Then there exists an ancilla CSS qLDPC code $A$ with $O(\Weight \log \Weight)$ qubits such that if the height-2 cone $C$ is constructed by Diagram \eqref{eq:surgery} with maps $g,h,p$ in Eq. \eqref{eq:g-map}, \eqref{eq:h-map} and \eqref{eq:defect-map}, then $C$ is a non-CSS qLDPC code with 
    \begin{equation}
        n(C)=n+O(\Weight \log \Weight)
    \end{equation}
    qubits.
    $C$ also has $k-1$ logical qubits, or more specifically,
    \begin{equation}
        H_1(C) \cong [\ell^{\star}]^{\perp}/ [\ell^{\star}]
    \end{equation}
    where $[\ell^{\star}]^{\perp}$ is the symplectic complement of $[\ell^{\star}]$ in $H_1(D)$ with respect to the induced symplectic form $[\Lambda^D]$.
    $C$ also has code distance
    \begin{equation}
        d(C)=\Omega(1) d(D)
    \end{equation}
    where the $\Omega(1)$ depends on the Cheeger constant of the measurement graph in Definition \ref{def:measurement-graph}, and $d(D)$ is that of $D$.
\end{theorem}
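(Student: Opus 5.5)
The plan is to apply Theorem~\ref{thm:symplectic-embedding} with the data $D$ of Eq.~\eqref{eq:data-symplectic} and the ancilla $A$ obtained from Theorem~\ref{thm:parsimonious-cone} applied to the measurement graph $G$ of Definition~\ref{def:measurement-graph}. The argument has three parts: construction and size, logical content, and distance.

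\emph{Construction and size.} Since $|\cV|=\Weight$ and $\Delta(G)\le \qubit+1=O(1)$, Theorem~\ref{thm:parsimonious-cone} gives a cell complex $A\supseteq G$ with $O(\Weight\log\Weight)$ cells in each degree. It also gives bounded weights and degrees as in Eq.~\eqref{eq:reduced-weights-ancilla}, together with $H_1(A)=H_2(A)=0$ and $H_0(A)\cong\F_2$. We extend $g,h,p$ by zero off $G$. Lemma~\ref{lem:chain-surgery} still shows $(g,h)$ is a chain map, because the faces of $A$ are mapped to zero and $\hat{h}$ only sees edges of $G$. Lemma~\ref{lem:compatible-surgery} gives compatibility. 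Theorem~\ref{thm:symplectic-embedding} then says $C$ is a symplectic complex, i.e.\ a stabilizer code, with $n+\dim Q^A=n+O(\Weight\log\Weight)$ qubits. It is qLDPC because every block of $\sigma$ in the proof of Theorem~\ref{thm:symplectic-embedding} is sparse: $g$, $h$ and $p$ each touch $O(1)$ cells per cell, since each data check meets $\ell^\star$ in boundedly many pairs.

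\emph{Logical content.} Since $H_1(A)=0$, Theorem~\ref{thm:symplectic-embedding} gives $H_1(C)\cong H_1(C^{\bg})$ for $C^{\bg}=H^0(A)\xrightarrow{[g]}H_1(D)\xrightarrow{[\hat g]}H_0(A)$. Connectivity of $A$ gives $H^0(A)=\ker\delta^A=\F_2\cdot\mathbf 1$, the all-ones vertex cochain. Its image is $[g](\mathbf 1)=[\ell_X^\star\oplus\ell_Z^\star]=[\ell^\star]$, which is nonzero since $\ell^\star$ is a nontrivial logical. Under $H_0(A)=H^0(A)^*$, the computation in the proof of Theorem~\ref{thm:symplectic-embedding} shows $[\hat g][\ell]$ is the functional $\mathbf 1\mapsto[\Lambda^D]([\ell],[\ell^\star])$. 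Hence
\[
\ker[\hat g]=[\ell^\star]^\perp,\qquad \operatorname{im}[g]=[\ell^\star],
\]
so $H_1(C)\cong[\ell^\star]^\perp/[\ell^\star]$. This space has dimension $2k-2$, i.e.\ $k-1$ logical qubits, because $[\Lambda^D]$ is non-degenerate by Lemma~\ref{lem:induced-symplectic}.

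\emph{Distance.} This is the main obstacle. I would follow the Cleaning Lemma approach of \cite{yuan2026unified}, as suggested in the Remark after Theorem~\ref{thm:symplectic-embedding}. Take a nontrivial logical $L=(a_X,\ell,a_Z)$ of $C$ with ancilla parts $a_X\in Q_X^A$ and $a_Z\in Q_Z^A$.

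1. Reduce $a_Z$. The syndrome condition in the $\bar X^A$ block gives $\partial^A a_Z=\hat g\ell+p^\top a_X$. Since $H_1(A)=0$, $a_Z$ is determined modulo boundaries. Multiplying by $Z^A$-stabilizers (faces) replaces $a_Z$ by a minimal filling of its boundary, supported on edges of $G$. This cleaning is not free: through the $\hat h$ block it also alters the data part of $L$.

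2. Reduce $a_X$. The $\bar Z^A$ block gives $\delta^A a_X=h\ell+\cdots$. Using the $\Omega(1)$ Cheeger constant of $G$, I would multiply by $X^A$ vertex checks, which act on data through $g$ and on $Q_Z^A$ through $p$. This should let me assume $|a_X|$ is at most $O(1)$ times the weight of the data restriction that generated it, following the standard argument in \cite{ide2025fault,yuan2026parsimonious}.

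3. Bound the data part. The cleaned $L$ restricts to a data operator $\ell'$ that is a nontrivial logical of $D$, not in $\operatorname{span}[\ell^\star]$, with $|\ell'|\le O(1)|L|$. The Cheeger constant prevents the vertex-check multiplications from changing weight by more than a constant factor. This gives $|L|\ge\Omega(1)\,d(D)$.

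The delicate point is the defect map $p$. It couples the $X$-vertex cleaning to $Q_Z^A$ on the $\cE_{X\wedge Z}$ edges, so the two cleanings interact and the expansion bound must absorb this extra $O(1)$ cross term.
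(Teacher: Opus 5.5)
Your size and logical-content arguments match the paper's proof: apply Theorem~\ref{thm:symplectic-embedding} with $H_1(A)=0$, $H^0(A)=\F_2\cdot\mathbf{1}$, $[g]\mathbf{1}=[\ell^\star]$ and $\ker[\hat g]=[\ell^\star]^\perp$. Your explicit assumption that $\ell^\star$ is nontrivial is what gives the count $k-1$. For the distance, the paper only cites the Cleaning Lemma. Your attempt to carry it out misreads the block structure of the cone, and the inequality you need is never derived.

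\begin{itemize}
\item The $\bar Z^A$ row of $\hat\sigma$ gives $\delta^A a_X=0$, not $\delta^A a_X=h\ell+\cdots$. The data row is $h a_X=\hat\sigma^D\ell$.
\item The $Z^A$ checks form the column $(0,0,\partial^A)^\top$ of $\sigma$, so multiplying by faces never touches the data. The map $\hat h$ is how \emph{data} checks act on $Q_Z^A$, so your step~1 rationale is false.
\item Step~2's stated goal is backwards. What you need is that the data change caused by cleaning is bounded by $|a_X|$, not that $|a_X|$ is bounded by the data weight. Step~3 only hints at this.
\end{itemize}

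The missing idea is that $\delta^A a_X=0$ together with $H^1(A)\cong H_1(A)^*=0$ (Lemma~\ref{lem:cohomology-iso}) gives $a_X=\delta^A x$ exactly, so $a_X$ can be removed completely.
\begin{itemize}
\item Replace $x$ by $x+\mathbf{1}$ if necessary so that $|x\cap\cV|\le|\cV|/2$. Let $\beta=\Omega(1)$ be the Cheeger constant of $\cG$. Every edge of $\cG$ leaving $x\cap\cV$ lies in $\delta^A x$, and $g$ is injective on $\cV$ and zero elsewhere. Hence $|gx|=|x\cap\cV|\le\beta^{-1}|a_X|$.
\item Then $L+\sigma x=(0,\ell+gx,a_Z+px)$, and the data row forces $\hat\sigma^D(\ell+gx)=0$.
\item Suppose $\ell+gx=\sigma^D s+\epsilon\ell^\star$. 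Adding $\sigma(\epsilon\mathbf{1},s,0)$ leaves $(0,0,a_Z'')$ with $\partial^A a_Z''=0$. Since $H_1(A)=0$, this is a face stabilizer, so $L$ was trivial.
\item Hence for nontrivial $L$ you get $d(D)\le|\ell|+\beta^{-1}|a_X|\le\max(1,\beta^{-1})|L|$. Here weights are Hamming weights in $P$; passing to qubit support costs at most a factor $2$.
\end{itemize}
Step~1 is unnecessary. The ``delicate'' coupling through $p$ is also a non-issue: $p$ only changes $a_Z$, which is simply dropped from the lower bound.

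A minor point: in extending $(g,h)$ by zero, the relevant fact is not that faces map to zero. It is that vertices of $A$ outside $\cV$ are not endpoints of edges of $\cG$, so $h\delta^A$ vanishes on them, matching $g=0$ there.
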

\begin{proof}
    Note that by Theorem \ref{thm:parsimonious-cone}, we have $H^0(A)\cong \F_2$ with unique basis element $[\cV]$ given by the (unique) connected component of the measurement graph.
    Note that
    \begin{equation}
        [\ell^{\star}] = [g][\cV]
    \end{equation}
    Conversely, note that $\ker[\hat{g}]=[\ell^{\star}]^{\perp}$ and thus the statement follows from Theorem \ref{thm:symplectic-embedding} via the fact that $H_1(A)=0$.
    Proof of the code distance is standard by utilizing the Cleaning Lemma in \cite{yuan2026unified}, see, e.g., Theorem V.1 of \cite{yuan2026parsimonious} or Theorem IV.2 of \cite{yuan2026quantum}.
\end{proof}

\begin{remark}[Recovery and Threshold, Fig. \ref{fig:surface-cone-stabilizer}]
    \label{rem:recovery-and-threshold}
    Note that in contrast to conventional CSS measurement, the original logical $\ell^{\star}$ is not a stabilizer of the deformed code $C$.
    Instead, as shown in Fig. \ref{fig:surface-cone-stabilizer}, $\ell^{\star}$ + some \textit{defects} of $Z$-type on the ancilla forms a stabilizer. Specifically, it contains the following stabilizer
    \begin{equation}
        \sigma \cV = 
        \begin{pmatrix}
            0 \\
            \ell^{\star} \\
            p\cV
        \end{pmatrix}
    \end{equation}
    where $\cV\in X^A$ is the unique connected component (sum over all vertices) of the ancilla cell complex, and thus in addition to $\ell^\star$, it also acts as $Z$-Paulis corresponding to $p\cV$ on the ancilla.
    Therefore, to recover the measurement of $\ell^{\star}$, the ancilla qubits must be measured out transversally in the $Z$-basis.
    Measurement of stabilizer $\sigma \cV$ + measuring out the ancilla will provide the information of the logical measurement of $\ell^{\star}$.

    To show that this process is fault tolerant (possesses a threshold), a corresponding spacetime fault complex is built from first principles (see Appendix \ref{sec:non-CSS-measure}).
    Note that even in conventional measurement of, say, $X$-type logical, this was necessary to recover the original $Z$-type stabilizers of the data code (see Appendix \ref{sec:CSS-measure}).
\end{remark}
\subsection{Non-CSS Logical on Non-CSS Code}
\label{sec:ext-non-CSS}
In this subsection, we shall extend the formulation slightly to incorporate non-CSS logical measurement on any non-CSS code.
Specifically, let a non-CSS data code be written in the form of a symplectic complex
\begin{equation}
    D=S^D \xrightarrow{\sigma^D} P^D \xrightarrow{\hat{\sigma}^D}\bar{S}^D
\end{equation}
where $S^D,\bar{S}^D$ are isometric, and $P^D=Q_X^D\oplus Q_Z^D$ denotes the corresponding symplectic basis so that $\Lambda^D$ is the symplectic form.
Given a non-CSS logical of $D$, we can thus define a measurement graph in an analogous manner.
\begin{definition}[Measurement (Multi-)Graph]
    \label{def:measurement-graph-non-CSS}
    Let $\ell^{\star}=\ell_X^{\star}\oplus \ell_Z^{\star}$ denote a logical of $D$, i.e., $\ell^{\star} \in \ker \hat{\sigma}^D$.
    Let $\cV_X$ denote a copy of collection $\ell_X^{\star}\subseteq \cQ_X^D$ so that vertices are denoted by $|q_X\ket$ for $q_X\in \ell_X^{\star}$ and similarly define $\cV_Z$ so that $\cV=\cV_X \sqcup \cV_Z$.
    Consider the collection of checks $s\in S$ which \textbf{nontrivially overlaps} with $\ell^{\star}$, i.e., anti-commutes with $\ell^{\star}$ at some qubit
    \begin{equation}
        \Lambda^D(\ell^{\star}|_{q},(\sigma^D s)|_q)=1
    \end{equation}
    where $\ell|_{q}=\ell_X|_{q_X}\oplus \ell_Z|_{q_Z}$, and  $q_X,q_Z$ are the $X,Z$-sector qubits corresponding to $q$.
    In this case, we see that one and only one of the following can be true
    \begin{itemize}
        \item[($X$)] $q_X\in \ell^{\star}$ and $q_Z \in \sigma^D s$
        \item[($Z$)] $q_Z\in \ell^{\star}$ and $q_X\in \sigma^D s$
    \end{itemize}
    Note that $\ell^{\star},\sigma^D s$ must commute. Hence, there exists an even number of $q$ such that $\ell^{\star},\sigma^D s$ overlap (anti-commute) at $q$, and thus can be arbitrarily paired up.
    Given a pair $qq'$ in the overlap, determine whether $q$ is in scenario $(X)$ or $(Z)$, and similarly for $q'$, which we denote by $(X'),(Z')$.
    If in scenario $(\alpha)\cap(\beta')$, then add an edge between vertices in $\cV_\alpha\sqcup\cV_\beta$ corresponding to $q_\alpha\lr q_\beta'$.
    In any case, denote the edge as $\|qq';s\ket$, and let $\cE_S$ denote the collection of $\|qq';s\ket$ over all pairs and overlapping checks $s$.
    
    If $\ell_X^{\star},\ell_Z^{\star}$ regarded as subsets of $\cQ^D$ has nontrivial overlap $\ell_X^{\star}\wedge \ell_Z^{\star} \ne \varnothing$, then add an edge connecting $|q_X\ket\in \cV_X$ and $|q_Z\ket\in \cV_Z$ where $q_X\lr q_Z$ correspond to the same qubit in the overlap $\ell_X^{\star}\wedge \ell_Z^{\star}$.
    Let $\cE_{X\wedge Z}$ denote the collection of such edges.

    Further add arbitrary edges $\cE_{\rm{extra}}$ while keeping the max degree fixed so that the resulting (multi-)graph $\cG=(\cE,\cV)$ is connected and expanding with $\Omega(1)$ Cheeger constant\footnote{This can be achieved by adding random edges, or using more sophisticated methods~\cite{lubotzky1988ramanujan,hoory2006expander,alon2008elementary}.}.
    Then $\cG$ is the \textbf{measurement (multi-) graph} associated with $\ell^{\star}$, with associated graph complex $G=E\to V$.
\end{definition}

Note that 
\begin{align}
    \cV&= \cV_X \sqcup  \cV_Z \\
    \cE&= \cE_S\sqcup  \cE_{X\wedge Z} \sqcup  \cE_{\rm{extra}}
\end{align}
Also note that the max degree $\Delta(G) \le \qubit+1$ where $\qubit$ is the total qubit degree of data code $D$.
The remainder of the construction is then exactly as non-CSS logical measurement on CSS data code, i.e., starting from Theorem \ref{thm:parsimonious-cone} and culminating in Theorem \ref{thm:surgery} (except that the data code $D$ is now non-CSS qLDPC).

\section{Non-CSS Layer Codes}
\label{sec:layer}

In this section, we show that using the non-CSS embedding formalism in Theorem \ref{thm:symplectic-embedding}, the CSS Layer Codes in \cite{williamson2023layer} can be generalized nearly immediately to non-CSS (i.e., the input code can be non-CSS qLDPC), by a simple observation of the geometry of the layers and defects.
Specifically, the check layers should be the union of $X$- and $Z$-check layers, as elaborated in Definition \ref{def:raw-check-layer}.
The generalization also applies to the most recent breakthrough of 4D and 5D CSS Layer Codes \cite{yuan20264d}, though for the sake of simplicity, we shall restrict our attention to the conventional 3D Layer Codes.
To guide the reader, we shall consider the example input code with checkes $XIZ,ZZX$ on three qubits, as shown in Fig. \ref{fig:layers} and \ref{fig:interactions}.

\subsection{Input Code}
\label{sec:layer-input}
Similar to the setup in \cite{yuan2026unified}, let the non-CSS input code $B$ be written as a symplectic complex
\begin{equation}
    B = S \xrightarrow{\sigma} P\xrightarrow{\hat{\sigma}}\bar{S}
\end{equation}
where $S,\bar{S}$ are isometric with checks $s\in \cS$ corresponding to syndromes $\bar{s}\in \bar{\cS}$, and $P=Q_X\oplus Q_Z$ so that $X,Z$-sector qubits $q_X\in \cQ_X, q_Z\in \cQ_Z$ are in one-to-one correspondence with the physical qubits $q\in \cQ$.
Let $\weight,\qubit$ denote the maximum check weight and total qubit degree of $B$, respectively.

Arrange the qubits $|\cQ|=n$ along a 1D grid $[n]$ with axes $\hat{q}$ in a fixed manner.
Since $Q_X, Q_Z$ are the $X,Z$-sectors of qubits $Q$, we shall use $q_X,q_Z$ to denote the corresponding basis elements with coordinate $q\in [n]$.
Let $R_Q$ denote the corresponding repetition code on $|\cQ|=n$ vertices, which one may view as arranged along the $\hat{q}$ direction.
Let $\hat{x},\hat{z}$ denote the transverse directions $\perp \hat{q}$, and let $n_S=|\cS|$ denote the number of parity checks.
Let $R_X,R_Z$ denote repetition codes on $n_S$ vertices, which one may view as arranged along the $\hat{x},\hat{z}$ direction, respectively.

Moreover, as we shall soon see in Section \ref{sec:defects-3D}, to decongest the defect lines, we also need additional space $\weight \qubit n_S$, corresponding to an upper bound on the number of overlapping pairs $ss'$ of checks $s\sim s'$.
Let $N=n+\weight \qubit n_S=O(\weight \qubit)n$ so that the 3D grid is given by
\begin{equation}
    [n_S]\times [N] \times [n_S] \in \hat{x}\times \hat{q}\times \hat{z}
\end{equation}
In particular, let $j(ss')\in (n,N]$ denote the coordinate corresponding to overlapping pair $ss'$. Note that only the first $n$ coordinates correspond to qubits $q\in \cQ$.

\subsection{Qubit Layers}
\label{sec:layer-qubit}
Similar to the conventional CSS Layer Codes, the qubit layers are defined simple as
\begin{definition}[Qubit Layers]
    As depicted in Fig. \ref{fig:qubit-layers}, define the qubit layer as the 3-term cell complex
    \begin{equation}
        R_X\otimes q\otimes R_Z^{\top}
    \end{equation}
    where $q\in \cQ=[n]$.
    By Example \ref{ex:CSS-symplectic}, the qubit layer is a CSS code, which can be written as a symplectic complex $D(q)$.
\end{definition}

\subsection{Check Layers}
\label{sec:layer-check}
\begin{figure}[ht]
\subfloat[\label{fig:qubit-layers}]{%
    \centering
    \includegraphics[width=0.2\columnwidth]{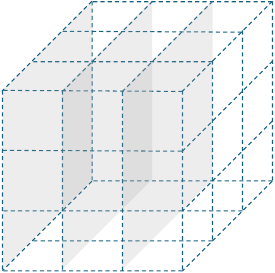}
}
\quad
\subfloat[\label{fig:check-1-layer}]{%
    \centering
    \includegraphics[width=0.2\columnwidth]{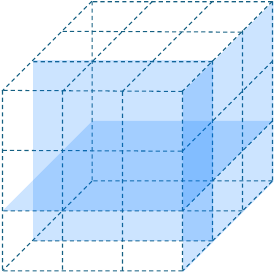}
}
\quad
\subfloat[\label{fig:check-2-layer}]{%
    \centering
    \includegraphics[width=0.2\columnwidth]{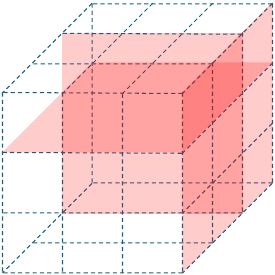}
}
\caption{Layers. Consider the example input code with checks $XIZ,ZZX$. (a) Qubit layers, one for each qubit, (b), (c) denote the check layer corresponding to $XIZ,ZZX$, respectively.
}
\label{fig:layers}
\end{figure}
Recall that in the conventional CSS Layer Codes, each $X,Z$-check layer is simply a 2D plane with fixed $\hat{x},\hat{z}$ coordinate.
A straightforward extension to non-CSS code is to regard each $s$ check layer as the \textit{union} of some $X,Z$-type check layers.
\begin{definition}[$XZ$ Graph]
    \label{def:XZ-graph}
    Arrange the checks $\cS$ along the diagonal of the $\hat{x}\times \hat{z}$ plane in an arbitrary but fixed manner, so that each $s\in \cS$ has a fixed coordinate $\in [n_S]$ so that $(s,s)\in [n_S]^2$ is along the diagonal.
    Define the \textbf{XZ graph} of check $s$ as the collection of rows and columns in the $\hat{x}\times \hat{z}$ plane passing through $(s,s)$, i.e.,
    \begin{equation}
        \XZ(s) = (s\times [n_S]) \cup ([n_S]\times s)
    \end{equation}
    where we shall treat $\XZ(s)$ as a graph complex with the corresponding edges and vertices.
    Note that the XZ graphs $\XZ(s),s\in \cS$ are edge decongested in $\hat{x}\times \hat{z}$, i.e., no edge belongs to more than one XZ graph.
\end{definition}

\begin{definition}[Raw Check Layer]
    \label{def:raw-check-layer}
    Let $R_Q$ be the repetition code on $[N]$ vertices and define the \textbf{(raw) check layer} of check $s\in \cS$ as the 3-term cell complex
    \begin{equation}
        A_{\mathrm{raw}}(s) =\XZ(s) \otimes R_Q
    \end{equation}
    and thus correspond to union of $X,Z$-check layers in conventional CSS Layer Codes \cite{williamson2023layer}.
\end{definition}

\begin{remark}[Raw Check Layer Congestion]
    \label{rem:raw-check-layer-congestion}
    Note that $\XZ(s),s\in \cS$ are edge decongested, but may intersect at vertices.
    However, due to the degree of the 2D grid $\hat{x}\times \hat{z}$, each vertex can belong to at most two check layers.
    Hence, the raw check layers $A_{\mathrm{raw}}(s),s\in \cS$ have edge congestion as follows
    \begin{itemize}
        \item 1-congestion for edges in the $\hat{x}\times \hat{z}$ plane
        \item 2-congestion for edges in the $\hat{q}$ axis
    \end{itemize}
\end{remark}

As depicted in Fig. \ref{fig:check-check-inter}, due to the overlap of distinct (raw) check layers,  additional modifications are necessary so that the defects can be paired up locally in the 3D grid.
\begin{definition}
    \label{def:check-layer}
    As depicted in Fig. \ref{fig:check-1-layer} and \ref{fig:check-2-layer}, define the \textbf{check layer} of check $s\in \cS$ as the union
    \begin{equation}
        A(s) = A_{\mathrm{raw}}(s) + \bigoplus_{s'\sim s} A(s';s)
    \end{equation}
    where the summation of $s'$ is over all checks which overlap in the support with $s$ relative to the input code $B$, and
    \begin{equation}
        A(s';s)= R_X\otimes j(ss')\otimes R_Z
    \end{equation}
    For notation, label the vertices in $A(s)$ as $|xqz;s\ket$ where $xqz$ are coordinates in the 3D grid. Similarly, we can label edges along the $\hat{x}$ direction as $|x^{\pm} qz;s\ket$ where $x^\pm =x\pm 1/2$ are half-integers and similarly, for $|xq^{\pm} z;s\ket$ and $|xqz^{\pm};s\ket$.

\end{definition}

\begin{remark}[Check Layer Congestion]
    \label{rem:check-layer-congestion}
    Note that the addition of layers $A(s',s)$ only affects the congestion for edges within the $\hat{x}\times \hat{z}$ at $\hat{q}$ coordinate in $(n,N]$. In particular, the check layers $A(s),s\in \cS$ have 2 edge congestion.
\end{remark}
\subsection{Attaching Check and Qubit Layers}
\label{sec:layer-attach}

As suggested by Theorem \ref{thm:symplectic-embedding}, define
\begin{align}
    \label{eq:CS-3D}
    A &= \bigoplus_{s\in \cS^B} A(s) \\
    D &= \bigoplus_{q\in \cQ^B} D(q)
    \label{eq:CP-3D}
\end{align}
Note that collections $A(s),s\in \cS$ and $D(q),q\in \cQ$ have finite congestion and thus embedded in the 3D grid.
Hence, it remains to show that the maps $g,h,p$ in Theorem \ref{thm:symplectic-embedding} are local in 3D.
We copy the diagram here for ease of reference.
\begin{equation}
    \begin{tikzpicture}[baseline]
    \matrix(a)[matrix of math nodes, nodes in empty cells, nodes={minimum size=25pt},
    row sep=2em, column sep=2em,
    text height=1.25ex, text depth=0.25ex]
    {&& X^{A}  & Q^{A}_X & \bar{Z}^{A}\\
    & S^{D}  & P^{D}  & \bar{S}^{D} &\\
    Z^{A} & Q^{A}_Z & \bar{X}^{A} &&\\};
    \path[->,font=\scriptsize]
    (a-1-3) edge node[above]{$\delta^A$}  (a-1-4)
    (a-1-4) edge node[above]{$\delta^A$}  (a-1-5)
    (a-2-2) edge node[above]{$\sigma^{D}$}  (a-2-3)
    (a-2-3) edge node[above]{$\hat{\sigma}^{D}$}  (a-2-4)
    (a-3-1) edge node[above]{$\partial^A$}  (a-3-2)
    (a-3-2) edge node[above]{$\partial^A$}  (a-3-3);
    \path[->,font=\scriptsize]
    (a-1-3) edge node[right]{$g$}  (a-2-3)
    (a-1-4) edge node[right]{$h$}  (a-2-4)
    (a-2-2) edge node[right]{$\hat{h}$}  (a-3-2)
    (a-2-3) edge node[right]{$\hat{g}$}  (a-3-3);
    \path[->,dashed,black!30!green,font=\scriptsize]
    (a-1-3) edge[bend right=80] node[left]{$p$} (a-3-2)
    (a-1-4) edge[bend left=80] node[right]{$p^{\top}$} (a-3-3);
    \end{tikzpicture}
\end{equation}

\begin{figure}[ht]
\subfloat[\label{fig:qubit-check-1-inter}]{%
    \centering
    \includegraphics[width=0.2\columnwidth]{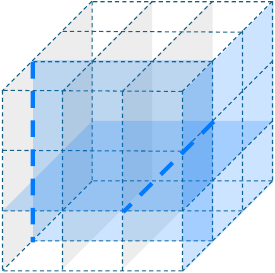}
}
\quad
\subfloat[\label{fig:qubit-check-2-inter}]{%
    \centering
    \includegraphics[width=0.2\columnwidth]{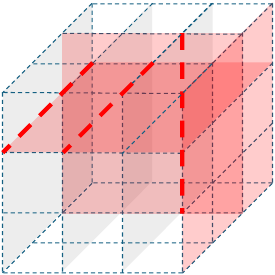}
}\quad
\subfloat[\label{fig:check-check-inter}]{%
    \centering
    \includegraphics[width=0.2\columnwidth]{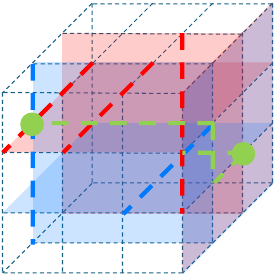}
}
\caption{Interactions. Continue with the example input code with checks $XIZ,ZZX$ in Fig. \ref{fig:layers} (a), (b) denote $XIZ$ and $ZZX$ check layers interacting with the qubit layers, respectively. (c) denotes the defect (green) between the $XIZ$ and $ZZX$ check layers.
}
\label{fig:interactions}
\end{figure}

\subsubsection{Gluing Maps}

As depicted by Fig. \ref{fig:qubit-check-1-inter} and \ref{fig:qubit-check-2-inter}, we define the gluing map $g$ via
\begin{align}
    \label{eq:g-3D}
    |xqz;s\ket^A &\mapsto  |xqz\ket^D_X 1\{q_X \sim s=x\} \\
    &\quad +|xqz\ket^D_Z 1\{q_Z \sim s=z\}
\end{align}
(and zero otherwise), where $\sim$ is with respect to the input code $B$, e.g., $q_X\sim s$ if $q_X\in \sigma s$ and $q_X$ are the $X$- qubit in $\cQ_X$ corresponding to $q$ (the case is similar for $q_Z$).
Further define the gluing map $h$ via
\begin{align}
    \label{eq:h-3D}
    |xqz^+;s\ket^{A} &\mapsto |xqz^+\ket^{D}_X 1\{q_X\sim s=x\} \\
    |x^+ qz;s\ket^{A} &\mapsto |x^+q z\ket^{D}_Z 1\{q_Z\sim s=z\}
\end{align}
(and zero otherwise), where $x^{\pm} =x\pm 1/2$ are used to denote the edges along $\hat{x}$ and similarly for $z^\pm$.
Note that $g,\hat{h}$ map cells to those with the same coordinate, and thus are local in 3D.
It's then straightforward to check that

\begin{lemma}[Chain Map]
    \label{lem:chain-3D}
    The gluing maps $(g,h)$ in Eq. \eqref{eq:g-3D} and \eqref{eq:h-3D} form a chain map.
\end{lemma}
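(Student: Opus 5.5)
The plan is to verify the chain-map identity $\hat{\sigma}^D g = h\,\delta^A$ directly on the basis of $X^A$, namely on the vertices $|xqz;s\ket$ of the check layers $A(s)$. Both sides are local and preserve the coordinate $q$, so the verification reduces to a computation in the $\hat{x}\times\hat{z}$ plane at a fixed $q$. Since all maps are defined cell by cell, linearity then finishes the argument. By the footnote to Theorem~\ref{thm:symplectic-embedding}, $(\hat{h},\hat{g})$ will then automatically be a chain map as well.

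First I fix the conventions for the qubit layer $D(q)=R_X\otimes q\otimes R_Z^{\top}$, written as a symplectic complex via Example~\ref{ex:CSS-symplectic}. The $X$-sector qubit $|xqz\ket^D_X$ is sent by $\hat{\sigma}^D$ (equivalently $\phi^D\sigma^{D\top}\lambda^D$) to the sum of the $Z$-check syndromes on the adjacent $\hat{z}$-edges, $|xqz^{+}\ket^D_X+|xqz^{-}\ket^D_X$. Symmetrically, $|xqz\ket^D_Z$ is sent to $|x^{+}qz\ket^D_Z+|x^{-}qz\ket^D_Z$. Here any label outside the grid is zero, following the convention of Example~\ref{ex:rep}. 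This is the same computation as in Lemma~\ref{lem:chain-surgery}: apply $\lambda^D$, then $\sigma^{D\top}$, then $\phi^D$.

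Next I compute $h\,\delta^A|xqz;s\ket$. The coboundary $\delta^A$ of a vertex of $A(s)$ is the sum of its incident edges in $A(s)$. These edges fall into three groups.
\begin{itemize}
\item The $\hat{q}$-edges $|xq^{\pm}z;s\ket$ are killed by $h$.
\item The $\hat{z}$-edges $|xqz^{\pm};s\ket$ exist only in the column $x=s$ of $\XZ(s)$, or in a layer $A(s';s)$.
\item The $\hat{x}$-edges $|x^{\pm}qz;s\ket$ exist only in the row $z=s$, or in a layer $A(s';s)$.
\end{itemize}
By definition, $h$ keeps the $\hat{z}$-edges exactly when $q_X\sim s$ and $x=s$, and the $\hat{x}$-edges exactly when $q_Z\sim s$ and $z=s$. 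The layers $A(s';s)$ sit at $\hat{q}$-coordinates $j(ss')\in(n,N]$, where no qubit layer lives. There both $g$ and $h$ vanish, so those vertices contribute zero on both sides. For $q\in[n]$, I then split into the following cases.
\begin{itemize}
\item Case $x=s$, $z\neq s$: only the column is present. The surviving term is $1\{q_X\sim s\}\bigl(|sqz^{+}\ket^D_X+|sqz^{-}\ket^D_X\bigr)$, which matches $\hat{\sigma}^D g$.
\item Case $z=s$, $x\neq s$: this is symmetric, with the $Z$-sector and the row.
\item Case $x\ne s$, $z\ne s$: the vertex does not lie in $A_{\mathrm{raw}}(s)$, so there is nothing to check.
\item Case $x=z=s$, the diagonal vertex: both the row and the column pass through it. The two indicator terms of $g$ then pair off with the $\hat{z}$-edges and the $\hat{x}$-edges respectively.
\end{itemize}

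The main obstacle is bookkeeping rather than conceptual. I must make sure that the edges kept by $h$ at the boundary of the column or row are exactly those present in $\XZ(s)$, so that the missing edges at the grid boundary match the zero convention in $\hat{\sigma}^D$. I must also make sure that the diagonal vertex, where the $X$- and $Z$-sector contributions coexist (including the case where a single qubit is both $q_X\sim s$ and $q_Z\sim s$, i.e.\ $s$ acts as $Y$ on it), does not double count. I would check that vertex first, explicitly. The key point to verify is that the column and the row of $\XZ(s)$ are disjoint in edges, so the two contributions land in different sectors $\bar{Z}^D$ and $\bar{X}^D$ of $\bar{S}^D$ and simply add.
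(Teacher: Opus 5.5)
Your proposal is correct and is essentially the paper's own proof. The paper also evaluates $\hat{\sigma}^D g$ and $h\,\delta^A$ directly on a vertex $|xqz;s\ket^A$ and compares the two, folding your case split (column, row, diagonal, the $A(s';s)$ planes at $j(ss')>n$, and the grid boundary) into the indicators $1\{q_X\sim s=x\}$ and $1\{q_Z\sim s=z\}$.
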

\begin{proof}
    Note that
    \begin{align}
        \hspace{-6pt}|xqz;s\ket^{A} &\xmapsto{g} |xqz\ket_X^{D} 1\{q_X\sim s=x\}\\
        &\quad + |xqz\ket^{D}_Z 1\{q_Z\sim s=z\}  \\
        &\xmapsto{\hat{\sigma}^{D}}(|xqz^+\ket_X^{D} +|xqz^-\ket_X^{D})  1\{q_X\sim s=x\}\\
        &\quad + (|x^+ qz\ket^{D}_Z + |x^- qz\ket^{D}_Z) 1\{q_Z\sim s=z\}
    \end{align}
    Also note that
    \begin{align}
        \hspace{-6pt}|xqz;s\ket^{A} &\xmapsto{\delta^A} \left(|xqz^+;s\ket^{\top} + |xqz^-;s\ket^{\top}\right)1\{s=x\} \\
        &\quad + \left(|x^+qz;s\ket^{\top} + |x^-qz;s\ket^{\top}\right)1\{s=z\} \\
        &\xmapsto{h} \left(|x^+qz\ket_X^D + |x^-qz\ket_X^D\right)1\{q_X\sim s=x\} \\
        &\quad + \left(|x^+qz\ket_Z^D + |x^-qz\ket_Z^D\right)1\{q_Z \sim s=z\}
    \end{align}
    Hence, $(g,h)$ is a chain map.
\end{proof}

\subsubsection{Defect Maps}
\label{sec:defects-3D}
Note that $\hat{g}g$ is derived in Eq. \eqref{eq:compat-g-start}-\eqref{eq:compat-g-end}, and as depicted by the green dots in Fig. \ref{fig:check-check-inter}, indicate the overlap between checks.
Hence, similar to the CSS Layer codes, the goal is to find defect lines which pair up the overlaps.
Concretely, the defect map $p$ has one job: whenever two checks $s,s'$ act with clashing Pauli type on a shared qubit -- one contributes an $X$ Pauli there, the other a $Z$ Pauli -- Theorem \ref{thm:symplectic-embedding} requires this clash to be absorbed as $\hat{g}g=\partial^A p+p^{\top}\delta^A$, a chain-homotopy condition. Building such a $p$ amounts to drawing a short wire in the 3D grid between every pair of clashing qubits, as shown by the green line in Fig. \ref{fig:check-check-inter}, dressed so that the wire's two endpoints are exactly what survives the telescoping sum $\partial^A p+p^{\top}\delta^A$. The rest of this subsection is the construction of these wires.

For every pair of checks $s,s'$ of the input code $A$ with overlapping support, let us choose an arbitrary order.
For example, for CSS input codes, the implicit choice was to always choose the order from $X$-check $x$ to $Z$-check $z$ given overlapping pair $x\sim z$, while the order of checks of the same type, $xx'$ or $zz'$, was neglected since they do not induce defect lines \cite{williamson2023layer}. Since they commute $\Lambda(\sigma s,\hat{\sigma}s)=0$, we see that
\begin{equation}
    \sum_{q\in \cQ} 1\{q_X\sim s,q_Z\sim s'\} = \sum_{q\in \cQ} 1\{q_Z\sim s,q_X\sim s'\}
\end{equation}
where $q_X,q_Z$ are the $X,Z$-qubits corresponding to $q$.
In particular, the left side denotes the qubits at which $s$ acts as $X$ or $Y$ Pauli and $s'$ acts as a $Z$ or $Y$ Pauli, whose collection we denote as $s\rh{\wedge} s' \subseteq \cQ$.
Similarly, the right side denotes those at which $s$ acts as $Z$ or $Y$ Pauli and $s'$ acts as a $X$ or $Y$ Pauli, and thus we denote the collections of qubits as $s\lh{\wedge} s' \subseteq \cQ$.
Read the arrow as the direction of the clash: $s\rh{\wedge}s'$ points from $s$ (contributing the $X$-half) to $s'$ (contributing the $Z$-half), and $s\lh{\wedge}s'$ is the same clash read the other way.
The commutation relations is thus equivalent to
\begin{equation}
    |s\rh{\wedge} s'|=|s\lh{\wedge} s'|
\end{equation}

For the running example $s=XIZ,s'=ZZX$: at qubit 1, $s$ acts $X$ and $s'$ acts $Z$, a forward clash; qubit 2 is silent for both; at qubit 3, $s$ acts $Z$ and $s'$ acts $X$, a backward clash. Hence $s\rh{\wedge}s'=\{1\}$ and $s\lh{\wedge}s'=\{3\}$, both of size one -- this pair already sits in the odd case treated below, which is exactly why it was chosen as the running example.

Let us now separately consider the case where $|s\rh{\wedge}s'|=|s\lh{\wedge} s'|$ is even or odd.
In the simpler case where both are even\footnote{This occurs, for example, when the input code is CSS}, the qubits in $s\rh{\wedge} s'$ can be ordered based on the total ordering on $\cQ=[n]$, so that
\begin{equation}
    s\rh{\wedge}s' = (q_{1} < \cdots  <q_{2m})
\end{equation}
A (defect) path can then be drawn between pairs of qubits, i.e., $\gamma_i(s s') = [q_{2i-1},q_{2i}]$, which corresponds to the \textbf{defect path} in the 3D grid as depicted by the green line in Fig. \ref{fig:check-check-inter}, and explicitly, defined as
\begin{equation}
    \Gamma_i(s s') = s\times \gamma_i(s s') \times s', \quad i\in [m]
\end{equation}
We further parametrize the path $[0,\tau_i]\ni t\mapsto \gamma_i(t;s s')$ so that the path is from $q_{2i-1} \to q_{2i}$, and similarly, for $\Gamma_i(t;s s')$.
Note that half-integers $\Gamma_i(t^+;ss')$ denote the edges of the path.
We then perform the same construction of the defect paths for $s\lh{\wedge}s'$.
In particular,
\begin{equation}
    s\lh{\wedge}s'=(\bar{q}_{1} <\cdots <\bar{q}_{2m})
\end{equation}
and $\bar{\gamma}_{i}(ss')=[\bar{q}_{2i-1},\bar{q}_{2i}]$ and\footnote{Note that the role of $s,s'$ are reversed}
\begin{equation}
    \bar{\Gamma}_{i}(ss') = s'\times \bar{\gamma}_{i}(ss')\times s, \quad i\in [m]
\end{equation}
and $[0,\bar{\tau}_i]\ni t\mapsto \bar{\gamma}_{i}(t;ss')$ is parametrized so that the path is from $\bar{q}_{2i-1} \to \bar{q}_{2i}$, and similarly for $\bar{\Gamma}_{i}(t;ss')$

Let us now consider the sightly more difficult case where both are odd (which does not occur for CSS codes).
Similar to the even case, let
\begin{align}
    s\rh{\wedge}s' &= (q_1<\cdots <q_{2m}<q_0) \\
    s\lh{\wedge}s' &= (\bar{q}_{1} <\cdots < \bar{q}_{2m}<\bar{q}_0)
\end{align}
Define the defect paths $\Gamma_i,\bar{\Gamma}_i$ as in the even case for $i\ne 0$, so that we may restrict our attention to $i=0$.

The reason $i=0$ needs special treatment is geometric: $q_0$ is a forward-clash qubit, so it lies on the line $(x,z)=(s,s')$, while $\bar{q}_0$ is a backward-clash qubit, lying on the different line $(x,z)=(s',s)$. Since a single grid move changes only one coordinate, no straight path connects two points on different $(x,z)$-lines -- unlike the pairs $i\ne 0$, whose two endpoints share the same crossing direction and hence the same line. This is exactly what the extra coordinate $j(ss')$ is for: at $q=j(ss')$, the check layer spans every $(x,z)$ pair at once, so the path can travel in along one line, pivot freely at this coordinate, and travel back out along the other. In the running example, this connects $\bar{q}_0=$ qubit 3 (on the line $(x,z)=(s',s)$) to $q_0=$ qubit 1 (on the line $(x,z)=(s,s')$) via the pivot at $j(ss')$, exactly the defect line depicted in Fig. \ref{fig:check-check-inter}.

Define the time-parametrized defect path $\Gamma_0(t;ss')$ defined as follows
\begin{equation}
    \label{eq:non-CSS-defect-3D}
    s'\bar{q}_0 s\to s' j(s's) s \to s j(ss') s \to s j(ss') s'\to  s  q_0 s'
\end{equation}
where at each step, only one coordinate changes and thus is simply a straight line.
Note that this is the only time where the additional coordinates $(n,N]$ along the $\hat{q}$ axis are utilized.

Further define $I_{i}(\rh{s s}')$ to be the interval $[0,\tau_i)$ if $s\to s'$ is consistent with the order between pair $s,s'$ and $=(0,\tau_i]$ if inconsistent. Similarly, define $\sgn \rh{s s}'=+$ if $s\to s'$ is consistent, and $-$ if inconsistent.
We can thus define the defect map $p$ as follows
\begin{align}
    \label{eq:p-3D}
    |xqz;s\ket^{A} &\mapsto \sum_{s'\sim s} \sum_{i}\sum_{t\in I_i(\rh{ss}')} 1\{xqz=\Gamma_i(t;ss')\} \\
    &\quad\quad \times |\Gamma_i(t^{\sgn \rh{s s}'};ss');s'\ket^A_Z
\end{align}
where the summation is over all $s'$ such that $s,s'$ have overlapping support (denoted as $s'\sim s$), and we treat $\bar{\Gamma}_i=\Gamma_{-i}$ so that the summation over $i$ is such that $|i|\le m$, and the notation emphasizes $|\cdots \ket_Z^A \in Q_Z^A$.
Note that the defect map $p$ maps a cell with coordinate $xqz=\Gamma_i(t;ss')$ to that with coordinate of a neighboring edge $\Gamma_i(t^{\pm};ss')$ in the 3D grid, and thus is local.

\begin{lemma}[Compatibility]
    \label{lem:compatible-3D}
    The gluing maps $g,\hat{g}$ are compatible with the defect map $p$, i.e.,
    \begin{equation}
        \hat{g}g= \partial^A p+p^{\top}\delta^A
    \end{equation}
\end{lemma}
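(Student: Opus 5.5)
The plan is to verify the identity $\hat{g}g=\partial^A p+p^{\top}\delta^A$ cell by cell, on the basis vertices $|xqz;s\ket^A\in X^A$. Both sides are local, so the check reduces to a telescoping computation along each defect path, exactly as in Lemma~\ref{lem:compatible-surgery} and the CSS Layer code computation of \cite{williamson2023layer,yuan2026unified}.

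First I compute the left side. By Eq.~\eqref{eq:g-3D}, $g$ sends $|xqz;s\ket^A$ to $|xqz\ket^D_X$ when $s=x$ and $q_X\sim s$, and to $|xqz\ket^D_Z$ when $s=z$ and $q_Z\sim s$. Then $\lambda^D$ swaps sectors, and $g^{\top}$ pulls back to the syndrome vertices $|\overline{xqz;s'}\ket^A$ with $s'=z$ (respectively $s'=x$) and $q_Z\sim s'$ (respectively $q_X\sim s'$). So $\hat{g}g|xqz;s\ket^A$ is a sum over $s'\sim s$ of the following terms:
\begin{itemize}
\item[(i)] a term at coordinate $(s,q,s')$ whenever $q\in s\rh{\wedge}s'$;
\item[(ii)] a term at $(s',q,s)$ whenever $q\in s\lh{\wedge}s'$ and $(x,z)=(s',s)$.
\end{itemize}
Two points need care. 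The vertex $(s,q,s)$ on the diagonal is where $s=x$ and $s=z$ hold simultaneously; here the contributions are $\Lambda$-diagonal and cancel, or vanish because $\lambda$ pairs only $q_X$ with $q_Z$. The target lies in $\bar{X}^A$ of the layer $A(s')$. Conclusion: the left side is supported exactly on the endpoints of the defect paths $\Gamma_i(ss')$ and $\bar{\Gamma}_i(ss')$, i.e. on the qubits $q_{2i-1},q_{2i},\bar q_{2i-1},\bar q_{2i}$ and, in the odd case, $q_0,\bar q_0$.

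Second, I compute the right side path by path. For a fixed path $\Gamma_i(ss')$, the map $p$ sends each vertex $\Gamma_i(t;ss')$ in layer $s$, with $t$ ranging over $I_i$, to the adjacent path edge $\Gamma_i(t^{\pm};ss')$ in layer $s'$. Applying $\partial^A$ in layer $s'$ gives the two endpoints of that edge.

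Dually, $p^{\top}\delta^A$ acts as follows. The map $\delta^A$ takes a vertex of layer $s$ to its incident edges, and $p^{\top}$ keeps only those edges that are images of path vertices. Each path edge $\Gamma_i(t^{+};ss')$ lies in the image of exactly one path vertex, and $p^{\top}$ returns that vertex; on a vertex this is where $\delta^A$ picks up edges.

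I will organize the sum as follows. Write $u_t$ for the vertex $\Gamma_i(t)$ and $e_t$ for the edge $\Gamma_i(t^{\sgn})$. The $\partial^A p$ term contributes, summed over $t\in I_i$, the vertices $u_t+u_{t\pm1}$, read as syndrome vertices in layer $s'$. The $p^{\top}\delta^A$ term contributes, summed over the same $t$, the vertex $u_t$ whenever $u$ is an endpoint of $e_t$.

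The key convention is the choice of $I_i=[0,\tau_i)$ versus $(0,\tau_i]$ together with $\sgn$, which is why the pairs $s,s'$ were given an order. This convention is designed so that the paths $\Gamma_i(ss')$ from layer $s$ and $\bar{\Gamma}_{i}$ traversed from layer $s'$ combine into a boundary-type sum. Under it, the interior vertices cancel in pairs and only the two endpoints $\Gamma_i(0)$ and $\Gamma_i(\tau_i)$ survive, matching (i) and (ii). One half-edge of the computation comes from $\partial^A p$ and the other from $p^{\top}\delta^A$; this mirrors Lemma~\ref{lem:compatible-surgery}, where $\partial^A p$ yielded $|\bar q_X\ket+|\bar q_Z\ket$ and $p^{\top}\delta^A$ cancelled one of the two. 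I will do this bookkeeping first for a straight path in the even case, and then note that it is insensitive to which coordinate changes along the path.

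Third, the odd case. The path $\Gamma_0$ of Eq.~\eqref{eq:non-CSS-defect-3D} is a concatenation of four straight segments. I must check two things. First, every vertex and edge it uses exists in the relevant check layers: the segments in the $\hat{q}$ direction at $(x,z)=(s',s)$ and $(s,s')$ lie in $A_{\rm raw}(s)$ and $A_{\rm raw}(s')$, and the pivot segments at $q=j(ss')$ lie in the planes $A(s';s)$ and $A(s;s')$. Second, the telescoping goes through the corners, since consecutive parameters still differ by a single grid edge. The endpoints are then $\bar q_0$ on the line $(s',s)$ and $q_0$ on the line $(s,s')$, which are exactly the odd leftover terms of (i) and (ii).

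Finally I sum over $i$, over $s'\sim s$, and over both orders of each pair. The main obstacle I expect is this sign and interval bookkeeping: confirming that the contributions from the two orientations ($s$ as source versus $s'$ as source) land in the correct layer and the correct $X/Z$ syndrome sector, and that no spurious terms appear when two different defect paths share a vertex. I would handle the shared-vertex case using linearity, treating each path independently, since $p$ is defined as a sum over paths.
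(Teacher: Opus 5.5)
Your strategy matches the paper's: compute $\hat g g$ on a layer-$s$ vertex, observe that it is supported on path endpoints, and telescope $\partial^A p+p^{\top}\delta^A$ along each defect path. The gap is in the one explicit formula you give for the step that actually proves the lemma.

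You take the $p^{\top}\delta^A$ contribution to be, summed over the same $t\in I_i$, the vertex $u_t$ whenever the input vertex is an endpoint of $e_t=\Gamma_i(t^{\sgn \rh{ss}'})$. In other words, you let $p^{\top}$ undo the $s\to s'$ defect. But $\delta^A$ of a layer-$s$ vertex produces layer-$s$ edges, and on those $p^{\top}$ is the transpose of the \emph{reverse} defect, which sends layer-$s'$ vertices to layer-$s$ edges. By the ordering convention, that reverse defect uses the other half-open interval and the opposite step. Hence the layer-$s$ edge $e_t$ is the image of $u_{t+\sgn \rh{ss}'}$, not of $u_t$. This is exactly the paper's rewriting of $p^{\top}\delta^A$ through $\lh{ss}'$.

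Take $\sgn \rh{ss}'=+$ and $I_i=[0,\tau_i)$. Your formula gives $p^{\top}\delta^A u_k=u_k1\{k<\tau_i\}+u_{k-1}1\{k>0\}$. Adding $\partial^A p\,u_k=(u_k+u_{k+1})1\{k<\tau_i\}$ leaves $u_{k+1}1\{k<\tau_i\}+u_{k-1}1\{k>0\}$. Nothing cancels in the interior, and at $k=0$ you get $u_1$ instead of $u_0$. The correct version is $p^{\top}\delta^A u_k=u_{k+1}1\{k<\tau_i\}+u_k1\{k>0\}$, which gives $u_k(1\{k<\tau_i\}+1\{k>0\})$, the endpoint indicator. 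The paper checks precisely this at $\Gamma_i(0)$ and $\Gamma_i(\tau_i)$.

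Your analogy with Lemma~\ref{lem:compatible-surgery} is what leads you astray. There $\hat g g$ swaps the two ends ($|q_X\ket\mapsto|\bar q_Z\ket$), so $p^{\top}$ returning the source vertex is correct. Here $\hat g g$ is diagonal in the coordinate $xqz$, so $p^{\top}$ must return the far end of the edge.

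The same bookkeeping is missing from your odd-case check. Besides the two checks you list (segment membership and corners), the telescoping requires that the $s\to s'$ and $s'\to s$ defects run along the \emph{identical} parametrized path. Suppose you define $\Gamma_0(s's)$ by reading Eq.~\eqref{eq:non-CSS-defect-3D} with $s,s'$ interchanged. You then get the route $(s,q_0,s')\to(s,j(ss'),s')\to(s',j(ss'),s')\to(s',j(ss'),s)\to(s',\bar q_0,s)$. This pivots through the opposite corner and is traversed in the opposite direction, and the cancellation fails. Instead, fix one oriented wire $\Gamma_0(ss')$ per unordered pair, using the chosen order, and use it for both directions. For the even paths this is automatic: sorting by $q$ gives $\Gamma_i(s's)=\bar\Gamma_i(ss')$ and $\bar\Gamma_i(s's)=\Gamma_i(ss')$ with the same orientation.
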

\begin{proof}
    Note that $\hat{g}g$ maps the following
    \begin{align}
        \label{eq:compat-g-start}
        |xqz;s\ket^{A} &\xmapsto{g} |xqz\ket^D_X 1\{q_X\sim s=x\} \\
        &\quad +|xqz\ket^D_Z 1\{q_Z\sim s=z\}\\
        &\xmapsto{\lambda^{D}} |xqz\ket^D_Z 1\{q_X\sim s=x\} \\
        &\quad +|xqz\ket^D_X 1\{q_Z\sim s=z\}\\
        &\xmapsto{g^{\top}} \sum_{s'\in \cS} |xqz;\bar{s}'\ket^A \\
        &\quad\times (1\{q_X\sim s=x,q_Z\sim s'=z\}  \\
        &\quad\quad + 1\{q_Z\sim s=z, q_X\sim s'=x\} )
        \label{eq:compat-g-end}
    \end{align}
    where the $|\cdots; \bar{s}'\ket^A$ notation emphasizes that the element is in the syndrome $\bar{X}^A$.
    Let us consider the defect maps. Specifically, note that $\partial^A p$ maps
    \begin{align}
        |xqz;s\ket^{A} &\xmapsto{p} \sum_{s'\sim s} \sum_{i,t\in I_i(\rh{s s}')} 1\{xqz=\Gamma_i(t;ss')\} \\
                       &\quad\quad \times |\Gamma_i(t^{\sgn{\rh{s s}'}};ss');s'\ket_Z^A \\
                       &\xmapsto{\partial^A} \sum_{s'\sim s} \sum_{i,t\in I_i(\rh{s s}')} 1\{xqz=\Gamma_i(t;ss')\} \times \\
                       &\hspace{-15pt}(|\Gamma_i(t;ss');\bar{s}'\ket^A +|\Gamma_i(t+{\sgn{\rh{s s}'}};ss');\bar{s}'\ket^A)
    \end{align}
    Similarly, note that $p^{\top}\delta^A$ maps
    \begin{align}
        |xqz;s\ket^A &\xmapsto{\delta^A} \sum_{e \sim_s xqz} |e;s\ket^{A}_X\\
                       &\xmapsto{p^{\top}} \sum_{s'\sim s} \sum_{i,t\in I_i(\lh{s s}')}  |\Gamma_i(t;ss');\bar{s}'\ket^{A}  \\
                       &\quad\quad \times  \sum_{e\sim xqz} 1\{e=\Gamma_i(t^{\sgn{\lh{s s}'}};ss')\}
    \end{align}
    where $e\sim_s xqz$ denote edges in cell complex $A(s)$ (contained in the 3D grid) adjacent to coordinate $xqz$, i.e., $e=x^\pm qz,xq^\pm z,xqz^\pm $.
    Note that by utilizing the fact that $\lh{s s}'$ is the opposite direction of $\rh{s s}'$, we can rewrite the map $p^{\top}\delta^A$ as
    \begin{align}
        \hspace{-5pt}
        |xqz;s\ket^{A} &\mapsto \sum_{s'\sim s} \sum_{i,t\in I_i(\rh{s s}')}  |\Gamma_i(t+\sgn \rh{s s}';ss');\bar{s}'\ket^{A}  \\
                       &\quad\quad \times  \sum_{e\sim xqz} 1\{e=\Gamma_i(t^{\sgn{\rh{s s}'}};ss')\}
    \end{align}
    Fix $s'\sim s$, and assume that $\rh{s s}'$ is consistent with the chosen direction.
    If $xqz$ is at the boundary of some defect line, say $xqz=\Gamma_i(0;ss')$. Then we see that $\partial^A p$ maps
    \begin{equation}
        |xqz;s\ket^{A} \mapsto |\Gamma_i(0;ss');\bar{s}'\ket^A + |\Gamma_i(1;ss');\bar{s}'\ket^A +\cdots
    \end{equation}
    where $\cdots$ denotes cells in other $\ne s'$ check layers, while $p^{\top}\delta^A$ maps
    \begin{equation}
        |xqz;s\ket^{A} \mapsto |\Gamma_i(1;ss');\bar{s}'\ket^A +\cdots
    \end{equation}
    Hence, the sum $\partial^A  p + p^{\top}\delta^A $ maps
    \begin{equation}
        |xqz;s\ket^{A} \mapsto |xqz;\bar{s}'\ket^A +\cdots
    \end{equation}
    Similarly, if $xqz=\Gamma_i(\tau_i;ss')$, then we see that $\partial^A p$ maps
    \begin{equation}
        |xqz;s\ket^{A} \mapsto 0 +\cdots
    \end{equation}
    while $p^{\top}\delta^A$ maps
    \begin{equation}
        |xqz;s\ket^{A} \mapsto |\Gamma_i(\tau_i;ss');\bar{s}'\ket^A +\cdots
    \end{equation}
    Hence, the sum $\partial^A  p + p^{\top}\delta^A $ maps
    \begin{equation}
        |xqz;s\ket^{A} \mapsto |xqz;\bar{s}'\ket^A +\cdots
    \end{equation}
    The other cases can be similarly verified to show that $$\partial^A  p + p^{\top}\delta^A $$ maps
    \begin{equation}
        |xqz;s\ket^A \mapsto \sum_{s'\sim s} |xqz;\bar{s}'\ket^A 1\{xqz \in \partial\Gamma(ss')\}
    \end{equation}
    where we regard $\Gamma(ss')$ as the union of all defect lines $\Gamma_i(ss')$.
    Since the boundary $\partial \Gamma(ss')$ is exactly the overlap between $s,s'$ checks, we see that the statement follows.
\end{proof}

\subsection{Main Results}

\begin{theorem}[non-CSS Layer Codes]
    \label{thm:layer-3D}
    Let $A,D$ be defined as in Eq. \eqref{eq:CS-3D}-\eqref{eq:CP-3D}.
    Let gluing maps $g,\hat{h}$ be defined as in Eq. \eqref{eq:g-3D}-\eqref{eq:h-3D}, and defect map $p$ in Eq. \eqref{eq:p-3D}.
    Then the output constructed via Theorem \ref{thm:symplectic-embedding}, referred as the \textbf{3D non-CSS Layer Code}, $C$ is a well-defined symplectic complex embedded in the 3D grid $[n_S]\times [O(\weight\qubit)n]\times [n_S]$ where each edge hosts at most $3$ qubits, and with column code equal to input code $=B$ so that
    \begin{equation}
        H_1(C) \cong H_1(B)
    \end{equation}
    And maximum weight 9 and total qubit degree 8, and
    \begin{equation}
        d(C) =\Omega\left(\frac{n_S}{\weight \qubit} \right) d(B)
    \end{equation}
    where $d(C),d(B)$ are the code distance of the output $C$ and input $B$ codes, respectively.
\end{theorem}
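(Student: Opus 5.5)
The plan is to obtain Theorem \ref{thm:layer-3D} as an instance of Theorem \ref{thm:symplectic-embedding}, with the data complex $D=\bigoplus_q D(q)$ and the ancilla $A=\bigoplus_s A(s)$.

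\textbf{Step 1: hypotheses of Theorem \ref{thm:symplectic-embedding}.} Lemma \ref{lem:chain-3D} shows that $(g,h)$ is a chain map, and Lemma \ref{lem:compatible-3D} gives the compatibility relation $\hat g g=\partial^A p+p^\top\delta^A$. Hence $C$ is a well-defined symplectic complex.

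\textbf{Step 2: homology of the ancilla.} To use the isomorphism $H_1(C)\cong H_1(C^{\bg})$, we need $H_1(A)=0$, and we need to identify $H^0(A)$.

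Each raw check layer $\XZ(s)\otimes R_Q$ is a product of a connected graph with a path. Its tensor-product homology is $H_1(\XZ(s))\otimes H_0(R_Q)$, which is nonzero. This is because $\XZ(s)$, a union of a row and a column crossing at $(s,s)$, is a tree: a cross-shaped graph has no cycles. So $H_1(\XZ(s))=0$, and the raw layer has $H_1=0$ and $H^0\cong\F_2$ (the all-ones vertex cochain).

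Attaching the slabs $A(s';s)=R_X\otimes j(ss')\otimes R_Z$ glues a 2D grid (a cell complex with trivial $H_1$) along the tree $\XZ(s)\times\{j(ss')\}$. By Mayer--Vietoris, gluing contractible pieces along a connected contractible intersection keeps $H_1=0$ and $H_0=\F_2$.

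Thus $H^0(A)=\bigoplus_s\F_2[\mathbf 1_s]\cong S$ and $H_0(A)\cong\bar S$. I must check that the cell conventions (faces as $Z$-checks) make this the correct degree.

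\textbf{Step 3: identify the embedded column with $B$.} Each qubit layer is a surface-code-like complex $R_X\otimes q\otimes R_Z^\top$ with a single logical qubit. So $H_1(D)\cong\bigoplus_q\F_2^2\cong P$, with the induced symplectic form of standard type.

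Next compute $[g][\mathbf 1_s]$. Summing $g$ over all vertices of $A(s)$ gives, in layer $q$, the $X$-row at $x=s$ if $q_X\sim s$, plus the $Z$-column at $z=s$ if $q_Z\sim s$. These are exactly the logical representatives of $q_X,q_Z$, so $[g]=\sigma$ under the identification $H_1(D)\cong P$. By the duality computation in the proof of Theorem \ref{thm:symplectic-embedding}, $[\hat g]=\hat\sigma$. Hence $C^{\bg}=B$, and $H_1(C)\cong H_1(B)$.

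\textbf{Step 4: locality and counting.} Locality follows from the local definitions of $g$, $h$ and $p$, together with the bounded congestion in Remarks \ref{rem:raw-check-layer-congestion} and \ref{rem:check-layer-congestion}. Per edge there is one qubit-layer qubit plus at most two check-layer qubits, giving at most $3$.

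The weight bound $9$ and degree bound $8$ come from case enumeration over cell types. Grid vertices and edges have bounded degree, and each intersection adds fixed contributions from $g$, $h$ and $p$. Defect paths of distinct pairs $ss'$ are disjoint by construction, using distinct $(x,z)$ lines and distinct pivots $j(ss')$, so $p$ adds at most one term per cell. I would tabulate this case by case.

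\textbf{Step 5: distance (expected main obstacle).} I would follow the Cleaning Lemma argument of \cite{yuan2026unified}, as done for CSS Layer codes. Take a nontrivial logical of $C$ and clean it off the check layers using $H_1(A)=0$ locally. This leaves a logical supported on the qubit layers, whose restriction to generic transverse slices $x=\text{const}$ or $z=\text{const}$ reproduces a nontrivial logical of $B$.

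Averaging over the $n_S$ slices gives weight $\gtrsim n_S\,d(B)$. The loss factor $\weight\qubit$ comes from slices hit by defect paths or by check-layer crossings: each check overlaps $O(\weight\qubit)$ others, and cleaning near these can cost this factor.

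The delicate part is the new pivot segments through $j(ss')$. Here the check layer spans a full plane, so I must verify that cleaning across these planes is still controlled by a bounded-congestion argument. I would first try to treat each pivot plane as a bounded-width defect and charge its cost to the pair $ss'$.
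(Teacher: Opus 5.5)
Your overall route is the paper's. You invoke Theorem~\ref{thm:symplectic-embedding} via Lemmas~\ref{lem:chain-3D} and~\ref{lem:compatible-3D}, take qubits per edge from Remark~\ref{rem:check-layer-congestion}, count weight and degree from column weights, and use the Cleaning Lemma for distance. Your Steps 2--3 spell out what the paper leaves implicit: each $A(s)$ has trivial $H_1$ and $H^0\cong\F_2$, and $[g][\mathbf 1_s]$ gives the row/column representatives, hence $C^{\bg}=B$. The sentence ``which is nonzero'' is a slip: both K\"unneth terms $H_1(\XZ(s))\otimes H_0(R_Q)$ and $H_0(\XZ(s))\otimes H_1(R_Q)$ vanish.

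\textbf{Step 4.} The claim that defect paths of distinct pairs are disjoint is false.
\begin{itemize}
\item The vertical leg of $\Gamma_0(ss')$ on the line $(x,z)=(s,s')$ occupies every $q\in[q_0,j(ss')]$.
\item So it crosses the pivot plane $q=j(ss'')$ of any other odd-case pair $\{s,s''\}$ with $j(ss'')<j(ss')$.
\item If $s'$ lies between $s$ and $s''$ on the diagonal, the vertex $(s,j(ss''),s')$ of $A(s)$ also lies on the pivot leg $s\,j(ss'')\,s\to s\,j(ss'')\,s''$.
\item $p$ then sends this vertex to two edges, one in $A(s')$ and one in $A(s'')$.
\end{itemize}
Hence $p$ has column weight $2$. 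That is exactly the entry giving $6+1+2=9$ for the $X^A$-checks in the paper's table. (The row weight of $p$, which is what enters the qubit degree through $p^{\top}$, is indeed $1$.) The bound $9$ is not endangered, but a tabulation built on ``at most one term per cell'' does not derive it.

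\textbf{Step 5.} The distance step lacks its quantitative ingredient. Take a logical $\ell=(a,d,b)\in\ker\hat\sigma$ with $a\in Q_X^A$, $d\in P^D$, $b\in Q_Z^A$. Vanishing of $H^1(A)$ gives $a=\delta^A x$, and $\ell+\sigma x=(0,\,d+gx,\,b+px)$.
\begin{itemize}
\item The $Q_Z^A$ part is not cleaned away: it is fixed modulo $\im\partial^A$ by $\hat g(d+gx)$. It can simply be dropped using $|\ell|\ge|a|+|d|$.
\item What must be controlled is $|gx|$ in terms of $|\delta^A x|=|a|$. That is precisely relative expansion of the layers $A(s)$ with respect to $g$ (Definition~\ref{def:relative-expansion}), the ingredient the paper cites.
\item Combined with $|d+gx|\ge n_S\,d(B)$, this yields a bound of the form $\min\{l,\ \Omega(\epsilon)\,n_S\,d(B)\}$. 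The lower bound $|d+gx|\ge n_S\,d(B)$ holds because each layer is a distance-$n_S$ surface code and at least $d(B)$ layers carry a nontrivial class; your slice argument is fine for this part.
\end{itemize}
So the loss factor is set by the expansion constant and threshold, not by slices hit by defect paths; $p$ never enters the estimate.

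Your worry about the pivot planes also dissolves:
\begin{itemize}
\item $g$ vanishes for $q\in(n,N]$.
\item $\ker\delta^A$ is still spanned by the indicators $\mathbf 1_{A(s)}$, and $g\mathbf 1_{A(s)}=g\mathbf 1_{A_{\mathrm{raw}}(s)}$.
\item For any vertex set $f\subseteq A(s)$, $gf=g(f\cap A_{\mathrm{raw}}(s))$ and $|\delta^A f|\ge|\delta^{A_{\mathrm{raw}}}(f\cap A_{\mathrm{raw}}(s))|$, since raw edges keep their raw endpoints.
\end{itemize}
Attaching the planes can therefore only improve relative expansion.
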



\begin{proof}
    The proof is similar to \cite{yuan2026unified,yuan2026quantum,yuan20264d} where the key difference is that for non-CSS codes, we utilize the main result in Theorem \ref{thm:symplectic-embedding} (combined with Lemma \ref{lem:chain-3D} and \ref{lem:compatible-3D}).
    The lower bound on the code distance $d(C)$ follows similarly as in \cite{yuan2026quantum,yuan20264d}, where we utilize relative expansion and the Cleaning Lemma in \cite{yuan2026unified}, and thus omitted.

    Note that the number of qubits each edge hosts follows from Remark \ref{rem:check-layer-congestion}, and the fact the qubit layers are edge decongested.
    It may be worth mentioning that the weights of the non-CSS Layer Code is depicted by the following diagram
    \begin{equation}
    \label{eq:weight-3D}
    \begin{tikzpicture}[baseline]
    \matrix(a)[matrix of math nodes, nodes in empty cells, nodes={minimum size=25pt},
    row sep=2em, column sep=2em,
    text height=1.25ex, text depth=0.25ex]
    {&& X^{A}  & Q_X^{A} & \bar{Z}^{A}\\
    & S^{D}  & P^{D}  & \bar{S}^{D} &\\
    Z^{A} & Q_Z^{A} & \bar{X}^{A} &&\\};
    \path[-left to,red, font=\scriptsize,transform canvas={yshift=0.2ex}]
    (a-1-3) edge node[above]{$6$}  (a-1-4)
    (a-2-2) edge node[above]{$4$}  (a-2-3)
    (a-3-1) edge node[above]{$4$}  (a-3-2);
    \path[-left to,blue,font=\scriptsize,transform canvas={yshift=0.2ex}]
    (a-1-4) edge node[above]{$4$}  (a-1-5)
    (a-1-4) edge node[right]{$1$}  (a-2-4)
    (a-1-4) edge[bend left=80, dashed] node[right]{$1$} (a-3-3)
    (a-3-2) edge node[above]{$2$}  (a-3-3);
    \path[-left to,green!50!black,font=\scriptsize,transform canvas={yshift=0.2ex}]
    (a-2-3) edge node[above]{$2$}  (a-2-4)
    (a-2-3) edge node[right]{$1$}  (a-3-3);
    \path[-left to,red,font=\scriptsize,transform canvas={xshift=0.2ex}]
    (a-1-3) edge node[right]{$1$}  (a-2-3)
    (a-2-2) edge node[right]{$1$}  (a-3-2);
    \path[-left to,red,font=\scriptsize]
    (a-1-3) edge[bend right=80, dashed] node[right]{$2$} (a-3-2);

    \end{tikzpicture}
    \end{equation}
    where the number labels the maximum column weight of the linear map corresponding to the arrow.
    In particular, the maximum check weight of the 3D non-CSS Layer Code is 9, which is computed by the red arrows.

    Note that the total qubit degree is not quite the maximum column weight of the syndrome map.
    However, as shown in Theorem \ref{thm:symplectic-embedding}, the $X,Z$ sector qubits given by
    \begin{equation}
        Q_\alpha^C = Q^A_{\alpha} \oplus Q^D_{\alpha}, \quad \alpha=X,Z
    \end{equation}
    Hence, the total qubit degree should be upper bounded by the sum of all blue arrows, and double the sum of green arrows.
    In particular, the total qubit degree is $\le 8$.
\end{proof}

\section{Non-CSS Weight Reduction with Layer Codes}
\label{sec:weight}

In this section, we show that using the non-CSS embedding formalism in Theorem \ref{thm:symplectic-embedding}, the CSS weight reduction procedure \cite{yuan2026quantum} can be generalized nearly immediately to the non-CSS scenario (i.e., the input code can be non-CSS\footnote{Unlike the Euclidean Layer Codes, the (CSS and non-CSS) weight reduction procedure permits Non-qLDPC codes}).
Specifically, the CSS weight reduction procedure in \cite{yuan2026quantum} generalizes the CSS Layer Code construction by removing the Euclidean geometry constraint, while keeping topological defects (interactions between layers) sparse.
A similar intuition is applied to the non-CSS Layer codes in the previous section \ref{sec:layer} to obtain the corresponding non-CSS weight reduction procedure.
In particular, due to the parallels between CSS weight reduction \cite{yuan2026quantum} and CSS Layer Codes \cite{williamson2023layer}, we shall be relatively brief with the non-CSS weight reduction since the non-CSS Layer Codes were derived in detail in Section \ref{sec:layer}.

\subsection{Input Code}

Adopt the notation in Section \ref{sec:layer-input} so that the non-CSS input code $B$ is written as a symplectic complex
\begin{equation}
    B = S \xrightarrow{\sigma} P\xrightarrow{\hat{\sigma}}\bar{S}
\end{equation}
where checks $s\in \cS$ and qubits $q\in \cQ$ with $X,Z$-sectors $q_X,q_Z$.
Let $\weight,\qubit$ denote the maximum check weight and total qubit degree of $B$, respectively.

\begin{definition}[Induced Check Graph]
    \label{def:induced-check-graph}
    Let $\cG_S = (\cS,\cE_S)$ denote the graph with vertices $\cS$ and edges $s_1s_2\in \cE_{Q}$ if (1) $\supp s_1 \cap \supp s_2 \ne \varnothing$ or (2) there exists check $s$ which \textit{nontivially overlaps} (as in Definition \ref{def:measurement-graph-non-CSS}) with both $s_1,s_2$.
    Let $\chi_{S}$ be such that $\cG_S$ is $\chi_S$ vertex colorable so that 
    \begin{equation}
        \chi_S=O(\weight^2 \qubit^2)
    \end{equation}
    The coloring scheme induces a mapping $\eta_S:\cS\to [\chi_S]$.
    We may omit the subscript, when the meaning is clear from context. 
\end{definition}
Let $R_X,R_Z$ denote repetition codes on $\chi_S$ vertices, which we shall view as arranged in the $\hat{x},\hat{z}$ axes, respectively.

\begin{definition}[Induced Qubit Graph]
    \label{def:induced-qubit-graph}
    Let $\cG_Q = (\cQ,\cE_Q)$ denote the graph with vertices $\cQ$ and edges $q_1 q_2\in \cE_{Q}$ if there exists check $s\in \cS$ such that both $q_1,q_2 \in \supp s$ are within the support of $s$. 
    Let $\chi_{Q}$ be such that $\cG_Q$ is $\chi_Q$ vertex colorable so that 
    \begin{equation}
        \chi_Q=O(\weight \qubit)
    \end{equation}
    The coloring scheme induces a mapping $\eta_Q:\cQ\to [\chi_Q]$.
    We may omit the subscript, when the meaning is clear from context. 
\end{definition}

As described in Section \ref{sec:layer-input} for the non-CSS Layer Codes, additional space (which was unnecessary for CSS codes) is necessary to decongest the defect lines for general non-CSS codes. In particular, define
\begin{definition}[Induced Check-Check Graph]
    \label{def:induced-check-check-graph}
    Let $\cG_{SS} = (\cS,\cE_{SS})$ denote the graph with vertices $\cS$ and edges $s_1 s_2\in \cE_{Q}$ if $\supp s_1 \cap \supp s_2 \ne \varnothing$.
    Let $\chi_{SS}$ be such that $\cG_{SS}$ is $\chi_{SS}$ \textit{edge} colorable so that 
    \begin{equation}
        \chi_{SS}=O(\weight \qubit)
    \end{equation}
    The coloring scheme induces a mapping $\eta_{SS}:\cE_S\to [\chi_{SS}]$.
    We may omit the subscript, when the meaning is clear from context. Note that $\cG_{SS}$ is simply $\cG_{S}$ with edges given by requirement (1) in Definition \ref{def:induced-check-graph}.
\end{definition}
Let $R_Q$ denote the repetition code on $\tilde{\chi}_Q$ vertices where
\begin{equation}
    \tilde{\chi}_Q=\chi _Q+\chi_{SS} =O(\weight \qubit)
\end{equation}
With slight abuse of notation, let $\eta_{SS}$ map to the latter $\chi_{SS}$ colors in $[\tilde{\chi}_Q]$, i.e., $\in (\chi_Q,\tilde{\chi}_Q]$.



\subsection{Qubit Layers}
\label{sec:weight-qubit}
Similar to Section \ref{sec:layer-qubit}, the qubit layers are defined simply as
\begin{definition}[Qubit Layers]
    \label{def:weight-qubit-layer}
    Define the qubit layer as the 3-term cell complex
    \begin{equation}
        R_X\otimes q\otimes R_Z^{\top}
    \end{equation}
    where $q\in \cQ$.
    By Example \ref{ex:CSS-symplectic}, the qubit layer is a CSS code, which can be written as a symplectic complex $D(q)$.
\end{definition}

\subsection{Check Layers}

Similar to Section \ref{sec:layer-check}, we define the (\textit{raw}) check layers as follows.

\begin{definition}[$XZ$ Graph]
    Consider a plane with axes $\hat{x}\times \hat{z}$ and of linear dimension $\chi_S$, i.e., $[\chi_S]^2$.
    Define the \textbf{XZ graph} of check $s$ as the collection of rows and columns in the $\hat{x}\times \hat{z}$ plane passing through $(\eta(s),\eta(s))\in [\chi_S]^2$, i.e.,
    \begin{equation}
        \XZ(s) = (\eta(s) \times [n_S]) \cup ([n_S]\times \eta(s))
    \end{equation}
    where we shall treat $\XZ(s)$ as a graph complex with the corresponding edges and vertices.
    Note that $\XZ(s)$ can be viewed as the union of $R_X,R_Z$ joining at the $\eta(s)$ vertex.
\end{definition}

\begin{definition}[Raw Check Layer]
    Define the \textbf{(raw) check layer} of check $s\in \cS$ as the 3-term cell complex
    \begin{equation}
        A_{\mathrm{raw}}(s) =\XZ(s) \otimes R_Q
    \end{equation}
    and thus correspond to union of $X,Z$-check layers in CSS weight reduction \cite{yuan2026quantum}.
\end{definition}

\begin{definition}[Check Layer]
    \label{def:weight-check-layer}
    Define the \textbf{check layer} of check $s\in \cS$ as the union 
    \begin{equation}
        A(s) = A_{\mathrm{raw}}(s) + \bigoplus_{s'\sim s} A(s';s)
    \end{equation}
    where the summation of $s'$ is over all checks which nontrivially overlap with $s$ relative to the input code $B$, and 
    \begin{equation}
        A(s';s)= R_X\otimes \eta(ss')\otimes R_Z
    \end{equation}
    For notation, label the vertices in $A(s)$ as $|ijk;s\ket$ where $ijk$ are coordinates along the axes $\hat{x}\times \hat{q}\times \hat{z}$. 
    Similarly, we can label edges along the $\hat{x}$ direction as $|i^{\pm} jk;s\ket$ where $i^\pm =i\pm 1/2$ are half-integers and similarly, for $|ij^{\pm} k;s\ket$ and $|ijk^{\pm};s\ket$.
\end{definition}

\subsection{Attaching Check and Qubit Layers}
Similar to Section \ref{sec:layer-attach}, define
\begin{align}
    \label{eq:weight-ancilla}
    A &= \bigoplus_{s\in \cS} A(s) \\
    D &= \bigoplus_{q\in \cQ} D(q)
    \label{eq:weight-data}
\end{align}
Hence, it remains to show that the maps $g,h,p$ in Theorem \ref{thm:symplectic-embedding} can be defined in a manner that is sufficiently sparse so that the output code $C$ is weight and degree reduced.

\subsubsection{Gluing Maps}

We thus define the gluing map $g$ via
\begin{align}
    \label{eq:g-weight}
    |i,\eta(q),k;s\ket^A &\mapsto  |iqk\ket^D_X 1\{q_X \sim s, \eta(s)=i\} \\
    &\quad +|iqk\ket^D_Z 1\{q_Z \sim s,\eta(s)=k\}
\end{align}
(and zero otherwise), where $\sim$ is with respect to the input code $B$, e.g., $q_X\sim s$ if $q_X\in \sigma s$ and $q_X$ are the $X$- qubit in $\cQ_X$ corresponding to $q$ (the case is similar for $q_Z$).
Note that the map is well-defined since by Definition \ref{def:induced-qubit-graph}, the coloring map $\eta$ is injective on the support of any check $s$.
Further define the gluing map $h$ via
\begin{align}
    |i,\eta(q),k^+;s\ket^{A} &\mapsto |iqk^+\ket^{D}_X 1\{q_X\sim s,\eta(s)=i\} \\
    |i^+,\eta(q),k;s\ket^{A} &\mapsto |i^+q k\ket^{D}_Z 1\{q_Z\sim s,\eta(s)=k\}
    \label{eq:h-weight}
\end{align}
(and zero otherwise), where $i^{\pm} =i\pm 1/2$ are used to denote the edges along $\hat{x}$ and similarly for $k^\pm$.
It's then straightforward to check the following, whose proof is omitted since it's nearly the same as that of Lemma \ref{lem:chain-3D}.
\begin{lemma}[Chain Map]
    \label{lem:chain-weight}
    The gluing maps $(g,h)$ in Eq. \eqref{eq:g-weight}-\eqref{eq:h-weight} form a chain map.
\end{lemma}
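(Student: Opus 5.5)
The plan is to verify the chain-map identity $\hat{\sigma}^D g = h\,\delta^A$ directly on basis elements, mirroring the proof of Lemma \ref{lem:chain-3D}. The only changes are that the coordinate $x$ is replaced by the color index $i$, the coordinate $z$ by $k$, and the $\hat{q}$-coordinate of a check-layer cell by $\eta(q)$. Both sides are linear, so it suffices to evaluate them on a vertex $|ijk;s\ket^A \in X^A$ of a check layer $A(s)$.

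First I split by the $\hat{q}$-coordinate $j$. If $j\notin \eta(\cQ)$, in particular if $j\in(\chi_Q,\tilde{\chi}_Q]$ (the defect slabs $A(s';s)$), then $g$ vanishes by Eq. \eqref{eq:g-weight}. The map $h$ also vanishes on every edge with that $\hat{q}$-coordinate, so both sides are zero. Edges along $\hat{q}$, of the form $|i j^{\pm} k;s\ket$, are always sent to zero by $h$, so they never contribute.

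If $j=\eta(q)$, I use that $\eta_Q$ is injective on $\supp s$ (Definition \ref{def:induced-qubit-graph}). This guarantees that at most one qubit $q\in\supp s$ has color $j$, so $g$ is well defined and produces at most the two terms $|iqk\ket^D_X 1\{q_X\sim s,\eta(s)=i\}$ and $|iqk\ket^D_Z 1\{q_Z\sim s,\eta(s)=k\}$.

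Next I apply $\hat{\sigma}^D$, using that $D(q)=R_X\otimes q\otimes R_Z^{\top}$ is CSS (Example \ref{ex:CSS-symplectic}). The $X$-sector vertex $|iqk\ket_X^D$ maps to the two adjacent $\hat{z}$-edges $|iqk^{\pm}\ket^D_X$. The $Z$-sector vertex maps to the two adjacent $\hat{x}$-edges $|i^{\pm}qk\ket^D_Z$, with boundary conventions as in Example \ref{ex:rep}.

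On the other side, $\delta^A$ sends $|ijk;s\ket$ to all adjacent edges in $A(s)$. When $\eta(s)=i$, the cell lies on the row $\eta(s)\times[\chi_S]$ of $\XZ(s)$ and has the $\hat{z}$-edges $|ijk^\pm;s\ket$. When $\eta(s)=k$, it lies on the column and has the $\hat{x}$-edges $|i^\pm jk;s\ket$. Applying $h$ from Eq. \eqref{eq:h-weight} returns exactly the $\hat{\sigma}^D g$ terms with matching indicators. Any extra edges, namely $\hat{q}$-edges or edges of the slabs $A(s';s)$ at other $\hat{q}$-coordinates, are killed by $h$.

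The main point requiring care is the vertex $(\eta(s),\eta(s))$, where the row and the column of $\XZ(s)$ meet, together with possible overlaps with the slabs $A(s';s)$. At the junction vertex both indicators can be nonzero, and both sets of edges appear in $\delta^A$. Since the $X$ and $Z$ contributions land in different sectors of $P^D$, they separate cleanly and no cancellation is needed. For the slabs I would note that they sit at $\hat{q}$-coordinates in $(\chi_Q,\tilde{\chi}_Q]$, disjoint from $\eta(\cQ)\subseteq[\chi_Q]$. As a result, the union in Definition \ref{def:weight-check-layer} adds no edges at coordinates where $h$ is nonzero, and the chain-map identity holds.
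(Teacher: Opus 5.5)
Your proposal is correct and takes the same route as the paper, which omits this proof as a direct transcription of Lemma~\ref{lem:chain-3D}: one checks $\hat{\sigma}^D g = h\,\delta^A$ on the vertices $|ijk;s\rangle^A$, with the coordinates $x,q,z$ replaced by $i,\eta(q),k$. Your extra bookkeeping is what makes that transcription valid: injectivity of $\eta_Q$ on the support of $s$ makes $g$ and $h$ well defined, and the slab coordinates $(\chi_Q,\tilde{\chi}_Q]$ are disjoint from $\eta(\mathcal{Q})$.
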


\subsubsection{Defect Maps}
\label{sec:weight-defects}

The construction of the defects is similar to that in Section \ref{sec:defects-3D}, where we utilize the additional $\chi_{SS}$ planes to route the defect lines.
Here, we provide a sketch instead of repeating the details.
Let $s\rh{\wedge} s', s\lh{\wedge} s' \subseteq \cQ$ be as before so that by the commutation relations, $|s\rh{\wedge}s'|=|s\lh{\wedge}s'|$.
By Definition \ref{def:induced-qubit-graph}, the color mapping $\eta_Q$ is injective on $s\rh{\wedge}s'$ and on $s\lh{\wedge}s'$, and thus qubits can be ordered $\prec$ with respect to the values of $\eta_Q$.
In particular, if $|s\rh{\wedge}s'|=|s\lh{\wedge}s'|$ are both odd, then
\begin{align}
    s\rh{\wedge}s' &= (q_1 \prec\cdots \prec q_{2m}\prec q_0) \\
    s\lh{\wedge}s' &= (\bar{q}_{1} \prec \cdots \prec  \bar{q}_{2m}\prec \bar{q}_0)
\end{align}
We then construct the \textbf{defect paths},
\begin{align}
    \gamma_i(ss')&=[\eta(q_{2i-1}),\eta(q_{2i})], \quad i\in[m] \\
    \Gamma_i(ss')&=\eta(s)\times \gamma_i(ss')\times \eta(s')
\end{align}
and
\begin{align}
    \bar{\gamma}_i(ss')&=[\eta(\bar{q}_{2i-1}),\eta(\bar{q}_{2i})], \quad i\in [m] \\
    \bar{\Gamma}_{i}(ss') &= \eta(s')\times \bar{\gamma}_{i}(ss')\times \eta(s)
\end{align}
and similar to Eq. \eqref{eq:non-CSS-defect-3D}, define the special route $\Gamma_0(ss')$ as follows (where the Euclidean coordinates are replaced by the corresponding color maps)
\begin{equation}
    \eta(s')\eta(\bar{q}_0) \eta(s)\to \cdots \to  \eta(s)  \eta(q_0) \eta(s')
\end{equation}
Add a time parametrization to the defect paths similar to Section \ref{sec:defects-3D}, e.g., $[0,\tau_i]\mapsto\gamma_i(t;ss')$ routes $\eta(q_{2i-1})\to \eta(q_{2i})$.
The case where $|s\rh{\wedge}s'|=|s\lh{\wedge}s'|$ are both even is similarly derived.

Further define $I_{i}(\rh{s s}')$ to be the interval $[0,\tau_i)$ if $s\to s'$ is consistent with the order between pair $s,s'$ and $=(0,\tau_i]$ if inconsistent. Similarly, define $\sgn \rh{s s}'=+$ if $s\to s'$ is consistent, and $-$ if inconsistent. 
We can thus define the defect map $p$ as follows
\begin{align}
    \label{eq:p-weight}
    |ijk;s\ket^{A} &\mapsto \sum_{s'\sim s} \sum_{i}\sum_{t\in I_i(\rh{ss}')} 1\{ijk=\Gamma_i(t;ss')\} \\
    &\quad\quad \times |\Gamma_i(t^{\sgn \rh{s s}'};ss');s'\ket^A_Z
\end{align}
where the summation is over all $s'$ such that $s,s'$ have overlapping support (denoted as $s'\sim s$), and we treat $\bar{\Gamma}_i=\Gamma_{-i}$ so that the summation over $i$ is such that $|i|\le m$, and the notation emphasizes $|\cdots \ket_Z^A \in Q_Z^A$.

It's then straightforward to check that (whose proof we omit since it's nearly identical to that of Lemma \ref{lem:compatible-3D})
\begin{lemma}[Compatibility]
    The gluing maps $g,\hat{g}$ are compatible with the defect map $p$, i.e.,
    \begin{equation}
        \hat{g}g= \partial^A p+p^{\top}\delta^A
    \end{equation}
\end{lemma}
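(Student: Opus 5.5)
We verify $\hat{g}g=\partial^A p+p^{\top}\delta^A$ by evaluating both sides on a fixed cell $|ijk;s\ket^A\in X^A$ and comparing, summand by summand, for each neighbouring check $s'$. This mirrors the proof of Lemma~\ref{lem:compatible-3D}; Euclidean coordinates are replaced by the colour coordinates $\eta_S,\eta_Q,\eta_{SS}$.

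\textbf{Left-hand side.} Unwinding Eq.~\eqref{eq:g-weight} through $\lambda^D$ and $g^{\top}$, as in Eq.~\eqref{eq:compat-g-start}--\eqref{eq:compat-g-end}, gives
\begin{equation*}
\hat{g}g|ijk;s\ket^A=\sum_{s'}|ijk;\bar{s}'\ket^A\big(1\{q_X\sim s,\eta(s)=i,q_Z\sim s',\eta(s')=k\}+1\{q_Z\sim s,\eta(s)=k,q_X\sim s',\eta(s')=i\}\big),
\end{equation*}
where $j=\eta(q)$. This step is well defined only because $\eta_Q$ is injective on supports (Definition~\ref{def:induced-qubit-graph}). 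It lets us recover $q$ from $j$ whenever the cell lies in $A(s)\cap A(s')$.

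We must also check that two distinct pairs $(s,s')$ never produce contributions landing on the same cell of $A(s')$. Suppose a vertex $(\eta(s),\cdot,\eta(s'))$ were shared by pairs $(s,s')$ and $(s,s'')$ with $\eta(s')=\eta(s'')$. Then $s'$ and $s''$ would both overlap $s$, so they would be adjacent in $\cG_S$ by condition (2) of Definition~\ref{def:induced-check-graph}. A proper colouring forbids this. Consequently the nonzero terms are exactly the points of $s\rh{\wedge}s'$ and $s\lh{\wedge}s'$, placed at $(\eta(s),\eta(q),\eta(s'))$ and $(\eta(s'),\eta(q),\eta(s))$ respectively.

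\textbf{Right-hand side.} For each fixed $s'$ we expand $\partial^A p$ and $p^{\top}\delta^A$ exactly as in Lemma~\ref{lem:compatible-3D}. Reversing the direction $\lh{ss}'\leftrightarrow\rh{ss}'$ rewrites $p^{\top}\delta^A$ in terms of $I_i(\rh{ss}')$. The sum then telescopes along each defect path $\Gamma_i(ss')$, with $|i|\le m$ and $i=0$ included. Interior points of a path are hit twice and cancel. Each endpoint, whether $t=0$ or $t=\tau_i$, is hit exactly once. Checking this requires the two cases of $\sgn\rh{ss}'$, and the path orientation is handled by the choice of the half-open interval $I_i$.

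The result is that $\partial^A p+p^{\top}\delta^A$ sends $|ijk;s\ket^A$ to $\sum_{s'}|ijk;\bar{s}'\ket^A 1\{ijk\in\partial\Gamma(ss')\}$. By construction $\partial\Gamma(ss')$ consists of the points listed above, obtained from the pairings $q_{2i-1}q_{2i}$, $\bar{q}_{2i-1}\bar{q}_{2i}$ and $q_0\bar{q}_0$. Its endpoints are therefore exactly the clash points, so the two sides agree.

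\textbf{Main obstacle.} The step I expect to be hardest is confirming that the paths for different pairs do not interfere, so that the telescoping really can be done one $s'$ at a time.

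Paths lying in the planes $\eta(s)\times\cdot\times\eta(s')$ are separated by the vertex colouring $\eta_S$. Each pair thus lives in its own column, and each such path lies in the layer $A(s')$.

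The special route $\Gamma_0(ss')$ runs through the pivot slice $j=\eta_{SS}(ss')$, which is covered by $A(s';s)$. Because $\eta_{SS}$ is a proper edge colouring of $\cG_{SS}$, two pairs sharing a check $s$ never use the same pivot slice. This is precisely what rules out spurious cancellations in $p^{\top}\delta^A$ coming from edges of $A(s';s)$ that belong to another pair's route.

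Once this separation is in place, the verification reduces to the single-pair, single-path bookkeeping of Lemma~\ref{lem:compatible-3D}.
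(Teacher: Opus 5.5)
Your proposal is correct and is the paper's argument: the paper omits this proof as nearly identical to that of Lemma~\ref{lem:compatible-3D}, and you carry out exactly that computation of $\hat{g}g$ together with the telescoping of $\partial^A p+p^{\top}\delta^A$ along each defect path, with colour coordinates in place of Euclidean ones. One caveat on your ``main obstacle'': no separation argument is needed, because $\partial^A p+p^{\top}\delta^A$ is linear in $p=\sum_{\Gamma}p_{\Gamma}$, and since $A=\bigoplus_s A(s)$ is a direct sum, the part of $p$ coming from a path of the pair $\{s,s'\}$ only maps between layers $s$ and $s'$; hence each path that lies in both $A(s)$ and $A(s')$ contributes exactly its two endpoints however routes overlap geometrically (distinct pairs can indeed share a column), and the colourings are there to make the layers, $g$ and the routes well defined and sparse, not to prevent cancellations.
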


\subsection{Main Results}

\begin{theorem}[Non-CSS Weight Reduction]
    \label{thm:layer-weight}
    Let $A,D$ be defined as in Eq. \eqref{eq:weight-ancilla}-\eqref{eq:weight-data} (along with Definitions \ref{def:weight-check-layer} and \ref{def:weight-qubit-layer}).
    Let gluing maps $g,h$ be defined as in Eq. \eqref{eq:g-weight}-\eqref{eq:h-weight}, and defect map $p$ in Eq. \eqref{eq:p-weight}.
    Then the output $C$ constructed via Theorem \ref{thm:symplectic-embedding} is a well-defined symplectic complex with qubits
    \begin{equation}
        |\cQ^C| = O(\weight^4 \qubit^4) |\cQ|
    \end{equation}
    and
    \begin{equation}
        H_1(C) \cong H_1(B)
    \end{equation}
    And maximum weight 9 and total qubit degree 8, and
    \begin{equation}
        d(C) \ge \min(\chi_S,2\tilde{\chi}_Q) \frac{1}{\weight} d(B)
    \end{equation}
    where $d(C),d(B)$ are the code distance of the output $C$ and input $B$ codes, respectively.
\end{theorem}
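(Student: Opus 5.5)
The proof mirrors that of Theorem \ref{thm:layer-3D}, with Euclidean coordinates replaced by the colorings $\eta_S,\eta_Q,\eta_{SS}$. It has four parts: well-definedness, logical isomorphism, weights and qubit count, and distance.

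\textbf{Well-definedness.} Lemma \ref{lem:chain-weight} gives that $(g,h)$ is a chain map, and the Compatibility lemma gives $\hat g g=\partial^A p+p^\top\delta^A$. Theorem \ref{thm:symplectic-embedding} then yields that $C$ is a symplectic complex.

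\textbf{Logical isomorphism.} For $H_1(C)\cong H_1(B)$, I would first verify $H_1(A)=0$. Each $A(s)$ is a tensor product of a tree-like graph $\XZ(s)$ (two paths joined at a vertex) with a path $R_Q$, with extra planes $R_X\otimes\eta(ss')\otimes R_Z$ glued along their intersection with the raw layer. By Künneth, each piece has vanishing $H_1$ and $H_0\cong\F_2$, and a Mayer--Vietoris argument handles the gluing. Theorem \ref{thm:symplectic-embedding} then gives $H_1(C)\cong H_1(C^{\bg})$. I would next identify $C^{\bg}$ with $B$:
\begin{itemize}
\item $H^0(A)\cong\F_2^{\cS}$, generated by the all-ones cochain of each $A(s)$;
\item $H_1(D)\cong P^B$, since each qubit layer $R_X\otimes q\otimes R_Z^\top$ is a surface-code patch with one logical qubit;
\item $H_0(A)\cong\bar S$.
\end{itemize}
Under these identifications, $[g]$ sends the class of $A(s)$ to $\sigma s$, using the indicator conditions in Eq. \eqref{eq:g-weight}. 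Identifying $[\Lambda^D]$ with $\Lambda^B$ then gives $[\hat g]=\hat\sigma$.

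\textbf{Weights and qubit count.} The qubit count comes from multiplying the layer sizes. There are $|\cS|\le\qubit|\cQ|$ check layers, each of size $O(\chi_S\tilde\chi_Q)+O(\weight\qubit)\cdot O(\chi_S^2)=O(\weight^5\qubit^5)$. I expect the careful count to give $O(\weight^4\qubit^4)|\cQ|$ once one uses that $A(s';s)$ is added only for the $O(\weight\qubit)$ overlapping neighbors, and that the qubit layers have size $\chi_S^2$ each. The weight bound 9 and degree bound 8 follow from the same column-weight diagram \eqref{eq:weight-3D}. Locality is replaced by sparsity:
\begin{itemize}
\item injectivity of $\eta_Q$ on supports makes $g,h$ map each cell to at most one cell;
\item the vertex coloring $\eta_S$ (condition (2)) ensures distinct defect paths through a given layer never share cells, so $p$ has column weight at most $2$;
\item the edge coloring $\eta_{SS}$ ensures the pivot planes for distinct pairs $ss'$ meeting at $s$ are disjoint.
\end{itemize}

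\textbf{Distance.} This is the main obstacle. I would follow the cleaning argument of \cite{yuan2026quantum}, using the Cleaning Lemma of \cite{yuan2026unified}. Take a nontrivial logical $\ell$ of $C$ and clean it off the check layers $A$ onto the qubit layers. The cost is controlled by the relative expansion of the grids $[\chi_S]\times[\tilde\chi_Q]$, which contributes the factor $\min(\chi_S,2\tilde\chi_Q)$. Next, project each qubit layer's restriction to its logical class. This gives an element of $P^B$ that is a nontrivial logical of $B$, and each qubit's component loses at most a factor $\weight$ in passing between cleaned and projected weight.

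The new difficulty relative to the CSS case is that a check layer's former $X$-checks are dressed by $p$ with $Z$-support along defect paths, as in Remark \ref{rem:ancilla-dressing}. I would handle this by showing that cleaning along the $\hat q$ direction never increases support on defect edges by more than a constant, because defect paths have bounded multiplicity per edge. The rest of the argument then transfers verbatim.
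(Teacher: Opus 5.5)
Your route is the one the paper takes when it defers to Theorem~\ref{thm:layer-3D} and \cite{yuan2026quantum}:
\begin{itemize}
\item well-definedness from Lemma~\ref{lem:chain-weight}, the compatibility lemma and Theorem~\ref{thm:symplectic-embedding};
\item $H_1(A)=0$ by K\"unneth plus Mayer--Vietoris, since each plane is glued along the tree $\XZ(s)\times\{\eta(ss')\}$ at heights kept distinct by $\eta_{SS}$;
\item $C^{\bg}\cong B$, and weights from Diagram~\eqref{eq:weight-3D};
\item distance by cleaning and relative expansion.
\end{itemize}
Those parts are fine, and more explicit than the paper.

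\textbf{The gap is the qubit count.} Your own estimate is $O(\weight^5\qubit^5)$ per check layer. To bring the total down to $O(\weight^4\qubit^4)|\cQ|$ you invoke the fact that $A(s';s)$ is attached only for the $O(\weight\qubit)$ overlapping neighbours of $s$. But that is exactly the input that produced the $O(\weight^5\qubit^5)$ estimate, so the anticipated ``careful count'' is not an argument.

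Counting directly:
\begin{itemize}
\item the qubit layers cost $O(\chi_S^2)|\cQ|=O(\weight^4\qubit^4)|\cQ|$;
\item the raw check layers cost $O(\chi_S\tilde\chi_Q)|\cS|=O(\weight^3\qubit^4)|\cQ|$;
\item every ordered nontrivially overlapping pair $(s,s')$ contributes a full $\chi_S\times\chi_S$ plane.
\end{itemize}
There can be $\Theta(\qubit^2)|\cQ|$ such pairs (at most $\sum_q\deg(q)^2\le\qubit^2|\cQ|$). So the pivot planes alone cost up to $O(\weight^4\qubit^6)|\cQ|$. The CSS count you are mirroring has no pivot planes, and the paper's omitted proof does not account for them either.

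You also cannot simply thin each plane down to the L-shaped pivot route. That route lies in $\XZ(s)$ but not in $\XZ(s')$. So in $A(s')$ it closes a rectangle of side $|\eta(s)-\eta(s')|$, which must be filled to keep $H_1(A)=0$, at a cost of $\Theta(|\eta(s)-\eta(s')|^2)$ faces.

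Reaching $O(\weight^4\qubit^4)|\cQ|$ therefore needs an ingredient not in the construction. One example would be filling only that rectangle, combined with a coloring that keeps $|\eta(s)-\eta(s')|$ small for overlapping checks. Otherwise the honest statement is the weaker overhead.

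\textbf{Two smaller points.}

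First, the defect-dressing worry in your distance step is unnecessary. Since $H^1(A)=0$, the $Q_X^A$-part of a logical is $\delta^A x$. Adding $\sigma x=(\delta^A x,gx,px)$ deposits $px$ only on $Q_Z^A$. The $B$-logical is read off from the $P^D$-part $\ell^D+gx$, which lies in $\ker\hat\sigma^D$ and represents a class in $\ker[\hat g]$. So $p$ never enters the weight bookkeeping. What actually has to be proved is the relative-expansion bound of $|gx|$ by $|\delta^A x|$ that yields the factor $\min(\chi_S,2\tilde\chi_Q)/\weight$; you assert this rather than derive it, as does the paper.

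Second, the colorings do not make defect paths cell-disjoint. In layer $s$, the pivot segment of $ss'$ at height $\eta(ss')$ can meet the $\hat q$-leg of $\Gamma_0(ss'')$ at a vertex. That is why $p$ has column weight $2$ rather than $1$. What the colorings do guarantee is that no edge lies on two paths, i.e.\ column weight $1$ for $p^\top$.
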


\begin{remark}
    As discussed previously, it's typical to expect
    \begin{align}
        \chi_{S} &= \Theta(\weight^2 \qubit^2) \\
        \tilde{\chi}_{Q} &= \Theta(\weight\qubit) 
    \end{align}
    Hence, for large weights and degrees, we have $\tilde{\chi}_{Q}\ll \chi_{S}$ and thus the code distance is
    \begin{equation}
        d(C) \ge \Theta(\qubit) d(B)
    \end{equation}
    Note that this lower bound in code distance is not ideal due to $\tilde{\chi}_{Q}\ll \chi_{S}$.
    
    However, similar to the CSS weight reduction procedure \cite{yuan2026quantum}, this lower bound may be an artifact of our proof.
    Moreover, similar to Remark 1 of \cite{yuan2026quantum}, we can arbitrarily increase $\tilde{\chi}_Q$ so that $\tilde{\chi}_Q,\chi_S=\Theta(\weight^2 \qubit^2)$ without increasing the asymptotic scaling of the output code size.
    In this case, the code distance has a larger lower bound
    \begin{equation}
        d(C) \ge \Theta(\weight \qubit^2) d(B)
    \end{equation}
\end{remark}

\begin{proof}
    This proof is nearly identical to Theorem \ref{thm:layer-3D} and Theorem III.2-III.3 in SI of \cite{yuan2026quantum}, and thus omitted.
\end{proof}
\appendix

\setcounter{definition}{0}
\renewcommand{\thedefinition}{\Alph{section}\arabic{definition}}
\renewcommand{\theHdefinition}{appendix.\arabic{definition}}
\def\theHtheorem{\theHdefinition}
\def\theHlemma{\theHdefinition}
\def\theHclaim{\theHdefinition}
\def\theHproblem{\theHdefinition}
\def\theHproposition{\theHdefinition}
\def\theHfact{\theHdefinition}
\def\theHcorollary{\theHdefinition}
\def\theHremark{\theHdefinition}
\def\theHobservation{\theHdefinition}
\def\theHnotation{\theHdefinition}
\def\theHexample{\theHdefinition}
\def\theHexamples{\theHdefinition}
\def\theHquestion{\theHdefinition}
\renewcommand{\thetheorem}{\thedefinition}
\renewcommand{\thelemma}{\thedefinition}
\renewcommand{\theclaim}{\thedefinition}
\renewcommand{\theproblem}{\thedefinition}
\renewcommand{\theproposition}{\thedefinition}
\renewcommand{\thefact}{\thedefinition}
\renewcommand{\thecorollary}{\thedefinition}
\renewcommand{\theremark}{\thedefinition}
\renewcommand{\theobservation}{\thedefinition}
\renewcommand{\thenotation}{\thedefinition}
\renewcommand{\theexample}{\thedefinition}
\renewcommand{\theexamples}{\thedefinition}
\renewcommand{\thequestion}{\thedefinition}

\section{CSS Measurement}
\label{sec:CSS-measure}

In this section, we derive the fault complex \cite{hillmann2025single} for CSS measurement -- specifically, for syndrome extraction of CSS codes in Section \ref{sec:syndrome-fault-complex} and CSS logical measurement of CSS codes in Section \ref{sec:logical-fault-complex}. This approach echoes the spacetime code of \cite{williamson2026low}.

Note that to rewrite the error correction procedure as a fault complex, the measurement information, which exists at the level of Pauli groups, must be incorporated in the language of complexes.
This is elaborated in the following subsection.

\subsection{Preliminaries}
\label{sec:CSS-prelim}
It's well known that after abelianization, a CSS code can be treated as a length-2 complex.
However, technically, abelianization does not capture the information of phases.
For example, a CSS code with checks $XXX,ZZI$ and another with $-XXX,ZZI$ would be described by the same complex.
For most scenarios, this subtlety is not important, since properties such as the code distance or logical subspace dimension do not depend on the phase information.
However, when performing measurement, the phases are important, since the measurement outcome $(-1)^s$ can be regarded as the phase in front of the stabilizer checks.
Hence, this section provides some preliminaries on how such information can be captured after abelianization, i.e., in complexes.


\begin{definition}[CSS Code]
    Let $\cP$ denote the Pauli group with phases and $\cP_X,\cP_Z$ be $X,Z$-type Pauli subgroups.
    A CSS code is a commuting Pauli subgroup $-I\notin \cS \subseteq \cP$ such that $\cS$ is generated by subgroups $\cS_X,\cS_Z$ where $\cS_\alpha\subseteq \cP_{\alpha},\alpha=X,Z$.
\end{definition}

\begin{definition}[Abelianization]
    Let $\Sigma_\alpha \subseteq \F_2^{n}$ be subspaces and $\mu_{\bar{\alpha}}\in \Sigma_\alpha^*$ for $\alpha=X,Z$ and $\bar{\alpha}$ is the other choice. Then the quantum state $\psi$ is \textbf{stabilized} with \textbf{measurement} $\mu_X,\mu_Z$ if
    \begin{equation}
        \rho(s) \psi = (-1)^{\mu_{\bar{\alpha}} s} \psi, \quad s\in \Sigma_\alpha
    \end{equation}
    where $\rho(s)$ is the $X,Z$-type operator obtained from $s$, e.g., $ZZI$ from $110$.
    If $\mu_X=\mu_Z=0$, then the measurements are trivial and omitted.
\end{definition}

\begin{lemma}[Equivalence]
    \label{lem:equivalence}
    There exists a canonical bijective map from collection of subgroups $\cS_X\subseteq \cP_X$ with $-I\notin \cS_X$ to the tuple $(\Sigma_X,\mu_Z)$ where $\Sigma_X$ is a subspace of $\F_2^n$ and $\mu_Z \in \Sigma_X^*$ is in the dual space.
    The case is similar for $Z$-type Paulis.
\end{lemma}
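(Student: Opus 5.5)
The statement to prove is Lemma~\ref{lem:equivalence}: subgroups $\cS_X\subseteq\cP_X$ with $-I\notin\cS_X$ correspond bijectively and canonically to pairs $(\Sigma_X,\mu_Z)$. The plan is to build the map explicitly, build its inverse, and check that the two compose to the identity.

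\emph{Forward map.} Every element of $\cP_X$ can be written uniquely as $\pm\rho(v)$, or more generally $i^k\rho(v)$ if phases $\pm i$ are allowed, with $v\in\F_2^n$. Abelianization $\pi:\cP_X\to\F_2^n$ forgets the phase, and it is a group homomorphism because $X$-type operators commute.
\begin{itemize}
\item Set $\Sigma_X=\pi(\cS_X)$; this is a subspace since $\pi$ is a homomorphism.
\item The hypothesis $-I\notin\cS_X$ makes $\pi|_{\cS_X}$ injective. Indeed, if $\pi(a)=\pi(b)$ then $ab^{-1}$ is a scalar in $\cS_X$. That scalar must be $I$: if it were $\pm i\,I$, its square $-I$ would lie in $\cS_X$.
\item Consequently, each $v\in\Sigma_X$ has a unique lift $\epsilon(v)\rho(v)\in\cS_X$ with $\epsilon(v)$ a scalar. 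Because $\rho(v)^2=I$ and $(\epsilon\rho(v))^2\in\cS_X$, we get $\epsilon^2 I\in\cS_X$, so $\epsilon^2=1$ and $\epsilon(v)=\pm1$.
\item Define $\mu_Z(v)\in\F_2$ by $\epsilon(v)=(-1)^{\mu_Z(v)}$.
\end{itemize}

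\emph{Linearity of $\mu_Z$.} Since $\rho(u)\rho(v)=\rho(u+v)$ holds exactly for $X$-type operators, closure of $\cS_X$ gives $(-1)^{\mu_Z(u)+\mu_Z(v)}\rho(u+v)\in\cS_X$. Uniqueness of the lift then forces $\mu_Z(u+v)=\mu_Z(u)+\mu_Z(v)$. Hence $\mu_Z\in\Sigma_X^*$.

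\emph{Inverse map.} Given $(\Sigma_X,\mu_Z)$, set $\cS_X=\{(-1)^{\mu_Z(v)}\rho(v):v\in\Sigma_X\}$.
\begin{itemize}
\item It is closed under products by linearity of $\mu_Z$ and the identity $\rho(u)\rho(v)=\rho(u+v)$.
\item It contains the identity, since $\mu_Z(0)=0$, and it is closed under inverses, since every element squares to $I$. So it is a subgroup of $\cP_X$.
\item It excludes $-I$, because the only element with $v=0$ is $+I$.
\end{itemize}

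\emph{Bijectivity and canonicity.} Both composites are the identity by the uniqueness of lifts. Canonicity holds because no basis is chosen anywhere in the construction. The $Z$-type case follows by the same argument with the roles of $X$ and $Z$ swapped.

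The only real obstacle is the phase bookkeeping in the injectivity step. There we must rule out $\pm i$ phases and also ensure that products of $X$-type Paulis introduce no sign. This is immediate because $X$-type Paulis commute and square to the identity.
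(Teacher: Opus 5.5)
Your proof is correct and takes the same route as the paper: abelianize to get $\Sigma_X$, use $-I\notin\cS_X$ to obtain a unique lift of each $v\in\Sigma_X$, read off its sign as $\mu_Z$, get linearity from the homomorphism property, and build the inverse in the same way. You also supply details the paper leaves implicit, namely ruling out $\pm i$ phases, writing the inverse map explicitly, and checking that both composites are the identity.
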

\begin{proof}
    Given subgroup $\cS_X\subseteq \cP_X$ with $-I\notin \cS_X$, let $\Sigma_X \subseteq \F_2^n$ denote its abelianization.
    Note that given $s\in \Sigma_X$, there must exist a unique Pauli in $\cS_X$, which we denote by $\rho(s)$. Indeed, if there are more than one, then the product of the two must be a phase $\ne I$, and thus $-I\in \cS_X$ so that we reach a contradiction.
    Let $\mu_Z:\Sigma_X\to \F_2$ denote the phase of the unique Pauli $\rho(s) \in \cS_X$ with abelianization $s\in \Sigma_X$.
    Linearity follows from the fact that abelianization is a homomorphism.
    Hence, this defines a map.
    The converse is similarly derived and thus we obtain a bijective map.
\end{proof}
\begin{remark}
    The previous observation tells us that a CSS code is specified uniquely by its abelianization $\Sigma_\alpha$ and measurement maps $\mu_{\bar{\alpha}}$.
\end{remark}
\begin{corollary}
    Let $\cS$ be a CSS code with corresponding abelianizations $\Sigma_\alpha$ and linear maps $\mu_{\bar{\alpha}}$. Then state $\psi$ is stabilized by $\cS\Leftrightarrow \psi$ is stabilized by $\Sigma_\alpha$ with measurements $\mu_{\bar{\alpha}}$.
\end{corollary}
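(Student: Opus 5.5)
The plan is to reduce stabilization by the full group $\cS$ to stabilization by the two type-pure subgroups, and then translate each of these, element by element, through the bijection of Lemma~\ref{lem:equivalence}. Throughout, $\rho(s)$ in the abelianization definition denotes the phase-free $X$- or $Z$-type operator with support $s$, while the element of $\cS_X$ with abelianization $s$ will be denoted $\tilde\rho(s)$.

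\textbf{Step 1 (reduction to $\cS_X,\cS_Z$).} Since $\cS$ is generated by $\cS_X$ and $\cS_Z$, a state $\psi$ satisfies $S\psi=\psi$ for all $S\in\cS$ if and only if it does so for all $S\in\cS_X\cup\cS_Z$. The forward direction is trivial. For the converse, note that the set $\{S : S\psi=\psi\}$ is closed under products and inverses, so it is a subgroup; if it contains both generating subgroups, it contains $\cS$.

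\textbf{Step 2 (unpacking the bijection).} I would use the construction in the proof of Lemma~\ref{lem:equivalence}. For each $s\in\Sigma_X$ there is a unique $\tilde\rho(s)\in\cS_X$ with abelianization $s$. Since $\cS_X\subseteq\cP_X$ and $-I\notin\cS_X$, this element can be written as
\begin{equation}
    \tilde\rho(s)=(-1)^{\mu_Z s}\rho(s).
\end{equation}
Here one must rule out a phase $\pm i$: Hermitian $X$-type elements of a group not containing $-I$ square to $I$, which forces the phase to be real. Moreover, every element of $\cS_X$ arises in this way as $s$ ranges over $\Sigma_X$. The same holds for $\cS_Z$, with $\Sigma_Z$ and $\mu_X$ in place of $\Sigma_X$ and $\mu_Z$.

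\textbf{Step 3 (comparison).} With this description, the condition $\tilde\rho(s)\psi=\psi$ is equivalent to
\begin{equation}
    \rho(s)\psi=(-1)^{\mu_Z s}\psi,
\end{equation}
simply by multiplying both sides by the phase $(-1)^{\mu_Z s}$. Ranging over all $s\in\Sigma_X$, and likewise over $\Sigma_Z$, gives exactly the statement that $\psi$ is stabilized by $\Sigma_\alpha$ with measurements $\mu_{\bar\alpha}$. Combining this with Step~1 proves both directions of the corollary.

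The only delicate point is the sign convention in Step~2. The identification must be that the phase of $\tilde\rho(s)$ relative to the bare operator is exactly $(-1)^{\mu_Z s}$, and that this phase is $\pm1$ rather than $\pm i$. Both facts come from Lemma~\ref{lem:equivalence} together with $-I\notin\cS$. Everything else is bookkeeping.
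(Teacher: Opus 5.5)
Your proof is correct and is essentially the paper's argument. The paper gives the corollary no separate proof: it is an immediate consequence of Lemma~\ref{lem:equivalence}, which writes each element of $\cS_X$ (resp.\ $\cS_Z$) as $(-1)^{\mu_Z s}$ (resp.\ $(-1)^{\mu_X s}$) times the bare operator $\rho(s)$, together with the fact that stabilization by $\cS$ reduces to stabilization by its generating subgroups.

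Your renaming to $\tilde\rho(s)$ usefully resolves the paper's reuse of $\rho$ for two different objects. The one thing to tidy is the phase argument. It should say that $c\,\rho(s)\in\cS_X$ squares to $c^2 I\in\cS_X$, so $-I\notin\cS_X$ forces $c=\pm 1$; Hermiticity is the conclusion, not the hypothesis.
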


\begin{remark}[Orientation]
    \label{rem:CSS-orientation}
    Throughout this section (and Section \ref{sec:low-lvl-CSS-measurement}) we treat the CSS case on its own and orient the CSS complex as
    \begin{equation}
        A = X\xrightarrow{\partial} Q \xrightarrow{\partial} Z
    \end{equation}
    i.e.\ $\partial$ carries the $X$-checks and $\delta=\partial^{\top}$ carries the $Z$-checks.
    This is the transpose of the convention used in the main text and in the non-CSS sections \ref{sec:non-CSS-measure} and \ref{sec:low-lvl-non-CSS-measure}, where $\partial^A:Z^A\to Q_Z^A$ and $\delta^A:X^A\to Q_X^A$.
    The two settings are developed independently and never share a diagram, so no translation is required; the reader need only keep in mind that $\partial^A$ acts on $X^A$ here and on $Z^A$ there.
\end{remark}

\begin{remark}[Distances]
    \label{rem:CSS-distances}
    For a CSS code $D$ we write $d_X(D)$ ($d_Z(D)$) for the minimum weight of a nontrivial $X$- ($Z$-) type logical.
    For a fault complex $F=F_2\to F_1\to F_0$ -- which is not in general a spatial code -- we write $d_1(F)$ for its $1$-systolic distance, and, when $F$ is the fault complex for $X$- ($Z$-) type errors, we also write $d_X(F)\equiv d_1(F)$ ($d_Z(F)\equiv d_1(F)$) to record the error type.
    Accumulated (time-summed) quantities are written in fraktur, e.g.\ $\mathfrak{e}_X(Q^D)=\sum_t e_X(Q^D)|_t$, to distinguish them from the vectorized $\be_X(Q^D)=\sum_t e_X(Q^D)|_t\otimes|t\ket$.
\end{remark}

\begin{lemma}[Physical]
    \label{lem:canonical-iso}
    Let $A=X\to Q\to Z$ be a complex (co)differential $\delta =\partial^\top$. Then there exists a canonical isomorphism $[a_Z] \mapsto \mu_Z:Q/\ker \delta \to (\im \partial)^*$ defined by
    \begin{equation}
        \mu_{Z}(\cdots )=\bra a_Z|\cdots \ket
    \end{equation}
    where $a_Z$ is any representation of equivalence class $[a_Z]$, i.e., modulo ($Z$-type) logical.
\end{lemma}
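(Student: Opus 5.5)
The plan is to realize the map as the quotient of a restriction map and check its kernel and image. In the orientation of Remark~\ref{rem:CSS-orientation}, $\im\partial\subseteq Q$ is the image of $\partial:X\to Q$, and $\delta=\partial^{\top}:Q\to X$ is its transpose with respect to the nondegenerate bilinear forms induced by the bases (Definition~\ref{def:complex-basis} and the Riesz remark following it).

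First I define $\Phi:Q\to(\im\partial)^*$ by $\Phi(a_Z)=\bra a_Z|\cdot\ket\big|_{\im\partial}$. This map is clearly linear in $a_Z$.

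The key step is to identify $\ker\Phi$ with $\ker\delta$. For every $x\in X$ we have
\begin{equation}
\bra a_Z|\partial x\ket=\bra\delta a_Z|x\ket .
\end{equation}
Hence $\Phi(a_Z)=0$ holds if and only if $\bra\delta a_Z|x\ket=0$ for all $x\in X$. Since the form on $X$ is nondegenerate, this is equivalent to $\delta a_Z=0$. Thus $\ker\Phi=\ker\delta$, so $\Phi$ descends to a well-defined injective map $[a_Z]\mapsto\mu_Z$ on $Q/\ker\delta$. In particular, $\mu_Z$ does not depend on the representative $a_Z$ chosen in its class modulo $Z$-type logicals.

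For surjectivity I take an arbitrary $\mu\in(\im\partial)^*$. I extend it to a functional on all of $Q$ by extending a basis of $\im\partial$ to a basis of $Q$, exactly as in the proof of Lemma~\ref{lem:cohomology-iso}. By Riesz representation on $Q$, the extended functional equals $\bra a_Z|\cdot\ket$ for some $a_Z\in Q$, and then $\Phi(a_Z)=\mu$. As a cross-check, the dimensions agree: $\dim Q-\dim\ker\delta=\operatorname{rank}\delta=\operatorname{rank}\partial=\dim(\im\partial)^*$.

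The only delicate point is the word ``canonical''. Surjectivity used a choice of basis extension, but the map itself is defined through the pairing $\bra\cdot|\cdot\ket$ with no choices. So, just as in Lemma~\ref{lem:cohomology-iso}, the isomorphism is canonical once the bilinear form, equivalently the cell basis, is fixed.
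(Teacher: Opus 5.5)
Your proof is correct, but it is organized differently from the paper's.

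\textbf{The paper's route.} The paper never works with the restriction map on $Q$ directly. It applies Riesz representation on $X$ to the functional $x\mapsto\mu_Z(\partial x)$, which gives a unique $\alpha\in X$ with $\mu_Z(\partial x)=\bra\alpha|x\ket$. It then observes that $\alpha$ vanishes on $\ker\partial$, so $\alpha\in(\ker\partial)^\perp=\im\delta$. Checking the converse, it obtains an isomorphism $(\im\partial)^*\cong\im\delta$, which it composes with the isomorphism $\delta:Q/\ker\delta\to\im\delta$. In other words, the paper factors the map through the space of valid syndromes, $[a_Z]\mapsto\delta a_Z=\alpha\mapsto\mu_Z$. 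This is exactly the identification used physically afterwards: a measurement outcome carries the same data as a valid syndrome. However, it relies on the identity $(\ker\partial)^\perp=\im\delta$ in $X$, which the paper uses without proof; it amounts to $\mathrm{rank}\,\delta=\mathrm{rank}\,\partial$.

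\textbf{Your route.} You avoid that identity altogether. Injectivity comes straight from the adjoint relation $\bra a_Z|\partial x\ket=\bra\delta a_Z|x\ket$ together with nondegeneracy on $X$. Surjectivity comes from basis extension and Riesz on $Q$, in parallel with Lemma~\ref{lem:cohomology-iso}. Your dimension check is the only place the rank equality appears, and it is not needed. This makes your argument slightly more self-contained. The syndrome picture is still one line away in your setting: the relation $\mu_Z(\partial x)=\bra\delta a_Z|x\ket$ shows that $\mu_Z$ depends only on $\delta a_Z$.
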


\begin{remark}[Physical]
    \label{rem:phys-err}
    From a physical perspective, $a_Z$ can be regarded as a physical $Z$-Pauli acting on a state stabilized by $A$ so that the resulting state has measurement $\mu_Z=\bra a_Z|$.
    Hence, we say that $\psi$ is stabilized by $A$ with ($Z$-type) \textbf{physical (measurements)} $a_Z$ \textbf{(modulo logical)\footnote{We will often omit the ``mod logical" portion for simplicity}}, instead of measurements $\mu_Z \in (\im \partial)^*$.
    In particular, this formulation also includes the possibility that a (presumably small) unknown physical error occurs on the given state.
    The case is similar for $X$-type.
\end{remark}

\begin{remark}[Valid Syndrome]
    Note that $\delta:Q/\ker \delta \to \im \delta$ is an isomorphism.
    Since $X$ is the space of ($Z$-type) \textbf{syndrome}, we refer to $\im \delta$ as the space of ($Z$-type) \textbf{valid syndromes}.
    The case is similar for $X$-type valid syndromes.
\end{remark}
\begin{proof}
    Since the basis on $X$ defines a non-degenerate bilinear form, by Riesz representation, given any $\mu_{Z}\in (\im \partial)^*$, there exists a unique $\alpha \in X$ such that
    \begin{equation}
        \mu_Z (\partial x)=\bra \alpha |x\ket
    \end{equation}
    Note that if $x\in \ker \partial$, then $\mu_Z (\partial x)=0$ and thus $\alpha \in (\ker \partial)^\perp =\im \delta$.
    Conversely, given any $\partial x \in \im \partial$, we define the map $\partial x\mapsto \bra \alpha|x\ket$.
    This is well-defined since if $\partial x =\partial x'$, then $x+x'\in \ker \partial$ and thus $\bra \alpha|x\ket=\bra \alpha|x'\ket$.
    Since $\mu_Z \mapsto \alpha$ is linear, we obtain an isomorphism $(\im \partial)^*\to \im \delta$.
    Note that $\im \delta \cong Q/\ker \delta$ via the isomorphism $\delta: Q/\ker \delta \to \im \delta$. The statement then follows from combining the two isomorphism.
\end{proof}

\begin{example}[Initialization]
    \label{ex:initialization}
    Suppose that qubits $Q$ are initialized transversally in the $X$-basis, so that the state is stabilized by
    \begin{align}
        X(Q) \xrightarrow{\rm{id}} Q
    \end{align}
    where $X(Q)$ is a copy of $Q$, with unknown physical $e_{\bullet}$.
    Note that $e_X$ can always be chosen $=0$ since the $X$-logicals (and $X$-stabilizers) are arbitrary, and thus only the $Z$-type physical error requires to be specified.
\end{example}

Utilizing the canonical isomorphism in Lemma \ref{lem:canonical-iso}, we can formulate subsequent measurements on a given state as follows.

\begin{lemma}[Overcomplete Measurement]
    \label{lem:overcomplete-physicals}
    Let $\psi$ be stabilized by a CSS code
    \begin{equation}
        A=X^A \to Q^A \to Z^A
    \end{equation}
    with physical $a_{\bullet}$.
    Measure $X$-type Paulis based on $\partial^B:X^B \to Q^A$ with physical $b_Z$. Then
    \begin{equation}
        \label{eq:valid-measurement}
        a_Z+b_Z \in \ker \delta^A +\ker \delta^B
    \end{equation}
    where $\delta^\alpha = (\partial^\alpha)^{\top}$.
    Moreover, if $\alpha_Z\in \ker \delta^A$ and $\beta_Z \in \ker \delta^B$ are such that $c_Z\equiv a_Z +\alpha_Z =b_Z +\beta_Z$, then the output state is stabilized by
    \begin{itemize}
        \item $X$-Paulis in $\im \partial^A +\im \partial^B$ with physical $c_Z$.
        \item $Z$-Paulis in $\im \delta^A \cap \ker \delta^B$ with physical $a_X$
    \end{itemize}
\end{lemma}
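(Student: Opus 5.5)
The plan is to use the standard stabilizer-update rule for Pauli measurements, and to translate every statement about phases into a statement about linear functionals via Lemma~\ref{lem:equivalence} and Lemma~\ref{lem:canonical-iso}. Concretely, ``$\psi$ is stabilized by $A$ with physical $a_\bullet$'' will mean two things. First, for every $v=\partial^A x\in\im\partial^A$, the $X$-Pauli $\rho(v)$ satisfies $\rho(v)\psi=(-1)^{\bra a_Z|v\ket}\psi$. Second, the analogous statement holds for $Z$-Paulis in $\im\delta^A$, with $a_X$ in place of $a_Z$. Similarly, measuring $\partial^B$ with physical $b_Z$ will mean that the post-measurement state $\psi'$ satisfies $\rho(w)\psi'=(-1)^{\bra b_Z|w\ket}\psi'$ for all $w\in\im\partial^B$. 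Since all $X$-type Paulis commute and $\rho$ restricted to $X$-type operators is a homomorphism with no sign ambiguity, these functionals are linear and well defined.

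\textbf{Step 1 (consistency, Eq.~\eqref{eq:valid-measurement}).} Take $v\in\im\partial^A\cap\im\partial^B$. Then $\rho(v)$ already stabilizes $\psi$ with eigenvalue $(-1)^{\bra a_Z|v\ket}$. Measuring commuting $X$-operators produces outcomes consistent with this eigenvalue, so the recorded outcome $(-1)^{\bra b_Z|v\ket}$ must equal it. Hence $\bra a_Z+b_Z|v\ket=0$ for every such $v$, that is,
\[
a_Z+b_Z\in(\im\partial^A\cap\im\partial^B)^{\perp}=(\im\partial^A)^\perp+(\im\partial^B)^\perp=\ker\delta^A+\ker\delta^B .
\]
Here we use $(\im\partial)^\perp=\ker\partial^{\top}$ with respect to the basis bilinear form, together with the finite-dimensional identity $(U\cap W)^\perp=U^\perp+W^\perp$.

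\textbf{Step 2 ($X$-stabilizers).} Choose $\alpha_Z,\beta_Z$ as in the statement, so that $c_Z=a_Z+\alpha_Z=b_Z+\beta_Z$. On $\im\partial^A$ the functional $\bra c_Z|\cdot\ket$ agrees with $\bra a_Z|\cdot\ket$, because $\alpha_Z\in\ker\delta^A=(\im\partial^A)^\perp$. On $\im\partial^B$ it agrees with $\bra b_Z|\cdot\ket$ for the same reason. The old $X$-stabilizers commute with the measured $X$-operators, so they survive the measurement, and the measured operators are added to the stabilizer group. By linearity, every element $v+w$ of $\im\partial^A+\im\partial^B$ then stabilizes $\psi'$ with sign $(-1)^{\bra c_Z|v+w\ket}$. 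The choice of $(\alpha_Z,\beta_Z)$ is irrelevant up to $\ker\delta^A\cap\ker\delta^B=(\im\partial^A+\im\partial^B)^\perp$, which is exactly the ``mod logical'' freedom of Lemma~\ref{lem:canonical-iso}.

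\textbf{Step 3 ($Z$-stabilizers).} A $Z$-Pauli $\rho(u)$ with $u\in\im\delta^A$ commutes with every measured $\rho(\partial^B x)$ if and only if $\bra u|\partial^B x\ket=0$ for all $x$, that is, if and only if $u\in\ker\delta^B$. By the stabilizer update rule, these operators remain stabilizers and keep their eigenvalues, which are given by $a_X$. The operators in $\im\delta^A$ that anticommute with some measured operator are removed from the group. The main point needing care, and what I expect to be the only real obstacle, is justifying that the update rule applies to a whole batch of commuting measurements at once, with the outcomes pre-encoded by $b_Z$. I would handle this by measuring the generators of $\im\partial^B$ one at a time and inducting. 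At each step the new group is generated by the old stabilizers that commute with the measured operator, together with that operator, with the outcome sign fixed by $b_Z$. No $Z$-stabilizer commuting with all of $\im\partial^B$ is ever discarded, and Step~1 guarantees that at no stage does the measured operator already lie in the group with a conflicting sign.
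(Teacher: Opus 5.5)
Your proposal is correct and follows the reasoning the paper intends. The paper gives no separate proof of this lemma, only the Valid Measurement remark that outcomes on $\im\partial^A\cap\im\partial^B$ must agree with both $a_Z$ and $b_Z$; that is exactly your Step~1, with $(U\cap W)^\perp=U^\perp+W^\perp$ turning it into $a_Z+b_Z\in\ker\delta^A+\ker\delta^B$. Your Steps~2--3 fill in the standard stabilizer-update argument the paper leaves implicit; the generator-by-generator induction works, though you could skip it by noting that any Pauli commuting with every measured $X$-Pauli commutes with the joint outcome projector, and so keeps its eigenvalue.
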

\begin{remark}[Valid Measurement]
    Note that the measurement may be \textit{overcomplete} since the initial state $\psi$ is already stabilized by $A$, and thus measurements on $\im \partial^A \cap \im \partial^B$ must be consistent with both $a$ and $b$. This is equivalent to
    Eq. \eqref{eq:valid-measurement}, as thus referred as the \textbf{valid measurement} condition.
\end{remark}
\begin{remark}
    Note that when perform an $X$-Pauli measurement, the $X$-type physical error $a_X$ does not change; instead, only the $Z$-type stabilizers change (which may change the mod logical part) and thus we often omit it in writing.
\end{remark}

\begin{remark}[Transversal Measurement]
    \label{rem:transversal-measurement}
    Note that if $Q^B \subseteq Q^A$ denotes a subspace with basis $\cQ^B \subseteq \cQ^A$ and $Q^B$ which are then measured in the $X$-basis transversally, then the readout $(Q^B)^*$ can be identified with some physical $b_Z$ since $(Q^B)^* \cong Q^B$ gives the canonical isomorphism in Lemma \ref{lem:canonical-iso}.
\end{remark}

\subsection{Fault Complexes}
\label{sec:fault}
In this section, we derive the spacetime fault complexes for $d$ rounds of measurement, for both syndrome extraction (in Algorithm \ref{alg:syndrome-extract}) and logical measurement (in Algorithm \ref{alg:logical-measure}).
Note that the detailed algorithms are postponed, since the relevant information is the measurement obtained each round, along with the possibility that physical errors can accumulate (with small probability) between rounds of measurement.

In particular, to prove a threshold, we shall utilizing the following fact about complexes.
Note that the theorem applies to any complex, and not necessarily those that are interpreted as spatial CSS codes.
Hence, the statement also applies to spacetime fault complex, as we shall elaborate in the following subsections.

\begin{lemma}[General Threshold]
    \label{lem:general-threshold}
    Let
    \begin{equation}
        F=F_2\to F_1\to F_0
    \end{equation}
    denote an arbitrary sparse (qLDPC) complex with basis $\cF$ and differential $\partial^F$.
    Let errors $e\in F_1$ occur with probability $p$, i.e., $\Pr[e \supseteq f] \le p^{|f|}$ for any $f\subseteq \cF_1$.
    Given readout $\partial^F e$, let $\partial^F e\mapsto \hat{e}$ denote the min-weight correction.
    Then there exists threshold $p_c$ such that if $p<p_c$, then the probability that $e+\hat{e}$ is a nontrivial logical of $F$ is upper bounded by
    \begin{equation}
        \Pr[e+\hat{e}\notin \im \partial^F] \le O(1) \dim F_1 \left( \frac{p}{p_c}\right)^{d_1(F)/2}
    \end{equation}
    where $d_1(F)$ is the 1-systolic distance of $F$.
\end{lemma}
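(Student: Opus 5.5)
The plan is a Peierls-type counting argument: reduce a logical failure to the existence of a single connected, nontrivial cycle, of which at least half is covered by the actual error $e$, and then union bound over such connected sets using sparsity. Throughout, let $\Delta$ be an upper bound on the row and column weights of $\partial^F$. Such a $\Delta$ exists because $F$ is sparse.

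First I will define an auxiliary \emph{adjacency graph} $\cA$ on the $1$-cells $\cF_1$. Two $1$-cells $f,f'$ are adjacent in $\cA$ if there is a $0$-cell adjacent to both. By sparsity, $\cA$ has maximum degree at most $\Delta^2$. For $c\subseteq\cF_1$, the key observation is the following. If $\partial^F c=0$ and we split $c=\sqcup_j c_j$ into connected components in $\cA$, then each $\partial^F c_j=0$. The reason is that any $0$-cell adjacent to $c_j$ is adjacent to no cell of $c_{j'}$ for $j'\neq j$, so the terms in $\partial^F c$ at that $0$-cell come only from $c_j$. Now set $c=e+\hat e$. It lies in $\ker\partial^F$, since $\partial^F\hat e=\partial^F e$. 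If every component $c_j$ were in $\im\partial^F$, then so would be $c$. Hence on the failure event some component $c_j$ is a nontrivial class in $H_1(F)$, and therefore $|c_j|\ge d_1(F)$.

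Next I will use minimality of $\hat e$ to show $|e\cap c_j|\ge |c_j|/2$. On the support of $c_j$, the cells of $c_j$ split disjointly into those in $e\setminus\hat e$ and those in $\hat e\setminus e$. The candidate $\hat e+c_j$ has the same syndrome as $\hat e$, because $\partial^F c_j=0$. Its weight is $|\hat e|-|\hat e\cap c_j|+|c_j\setminus\hat e|$. Minimality then forces $|c_j\cap\hat e|\le|c_j\setminus\hat e|$, i.e.\ $|e\cap c_j|\ge|c_j|/2$. So failure implies the following event: there exist $w\ge d_1(F)$ and a connected set $\gamma\subseteq\cF_1$ in $\cA$ with $|\gamma|=w$, such that $e$ contains some subset $f\subseteq\gamma$ with $|f|\ge w/2$.

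Finally I will bound this event by a union bound. I will use the standard count that the number of connected subsets of size $w$ in a graph of degree at most $\Delta^2$ containing a fixed vertex is at most $(e\Delta^2)^w$. Each such $\gamma$ has at most $2^w$ subsets $f$, and each subset with $|f|\ge w/2$ has $\Pr[e\supseteq f]\le p^{w/2}$. Summing over the starting cell gives
\begin{equation}
\Pr[e+\hat e\notin\im\partial^F]\le \dim F_1\sum_{w\ge d_1(F)}\left(2e\Delta^2\sqrt{p}\right)^{w}.
\end{equation}
Choose $p_c=(2e\Delta^2)^{-2}/4$. Then the ratio of this geometric series is at most $\tfrac12 (p/p_c)^{1/2}\le \tfrac12$ for $p<p_c$. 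The series therefore sums to $O(1)(p/p_c)^{d_1(F)/2}$, as claimed.

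The main obstacle I expect is the first step: the choice of $\cA$ must guarantee that components of a cycle are themselves cycles. Adjacency through shared $0$-cells is what makes this work, and sparsity is what keeps the degree of $\cA$, hence $p_c$, bounded. The remaining steps are routine.
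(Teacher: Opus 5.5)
Your proposal is correct. The decomposition of $e+\hat e$ into components that are each cycles, so that a failing component has weight at least $d_1(F)$, the minimality bound $|e\cap c_j|\ge |c_j|/2$, and the union bound over connected sets in a graph of degree at most $\Delta^2$ all go through, giving the claimed bound with constant $2$. The paper gives no argument of its own and simply defers to Gottesman \cite{gottesman2013fault}; what you have written is the standard Peierls-type cluster-counting argument behind that citation, so you are supplying the proof the paper omits rather than taking a different route.
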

\begin{proof}
    Follows from Gottesman \cite{gottesman2013fault}.
\end{proof}
\subsubsection{Syndrome Extraction}
\label{sec:syndrome-fault-complex}

Since syndrome extraction is symmetric in $X$ and $Z$-type errors, we shall restrict our attention to $X$-type errors. The case of $Z$-type errors is similarly derived.
Given a quantum state stabilized by a data CSS code $D=X^D\to Q^D\to Z^D$, input the state into Algorithm \ref{alg:syndrome-extract} and use the output as the input for subsequent syndrome extractions of Algorithm \ref{alg:syndrome-extract}.
Perform the syndrome extraction $T+1$ times (where we assume that that last measurement is perfect without measurement errors), and assume that physical errors can accumulate on $Q^D$ before each extraction, so that we can regard physical errors $e_X(Q^D)|_t$ occuring at integer time steps $t\in [0,T]$ and measurement errors $e_{X}(Z^D)|_{t^+}$ occurring at half-integer time steps $t^+$ where $t\in [0,T)$, where the subscript $X$ is omitted henceforth since we will only deal with $X$-type errors.

In particular, after $t+1$ rounds of subsequent extraction, we have readout
\begin{equation}
    \partial^D \left(\sum_{s\le t} e(Q^D)|_s\right) +e(Z^D)|_{t^+}
\end{equation}
where the summation denotes the accumulated physical error, and we assumed that $e(Z^D)|_{0^-},e(Z^D)|_{T^+}=0$.
Equivalently, we can compute the difference between readouts $t^+$ and and $t^-$ to obtain readouts
\begin{equation}
    \label{eq:syndrome-readout}
    \partial^D e(Q^D)|_t+e(Z^D)|_{t^+}+e(Z^D)|_{t^-}, \quad t\in [0,T]
\end{equation}
where we assume that $e(Z^D)|_{0^-},e(Z^D)|_{T^+}=0$.
This corresponds to the diagram
\begin{equation}
    \begin{tikzpicture}[baseline]
    \matrix(a)[matrix of math nodes, nodes in empty cells, nodes={minimum size=25pt},
    row sep=1.5em, column sep=1.5em,
    text height=1.25ex, text depth=0.25ex]
    { &   & Z^DE \\
      & Q^DV & Z^DV\\};
    \path[->,font=\scriptsize]
    (a-2-2) edge node[above]{$\partial^D$} (a-2-3)
    (a-1-3) edge node[right]{$\partial^R$} (a-2-3);
    \end{tikzpicture}
\end{equation}
where $R=E\to V$ denote the repetition code on $T+1$ vertices $|t\ket,t\in [0,T]$ with differential $\partial^R$. It's then clear that the physical errors occupy space $QV$ and the measurement errors occupy $Z^DE$, where we have omitted the tensor product $\otimes$ for simplicity, so that the direct sum $Q^DV\oplus Z^DE$ indicates the \textbf{fault locations (data and syndrome qubits)}.
The readouts in Eq. \eqref{eq:syndrome-readout} occupy $Z^D V$ and are determined by the sum of the differentials. In particular, $Z^D V$ is the space of \textbf{fault syndromes}, differential $\partial^F = \partial^R+\partial^D$ is the \textbf{syndrome map} and $\im \partial^F$ denotes the space of \textbf{valid fault syndromes}.

To create a fault complex, we must also introduce \textbf{fault stabilizers} to indicate whether accumulated errors are trivial logicals. Writing
$\be_X(Q^D) = \sum_t e_X(Q^D)|_t \otimes | t \ket \in Q^D V$ the vectorized version of $e_X(Q^D)$, we have the following lemma.
\begin{lemma}[Fault Stabilizers]
    \label{lem:fault-stabilizers}
    If $\be_X(Q^D)\in Q^DV$, then the accumulated error
    \begin{equation}
        \mathfrak{e}_X(Q^D) = \sum_{t=0}^T e_X(Q^D)|_{t}
    \end{equation}
    is a trivial logical, i.e., $\in \im \partial^D \Leftrightarrow \be_X(Q^D)$ is in the image of the map given by the red arrows as follows.
    \begin{equation}
        \label{eq:fault-stabilizers}
        \begin{tikzpicture}[baseline]
        \matrix(a)[matrix of math nodes, nodes in empty cells, nodes={minimum size=25pt},
        row sep=1.5em, column sep=1.5em,
        text height=1.25ex, text depth=0.25ex]
        { & Q^D E & Z^D E \\
         X^D V & Q^D V & Z^D V\\};
        \path[->,font=\scriptsize]
        (a-2-2) edge node[above]{$\partial^D$} (a-2-3)
        (a-1-3) edge node[right]{$\partial^R$} (a-2-3);
        \path[->,red,font=\scriptsize]
        (a-2-1) edge node[above]{$\partial^D$} (a-2-2)
        (a-1-2) edge node[right]{$\partial^R$} (a-2-2);
        \end{tikzpicture}
    \end{equation}
\end{lemma}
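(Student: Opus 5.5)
The plan is to reduce the statement to the homology of the repetition code $R$ by collapsing the time direction. I would introduce the augmentation $\varepsilon:V\to\F_2$, $|t\ket\mapsto 1$, and the time-collapse map
\begin{equation}
    \pi = \mathrm{id}_{Q^D}\otimes \varepsilon : Q^DV\to Q^D, \qquad \pi\Big(\sum_t e_X(Q^D)|_t\otimes|t\ket\Big)=\mathfrak{e}_X(Q^D).
\end{equation}
In these terms the lemma says $\pi(\be_X(Q^D))\in\im\partial^D$ if and only if $\be_X(Q^D)\in \im(\partial^D\otimes\mathrm{id}_V)+\im(\mathrm{id}_{Q^D}\otimes\partial^R)$. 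Here $\partial^D:X^D\to Q^D$ is taken in the orientation of Remark~\ref{rem:CSS-orientation}.

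I would first record two facts about $R=E\to V$, the repetition code on the path $|0\ket,\dots,|T\ket$ (Example~\ref{ex:rep}).
\begin{itemize}
    \item[(i)] $\varepsilon\partial^R=0$, since every edge has exactly two endpoints.
    \item[(ii)] $\ker\varepsilon=\im\partial^R$. This holds because the path is connected, so $H_0(R)\cong\F_2$ and $\varepsilon$ induces this isomorphism. Concretely, $|t\ket+|t'\ket$ is the boundary of the edges between $t$ and $t'$, and such sums span $\ker\varepsilon$.
\end{itemize}
Tensoring over the field $\F_2$ with $Q^D$ preserves exactness, so (ii) gives
\begin{equation}
    \ker \pi = Q^D\otimes\ker\varepsilon = \im(\mathrm{id}_{Q^D}\otimes\partial^R).
\end{equation}
The map $\pi$ also intertwines the two $\partial^D$'s: $\pi(\partial^D\otimes\mathrm{id}_V)=\partial^D(\mathrm{id}_{X^D}\otimes\varepsilon)$.

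For ($\Leftarrow$), write $\be_X(Q^D)=(\partial^D\otimes\mathrm{id})x+(\mathrm{id}\otimes\partial^R)y$ with $x\in X^DV$ and $y\in Q^DE$. Applying $\pi$ and using (i) gives
\begin{equation}
    \mathfrak{e}_X(Q^D)=\partial^D\big((\mathrm{id}\otimes\varepsilon)x\big)\in\im\partial^D .
\end{equation}

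For ($\Rightarrow$), suppose $\mathfrak{e}_X(Q^D)=\partial^D x_0$ for some $x_0\in X^D$. Set $\be'=\be_X(Q^D)+\partial^D x_0\otimes|0\ket$. Then $\pi(\be')=\mathfrak{e}_X(Q^D)+\partial^D x_0=0$, so $\be'\in\ker\pi=\im(\mathrm{id}\otimes\partial^R)$. Hence $\be_X(Q^D)$ equals $(\partial^D\otimes\mathrm{id})(x_0\otimes|0\ket)$ plus an element of $\im(\mathrm{id}\otimes\partial^R)$, which is exactly the image of the red arrows in \eqref{eq:fault-stabilizers}.

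The only step needing any care is the identification $\ker\pi=\im(\mathrm{id}\otimes\partial^R)$, i.e.\ that $\ker(\mathrm{id}\otimes\varepsilon)=Q^D\otimes\ker\varepsilon$. I would verify it directly by choosing the basis of $Q^D$ and arguing coordinatewise: a vector lies in $\ker\pi$ iff, for each qubit $q$, its time-column $\sum_t c_{q,t}|t\ket$ has even weight. Each such column is a boundary of edges of the path, namely a sum of $|t\ket+|t'\ket$ pairs, and is therefore in $\im\partial^R$.
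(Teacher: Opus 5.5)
Your proof is correct, but it is organized differently from the paper's.

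The paper proves $\Rightarrow$ by explicit pushing. It lists the times $\tau_0<\cdots<\tau_\ell$ at which the slice $e_X(Q^D)|_t$ is not in $\im\partial^D$. It then uses elements of $Q^DE$ supported on $[\tau_0,\tau_1)$, then $[\tau_1,\tau_2)$, and so on, to carry the running sum of these slices forward until it all sits at $\tau_\ell$. The slices outside $\mathfrak{T}$ are never touched, and the slice at $\tau_\ell$ differs from $\mathfrak{e}_X(Q^D)$ by stabilizers, so every slice ends up in $\im\partial^D$.

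You instead collapse time with $\pi=\mathrm{id}\otimes\varepsilon$. You use $\ker\varepsilon=\im\partial^R$ (i.e.\ $H_0(R)\cong\F_2$) together with flatness over $\F_2$ to identify $\ker\pi$ with the time boundaries. You then absorb the whole spatial stabilizer $\partial^D x_0$ at $t=0$.

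Your route has three advantages. It needs no case analysis on $\mathfrak{T}$. It handles both directions uniformly, whereas the paper only dismisses $\Leftarrow$ as trivial. And it makes explicit that the only input from the time direction is $H_0(R)\cong\F_2$, so the same argument works for any connected time graph, not just a linearly ordered path.

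The paper's version, in exchange, produces a concrete and physically readable witness: error pairs cancelling between consecutive bad times, as in the remark following the lemma. The time part of that witness is supported only between non-stabilizer slices, and it leaves the already-trivial slices alone.
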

\begin{remark}
    Physically, the map $Q^D E \to Q^D V$ denotes physical errors that come in pairs and thus do not contribute to the accumulated error since they cancel each other out. We refer to $Q^D E\oplus X^D V$ as the \textbf{fault stabilizers}.
\end{remark}
\begin{proof}
    This direction $\Leftarrow$ is trivial and thus we focus on the $\Rightarrow$ direction.
    Consider the collection $\mathfrak{T}=\mathfrak{T}(e(Q^D))$ of times $t$ such that $e(Q^D)|_{t}$ is not a stabilizer, i.e., $\notin \im \partial^D$.
    Suppose that $\mathfrak{T}$ is non-empty; otherwise $e(Q^D)|_{t} \in \im \partial^D$ for every $t$ and the claim is immediate.
    Let $\mathfrak{T}(e)=(\tau_0<\cdots <\tau_\ell)$ be ordered.
    Note that the accumulated error is a stabilizer $\mathfrak{e} \in \im \partial^D$ and thus
    \begin{equation}
        \sum_{i=0}^\ell e(Q^D)|_{\tau_i} \in \im \partial^D
    \end{equation}
    Let $s_0(Q^D)\in Q^D E$ be defined as
    \begin{equation}
        s_0(Q^D)|_{t^+} = e(Q^D)|_{\tau_0} 1\{\tau_0 \le t <\tau_1\}
    \end{equation}
    Then we see that
    \begin{align}
        (e(Q^D)+&\partial^R s_0(Q^D))|_{t} \\
        &=
        \begin{cases}
            e(Q^D)|_t & t\ne \tau_0,\tau_1 \\
            0 & t=\tau_0 \\
            e(Q^D)|_{\tau_0} +e(Q^D)|_{\tau_1} &t=\tau_1
        \end{cases}
    \end{align}
    In particular, the error at $\tau_0$ is \textit{pushed} to time-step $\tau_1$.
    Repeat this process to obtain $s(Q^D)\in Q^DE$ such that
    \begin{align}
        (e(Q^D)+&\partial^R s(Q^D))|_{t} \\
        &=
        \begin{cases}
            e(Q^D)|_t & t\notin \mathfrak{T}(e(Q^D)) \\
            0 & t\in \mathfrak{T}(e(Q^D))\backslash\tau_{\ell} \\
            \sum_{i=0}^{\ell} e(Q^D)|_{\tau_i} &t =\tau_{\ell}
        \end{cases}
    \end{align}
    It's then clear $e(Q^D)+\partial^R s(Q^D) \in \im \partial^D$ and thus the statement follows.
\end{proof}
Let us further complete the diagram in Eq. \eqref{eq:fault-stabilizers} to form a complex.
\begin{lemma}
    \label{lem:fault-complex}
    Consider the map indicated by the red arrow.
    \begin{equation}
        \begin{tikzpicture}[baseline]
        \matrix(a)[matrix of math nodes, nodes in empty cells, nodes={minimum size=25pt},
        row sep=1.5em, column sep=1.5em,
        text height=1.25ex, text depth=0.25ex]
        { & Q^D E & Z^D E \\
         X^D V & Q^D V & Z^D V\\};
        \path[->,font=\scriptsize]
        (a-2-2) edge node[above]{$\partial^D$} (a-2-3)
        (a-1-3) edge node[right]{$\partial^R$} (a-2-3)
        (a-2-1) edge node[above]{$\partial^D$} (a-2-2)
        (a-1-2) edge node[right]{$\partial^R$} (a-2-2);
        \path[->,red,font=\scriptsize]
        (a-1-2) edge (a-1-3);
        \end{tikzpicture}
    \end{equation}
    Then the fault stabilizers has trivial fault syndrome $\Leftrightarrow$ the map is $\partial^D$.
\end{lemma}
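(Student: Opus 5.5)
The plan is to make ``trivial fault syndrome'' explicit as a condition on the unknown red map and then solve for it. Write $m:Q^DE\to Z^DE$ for the red arrow. Suppressing tensor signs as in the paper, the fault stabilizer map is $(\partial^D\otimes \mathrm{id}_V)\oplus(\mathrm{id}_{Q^D}\otimes\partial^R)$ into $Q^DV$, together with $m$ into $Z^DE$. The fault syndrome map is $\partial^F=(\partial^D\otimes\mathrm{id}_V)+(\mathrm{id}_{Z^D}\otimes\partial^R)$ from $Q^DV\oplus Z^DE$ to $Z^DV$. ``Fault stabilizers have trivial fault syndrome'' means that $\partial^F$ composed with the fault stabilizer map vanishes. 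I would check this separately on the two summands $X^DV$ and $Q^DE$.

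On $X^DV$, an element maps to $(\partial^D\otimes\mathrm{id})$ of itself in $Q^DV$ and has no $Z^DE$ component. Its fault syndrome is therefore $(\partial^D\partial^D\otimes\mathrm{id})=0$, since $D$ is a complex in the orientation of Remark \ref{rem:CSS-orientation}. This summand imposes no condition on $m$.

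On $Q^DE$, an element $f$ maps to $(\mathrm{id}\otimes\partial^R)f\in Q^DV$ and to $mf\in Z^DE$. Its fault syndrome is
\begin{equation}
(\partial^D\otimes\partial^R)f+(\mathrm{id}\otimes\partial^R)\,mf .
\end{equation}
Using $(\partial^D\otimes\partial^R)=(\mathrm{id}\otimes\partial^R)(\partial^D\otimes\mathrm{id})$, this equals $(\mathrm{id}_{Z^D}\otimes\partial^R)\bigl(m+\partial^D\otimes\mathrm{id}_E\bigr)f$. The direction $\Leftarrow$ is then immediate: taking $m=\partial^D\otimes\mathrm{id}_E$ makes the bracket zero.

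The only genuine step, and the main obstacle, is the direction $\Rightarrow$, that is, uniqueness of $m$. For this I would use that $\partial^R:E\to V$ is injective, because $R$ is the repetition code on a path of $T+1$ vertices and so has no cycles, i.e.\ $H_1(R)=0$. Hence $\mathrm{id}_{Z^D}\otimes\partial^R$ is injective, since the tensor product of injective linear maps over $\F_2$ is injective. The vanishing of the displayed syndrome for all $f$ then forces $m=\partial^D\otimes\mathrm{id}_E$, which is the map the paper denotes $\partial^D$.
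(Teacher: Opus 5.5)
Your proof is correct and matches the paper's argument: you compute the fault syndrome of an element of $Q^DE$, factor out $\partial^R$ using that $\partial^D$ and $\partial^R$ commute, and conclude from $\ker\partial^R=0$ that the red map must equal $\partial^D$. Your additional check that the $X^DV$ summand imposes no condition, via $\partial^D\partial^D=0$, is a detail the paper leaves implicit.
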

\begin{proof}
    The direction $\Leftarrow$ is trivial and thus we consider the direction $\Rightarrow$.
    Specifically, let the red arrow have linear map $\partial^{\mathrm{tbd}}$.
    Let $s(Q^D)\in Q^D E$ and consider its image $(\partial^{\mathrm{tbd}}+\partial^R)s(Q^D)$.
    If the fault stabilizer has trivial fault syndrome, then
    \begin{align}
        \partial^D \partial^R s(Q^D) &= \partial^R \partial^{\mathrm{tbd}} s(Q^D)\\
        \partial^R (\partial^D +\partial^{\mathrm{tbd}}) s(Q^D) &=0
    \end{align}
    where we utilized the fact that $\partial^D,\partial^R$ commute. Since $\ker \partial^R =0$, we see that $\partial^{\mathrm{tbd}}=\partial^D$.
\end{proof}

With these lemmas in place, the appropriate fault complex naturally follows.
\begin{definition} \label{def:fault-complex-syndrome}
    The \textbf{fault complex} $F=F_2\to F_1\to F_0$ for $X$-type errors in syndrome extraction after $T+1$ rounds of measurements for the code $D$ is then given by
\begin{equation}
    \label{eq:fault-complex}
    \begin{tikzpicture}[baseline]
    \matrix(a)[matrix of math nodes, nodes in empty cells, nodes={minimum size=25pt},
    row sep=1.5em, column sep=1.5em,
    text height=1.25ex, text depth=0.25ex]
    { & Q^D E & Z^D E \\
     X^D V & Q^D V & Z^D V\\};
    \path[->,font=\scriptsize]
    (a-2-2) edge node[above]{$\partial^D$} (a-2-3)
    (a-1-3) edge node[right]{$\partial^R$} (a-2-3)
    (a-2-1) edge node[above]{$\partial^D$} (a-2-2)
    (a-1-2) edge node[right]{$\partial^R$} (a-2-2)
    (a-1-2) edge node[above]{$\partial^D$} (a-1-3);
    \end{tikzpicture}
\end{equation}
\end{definition}

One may notice the that Definition \ref{def:fault-complex-syndrome} corresponds to the tensor product $D\otimes R$, where $X^D E$ doesn't play a role.
The fault complex then provides a natural threshold.

\begin{remark}
    On the correspondence with the construction of \cite{williamson2026low}; the components $Q^DV,Z^D E$ correspond to respectively space and time faults, $Z^D V$ corresponds to detectors, while $X^DV, Q^DE$ map to respectively space and time stabilizers.
\end{remark}


Much like the distance of a code, the distance of the fault complex then allows us to guarantee resilience against both data and measurement errors.

\begin{lemma}[Recovery] \label{lem:recovery-lemma-syndrome}
    Let $F$ the fault complex for a code $D$, and $e = e_X(Q^D) \oplus e_X(Z^D) \in F_1$ a set of faults satisfying
    \begin{equation}
        |e| < \frac{d_X(F)}{2}
    \end{equation}
    Then, there exists $\hat{e} \in F_1$ such that
    \begin{equation} \label{eq:recovery-syndrome}
        \sum_t e_X(Q^D)|_t+ \hat{e}_X(Q^D)|_t \in \im \partial ^D
    \end{equation}
\end{lemma}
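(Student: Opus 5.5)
The idea is to treat $F$ as a classical complex and invoke min-weight decoding, as in Lemma~\ref{lem:general-threshold}. Given the fault syndrome $\partial^F e\in Z^DV$, let $\hat{e}\in F_1$ be a minimum-weight element with $\partial^F\hat{e}=\partial^F e$; such an element exists because $e$ itself is a candidate. Then $|\hat{e}|\le|e|<d_X(F)/2$. The sum $f\equiv e+\hat{e}$ lies in $\ker\partial^F$ and satisfies $|f|\le|e|+|\hat{e}|<d_X(F)=d_1(F)$. By the definition of systolic distance, $f$ cannot represent a nonzero class in $H_1(F)$. Hence $f\in\im\partial^F$, i.e., $f$ is a sum of fault stabilizers in $Q^DE\oplus X^DV$.

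Next we translate "$f$ is a fault stabilizer" into the accumulated-error statement~\eqref{eq:recovery-syndrome}. Write $f=f(Q^D)\oplus f(Z^D)$, so that $f(Q^D)\in Q^DV$ is the spatial part $e_X(Q^D)+\hat{e}_X(Q^D)$. From the block structure of Definition~\ref{def:fault-complex-syndrome}, $f=\partial^F(s_E\oplus s_V)$ with $s_E\in Q^DE$ and $s_V\in X^DV$ gives
\begin{equation}
    f(Q^D)=\partial^R s_E+\partial^D s_V .
\end{equation}
Summing over time, i.e., applying the map $Q^DV\to Q^D$ given by $|t\ket\mapsto 1$, kills $\partial^R s_E$, because every edge of the repetition code $R$ has two endpoints. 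What remains is $\sum_t f(Q^D)|_t=\partial^D\sum_t s_V|_t\in\im\partial^D$. This is exactly~\eqref{eq:recovery-syndrome}. The argument is the easy direction $\Leftarrow$ of Lemma~\ref{lem:fault-stabilizers}, and we could simply cite that lemma instead.

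The only point that needs care is the first step: the min-weight decoder must be defined over all of $F_1$, including the measurement faults $Z^DE$, and the distance $d_X(F)$ must refer to $H_1(F)$ of the full fault complex. Then a small cycle is trivial in $F$. This is a statement about spacetime, which is stronger than triviality of the spatial data-qubit part alone. The sharpness of the bound relies only on the triangle inequality for Hamming weight. The boundary conventions $e(Z^D)|_{0^-}=e(Z^D)|_{T^+}=0$ ensure that $F$ is genuinely the complex $D\otimes R$ restricted as in Definition~\ref{def:fault-complex-syndrome}, so that $\partial^F\partial^F=0$ and $\im\partial^F\subseteq\ker\partial^F$ hold. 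I would briefly verify this before the argument.
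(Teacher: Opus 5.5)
Your proof is correct and follows the paper's argument. Min-weight decoding together with the triangle inequality gives $e+\hat e\in\ker\partial^F$ of weight below $d_1(F)$, hence $e+\hat e\in\im\partial^F$. Projecting onto $Q^DV$ and summing over time, which is the trivial $\Leftarrow$ direction of Lemma~\ref{lem:fault-stabilizers}, then yields Eq.~\eqref{eq:recovery-syndrome}. You spell out explicitly what the paper calls ``the usual code theoretic argument'' and that easy direction.
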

\begin{proof}
    Pick $\hat{e}$ the minimum weight correction with $\partial^F e = \partial^F \hat{e}$. From the usual code theoretic argument, we have $e + \hat{e} \in \im \partial ^F$; and Lemma \ref{lem:fault-stabilizers} gives Eq. \eqref{eq:recovery-syndrome}.
\end{proof}

The function of the fault complex, relative to the usual code complex, is to find the correction $\hat{\mathfrak{e}}_X(Q^D)=\sum_t \hat{e}_X(Q^D)|_t$; which is typically unattainable for a simple code complex in the presence of measurement errors. While the recovery operation is entirely contained on the data qubits, the fault complex finds $\hat{\mathfrak{e}}_X(Q^D)$ by extending its phase space to include syndrome qubits. One can also remark that the above lemma is optimal, in the sense that the tight characterization given by Lemma \ref{lem:fault-stabilizers} guarantees the existence of an uncorrectable error of size $d_X(F)/2$.

Beyond the adversarial setting, we also obtain a threshold theorem in the stochastic setting when the measurements are repeated a number $T \geq d_X(D)$ times.

\begin{theorem}[Threshold]
    \label{thm:syndrome-threshold}
    Suppose that $T\ge d_X(D)$ where $d_X(D)$ is the $X$-type code distance of the qLDPC CSS data code.
    Let (physical and measurement) $X$-type errors
    \begin{equation}
        \be_X(F_1)=\be_X (Q^D)\oplus \be_X(Z^D)
    \end{equation}
    occur with rate $p$.
    Suppose that readouts $\partial^F \be_X(F_1)$ and min-weight correction $\partial^F \be_X(F_1 )\mapsto \hat{\be}_X(F_1)$ relative to the fault complex $F$ are provided.
    Then there exists threshold $p_c$ such that if $p<p_c$, then the accumulated physical error of $\be_X(Q^D)+\hat{\be}_X(Q^D)$ is a nontrivial logical with probability
    \begin{equation}
        \le O(1) |\cQ^D| d_X(D) \left(\frac{p}{p_c}\right)^{d_X(D)/2}
    \end{equation}
\end{theorem}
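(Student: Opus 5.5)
The statement to prove is Theorem~\ref{thm:syndrome-threshold}: a stochastic threshold for repeated syndrome extraction. The plan is to reduce it to Lemma~\ref{lem:general-threshold}, applied to the fault complex $F$ of Definition~\ref{def:fault-complex-syndrome}, and then translate the conclusion back to the data qubits through Lemma~\ref{lem:fault-stabilizers}. Three ingredients are needed.

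\textbf{Step 1: sparsity and size of $F$.} Check that $F$ is qLDPC. Each differential of $F$ is a sum of copies of $\partial^D$ (sparse, since $D$ is qLDPC) and $\partial^R$ (column and row weight at most $2$). So the row and column weights of $\partial^F$ are bounded by constants depending only on the check weight and qubit degree of $D$. Also
\begin{equation}
\dim F_1=|\cQ^D|(T+1)+|\cZ^D|T=O(|\cQ^D|\,T),
\end{equation}
using that $|\cZ^D|=O(|\cQ^D|)$ for a qLDPC code.

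\textbf{Step 2: distance of $F$.} This is the step I expect to be the main obstacle; I aim to show $d_1(F)\ge d_X(D)$ when $T\ge d_X(D)$. Since $F=D\otimes R$ in the relevant degrees (the remark after Definition~\ref{def:fault-complex-syndrome}), I first try a direct K\"unneth-style argument rather than the general product formula.

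Take $e=e(Q^D)\oplus e(Z^D)\in\ker\partial^F$ with $e\notin\im\partial^F$, and consider its time-summed data part $\mathfrak{e}(Q^D)=\sum_t e(Q^D)|_t$. The $Z^DV$ component of $\partial^F e=0$ says
\begin{equation}
\partial^D e(Q^D)|_t=e(Z^D)|_{t^+}+e(Z^D)|_{t^-}.
\end{equation}
Summing over $t$ telescopes (the boundary terms vanish), so $\mathfrak{e}(Q^D)\in\ker\partial^D$. There are then two cases.
\begin{itemize}
\item If $\mathfrak{e}(Q^D)\notin\im\partial^D$, then $|e|\ge|\mathfrak{e}(Q^D)|\ge d_X(D)$.
\item Otherwise, Lemma~\ref{lem:fault-stabilizers} lets me subtract an element of $\im\partial^F$ to make every $e(Q^D)|_t=0$. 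I will need to check that this subtraction does not increase weight beyond what the argument tolerates, or else phrase everything on the reduced representative. The cycle is then purely temporal, $e(Z^D)\in Z^DE$ with $\partial^R e(Z^D)=0$. Since $\ker\partial^R=0$ (the boundary terms vanish), this forces $e(Z^D)=0$, hence $e\in\im\partial^F$, a contradiction.
\end{itemize}
This last point is exactly why the final perfect round is imposed. A subtlety is that the reduction must preserve the homology class rather than the weight, and the two-case argument above is arranged to use only the class. The condition $T\ge d_X(D)$ enters only as a consistency assumption ensuring no cheaper timelike logical exists. If the boundary conventions permitted timelike logicals, they would have weight $T$, giving $d_1(F)\ge\min(d_X(D),T)=d_X(D)$.

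\textbf{Step 3: apply the general threshold.} Lemma~\ref{lem:general-threshold} gives a threshold $p_c$ with
\begin{equation}
\Pr[\be+\hat\be\notin\im\partial^F]\le O(1)\dim F_1\,(p/p_c)^{d_1(F)/2}.
\end{equation}
Substituting Steps 1 and 2 with $T=\Theta(d_X(D))$, and using $d_1(F)\ge d_X(D)$ with $p<p_c$, yields the stated bound $O(1)|\cQ^D|\,d_X(D)\,(p/p_c)^{d_X(D)/2}$. Finally, whenever $\be+\hat\be\in\im\partial^F$, Lemma~\ref{lem:fault-stabilizers} (in its easy direction, as in Lemma~\ref{lem:recovery-lemma-syndrome}) shows that the accumulated residual data error $\sum_t(\be+\hat\be)(Q^D)|_t$ lies in $\im\partial^D$. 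The failure event is therefore contained in $\{\be+\hat\be\notin\im\partial^F\}$, which completes the proof.
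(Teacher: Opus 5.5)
Your proposal is correct and follows the same route as the paper: it bounds $d_1(F)\ge d_X(D)$ from the $D\otimes R$ structure, invokes Lemma~\ref{lem:general-threshold}, and uses Lemma~\ref{lem:fault-stabilizers} to show that failure on the data qubits forces $\be_X(F_1)+\hat{\be}_X(F_1)\notin\im\partial^F$. Your telescoping-plus-$\ker\partial^R=0$ argument makes explicit the distance fact the paper simply cites from the tensor product, and you are right that, with the perfect final round, the hypothesis $T\ge d_X(D)$ is not what the distance bound needs; the stated prefactor actually relies on $T=O(d_X(D))$, which both you and the paper assume implicitly.
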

\begin{proof}
    Since $T\ge d_X(D)$, we see that the fault complex has 1-systolic distance $\ge d_X(D)$, which follows from the tensor product $D\otimes R$.
    By Lemma \ref{lem:general-threshold}, it's then sufficient to show that the accumulated physical error of $\be_X(Q^D)+\hat{\be}_X(Q^D)$ is nontrivial $\notin \im \partial^D$ only if $\be_X(F_1)+\hat{\be}_X(F_1) \notin \im \partial^F$.
    Indeed, by Lemma \ref{lem:fault-stabilizers}, we see that $\be_X(Q^D)+\hat{\be}_X(Q^D) \notin \im \pi \partial^F$ where $\pi$ denotes the projection onto $Q^D V$.
    Hence, $\be_X(F_1) +\hat{\be}_X(F_1)\notin \im \partial^F$ and thus the statement follows.
\end{proof}

\subsubsection{Logical Measurement}
\label{sec:logical-fault-complex}

In this subsection, we similarly derive the fault complex of logical measurement in Algorithm \ref{alg:logical-measure}.
However, there are a couple of subtleties that must be clarified to compare with that of syndrome extraction.
For example, Algorithm \ref{alg:logical-measure} has three readouts instead of two in Algorithm \ref{alg:syndrome-extract}, so that the $X$- and $Z$-type readouts are asymmetric and thus must be treated differently.
Also note that any physical error on ancilla qubits $Q^A$ are discarded after each round of measurement, and thus the physical meaning of \textit{accumulated} error on $Q^A$ must be clarified.

\paragraph{$X$-Pauli errors}

We start by considering the $X$-type errors in logical measurement in Algorithm \ref{alg:logical-measure}.
Suppose that we are given a quantum state stabilized by a data CSS code $D$.
Input the state with possible unknown errors $e(Q)|_0$ into Algorithm \ref{alg:logical-measure}, and write $Q = Q^A \oplus Q^D, Z = Z^A \oplus Z^D$, so that we obtain relevant readouts
\begin{align}
    \partial e_X(Q)|_0+e_X(Z)|_{0^+} &\in Z \\
    \label{eq:ancilla-readout-0}
    \partial^A x^A|_{0} + e_X(Q^A)|_{0} &\in Q^A
\end{align}
where the output state has $X$-type physical $e_X(Q^D)|_0+gx^A|_{0}$.
Use the output as the input for another round of logical measurement to obtain relevant readouts
\begin{align}
    \partial (e_X(Q^D)|_0 +gx^A|_0 + e_X(Q)|_1)+e(Z)|_{1^+} &\in Z \\
    \partial^A x^A|_{1} + e_X(Q^A)|_{1} &\in Q^A
\end{align}
Let $r_0^A:=\partial^A x^A|_0+e_X(Q^A)|_0$ denote the measured readout in Eq. \eqref{eq:ancilla-readout-0}.
Then the qubit contribution to the syndrome of the next round is explicitly
\begin{align}
    &\partial
    \begin{pmatrix}
        e_X(Q^A)|_1\\
        e_X(Q^D)|_0+gx^A|_0+e_X(Q^D)|_1
    \end{pmatrix}\notag\\
    &\quad=
    \partial
    \begin{pmatrix}
        \partial^A x^A|_0+e_X(Q^A)|_1\\
        e_X(Q^D)|_0+e_X(Q^D)|_1
    \end{pmatrix}\notag\\
    &\quad=
    \partial\left[
    \begin{pmatrix}
        e_X(Q^A)|_0+e_X(Q^A)|_1\\
        e_X(Q^D)|_0+e_X(Q^D)|_1
    \end{pmatrix}
    +
    \begin{pmatrix}
        r_0^A\\0
    \end{pmatrix}
    \right]\notag\\
    &\quad=\partial(e_X(Q)|_0+e_X(Q)|_1)
    +\partial
    \begin{pmatrix}
        r_0^A\\0
    \end{pmatrix},
\end{align}
Hence, by utilizing the readout in Eq. \eqref{eq:ancilla-readout-0} when discarding the ancilla qubits, we can regard the readout as that error caused by the accumulated error $e_X(Q)|_0+e_X(Q)|_1$ even though errors $e_X(Q^A)|_0$ on the ancilla were discarded.
Specifically while the observed syndrome $s$ is
\begin{align}
     s_1 = \partial(e_X(Q)|_0+e_X(Q)|_1)
    +
    \begin{pmatrix}
        \partial r_0^A\\0
    \end{pmatrix} &\in Z \\
    r_1^A &\in Q^A
\end{align}
the quantity $r_0^A$ is known from the destructive readout of $Q^A$ in the previous round, thus the \emph{effective} syndrome information instead is:
\begin{align}
    s^{\text{eff}}_1 = s_1 + \partial r_0^A = \partial(e_X(Q)|_0+e_X(Q)|_1)
     &\in Z \\
    r_1^A &\in Q^A
\end{align}
Repeat this procedure $T+1$ times (where the last logical measurement is assumed to be perfect without measurement errors).
Then by computing the difference between readouts $t^+$ and $t^-$, we obtain information
\begin{align}
    \label{eq:logical-readout}
    \partial e_X(Q)|_t+e_X(Z)|_{t^+} +e(Z)|_{t^-}, \quad t\in [0,T]  \\
    \label{eq:ancilla-readout}
    \partial^A x^A|_t + e_X(Q^A)|_t\in Q^A, \quad t\in [0,T]
\end{align}
where $e_X(Z)|_{0^-},e_X(Z)|_{T^+}=0$ by assumption, and the output state has accumulated physical errors
\begin{equation}
    \mathfrak{e}_{\mathrm{phys}}  = \sum_{t\le T} e_X(Q^D)|_t +g\sum_{t\le T} x^A|_t
\end{equation}

Using Eq. \eqref{eq:logical-readout}, we can then obtain a fault complex $F$ similar to Eq. \eqref{eq:fault-complex}.

\begin{definition}
    The \textbf{fault complex} $F=F_2\to F_1\to F_0$ for $X$-type errors for measuring $g \ker \partial ^A$ on $D$ using $A$, after $T+1$ rounds of measurements, is given by $\cone(g) \otimes R$, where $\cone(g)$ is the (height-1) cone corresponding to the following complex:
    \begin{equation}
    \begin{tikzpicture}[baseline]
    \matrix(a)[matrix of math nodes, nodes in empty cells, nodes={minimum size=25pt},
    row sep=1.5em, column sep=1.5em,
    text height=1.25ex, text depth=0.25ex]
    { & X^{A} & Q^{A} & Z^{A} \\
     X^{D} & Q^{D} & Z^{D} &\\};
    \path[->,font=\scriptsize]
    (a-1-2) edge (a-1-3)
    (a-1-3) edge (a-1-4)
    (a-2-1) edge (a-2-2)
    (a-2-2) edge (a-2-3)
    (a-1-2) edge node[right]{$g$} (a-2-2)
    (a-1-3) edge node[right]{$g$} (a-2-3);
    \end{tikzpicture}
\end{equation}
\end{definition}

\begin{lemma}[Recovery]
    \label{lem:recovery-logical-msmt}
    Let $F$ the fault complex for a pair $D, A$, and $\be_X(F_1) = \be_X(Q) \oplus \be_X(Z) \in F_1$ a set of faults satisfying
    \begin{equation}
        |\be_X(F_1)| < \frac{d_X(F)}{2}
    \end{equation}
    Then, there exists $\hat{\be}_X(F_1) \in F_1$ such that
    \begin{equation} \label{eq:recovery-logical-X}
        \sum_{t\le T} e_X(Q)|_t+ \hat{e}_X(Q)|_t \in \im \partial^D + g \ker \partial^A
    \end{equation}
\end{lemma}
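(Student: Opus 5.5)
Following the proof of Lemma \ref{lem:recovery-lemma-syndrome}, the plan is to take $\hat{\be}_X(F_1)$ to be the minimum-weight correction with the same fault syndrome, $\partial^F\hat{\be}_X(F_1)=\partial^F\be_X(F_1)$. Since $|\be_X(F_1)|<d_X(F)/2$, the sum $\be_X(F_1)+\hat{\be}_X(F_1)$ has weight $<d_X(F)=d_1(F)$ and lies in $\ker\partial^F$. By definition of the systolic distance, it therefore lies in $\im\partial^F$. What remains is the analogue of Lemma \ref{lem:fault-stabilizers}: show that the projection onto $QV$ of any element of $\im\partial^F$ has accumulated error in $\im\partial^D+g\ker\partial^A$.

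\textbf{Main step: computing the image.} Write $F=\cone(g)\otimes R$. We have $F_2=(X^D\oplus X^A)V\oplus (Q^D\oplus Q^A)E$, and the map to $F_1$ is $\partial^{\cone}\otimes 1+1\otimes\partial^R$. Take a fault stabilizer $f=(x^D,x^A)\otimes v+(q^D,q^A)\otimes e$ and sum its $QV$-component over time. Because $R$ is the repetition code on a path, every column of $\partial^R$ has even weight, so the time-sum annihilates the $\partial^R$ terms. What survives is $\partial^{\cone}$ applied to the time-summed vertex part $\mathfrak{x}=\sum_t (x^D,x^A)|_t$. This gives $\partial^D\mathfrak{x}^D+g\mathfrak{x}^A$ on $Q^D$ and $\partial^A\mathfrak{x}^A$ on $Q^A$.

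\textbf{Interpreting the ancilla component.} The statement of Eq.~\eqref{eq:recovery-logical-X} only constrains $Q$-accumulation modulo $\im\partial^D+g\ker\partial^A$. So I will read the claim with $e_X(Q)$ meaning the data-qubit part, after absorbing the ancilla readouts $r^A_t$ as in the effective-syndrome discussion.

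The $Q^A$ accumulation is known exactly from the transversal readouts, Eq.~\eqref{eq:ancilla-readout}, which give $\partial^Ax^A|_t+e_X(Q^A)|_t$. Hence the residual ancilla error, once $\hat{\be}_X$ is applied, is $\partial^A\mathfrak{x}^A$. This has to be reconciled with the data part. The data-qubit accumulation $\partial^D\mathfrak{x}^D+g\mathfrak{x}^A$ differs from an element of $\im\partial^D+g\ker\partial^A$ by exactly $g$ applied to a preimage component that is not a cycle.

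So the key bookkeeping is to check that the physically relevant data error is $\mathfrak{e}_{\rm phys}=\sum_t e_X(Q^D)|_t+g\sum_t x^A|_t$. Including the unknown $g x^A$ terms, which correspond to the random ancilla outcomes, makes the $g\mathfrak{x}^A$ term exactly the part the measurement already accounts for. The residual then lies in $\im\partial^D+g\ker\partial^A$ when $\partial^A\mathfrak{x}^A$ is trivially consistent with the readout.

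\textbf{Expected obstacle.} The hard part is this last reconciliation: showing that the $g\mathfrak{x}^A$ contribution with $\mathfrak{x}^A\notin\ker\partial^A$ is cancelled by the known ancilla readouts, rather than producing a genuine logical. I would first try to prove the inclusion $\pi_{Q^D}(\im\partial^F)\subseteq\im\partial^D+g X^A$ directly, by the telescoping argument above. I would then use the fact that the $Q^A$-projection of $\be+\hat{\be}$ matches between the two configurations, so $\partial^A\mathfrak{x}^A$ equals the observed difference. This difference is corrected by relabeling ancilla outcomes, leaving only the $g\ker\partial^A$ ambiguity, which is precisely the measured logical.
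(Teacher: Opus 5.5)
Your first two steps match the paper. Min-weight decoding puts $\be_X(F_1)+\hat{\be}_X(F_1)$ in $\im\partial^F$, and time-summing kills the $\partial^R$ terms. This gives $\mathfrak{e}_X(Q^D)+\hat{\mathfrak{e}}_X(Q^D)=\partial^D s^D+g s^A$ and $\mathfrak{e}_X(Q^A)+\hat{\mathfrak{e}}_X(Q^A)=\partial^A s^A$ for some $(s^D,s^A)\in X^D\oplus X^A$.

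\textbf{The gap.} It lies in the final reconciliation, which is the real content of the lemma. You write $\mathfrak{x}^A$ both for the decoder's preimage $s^A$ and for the sum $\mathfrak{x}^A=\sum_t x^A|_t$ of random ancilla outcomes. These are unrelated objects. The fault complex never sees $x^A$, because the effective-syndrome bookkeeping replaces it by $e_X(Q^A)$. So including $g\sum_t x^A|_t$ does not make the $g$-term ``exactly the part the measurement already accounts for.'' The two $g$-terms add:
\[
\mathfrak{e}_{\mathrm{phys}}+\hat{\mathfrak{e}}_X(Q^D)=\partial^D s^D+g(s^A+\mathfrak{x}^A),
\]
and in general $s^A+\mathfrak{x}^A\notin\ker\partial^A$. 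Also, the readouts do not determine the ancilla accumulation. They only give the sum $\mathfrak{r}^A=\partial^A\mathfrak{x}^A+\mathfrak{e}_X(Q^A)$, so neither $\partial^A\mathfrak{x}^A$ nor $\mathfrak{e}_X(Q^A)$ is known separately. Your intermediate inclusion $\pi_{Q^D}(\im\partial^F)\subseteq\im\partial^D+gX^A$ is true but too weak to help.

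\textbf{The missing step.} Combine the readouts with the \emph{ancilla component} of the decoder's output. The quantity
\[
\mathfrak{r}^A+\hat{\mathfrak{e}}_X(Q^A)=\partial^A(\mathfrak{x}^A+s^A)
\]
is computable. Choose any $\hat{s}^A$ with $\partial^A\hat{s}^A=\mathfrak{r}^A+\hat{\mathfrak{e}}_X(Q^A)$, and apply the additional Pauli correction $g\hat{s}^A$ to the data qubits. Then
\[
\mathfrak{e}_{\mathrm{phys}}+\hat{\mathfrak{e}}_X(Q^D)+g\hat{s}^A=\partial^D s^D+g(s^A+\mathfrak{x}^A+\hat{s}^A),
\]
with $s^A+\mathfrak{x}^A+\hat{s}^A\in\ker\partial^A$. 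This lies in $\im\partial^D+g\ker\partial^A$, as claimed.

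Note that $g\hat{s}^A$ is not part of $\hat{\be}_X(F_1)$, and it is an actual data correction (or Pauli-frame update), not a ``relabeling'' of outcomes. This is how the paper's proof reads the loosely phrased statement.
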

\begin{proof}
    We write
    \begin{align}
        \mathfrak{e}_X(Q^D) = \sum_{t\le T} e_X(Q^D)|_t, \quad \mathfrak{e}_X(Q^A) &= \sum_{t\le T} e_X(Q^A)|_t, \quad  \mathfrak{x}^A = \sum_{t\le T} x^A|_t, \quad \mathfrak{r}^A = \sum_{t \leq T} r_t^A\\ \mathfrak{e}_{\mathrm{phys}} &= \mathfrak{e}_X(Q^D) + g \mathfrak{x}^A
    \end{align}
    where the fraktur letters denote accumulated (time-summed) quantities as in Remark \ref{rem:CSS-distances}.
    From Lemma \ref{lem:recovery-lemma-syndrome}, the fault complex gives a correction $\hat{\mathfrak{e}}_X(Q) = \hat{\mathfrak{e}}_X(Q^D) \oplus \hat{\mathfrak{e}}_X(Q^A) \in Q^D \oplus Q^A$, such that:
    \begin{align}
        \mathfrak{e}_X(Q^D) + \hat{\mathfrak{e}}_X(Q^D) &= \partial^D s^D + g s^A \\
        \mathfrak{e}_X(Q^A) + \hat{\mathfrak{e}}_X(Q^A) &= \partial^A s^A
    \end{align}
    for some $s^A \in X^A, s^D \in X^D$.
    As the fault complex is oblivious to $x^A$, the residual error remains non trivial:
    \begin{equation}
        \mathfrak{e}_{\mathrm{phys}} + \hat{\mathfrak{e}}_X(Q^D) = \partial^D s^D + g s^A + g \mathfrak{x}^A
    \end{equation}
    Note that much like in \cite{williamson2026low}, the residual error on the data subsystem is the image of an ancilla stabilizer $s^A + \mathfrak{x}^A$ through the map $g$. This image is then corrected by finding a pre-image, or correction \footnote{In \cite{williamson2026low} the correction obtained is exactly $s^A + \mathfrak{x}^A$, because the graph ancilla has $H_1(D) \cong Z_1(D)$. In our case we recover it up to some element that acts trivially on the output state.}, for $\partial^A (s^A + \mathfrak{x}^A)$. For this we leverage the correction $\hat{\mathfrak{e}}_X(Q^A)$ we obtained for the ancillary system, applied to $\mathfrak{r}^A$:
    \begin{equation}
        \mathfrak{r}^A + \hat{\mathfrak{e}}_X(Q^A)= \partial^A \mathfrak{x}^A + \mathfrak{e}_X(Q^A) + \hat{\mathfrak{e}}_X(Q^A) = \partial^A \mathfrak{x}^A + \partial^A s^A = \partial^A \hat{s}^A
    \end{equation}
    for some choice of $\hat{s}^A$.
    We then have
    \begin{equation}
        \mathfrak{e}_{\mathrm{phys}} + \hat{\mathfrak{e}}_X(Q^D) + g\hat{s}^A = \partial^D s^D + g (s^A + \mathfrak{x}^A + \hat{s}^A) \in \im \partial^D + g \ker \partial^A
    \end{equation}
    We see that the accumulated data error belongs to $\im \partial^D + g\ker \partial^A$, which is simply an $X$-type stabilizer, plus some $g\ker \partial^A$ term. That term corresponds to the subspace of logicals that is measured, which stabilizes the post-measurement state, and thus the remaining error acts trivially on that state.
\end{proof}

We can then obtain a threshold similar to Theorem \ref{thm:syndrome-threshold}.
\begin{theorem}[$X$-Type Logical Threshold]
    \label{thm:logical-threshold-X}
    Suppose that $T\ge d_X(C)$ where $d_X(C)$ is the $X$-type code distance of the (height-1) cone $C$. Let $X$-type fault errors
    \begin{equation}
        \be_X(F_1) =\be_X(Q) \oplus \be_X(Z)
    \end{equation}
    occur with rate $p$.
    Suppose that readouts $\partial^F \be_X(F_1)$ and min-weight correction $\partial^F \be_X(F_1)\mapsto \hat{\be}_X(F_1)$ relative to the fault complex $F$ are provided.
    Then there exists threshold $p_c$ such that if $p<p_c$, then the accumulated physical error of $\be_X(Q)+\hat{\be}_X(Q)$ is a nontrivial logical with probability
    \begin{equation}
        \le O(1) |\cQ| d_X(C) \left(\frac{p}{p_c}\right)^{d_X(C)/2}
    \end{equation}
    Note that the ancilla gadget $A$ was chosen so that the code $C$ has distance $d_X(C) = \Omega(1) d_X(D)$.
\end{theorem}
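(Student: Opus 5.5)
The plan mirrors the proof of Theorem~\ref{thm:syndrome-threshold}, with the fault complex $F=\cone(g)\otimes R$ in place of $D\otimes R$. There are three steps: bound $d_1(F)$ from below, reduce logical failure to a failure of the fault complex, and then apply Lemma~\ref{lem:general-threshold}.

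\emph{Step 1 (distance).} I would show $d_1(F)\ge \min(d_X(C),T+1)\ge d_X(C)$, using $T\ge d_X(C)$. Since $F$ is the tensor product of $\cone(g)$ with the repetition complex $R$, the K\"unneth formula gives $H_1(F)\cong H_1(\cone(g))\otimes H_0(R)\oplus H_0(\cone(g))\otimes H_1(R)$. We have $H_1(R)=0$, because $R$ is a path on $T+1$ vertices with the time boundaries chosen as in Definition~\ref{def:fault-complex-syndrome}. For the weight bound, take a nontrivial $f\in F_1$ and split it into its space part $f_{QV}$ and its time part $f_{ZE}$. I then apply the same cleaning argument as in the syndrome-extraction case. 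Project $f$ onto each time slice $|t\ket$ after pushing all errors to a single slice with the time stabilizers $Q E$ from Lemma~\ref{lem:fault-stabilizers}. Either some slice carries a nontrivial logical of $C$, giving weight $\ge d_X(C)$, or the measurement-error part must form a timelike chain spanning all $T+1$ rounds, giving weight $\ge T\ge d_X(C)$. The final perfect round is what rules out open timelike strings.

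\emph{Step 2 (reduction).} This step is exactly Lemma~\ref{lem:recovery-logical-msmt}. Whenever $\be_X(F_1)+\hat{\be}_X(F_1)\in\im\partial^F$, the corrected accumulated error lies in $\im\partial^D+g\ker\partial^A$ after applying the ancilla-readout correction $g\hat{s}^A$. That error therefore acts trivially on the post-measurement state. By contraposition, a nontrivial residual requires $\be_X(F_1)+\hat{\be}_X(F_1)\notin\im\partial^F$.

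\emph{Step 3.} $F$ is sparse because $C$ and $R$ are, and $\dim F_1=O(|\cQ|T)=O(|\cQ|\,d_X(C))$, taking $T=\Theta(d_X(C))$. Lemma~\ref{lem:general-threshold} then gives the stated bound $O(1)|\cQ|d_X(C)(p/p_c)^{d_X(C)/2}$.

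The main obstacle is Step~1, specifically the claim that time-slice cleaning works for the cone. The ancilla readouts $\partial^A x^A|_t$ are discarded and replaced by effective syndromes, and one must check that this substitution leaves the fault complex exactly $\cone(g)\otimes R$ with no extra boundary terms. I would rely on the effective-syndrome computation preceding the definition to justify this identification.
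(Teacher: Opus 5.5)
Your proposal is correct and follows the same route the paper intends. The paper gives no separate proof, only ``similar to Theorem~\ref{thm:syndrome-threshold}'': lower-bound $d_1(F)$ for $F=\cone(g)\otimes R$ via the tensor product, reduce a nontrivial residual to $\be_X(F_1)+\hat{\be}_X(F_1)\notin\im\partial^F$ by the fault-stabilizer argument underlying Lemma~\ref{lem:recovery-logical-msmt}, and conclude with Lemma~\ref{lem:general-threshold} taking $T=\Theta(d_X(C))$.

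One simplification for your Step~1: since $H_1(R)=0$, the chain map $\pi=\mathrm{id}\otimes\epsilon:F\to\cone(g)$ (with $\epsilon$ summing over time slices) is a quasi-isomorphism. Hence every nontrivial $f\in F_1$ satisfies $|f|\ge|\pi f|\ge d_X(C)$ through its accumulated space part alone, and your separate ``timelike chain spanning all rounds'' branch is unnecessary.
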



\paragraph{$Z$-Pauli errors}

By Theorem \ref{thm:logical-measure} or Algorithm \ref{alg:logical-measure},
\begin{equation}
    \tilde{e}_Z(Q) = \tilde{e}_Z(Q^D) \oplus \tilde{e}_Z(Q^A)
\end{equation}
where $\tilde{e}_Z(Q^D) =e_Z(Q^D) +\ell_Z(Q^D)$ for some $Z$ logical $\ell_Z(Q^D)$ of $D$ (which contains information of the logical measurement of the targeted $\ell_X^{\star}$), so that $e_Z(Q^D)$ can be regarded as small physical errors appearing at integer time steps and $\ell_Z(Q^D)$ is the measurement information of the targeted $X$-type logical $\ell^{\star}_X$.
However, $\tilde{e}_Z(Q^A)$ is unknown and thus its weight cannot be well-controlled (which is a result of discarding the ancilla at the end of each round of measurement).
Note that the $X$-type errors on the ancilla were still relevant since the ancilla qubits were measured out in the $Z$-basis and thus can be determined from the readout.

Therefore, the only relevant information of the readout $\delta \tilde{e}_Z(Q) +e_Z(X)$ for $Z$-type errors is given by that $\in X^D$, i.e.,
\begin{equation} \label{eq:z-err-syndrome}
    \delta^D e_Z(Q^D) + e_Z(X^D) \in X^D
\end{equation}
where both $e_Z(Q^D),e_Z(X^D)$ are small errors, and we used the fact that $\delta^D \tilde{e}_Z(Q^D) = \delta^D e_Z(Q^D)$.
Hence, the fault complex for $Z$-type errors can be constructed similar to that in syndrome extraction and given by $D \otimes R^{\top}$.

\begin{definition}
    The \textbf{fault complex} $F=F_2\to F_1\to F_0$ for $Z$-type errors for measuring $g \ker \partial ^A$ on $D$ using $A$, after $T+1$ rounds of measurements, is given by the construction of Definition \ref{def:fault-complex-syndrome} applied to the following (height-1) cone:
    \begin{equation}
    \begin{tikzpicture}[baseline]
    \matrix(a)[matrix of math nodes, nodes in empty cells, nodes={minimum size=25pt},
    row sep=1.5em, column sep=1.5em,
    text height=1.25ex, text depth=0.25ex]
    { & X^{A} & Q^{A} & Z^{A} \\
     X^{D} & Q^{D} & Z^{D} &\\};
    \path[->,font=\scriptsize]
    (a-1-2) edge (a-1-3)
    (a-1-3) edge (a-1-4)
    (a-2-1) edge (a-2-2)
    (a-2-2) edge (a-2-3)
    (a-1-2) edge node[right]{$g$} (a-2-2)
    (a-1-3) edge node[right]{$g$} (a-2-3);
    \end{tikzpicture}
\end{equation}
\end{definition}

\begin{lemma}[Recovery]
    Let $F$ the fault complex for a pair $D, A$, and $e = e_Z(Q^D) \oplus e_Z(X^D) \in F_1$ a set of faults satisfying
    \begin{equation}
        |e| < \frac{d_Z(F)}{2}
    \end{equation}
    Then, there exists $\hat{e} \in F_1$ such that
    \begin{equation} \label{eq:recovery-logical-Z}
        \mathfrak{e}_Z(Q^D)+ \hat{\mathfrak{e}}_Z(Q^D) \in \im \delta^D , \quad \bra g^{\top} \ell_Z(Q^D)|\cX^A\ket = \bra m^A + g^{\top} \hat{\mathfrak{e}}_Z(Q^D) | \cX^A\ket
    \end{equation}
    Where $m^A$ is the readout of the last $X^A$ measurements, $\cX^A\in X^A$ is the sum over all $X$-checks (vertices) of the ancilla, $\mathfrak{e}_Z(Q^D) = \sum_{t\le T} e_Z(Q^D)|_t$, and $\hat{\mathfrak{e}}_Z(Q^D) = \sum_{t\le T} \hat{e}_Z(Q^D)|_t $.
\end{lemma}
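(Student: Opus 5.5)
The plan is to follow the syndrome-extraction argument of Lemma \ref{lem:recovery-lemma-syndrome}, adapted to the $Z$-type fault complex, and then read off the logical outcome from the final round. Concretely, take $\hat{e}\in F_1$ to be the minimum-weight correction with $\partial^F\hat{e}=\partial^F e$, using the readouts of Eq. \eqref{eq:z-err-syndrome} after time-differencing. Since $|e|<d_Z(F)/2$ and $|\hat{e}|\le|e|$, we get $|e+\hat{e}|<d_Z(F)$. Also $e+\hat{e}$ has trivial fault syndrome, so $e+\hat{e}\in\im\partial^F$ by the usual code-theoretic argument.

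Next I will prove the $Z$-analogue of Lemma \ref{lem:fault-stabilizers}. The fault stabilizers are $Q^DE\oplus Z^DV$, with the red arrows $\partial^R$ and $\delta^D$. The time-pushing argument shows that $\be\in Q^DV$ lies in the image of these maps exactly when its time-sum lies in $\im\delta^D$. This proof is identical with $\partial^D$ replaced by $\delta^D$. Projecting $e+\hat{e}\in\im\partial^F$ onto $Q^DV$ and summing over $t$ then gives the first claim, $\mathfrak{e}_Z(Q^D)+\hat{\mathfrak{e}}_Z(Q^D)\in\im\delta^D$.

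For the second claim I will model the perfect final round of $X^A$ measurements. In the orientation of Remark \ref{rem:CSS-orientation}, an $X^A$-check $x$ acts as $\partial^Ax$ on $Q^A$ and as $gx$ on $Q^D$. Its outcome therefore records the anticommutation with the accumulated $Z$-type content on both systems, so
\begin{equation}
    m^A = g^{\top}\big(\ell_Z(Q^D)+\mathfrak{e}_Z(Q^D)\big)+\delta^A\tilde{e}_Z(Q^A),
\end{equation}
with $\tilde{e}_Z(Q^A)$ uncontrolled. Pairing with $\cX^A$ gives $\bra\delta^A\tilde{e}_Z(Q^A)|\cX^A\ket=\bra\tilde{e}_Z(Q^A)|\partial^A\cX^A\ket=0$. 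This vanishes because $\cX^A$ is the sum of all vertices of the ancilla graph/cell complex and every edge has exactly two endpoints. Hence the unknown ancilla error drops out.

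It remains to replace $\mathfrak{e}_Z(Q^D)$ by $\hat{\mathfrak{e}}_Z(Q^D)$ inside $\bra g^{\top}\cdot|\cX^A\ket=\bra\cdot|g\cX^A\ket$. Here $g\cX^A=\ell_X^{\star}$ is the targeted $X$-logical. By the first claim the difference is $\delta^Ds$ for some $s$, and $\bra\delta^Ds|\ell_X^{\star}\ket=\bra s|\partial^D\ell_X^{\star}\ket=0$ because a logical commutes with all $Z$-checks. Rearranging over $\F_2$ gives the stated identity.

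The step I expect to be the main obstacle is justifying the expression for $m^A$ precisely. This means tracking, through the phase/measurement bookkeeping of Lemma \ref{lem:overcomplete-physicals}, that repeated rounds leave the data $Z$-error equal to its accumulated sum $\mathfrak{e}_Z(Q^D)$ plus the logical $\ell_Z(Q^D)$. It also means checking that discarding and reinitializing the ancilla each round contributes only terms in $\im\delta^A$, which is exactly what pairing with $\cX^A$ annihilates.
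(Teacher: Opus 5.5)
Your proposal is correct and follows the paper's proof essentially step for step. The first claim comes from the minimum-weight correction plus the fault-stabilizer argument; the paper simply cites Lemma \ref{lem:recovery-lemma-syndrome} here rather than redoing it for $Z$. The second claim comes from writing the final readout as $m^A=g^{\top}(\mathfrak{e}_Z(Q^D)+\ell_Z(Q^D))$ plus an $\im\delta^A$ term, which is killed by pairing with $\cX^A\in\ker\partial^A$. Your explicit check that $\mathfrak{e}_Z(Q^D)+\hat{\mathfrak{e}}_Z(Q^D)\in\im\delta^D$ pairs to zero against $g\cX^A=\ell_X^{\star}$ fills in a step the paper uses only implicitly in its final equality.
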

\begin{proof}
    We have $\mathfrak{e}_Z(Q^D)+ \hat{\mathfrak{e}}_Z(Q^D) \in \im \delta^D$ directly from Lemma \ref{lem:recovery-lemma-syndrome}. Then, by Algorithm \ref{alg:logical-measure}, the $T+1$-th measurement (assumed to be void of measurement errors) has readout
    \begin{equation}
        \label{eq:logical-readout-remark}
        m^A  =g^{\top} \left( \mathfrak{e}_Z(Q^D) +\ell_Z(Q^D) \right)+ \alpha, \quad \alpha \in \im \delta^A
    \end{equation}
    with $\alpha$ encoding the effect of $e_Z(Q^A)$ on the syndrome.
    Now note that
    \begin{equation}
        \bra m^A| \cX^{A}\ket = \bra g^{\top} \left( \mathfrak{e}_Z(Q^D) +\ell_Z(Q^D) \right)+ \alpha| \cX^{A}\ket = \bra g^{\top} \left( \mathfrak{e}_Z(Q^D) +\ell_Z(Q^D) \right)| \cX^{A}\ket
    \end{equation}
    Since $\cX^{A}\in \ker \partial^A$, we see that the $\F_2$ measurement is invariant under a change of $\alpha$ within $\im \delta^A$; and therefore no matter how large the error $e_Z(Q^A)$ on the ancilla qubits is, it does not materially affect the measurement outcome.
    Therefore, using the corrections $\hat{e}_Z(Q^D)|_{t}$, we can obtain the desired logical information:
    \begin{equation}
        \bra g^{\top} \ell_Z(Q^D)|\cX^A\ket = \bra g^{\top} \left( \mathfrak{e}_Z(Q^D) + \hat{\mathfrak{e}}_Z(Q^D) +\ell_Z(Q^D) \right)| \cX^{A}\ket  = \bra  m^A +g^{\top} \hat{\mathfrak{e}}_Z(Q^D)  | \cX^{A}\ket
    \end{equation}
\end{proof}

The threshold then follows from Lemma \ref{lem:general-threshold}, i.e.,

\begin{theorem}[$Z$-Type Logical Threshold]
    \label{thm:logical-threshold-Z}
    Suppose that $T\ge d_Z(D)$ where $d_Z(D)$ is the $Z$-type code distance of the data code $D$. Let $Z$-type fault errors
    \begin{equation}
        \be_Z(F_1) =\be_Z(Q^D) \oplus \be_Z(X^D)
    \end{equation}
    occur with probability $p$.
    Suppose that readouts $\delta^F \be_Z(F_1)$ and min-weight correction $\delta^F \be_Z(F_1)\mapsto \hat{\be}_Z(F_1)$ relative to the fault complex $F=D\otimes R^{\top}$ are provided.
    Then there exists threshold $p_c$ such that if $p<p_c$, then the accumulated physical error of $\be_Z(Q^D)+\hat{\be}_Z(Q^D)$ is nontrivial with probability
    \begin{equation}
        \le O(1) |\cQ^D| d_Z(D) \left(\frac{p}{p_c}\right)^{d_Z(D)/2}
    \end{equation}
\end{theorem}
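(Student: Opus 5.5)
The plan is to mirror the proof of Theorem \ref{thm:syndrome-threshold}, since by the discussion preceding the statement the $Z$-type fault complex is exactly the syndrome-extraction fault complex of the data code alone, $F=D\otimes R^{\top}$, built as in Definition \ref{def:fault-complex-syndrome} with the roles of $X$ and $Z$ exchanged. The first step is to justify that reduction once more. Any $Z$-type error on the ancilla enters the readout only through $\alpha\in\im\delta^A$. It is invisible both to the $X^D$ syndromes in Eq. \eqref{eq:z-err-syndrome} and to the pairing $\bra m^A|\cX^A\ket$, because $\cX^A\in\ker\partial^A$. Likewise, the logical $\ell_Z(Q^D)$ drops out of the $X^D$ syndromes since $\delta^D\ell_Z(Q^D)=0$. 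Hence the only faults that matter are $\be_Z(Q^D)\oplus\be_Z(X^D)\in F_1$, occurring independently with rate $p$, and the readout is $\delta^F\be_Z(F_1)$.

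Next I would lower bound the $1$-systolic distance of $F$. This is the standard tensor-product argument with a repetition code: a nontrivial class of $D\otimes R^{\top}$ either has a nontrivial spatial component, or it must traverse all $T+1$ time slices. In the first case its weight is at least $d_Z(D)$; in the second it is at least $T\ge d_Z(D)$. Thus $d_1(F)\ge d_Z(D)$ and $\dim F_1=O(|\cQ^D|\,T)$. Taking $T=\Theta(d_Z(D))$ rounds, Lemma \ref{lem:general-threshold} then gives a threshold $p_c$ such that
\[
\Pr\bigl[\be_Z(F_1)+\hat{\be}_Z(F_1)\notin\im\delta^F\bigr]\le O(1)\,|\cQ^D|\,d_Z(D)\,(p/p_c)^{d_Z(D)/2}.
\]

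The remaining step is to show that failure of the accumulated data error implies $\be_Z(F_1)+\hat{\be}_Z(F_1)\notin\im\delta^F$. For this I would use the $Z$-analogue of Lemma \ref{lem:fault-stabilizers}, which holds by the same "pushing errors forward in time" argument. It says that if the time-summed error $\mathfrak{e}_Z(Q^D)+\hat{\mathfrak{e}}_Z(Q^D)\notin\im\delta^D$, then the projection of $\be_Z+\hat{\be}_Z$ onto $Q^DV$ is not in the projection of the fault-stabilizer image. Consequently $\be_Z+\hat{\be}_Z\notin\im\delta^F$, and taking the contrapositive gives the bound.

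Finally, I would note via the preceding Recovery lemma that on the success event the measured logical value $\bra g^{\top}\ell_Z(Q^D)|\cX^A\ket$ is also recovered correctly, since it is computed from $m^A$ and $\hat{\mathfrak{e}}_Z(Q^D)$. The main obstacle I expect is stating the transposed version of Lemma \ref{lem:fault-stabilizers} cleanly for the cochain orientation $D\otimes R^{\top}$, in particular keeping the boundary time slices consistent with the assumption that the last round is noiseless. I would handle this by directly dualizing the orientation convention of Remark \ref{rem:CSS-orientation}.
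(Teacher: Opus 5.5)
Your proposal is correct and is essentially the paper's argument. The paper obtains this theorem by applying Lemma~\ref{lem:general-threshold} to the data-only fault complex $F=D\otimes R^{\top}$ exactly as in the proof of Theorem~\ref{thm:syndrome-threshold}: a distance bound $d_1(F)\ge d_Z(D)$ from the tensor-product structure, plus the transposed Lemma~\ref{lem:fault-stabilizers} to turn a nontrivial accumulated error into a fault-complex failure. One minor aside: under the paper's boundary conventions $\partial^R$ is injective ($H_1(R)=0$), so your ``time-like'' case never occurs, and every nontrivial class of $F$ already has a nontrivial time-summed spatial part, which gives $d_1(F)\ge d_Z(D)$ directly.
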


\begin{remark}
    We note that the fault-complexes we obtain are based on a circuit model, and are distinct from those derived for measurement-based computation \cite{hillmann2025single}. In particular we obtain two different complexes, one for each of $X$ and $Z$ type errors.
\end{remark}
\section{Non-CSS Measurement}
\label{sec:non-CSS-measure}

Similar to Section \ref{sec:CSS-measure}, in this section, we derive the fault complex \cite{hillmann2025single} from first principles for non-CSS measurement using the language of symplectic complexes.

\subsection{Preliminaries}
In this section, we generalize the formalism of Section \ref{sec:CSS-prelim} to non-CSS codes.

\begin{definition}[Non-CSS Code]
    Let $\cP$ denote the Pauli group with phases.
    A \textbf{non-CSS code} is a commuting Pauli subgroup $-I\notin \cS \subseteq \cP$.
    Note despite the name, CSS codes are non-CSS codes.
    Let $\cS^{\perp}$ denote the collection of elements in $\cP$ which commute with $\cS$, i.e., the \textbf{logicals} of $\cS$, so that $\cS^{\perp}\backslash \cS$ are the \textbf{nontrivial} logicals.
\end{definition}

\begin{definition}[Abelianization]
    Let $P=\F_2^{2n}$ be the \textbf{Pauli space} on $n$ qubits, where $P$ is equipped with the standard \textbf{symplectic form} $\Lambda:P\times P\to \F_2$,
    \begin{equation}
        \Lambda(a,b)=a^{\top}\lambda b, \quad \lambda =
        \begin{pmatrix}
            0 & I\\
            I & 0
        \end{pmatrix}
    \end{equation}
    We say that $a,b\in P$ are \textbf{commuting} if $\Lambda(a,b)=0$.
    Let $\dutchcal{a} \mapsto a: \cP \to P$ be the usual \textbf{abelianization} (homomorphism) map so that $X,Y,Z$ Pauli on the $i$-th qubit maps to $10,11,01$ on bit positions $i,i+n$, respectively.
\end{definition}

\begin{remark}
    Note that the image of a non-CSS code $\cS$ is a commuting subspace $\Sigma$. In fact, abelianization $\cS \to \Sigma$ defines an isomorphism when restricted to $\cS$ and its image.
    The question is then given a commuting subspace, how can all the non-CSS codes that abelianize to the same subspace be specified.
    This will correspond to the phase information as follows, similar to Lemma \ref{lem:equivalence} for CSS codes, though for general non-CSS, the equivalence will be non-canonical, i.e., depends on a fixed stabilizer group.
    Note that this nature does not affect measurement, since we always start with a fixed stabilizer group, and obtain the corresponding phase information.
\end{remark}

\begin{lemma}[Equivalence]
    \label{lem:equivalence-non-CSS}
    Let $\cS$ be a fixed stabilizer group with abelianized commuting subspace $\Sigma\subseteq P$.
    Then there exists a one-to-one correspondence between stabilizer groups $\cS_\mu$ with abelianization $=\Sigma$ and $\F_2$ linear maps $\mu\in \Sigma^*$.
    In particular, we say that $\cS_{\mu}$ is \textbf{equivalent} to $\cS$ up to phase information $\mu\in \Sigma^*$.
\end{lemma}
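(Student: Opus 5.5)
The plan is to reduce the non-CSS statement to the CSS Lemma \ref{lem:equivalence} by working with the torsor structure of lifts. Fix the reference group $\cS$. For each $s\in\Sigma$, let $\rho(s)\in\cS$ be the unique element with abelianization $s$. It is unique because abelianization restricted to $\cS$ is injective: two lifts would differ by a phase lying in $\cS$, and this phase would be $-I$ or $\pm iI$. The phase $\pm iI$ is excluded by commutativity together with $-I\notin\cS$, since $(iI)^2=-I$.

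Given $\mu\in\Sigma^*$, I will define
\begin{equation}
    \cS_\mu=\{(-1)^{\mu(s)}\rho(s): s\in\Sigma\}.
\end{equation}
I then check that $\cS_\mu$ is a stabilizer group with abelianization $\Sigma$. Since $\rho$ is a homomorphism $\Sigma\to\cS$, namely $\rho(s)\rho(s')=\rho(s+s')$ by uniqueness of lifts inside $\cS$, and $\mu$ is linear, the set $\cS_\mu$ is closed under products. It is commuting because all its elements commute with each other up to signs, and it abelianizes to $\Sigma$ by construction. Finally, $-I\notin\cS_\mu$, since the only element abelianizing to $0$ is $(-1)^{\mu(0)}\rho(0)=I$.

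For the converse, let $\cS'$ be any stabilizer group with abelianization $\Sigma$, and let $\rho'(s)$ be its unique lift of $s$. The operators $\rho'(s)$ and $\rho(s)$ have the same abelianization, so they differ by a phase in $\{\pm1,\pm i\}$. I will argue this phase is real. Both $\rho(s)^2=I$ and $\rho'(s)^2=I$, because each group is commuting and excludes $-I$, so every element squares to the identity. A factor $\pm i$ would instead give $\rho'(s)^2=-\rho(s)^2=-I$. Hence $\rho'(s)=(-1)^{\mu(s)}\rho(s)$ for a well-defined function $\mu:\Sigma\to\F_2$. Linearity of $\mu$ follows by comparing $\rho'(s)\rho'(s')=\rho'(s+s')$ with the same identity for $\rho$.

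Injectivity of $\mu\mapsto\cS_\mu$ is immediate, since $\mu(s)$ is recovered as the sign of the unique element of $\cS_\mu$ over $s$. The converse argument shows surjectivity, which gives the bijection. The only delicate point, and the main obstacle, is ruling out imaginary phase discrepancies and verifying that $\rho$ is a homomorphism. Both reduce to the squaring-to-$I$ observation for commuting groups not containing $-I$, which I will state as a short preliminary claim. I will also remark that the correspondence depends on the choice of $\cS$ (for instance $\mu=0$ corresponds to $\cS$ itself), which is exactly why it is non-canonical, in contrast to the CSS case, where $\rho(s)$ has a canonical sign-free choice.
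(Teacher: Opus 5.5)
Your proof is correct and follows essentially the same route as the paper. You fix the inverse $\rho$ of abelianization on $\cS$ and define $\cS_\mu=\{(-1)^{\mu(s)}\rho(s)\}$; in the converse you rule out $\pm i$ phase discrepancies using $\rho(s)^2=I$ and $-I\notin\cS_\mu$, and you get linearity of $\mu$ from comparing products $\rho'(s)\rho'(s')=\rho'(s+s')$. Your opening promise to reduce to the CSS Lemma~\ref{lem:equivalence} is never carried out and is not needed, since your argument is direct, exactly as in the paper.
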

\begin{proof}
    Note that abelianization $\cS \to \Sigma$ is an isomorphism when restricted to $\cS$ (i.e., injective) since $-I\notin \cS$. Let $\rho:\Sigma \to \cS\subseteq \cP$ denote the inverse map so that if $\dutchcal{s}\in \cS$ has abelianization $s$, then $\dutchcal{s}=\rho(s)$.
    Also note that $\rho(s)^2=\rho(0)=I$ and thus $\rho(s)$ is Hermitian.

    Now let $\mu:\Sigma \to \F_2$ be a linear map, and define $\cS_{\mu}$ as the collection of Pauli group elements of the form
    \begin{equation}
        (-1)^{\mu s} \rho(s), \quad s\in \Sigma
    \end{equation}
    Then it's clear that $\cS_{\mu}$ is a subgroup which does not contain $-I$ and that has abelianized subspace $=\Sigma$.

    Conversely, let $\cS_{\mu}$ be a stabilizer group with abelianized subspace $=\Sigma$.
    By abelianization, we see that there exists map $\mu:\Sigma\to \F_2$ such that
    \begin{equation}
        \dutchcal{s} = (-1)^{\mu s} \rho(s)
    \end{equation}
    where $\dutchcal{s}\mapsto s$ after abelianization. Indeed, since $\dutchcal{s}$ and $\rho(s)$ have the same abelianization, they must be equal up to a phase. If that phase is $=\pm i$, then $(\dutchcal{s} \rho(s))^2= \dutchcal{s}^2=-I$ where we utilized the fact that $\rho(s)^2=I$ and thus we reach a contradiction since $-I\notin \cS_{\mu}$.

    Now it remains to show that $\mu$ is linear. Specifically, note that
    \begin{equation}
        \dutchcal{s}_i = (-1)^{\mu s_i}\rho(s_i), \quad \dutchcal{s}_i\in \cS_\mu
    \end{equation}
    Then
    \begin{align}
        \dutchcal{s}_1 \dutchcal{s}_2 =(-1)^{\mu s_1 +\mu s_2} \rho(s_1+s_2)
    \end{align}
    Note that $s=s_1+s_2$ is the abelianization of $\dutchcal{s}_1 \dutchcal{s}_2$ and thus $\mu s=\mu s_1+\mu s_2$ since $\rho(s)^2=I$.
    The statement then follows.
\end{proof}

\begin{corollary}
    Let $\cS$ be a non-CSS code and $\cS_{\mu}$ be an equivalent up to phase $\mu\in \Sigma^*$. Then state $\psi$ is stabilized by $\cS_{\mu}\Leftrightarrow \psi$ is stabilized by $\cS$ with \textbf{measurement} $\mu$, i.e.,
    \begin{equation}
        \dutchcal{s} \psi = (-1)^{\mu s} \psi, \quad \forall \dutchcal{s}\in \cS
    \end{equation}
    where $\dutchcal{s}\mapsto s$ by abelianization.
\end{corollary}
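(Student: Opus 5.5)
The statement to prove is the final Corollary: $\psi$ is stabilized by $\cS_\mu$ if and only if $\dutchcal{s}\psi=(-1)^{\mu s}\psi$ for all $\dutchcal{s}\in\cS$. The plan is to unwind the explicit description of $\cS_\mu$ produced in the proof of Lemma \ref{lem:equivalence-non-CSS}. There we have the inverse of abelianization, $\rho:\Sigma\to\cS$, and every element of $\cS_\mu$ has the form $(-1)^{\mu s}\rho(s)$ for a unique $s\in\Sigma$. The first step is to fix this parametrization. Each $\dutchcal{s}\in\cS$ equals $\rho(s)$, where $s$ is its abelianization. Hence the correspondence $\dutchcal{s}\leftrightarrow(-1)^{\mu s}\dutchcal{s}$ is a bijection between $\cS$ and $\cS_\mu$ that preserves abelianizations.

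For the forward direction, assume $\psi$ is stabilized by $\cS_\mu$, meaning $\dutchcal{t}\psi=\psi$ for every $\dutchcal{t}\in\cS_\mu$. Take $\dutchcal{s}\in\cS$ with abelianization $s$. Apply the assumption to $\dutchcal{t}=(-1)^{\mu s}\dutchcal{s}$ and multiply by the scalar $(-1)^{\mu s}$, which squares to one. This gives $\dutchcal{s}\psi=(-1)^{\mu s}\psi$.

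For the converse, every element of $\cS_\mu$ is $(-1)^{\mu s}\rho(s)$, and $\rho(s)\in\cS$. The hypothesis gives $\rho(s)\psi=(-1)^{\mu s}\psi$, so $(-1)^{\mu s}\rho(s)\psi=\psi$. The main subtlety is not this computation but that the phase relating $\cS_\mu$ to $\cS$ is exactly $\pm1$, never $\pm i$, and that it depends linearly on $s$. Lemma \ref{lem:equivalence-non-CSS} already supplies both facts, so I will cite it rather than reprove them. I will also note that the correspondence depends on the fixed reference group $\cS$, which is why the equivalence is non-canonical.
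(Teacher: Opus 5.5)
Your argument is correct and is the unwinding the paper intends: the paper gives no separate proof of the corollary, treating it as immediate from the explicit description $\cS_\mu=\{(-1)^{\mu s}\rho(s): s\in\Sigma\}$ in the proof of Lemma \ref{lem:equivalence-non-CSS}, and your two directions carry out exactly that computation. Once $\cS_\mu$ is given in this form, you do not need the exclusion of $\pm i$ phases or the linearity of $\mu$ for the corollary itself; those facts matter only for the lemma's claim that every stabilizer group with abelianization $\Sigma$ arises this way.
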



\begin{lemma}[Physical Measurement]
    \label{lem:canonical-iso-non-CSS}
    Let $A=S\to P\to S^*$ be a symplectic complex with stabilizer map $\sigma$. Then there exists a canonical isomorphism $[a] \mapsto \mu:P/\ker \hat{\sigma} \to (\im \sigma)^*$ defined by
    \begin{equation}
        \mu(\cdots )=\Lambda(a,\cdots)
    \end{equation}
    where $a$ is any representation of equivalence class $[a]$, i.e., modulo logical.
\end{lemma}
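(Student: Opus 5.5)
The plan is to mirror the proof of Lemma \ref{lem:canonical-iso} (its CSS analogue), with the symplectic form playing the role of the bilinear form. First, define the candidate map on representatives, $a\mapsto \mu_a:=\Lambda(a,\cdot)|_{\im\sigma}$, which is clearly linear in $a$. Then check that it descends to $P/\ker\hat{\sigma}$. By definition of a symplectic complex, $\hat{\sigma}a=\Lambda(a,\sigma\,\cdot)$. Hence $a\in\ker\hat{\sigma}$ if and only if $\Lambda(a,\sigma s)=0$ for all $s\in S$, i.e.\ if and only if $\mu_a=0$ on $\im\sigma$. This single observation gives both well-definedness, since adding an element of $\ker\hat{\sigma}$ does not change $\mu$, and injectivity, since $\mu_a=0$ forces $[a]=0$. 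In other words, $\ker\hat{\sigma}=(\im\sigma)^{\perp}$, the symplectic complement, which was already used in the proof of Lemma \ref{lem:induced-symplectic}.

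Next comes surjectivity, which I would do by a dimension count rather than constructing preimages. By non-degeneracy, $\dim(\im\sigma)^{\perp}=\dim P-\dim\im\sigma$. So $\dim P/\ker\hat{\sigma}=\dim\im\sigma=\dim(\im\sigma)^*$, and an injective linear map between spaces of equal finite dimension is an isomorphism. Alternatively, one can argue directly: any functional on $\im\sigma$ extends to one on $P$ by basis extension, as in Lemma \ref{lem:cohomology-iso}, and every functional on $P$ has the form $\Lambda(a,\cdot)$ because $\Lambda$ is non-degenerate.

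Finally, I would remark on canonicity. The definition of $\mu$ uses only $\Lambda$ and the subspace $\im\sigma$; no basis is involved, and any basis used in the dimension argument is auxiliary.

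I expect no real obstacle here. The only point needing care is identifying $\ker\hat{\sigma}$ with $(\im\sigma)^{\perp}$ directly from the defining relation $\hat{\sigma}\ell=\Lambda(\ell,\sigma\,\cdot)$, rather than through the basis-dependent formula $\hat{\sigma}=\phi\sigma^{\top}\lambda$, so that the argument stays in the canonical formalism.
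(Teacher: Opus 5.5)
Your proposal is correct and follows essentially the same route as the paper. Well-definedness and injectivity come from $\ker\hat{\sigma}=(\im\sigma)^{\perp}$, which the paper simply calls clear. Your ``alternative'' surjectivity argument --- extend the functional from $\im\sigma$ to $P$ by basis extension and represent it as $\Lambda(a,\cdot)$ by non-degeneracy --- is exactly the paper's proof. Your primary dimension count, using $\dim V^{\perp}=\dim P-\dim V$ from the definition of a symplectic form, is an equally valid and slightly shorter way to finish.
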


\begin{proof}
    It's clear that $P/\ker \hat{\sigma}\to (\im \sigma)^*$ via $[a]\mapsto \Lambda(a,\cdots)$ is well-defined linear and injective, and thus it remains to show that the map is surjective.
    Since $\mu\in (\im \sigma)^*$, by simple basis extension, we can extend $\mu$ to a linear map $\tilde{\mu}$ on $P$.
    Since the symplectic form $\Lambda$ is non-degenerate, there exists unique $a$ such that $\tilde{\mu}(\cdots )=\Lambda(a,\cdots)$. In particular, we see that $\mu(\sigma s) = \Lambda(a,\sigma s) =(\hat{\sigma}a)(s)$ and thus the statement follows.

\end{proof}

\begin{remark}[Physical Errors]
    Similar to Remark \ref{rem:phys-err} for CSS codes, $a$ can be regarded as a physical Pauli (modulo logicals). In particular, if $\cS$ is a CSS code with measurement $a_{\bullet}$, then $a=a_X\oplus a_Z$ modulo logicals.
\end{remark}

\begin{remark}[Valid Syndrome]
    Note that $\hat{\sigma}:P/\ker \hat{\sigma} \to \im \hat{\sigma}$ is an isomorphism.
    Since $S^*$ is the space of \textbf{syndromes}, we refer to $\im \hat{\sigma}$ as the space of \textbf{valid syndromes}.
\end{remark}



Similar to Lemma \ref{lem:overcomplete-physicals}, the non-CSS measurement can be given below.
\begin{lemma}[Overcomplete Measurement]
    \label{lem:overcomplete-non-CSS-physicals}
    Let $\psi$ be stabilized by a non-CSS code $\cS^A$ with symplectic complex
    \begin{equation}
        A=S^A \to P^A \to S^{A*}
    \end{equation}
    with physical measurements $a$ modulo logicals.
    Measure Paulis in commuting subgroup $\cS^B$ based on map $\sigma^B:S^B\to P^A$ with physical measurement $b$ modulo logical. Then the measurement is valid iff
    \begin{equation}
        a+b \in \ker \hat{\sigma}^A +\ker \hat{\sigma}^B
    \end{equation}
    Moreover, if $\alpha\in \ker \hat{\sigma}^A$ and $\beta \in \ker \hat{\sigma}^B$ are such that $c\equiv a +\alpha =b +\beta$, then the output state is stabilized by Paulis in $\bra \cS^A ,\cS^B\ket \cap (\cS^B)^\perp$ with abelianization
    \begin{equation}
        (\im \sigma^A +\im \sigma^B) \cap \ker \hat{\sigma}^B
    \end{equation}
    and physical measurement $c$ mod logicals.
\end{lemma}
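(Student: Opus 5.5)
The plan is to reduce both claims to linear algebra on the Pauli space $P^A$. I would use two facts already implicit in the paper. First, $\ker\hat{\sigma}=(\im\sigma)^{\perp}$ for any symplectic complex; this was used in the proof of Lemma \ref{lem:induced-symplectic}. Second, the physical-measurement dictionary of Lemma \ref{lem:canonical-iso-non-CSS}: the phase assigned to $\sigma^A s$ is $\Lambda(a,\sigma^A s)$, and the phase assigned to $\sigma^B s'$ is $\Lambda(b,\sigma^B s')$.

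\emph{Validity.} The measurement is consistent exactly when the two phase assignments agree on the overlap $\im\sigma^A\cap\im\sigma^B$. That is, $\Lambda(a+b,v)=0$ for all $v$ in the overlap, i.e.\ $a+b\in(\im\sigma^A\cap\im\sigma^B)^{\perp}$. I would then prove the identity $(U\cap W)^{\perp}=U^{\perp}+W^{\perp}$ for a non-degenerate form. The inclusion $\supseteq$ is trivial. For equality, use the dimension count $\dim V+\dim V^{\perp}=\dim P$ together with $(U^{\perp}+W^{\perp})^{\perp}=U\cap W$, which uses $V^{\perp\perp}=V$. Substituting $\ker\hat{\sigma}^{A}=(\im\sigma^A)^\perp$ and $\ker\hat{\sigma}^{B}=(\im\sigma^B)^\perp$ gives the stated criterion. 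The same computation shows that the phase of an element of $\im\sigma^A+\im\sigma^B$ does not depend on how it is decomposed.

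\emph{Output stabilizer.} I would first identify the abelianized group. It is standard that after measuring $\cS^B$, the surviving stabilizers are the elements of $\bra\cS^A,\cS^B\ket$ that commute with every element of $\cS^B$. Since commuting with $\im\sigma^B$ means lying in $(\im\sigma^B)^\perp=\ker\hat{\sigma}^B$, this abelianizes to $(\im\sigma^A+\im\sigma^B)\cap\ker\hat{\sigma}^B$.

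For the phase, write such an element as $v=u+w$ with $u\in\im\sigma^A$ and $w\in\im\sigma^B$. Lift to Paulis $\dutchcal{v}\propto\rho(w)\rho(u)$, keeping track of the reordering sign through the fixed section $\rho$ of Lemma \ref{lem:equivalence-non-CSS}. Let $\Pi$ be the projector onto the observed $\cS^B$ outcomes, so that $\Pi\rho(w)=(-1)^{\Lambda(b,w)}\Pi$. The key observation is that $\dutchcal{v}$ commutes with $\Pi$, because $v\in\ker\hat{\sigma}^B$. Hence
\begin{equation}
    \dutchcal{v}\,\Pi\psi=\Pi\,\dutchcal{v}\,\psi\propto \Pi\rho(w)\rho(u)\psi=(-1)^{\Lambda(b,w)+\Lambda(a,u)}\Pi\psi .
\end{equation}
Finally, using $c=a+\alpha=b+\beta$ with $\alpha\perp\im\sigma^A$ and $\beta\perp\im\sigma^B$, the exponent becomes $\Lambda(c,u)+\Lambda(c,w)=\Lambda(c,v)$. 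This is exactly ``physical measurement $c$'' in the sense of Lemma \ref{lem:canonical-iso-non-CSS}, and it is well defined modulo logicals of the output code.

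\emph{Main obstacle.} I expect the delicate part to be the phase bookkeeping in the last step. The element $u$ need not commute with $\cS^B$ individually; only the product $v$ does. So one must argue through $\Pi$ as above, rather than claiming that $\rho(u)$ survives the measurement. One must also check that the reordering sign between $\rho(u)\rho(w)$ and $\rho(u+w)$ is absorbed consistently by the choice of $\rho$. For this I would fix $\rho$ on $\im\sigma^A+\im\sigma^B$ from the group $\bra\cS^A,\cS^B\ket$ itself, which does not contain $-I$ precisely when the measurement is valid.
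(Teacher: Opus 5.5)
The paper states this lemma (as it does its CSS analogue, Lemma~\ref{lem:overcomplete-physicals}) without proof, so your argument has to stand on its own. Most of it does. Write $U=\im\sigma^A$ and $W=\im\sigma^B$. Your validity step via $(U\cap W)^{\perp}=U^{\perp}+W^{\perp}$ is correct. So is the projector computation that produces the exponent $\Lambda(a,u)+\Lambda(b,w)=\Lambda(c,v)$.

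The gap is the step you flag as delicate: the reference lift of the output group. That lift is what gives ``physical measurement $c$'' any meaning, and its sign is exactly what your ``$\propto$'' discards. Your proposed remedy is a section on $U+W$ taken from $\langle\cS^A,\cS^B\rangle$, ``which does not contain $-I$ precisely when the measurement is valid''. This fails.

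As soon as some element of $\cS^A$ anticommutes with some element of $\cS^B$ (the only case in which the measurement changes anything), their group commutator $-I$ lies in $\langle\cS^A,\cS^B\rangle$. Moreover $U+W$ is then not isotropic, so no commuting Pauli group abelianizes onto it. For instance, on one qubit with $\cS^A=\langle X\rangle$ and $\cS^B=\langle Z\rangle$, we have $U\cap W=0$, so every outcome is valid, yet $XZXZ=-I$. That group is also independent of $a,b$, so it cannot characterize validity.

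What works is to lift only on $V=(U+W)\cap W^{\perp}$. Set $\dutchcal{v}:=\rho_B(w)\rho_A(u)$ for any decomposition $v=u+w$, where $\rho_A,\rho_B$ invert abelianization on $\cS^A,\cS^B$ as in Lemma~\ref{lem:equivalence-non-CSS}. You then need three checks.

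(i) $\Lambda(u,w)=\Lambda(v,w)+\Lambda(w,w)=0$, so the two factors commute and $\dutchcal{v}$ is Hermitian with $\dutchcal{v}^2=I$. There is no reordering sign to track.

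(ii) Replacing $(u,w)$ by $(u+x,w+x)$ with $x\in U\cap W$ multiplies $\dutchcal{v}$ by the scalar $\rho_B(x)\rho_A(x)=\pm I$. So the lift is well defined iff $\rho_A=\rho_B$ on $U\cap W$. This compatibility of the reference groups is an implicit hypothesis of the lemma, and your validity step uses it silently. Without it, writing $\rho_B(x)=(-1)^{\epsilon(x)}\rho_A(x)$, the consistency condition is $\Lambda(a+b,x)=\epsilon(x)$ rather than $a+b\in U^{\perp}+W^{\perp}$.

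(iii) $\Lambda(u_1,w_2)=\Lambda(v_1,w_2)+\Lambda(w_1,w_2)=0$, so $\dutchcal{v}_1\dutchcal{v}_2=\rho_B(w_1+w_2)\rho_A(u_1+u_2)$ is the lift of $v_1+v_2$. Thus $V$ is isotropic and the lifts form a genuine stabilizer group without $-I$.

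With this definition your display holds with equality: $\dutchcal{v}\,\Pi\psi=(-1)^{\Lambda(c,v)}\Pi\psi$. Moreover $c$ is determined modulo $U^{\perp}\cap W^{\perp}=(U+W)^{\perp}\subseteq V^{\perp}$, i.e.\ modulo logicals of the output code.
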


\begin{remark}[CSS Measurement]
    Recall that if we perform a CSS, say, $X$-type, measurement so that $\im \sigma^B \subseteq Q_X^A$, then $b$ can always be chosen in the $Z$-sector, i.e., $b\in Q_Z^A$, since it's modulo logicals, and thus the $X$-type physical error $a_X$ (mod logicals) does not change.
\end{remark}

\subsection{Non-CSS Fault Complexes}
\label{sec:non-CSS-fault}
In this section, we derive the spacetime fault complexes for $d$ rounds of measurement, for both syndrome extraction in Algorithm \ref{alg:non-CSS-syndrome-extract} and logical measurement in Algorithm \ref{alg:non-CSS-logical-measure}.
Note that the detailed algorithms are postponed, since the relevant information is the measurement obtained each round, along with the possibility that physical errors can accumulate (with small probability) between rounds of measurement.

Also note that despite dealing with non-CSS codes which are represented as symplectic complexes, the spacetime fault complex is not necessarily symplectic, and thus we will utilize the same Lemma \ref{lem:general-threshold}.


\subsubsection{Syndrome Extraction}
\label{sec:non-CSS-syndrome-fault-complex}

Given a quantum state stabilized by a (non-CSS) data code $D=S^D\to P^D\to \bar{S}^D$, input the state into Algorithm \ref{alg:non-CSS-syndrome-extract} and use the output as the input for subsequent syndrome extractions of Algorithm \ref{alg:non-CSS-syndrome-extract}.
Perform the syndrome extraction $T+1$ times (where we assume that that last measurement is perfect without measurement errors), and assume that physical errors can accumulate on $P^D$ before each extraction, so that we can regard physical errors $e(P^D)|_t$ occuring at integer time steps $t\in [0,T]$ and measurement errors $e(\bar{S}^D)|_{t^+}\in \bar{S}^D$ occurring at half-integer time steps $t^+$ where $t\in [0,T)$.

In particular, after $t+1$ rounds of subsequent extraction, we have readout
\begin{equation}
    \hat{\sigma}^D \left(\sum_{s\le t} e(P^D)|_s\right) +e(\bar{S}^D)|_{t^+}
\end{equation}
where the summation denotes the accumulated physical error, and we assumed that $e(\bar{S}^D)|_{0^-},e(\bar{S}^D)|_{T^+}=0$.
Equivalently, we can compute the difference between readouts $t^+$ and and $t^-$ to obtain readouts
\begin{equation}
    \label{eq:non-CSS-syndrome-readout}
    \hat{\sigma}^D e(P^D)|_t+e(\bar{S}^D)|_{t^+}+e(\bar{S}^D)|_{t^-}, \quad t\in [0,T]
\end{equation}
This corresponds to the diagram
\begin{equation}
    \begin{tikzpicture}[baseline]
    \matrix(a)[matrix of math nodes, nodes in empty cells, nodes={minimum size=25pt},
    row sep=1.5em, column sep=1.5em,
    text height=1.25ex, text depth=0.25ex]
    { &   & \bar{S}^D E \\
      & P^D V & \bar{S}^D V\\};
    \path[->,font=\scriptsize]
    (a-2-2) edge node[above]{$\hat{\sigma}$} (a-2-3)
    (a-1-3) edge node[right]{$\partial^R$} (a-2-3);
    \end{tikzpicture}
\end{equation}
where $R=E\to V$ denote the repetition code on $T+1$ vertices $|t\ket,t\in [0,T]$ with differential $\partial^R$. It's then clear that the physical errors occupy space $P^D V$ and the measurement errors occupy $\bar{S}^D E$, where we have omitted the tensor product $\otimes$ for simplicity, so that the direct sum $P^D V\oplus \bar{S}^D E$ indicates the \textbf{fault locations (data and syndrome qubits)}.
The readouts in Eq. \eqref{eq:non-CSS-syndrome-readout} occupy $\bar{S}^D V$ and are determined by the sum of the differentials.
In particular, $\bar{S}^D V$ is the space of \textbf{fault syndromes}, differential $\partial^F = \partial^R+\hat{\sigma}^D$ is the \textbf{syndrome map} and $\im \partial^F$ denotes the space of \textbf{valid fault syndromes}.

To create a fault complex, we must also introduce \textbf{fault stabilizers} to indicate whether accumulated errors are trivial logicals. Specifically, writing $\be(P^D)=\sum_t e(P^D)|_t\otimes |t\ket \in P^D V$ for the vectorized error, we have the following.
\begin{lemma}[Fault Stabilizers]
    \label{lem:non-CSS-fault-stabilizers}
    If $\be(P^D)\in P^D V$, then the accumulated error
    \begin{equation}
        \mathfrak{e}(P^D) = \sum_{t=0}^T e(P^D)|_{t}
    \end{equation}
    is a trivial logical, i.e., $\in \im \sigma^D \Leftrightarrow \be(P^D)$ is in the image of the map given by the red arrows as follows.
    \begin{equation}
        \label{eq:non-CSS-fault-stabilizers}
        \begin{tikzpicture}[baseline]
        \matrix(a)[matrix of math nodes, nodes in empty cells, nodes={minimum size=25pt},
        row sep=1.5em, column sep=1.5em,
        text height=1.25ex, text depth=0.25ex]
        { & P^D E & \bar{S}^D E \\
         S^D V & P^D V & \bar{S}^D V\\};
        \path[->,font=\scriptsize]
        (a-2-2) edge node[above]{$\hat{\sigma}^D$} (a-2-3)
        (a-1-3) edge node[right]{$\partial^R$} (a-2-3);
        \path[->,red,font=\scriptsize]
        (a-2-1) edge node[above]{$\sigma^D$} (a-2-2)
        (a-1-2) edge node[right]{$\partial^R$} (a-2-2);
        \end{tikzpicture}
    \end{equation}
\end{lemma}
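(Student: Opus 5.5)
The plan mirrors the CSS proof of Lemma \ref{lem:fault-stabilizers}, with $\partial^D$ replaced by $\sigma^D$ and $Q^D$ by $P^D$. The direction $\Leftarrow$ is immediate. If $\be(P^D)=\sigma^D a + \partial^R b$ with $a\in S^D V$ and $b\in P^D E$, apply the time-summation map $\Sigma_t: P^D V\to P^D$, $\sum_t v|_t\otimes|t\ket\mapsto \sum_t v|_t$. Each edge $|t^+\ket$ has $\partial^R|t^+\ket=|t\ket+|t+1\ket$, so $\Sigma_t\partial^R b=0$. Moreover $\Sigma_t$ commutes with $\sigma^D$, since $\sigma^D$ acts only on the first tensor factor. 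Hence $\mathfrak{e}(P^D)=\sigma^D\Sigma_t a\in\im\sigma^D$.

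For $\Rightarrow$, I will use the same ``pushing'' argument in time.

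\begin{enumerate}
\item Let $\mathfrak{T}=(\tau_0<\dots<\tau_\ell)$ be the set of times $t$ at which $e(P^D)|_t\notin\im\sigma^D$. If $\mathfrak{T}$ is empty, each slice lies in $\im\sigma^D$, and $\be(P^D)$ lies in the image of $\sigma^D$ on $S^D V$ slice by slice.
\item Otherwise, for $k=0,\dots,\ell-1$, define $b_k\in P^D E$ supported on the edges $t^+$ with $\tau_k\le t<\tau_{k+1}$. Its value there is the running partial sum $\sum_{i\le k}e(P^D)|_{\tau_i}$.
\item Then $\be(P^D)+\partial^R\sum_k b_k$ agrees with $\be(P^D)$ off $\mathfrak{T}$, vanishes on $\tau_0,\dots,\tau_{\ell-1}$, and equals $\sum_{i}e(P^D)|_{\tau_i}$ at $\tau_\ell$. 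The sum telescopes because each $b_k$ moves the partial sum from $\tau_k$ to $\tau_{k+1}$.
\item The last slice equals $\mathfrak{e}(P^D)$ minus the slices off $\mathfrak{T}$, all of which lie in $\im\sigma^D$. So it lies in $\im\sigma^D$ by hypothesis, and every slice of the modified vector is now in $\im\sigma^D$.
\item By the slice-by-slice statement of the first step, the modified vector is $\sigma^D a$ for some $a\in S^D V$. Hence $\be(P^D)=\sigma^D a+\partial^R b$ with $b=\sum_k b_k$.
\end{enumerate}

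A cleaner equivalent route is a short exact sequence. The map $\partial^R$ is injective, and its image is exactly $\ker\Sigma_t$, since $H_0(R)\cong\F_2$ with the summation map as the isomorphism. The sequence
$$0\to P^D E\xrightarrow{\partial^R} P^D V\xrightarrow{\Sigma_t} P^D\to 0$$
is then exact, and $\Sigma_t^{-1}(\im\sigma^D)=\im\partial^R+\im(\sigma^D\otimes\mathrm{id}_V)$ follows directly. I expect no step here to be a real obstacle. The only point needing care is that no symplectic structure enters: the statement uses only the linear map $\sigma^D$ and the repetition code $R$. The non-CSS setting therefore changes nothing relative to Lemma \ref{lem:fault-stabilizers}.
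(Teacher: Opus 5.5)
Your proposal is correct and follows the paper's approach. The paper defers to the time-pushing proof of Lemma~\ref{lem:fault-stabilizers} with $\partial^D$ replaced by $\sigma^D$, and your running partial sums $\sum_{i\le k} e(P^D)|_{\tau_i}$ placed on the edges between $\tau_k$ and $\tau_{k+1}$ are exactly what that iterated pushing produces; your exact-sequence reformulation via $0\to P^D E\to P^D V\to P^D\to 0$ is a valid and tidier way to state the same fact.
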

\begin{remark}
    Physically, the map $P^D E \to P^D V$ denotes physical errors that come in pairs and thus do not contribute to the accumulated error since they cancel each other out. We refer to $P^D E\oplus S^D V$ as the \textbf{fault stabilizers}.
\end{remark}
\begin{proof}
    The proof is exactly the same as Lemma \ref{lem:fault-stabilizers} and thus omitted.
\end{proof}

Let us further complete the diagram in Eq. \eqref{eq:non-CSS-fault-stabilizers} to form a complex.
\begin{lemma}
    Consider the map indicated by the red arrow.
    \begin{equation}
        \begin{tikzpicture}[baseline]
        \matrix(a)[matrix of math nodes, nodes in empty cells, nodes={minimum size=25pt},
        row sep=1.5em, column sep=1.5em,
        text height=1.25ex, text depth=0.25ex]
        { & P^D E & \bar{S}^D E \\
         S^D V & P^D V & \bar{S}^D V\\};
        \path[->,font=\scriptsize]
        (a-2-2) edge node[above]{$\hat{\sigma}^D$} (a-2-3)
        (a-1-3) edge node[right]{$\partial^R$} (a-2-3)
        (a-2-1) edge node[above]{$\sigma^D$} (a-2-2)
        (a-1-2) edge node[right]{$\partial^R$} (a-2-2);
        \path[->,red,font=\scriptsize]
        (a-1-2) edge (a-1-3);
        \end{tikzpicture}
    \end{equation}
    Then the fault stabilizers has trivial fault syndrome $\Leftrightarrow$ the map is $\hat{\sigma}^D$.
\end{lemma}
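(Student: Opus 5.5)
The statement to prove is the last lemma: in the diagram with $P^D E\to \bar{S}^D E$ as the unknown red arrow, the fault stabilizers have trivial fault syndrome if and only if that arrow is $\hat{\sigma}^D$. I will mirror the proof of Lemma \ref{lem:fault-complex}, replacing $\partial^D$ on $Q^D$ by the stabilizer and syndrome maps $\sigma^D,\hat{\sigma}^D$ of the symplectic complex $D$. Call the unknown map $\partial^{\mathrm{tbd}}:P^D E\to\bar{S}^D E$. The total fault syndrome map on $F_1=P^DV\oplus\bar{S}^DE$ is $\partial^F=\hat{\sigma}^D+\partial^R$, where $\hat{\sigma}^D$ acts on the first factor and $\partial^R$ on the time factor. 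The fault stabilizer map on $F_2=S^DV\oplus P^DE$ is $\sigma^D+(\partial^R+\partial^{\mathrm{tbd}})$. Trivial fault syndrome means $\partial^F$ annihilates the image of both pieces.

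First I treat the $S^DV$ piece, where the unknown map plays no role. For $s\otimes|t\rangle$ we have $\partial^F(\sigma^D s\otimes|t\rangle)=\hat{\sigma}^D\sigma^D s\otimes|t\rangle=0$, since $D$ is a complex. So this part vanishes automatically, in both directions of the equivalence.

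Next comes the $P^DE$ piece. Let $c\in P^DE$. Its image under the fault stabilizer map is $\partial^R c+\partial^{\mathrm{tbd}}c$. Applying $\partial^F$ gives
\begin{equation}
\hat{\sigma}^D\partial^R c+\partial^R\partial^{\mathrm{tbd}}c .
\end{equation}
Here the term $\hat{\sigma}^D\partial^{\mathrm{tbd}}c$ does not arise, because $\partial^{\mathrm{tbd}}c\in\bar{S}^DE$ and $\partial^F$ acts on $\bar{S}^DE$ only through $\partial^R$. The maps $\hat{\sigma}^D\otimes\mathrm{id}$ and $\mathrm{id}\otimes\partial^R$ commute, so this equals $\partial^R(\hat{\sigma}^D+\partial^{\mathrm{tbd}})c$.

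\begin{itemize}
\item ($\Leftarrow$) If $\partial^{\mathrm{tbd}}=\hat{\sigma}^D$, the expression is $\partial^R(2\hat{\sigma}^D c)=0$ over $\F_2$.
\item ($\Rightarrow$) If the expression vanishes for all $c$, then $(\hat{\sigma}^D+\partial^{\mathrm{tbd}})c\in\ker(\mathrm{id}_{\bar{S}^D}\otimes\partial^R)$.
\end{itemize}

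The only point needing care is the ($\Rightarrow$) direction: we need $\ker\partial^R=0$ on $E$ to conclude $\partial^{\mathrm{tbd}}=\hat{\sigma}^D$. This holds for the repetition code $R$ on $T+1$ vertices, which is a path and so has no cycles. Tensoring with $\bar{S}^D$ preserves injectivity, since it amounts to applying $\partial^R$ coordinatewise in a basis of $\bar{S}^D$. Everything else is formal, so this direction is routine and the lemma reduces to this injectivity check.
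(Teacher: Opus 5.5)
Your proof is correct and matches the paper's argument: the paper simply defers to the CSS Lemma~\ref{lem:fault-complex}, which likewise commutes the spatial map past $\partial^R$ to get $\partial^R(\hat{\sigma}^D+\partial^{\mathrm{tbd}})c=0$ and then uses $\ker\partial^R=0$. You also check the $S^DV$ component and justify the injectivity of $\partial^R$ (a path graph has no cycles, and this survives tensoring with $\bar{S}^D$), both of which the paper leaves implicit.
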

\begin{proof}
    Proof exactly the same as Lemma \ref{lem:fault-complex} and thus omitted.
\end{proof}

The \textbf{fault complex} $F=F_2\to F_1\to F_0$ for syndrome extraction is then given by
\begin{equation}
    \label{eq:noncss-fault-complex}
    \begin{tikzpicture}[baseline]
    \matrix(a)[matrix of math nodes, nodes in empty cells, nodes={minimum size=25pt},
    row sep=1.5em, column sep=1.5em,
    text height=1.25ex, text depth=0.25ex]
    { & P^D E & \bar{S}^D E \\
         S^D V & P^D V & \bar{S}^D V\\};
    \path[->,font=\scriptsize]
    (a-2-2) edge node[above]{$\hat{\sigma}^D$} (a-2-3)
    (a-1-3) edge node[right]{$\partial^R$} (a-2-3)
    (a-2-1) edge node[above]{$\sigma^D$} (a-2-2)
    (a-1-2) edge node[right]{$\partial^R$} (a-2-2)
    (a-1-2) edge node[above]{$\hat{\sigma}^D$} (a-1-3);
    \end{tikzpicture}
\end{equation}
which one may notice corresponds to the tensor product $D\otimes R$, where $S^D E$ doesn't play a role.
The fault complex then provides a natural threshold.


\begin{theorem}[Threshold]
    \label{thm:non-CSS-syndrome-threshold}
    Suppose that $T\ge d(D)$ where $d(D)$ is the code distance of the qLDPC (non-CSS) data code $D$.
    Let (physical and measurement) errors
    \begin{equation}
        \be(F_1)= \be(P^D)\oplus \be(\bar{S}^D)
    \end{equation}
    occur with probability $p$.
    Suppose that readouts $\partial^F \be(F_1)$ and min-weight correction $\partial^F \be(F_1 )\mapsto \hat{\be}(F_1)$ relative to the fault complex $F$ are provided.
    Then there exists threshold $p_c$ such that if $p<p_c$, then the accumulated physical error of $\be(P^D)+\hat{\be}(P^D)$ is nontrivial with probability
    \begin{equation}
        \le O(1) |\cQ^D| d(D) \left(\frac{p}{p_c}\right)^{d(D)/2}
    \end{equation}
\end{theorem}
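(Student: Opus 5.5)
The threshold follows from Lemma~\ref{lem:general-threshold} applied to the fault complex $F\cong D\otimes R$ of Eq.~\eqref{eq:noncss-fault-complex}, exactly as in the proof of Theorem~\ref{thm:syndrome-threshold}. Two facts are needed: that $d_1(F)\ge d(D)$ once $T\ge d(D)$, and that a nontrivial accumulated residual forces a nontrivial fault logical. Sparsity of $F$ is immediate, since $D$ is qLDPC and $R$ has bounded degree. Moreover $\dim F_1=\dim P^D\cdot(T+1)+\dim\bar{S}^D\cdot T=O(|\cQ^D|\,d(D))$ for $T=\Theta(d(D))$. This accounts for the prefactor in the stated bound, with the $T$ rounds of measurement errors absorbed into $O(1)$ via bounded check-to-qubit ratio.

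\textbf{Step 1 (logical implication).} Let $e=\be(F_1)+\hat{\be}(F_1)$, so that $\partial^F e=0$. Suppose $e\in\im\partial^F$, i.e.\ $e=\partial^F(f_S+f_P)$ with $f_S\in S^DV$ and $f_P\in P^DE$. Projecting onto $P^DV$ shows that the $P^DV$-component of $e$ lies in $\im(\sigma^D+\partial^R)$. By Lemma~\ref{lem:non-CSS-fault-stabilizers}, the time-summed error $\mathfrak{e}(P^D)+\hat{\mathfrak{e}}(P^D)$ then lies in $\im\sigma^D$, so the residual is trivial. Taking the contrapositive, a nontrivial residual implies $e\notin\im\partial^F$.

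\textbf{Step 2 (distance).} I will bound $d_1(D\otimes R)$ by a direct cleaning argument, since no Künneth-type distance formula for symplectic complexes is stated earlier in the paper. Take a cycle $e=(e_P,e_S)$ with $e_P\in P^DV$ and $e_S\in\bar{S}^DE$, and suppose $|e|<\min(d(D),T+1)$.

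First, some edge $t^+$ of $R$ carries no syndrome error, because $e_S$ has fewer than $T$ nonzero time slices. The cycle condition says that slicewise $\hat{\sigma}^D e_P|_t$ equals the time-boundary of $e_S$. Summing this over $t\le t_0$ gives $\hat{\sigma}^D\bigl(\sum_{t\le t_0}e_P|_t\bigr)=e_S|_{t_0^+}=0$. Hence the partial sum $\sum_{t\le t_0}e_P|_t$ is a logical of $D$, and it has weight $<d(D)$, so it lies in $\im\sigma^D$. The same holds for the partial sum over $t>t_0$.

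Next, I push all of $e_P$ into a single time slice by adding boundaries from $P^DE$, as in the proof of Lemma~\ref{lem:fault-stabilizers}. This turns the syndrome part into $\hat{\sigma}^D$ of an accumulated error of small weight, which can then be cancelled.

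The main obstacle is this weight bookkeeping: after the pushing step, the slicewise sums may no longer have weight $<d(D)$. To handle this, I will not bound the weight of the pushed chain. Instead I argue homologically. The cycle $e$ is homologous to a cycle supported in a single time slice plus $\bar{S}^DE$ pieces, and the single-slice piece is a logical of $D$ represented by a sum of the original slices, which has weight $\le|e|<d(D)$, hence trivial. The remaining $\bar{S}^DE$ pieces, being a cycle with no $P^D$-component, must vanish, since $\ker\partial^R=0$.

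Finally, any cycle with $T+1$ or more time-like syndrome faults already has weight $\ge T\ge d(D)$. So $d_1(F)\ge d(D)$, and Lemma~\ref{lem:general-threshold} yields the stated bound.
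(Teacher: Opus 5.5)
Your proposal is correct and follows the paper's proof, which defers to Theorem~\ref{thm:syndrome-threshold}. Both use Lemma~\ref{lem:general-threshold}, Lemma~\ref{lem:non-CSS-fault-stabilizers} for the contrapositive, and $d_1(F)\ge d(D)$ from $F\cong D\otimes R$; the only difference is that you prove this distance bound directly instead of citing the tensor-product structure. Your homological argument is complete on its own: push $e_P$ into one slice, observe that the slice equals $\sum_t e_P|_t\in\ker\hat{\sigma}^D$ with weight $\le|e|$, then kill the leftover pure $\bar{S}^DE$ cycle using injectivity of $\partial^R$. So the preliminary empty-edge and partial-sum step is unnecessary, and its claim ``fewer than $T$ slices'' is off by one when $|e_S|=T$. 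This argument also shows the distance bound holds for every $T$.
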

\begin{proof}
    Exactly the same as Theorem \ref{thm:syndrome-threshold} and thus omitted.
\end{proof}

\subsubsection{Logical Measurement}
\label{sec:non-CSS-logical-fault-complex}

In this subsection, we derive the fault complex of logical measurement in Algorithm \ref{alg:non-CSS-logical-measure}, which by Remark \ref{rem:recovery-and-threshold}, will be similar to that in Section \ref{sec:logical-fault-complex}.
Specifically, suppose that we are given a quantum state stabilized by a data (non-CSS) code $D$ written as a symplectic complex
\begin{equation}
    D= S^D \xrightarrow{\sigma^D} P^D \xrightarrow{\hat{\sigma}^D} \bar{S}^D
\end{equation}
(equipped with the usual objects), and a logical $\ell^{\star}(P^D)$ of $D$.
Input the state with possible unknown errors $e(P^D)|_0$ into Algorithm \ref{alg:non-CSS-logical-measure} so that we obtain readouts (where $\hat{\sigma},P,\bar{S}$ refer to the height-2 cone $C$ of Theorem \ref{thm:symplectic-embedding} assembled from $A$ and $D$)
\begin{align}
    \hat{\sigma} \tilde{e}(P)|_0 +e(\bar{S})|_{0^+} \in \bar{S} \\
    \label{eq:non-CSS-ancilla-readout-0}
    \delta^A x^{A}|_0 +e(Q_X^{A})|_0\in Q_X^{A}
\end{align}
where
\begin{equation}
    \tilde{e}(P) =
    \begin{pmatrix}
        e(Q_X^A) \\
        e(P^D) \\
        \tilde{e}(Q_Z^A)
    \end{pmatrix}
\end{equation}
and the output state has error $e(P^D)|_0+gx^A|_{0}$.
Note that similar to CSS logical measurement in Section \ref{sec:logical-fault-complex}, we have no information (and no control over the size) regarding the $Z$-type errors on ancilla qubits $\tilde{e}(Q_Z^A)$ since the ancilla qubits are measured out in the $Z$-basis at the end of every round of measurement.
Hence, the only relevant information is instead
\begin{align}
    \label{eq:non-CSS-relevant-readout}
    \begin{pmatrix}
        \delta^A & 0\\
        \hat{h} & \hat{\sigma}^D
    \end{pmatrix}
    \left.
    \begin{pmatrix}
        e(Q_X^A) \\
        e(P^D)
    \end{pmatrix}\right|_{0}
    +
    \left.
    \begin{pmatrix}
        e(\bar{Z}^A) \\
        e(\bar{S}^D)
    \end{pmatrix}\right|_{0^+}
    &\in
    \begin{pmatrix}
        \bar{Z}^A \\
        \bar{S}^D
    \end{pmatrix} \\
    \delta^A x^A|_0+ e(Q_X^A)|_0 &\in Q_X^A
\end{align}
In particular, note that Eq. \eqref{eq:non-CSS-relevant-readout} denotes the differential of the following cone complex (not symplectic anymore), which we denote as $G=G_2 \to G_1 \to G_0$ with differential $\partial^{G}$, so that $G_2=X^A\oplus S^D$, $G_1=Q_X^A\oplus P^D$ and $G_0=\bar{Z}^A\oplus \bar{S}^D$.
\begin{equation}
    \begin{tikzpicture}[baseline]
        \matrix(a)[matrix of math nodes, nodes in empty cells, nodes={minimum size=25pt},
        row sep=1.5em, column sep=1.5em,
        text height=1.25ex, text depth=0.25ex]
        { & X^A & Q_X^A & \bar{Z}^A\\
         S^D & P^D & \bar{S}^D &\\};
        \path[->,font=\scriptsize]
        (a-1-2) edge node[above]{$\delta^A$} (a-1-3)
        (a-1-3) edge node[above]{$\delta^A$} (a-1-4)
        (a-2-2) edge node[above]{$\hat{\sigma}^D$} (a-2-3)
        (a-1-2) edge node[right]{$g$} (a-2-2)
        (a-1-3) edge node[right]{$h$} (a-2-3)
        (a-2-1) edge node[above]{$\sigma^D$} (a-2-2);
    \end{tikzpicture}
\end{equation}
Note that $G$ has 1-systolic distance $d_1(G)=\Omega(1)d(D)$ due to the relative expansion of $g$.
Hence, we can rewrite the relevant readouts as
\begin{align}
    \label{eq:non-CSS-relevant-readout-rewrite}
    \partial^G e(G_1)|_{0} +e(G_0)|_{0^+} &\in G_0 \\
    \delta^A x^A|_{0} + e(Q_X^A)|_0 &\in Q_X^A
\end{align}

Similar to conventional CSS logical measurement, we use the output as the input for another round of logical measurement to obtain relevant readouts
\begin{align}
    \partial^G (e(P^D)|_0 +gx^A|_0 + e(G_1)|_1)+e(G_0)|_{1^+} &\in G_0 \\
    \delta^A x^A|_{1} + e(Q_X^A)|_{1} &\in Q_X^A
\end{align}
Note that
\begin{align}
    \partial^G (e(P^D)|_0 +&gx^A|_0 + e(G_1)|_1) \\
    &= \partial^G
    \begin{pmatrix}
        \delta^A x^A|_0 +e(Q^A_X)|_1\\
        e(P^D)|_0+e(P^D)|_1
    \end{pmatrix}\\
    &= \partial^G (e(G_1)|_0+e(G_1)|_1) \\
    &\quad\quad + \partial^G \text{Eq. } \eqref{eq:non-CSS-ancilla-readout-0}
\end{align}
Hence, by utilizing the readout in Eq. \eqref{eq:non-CSS-ancilla-readout-0} when discarding the ancilla qubits, we can regard the readout as that error caused by the accumulated error $e(G_1)|_0+e(G_1)|_1$ even though errors $e(Q^A_X)|_0$ on the ancilla were discarded.
Specifically, we have information
\begin{align}
    \partial^G (e(G_1)|_0 + e(G_1)|_1)+e(G_0)|_{1^+} &\in G_0 \\
    \delta^A x^A|_{1} + e(Q_X^A)|_{1} &\in Q_X^A
\end{align}

Repeat this procedure $T+1$ times (where the last logical measurement is assumed to be perfect without measurement errors).
Then by computing the difference between readouts $t^+$ and $t^-$, we obtain information
\begin{align}
    \label{eq:non-CSS-logical-readout}
    \partial^G e(G_1)|_t+e(G_0)|_{t^+} +e(G_0)|_{t^-}, \quad t\in [0,T]  \\
    \label{eq:non-CSS-ancilla-readout}
    \delta^A x^A|_t + e(Q_X^A)|_t\in Q_X^A, \quad t\in [0,T]
\end{align}
where measurement errors $e(G_0)|_{0^-},e(G_0)|_{T^+}=0$ by assumption, and the output state has accumulated physical errors
\begin{equation}
    \sum_{t\le T} e(P^D)|_t +g\sum_{t\le T} x^A|_t
\end{equation}

Using Eq. \eqref{eq:non-CSS-logical-readout}, we can then obtain a fault complex $F=G\otimes R$ similar to Eq. \eqref{eq:fault-complex} with $\partial^D \mapsto \partial^G$ replaced with the differential of the deformed code $G$, and $R$ is the repetition code on $T+1\ge d_1(G)=\Omega(1)d(D)$ vertices.
We can then obtain a threshold similar to Theorem \ref{thm:syndrome-threshold}.
\begin{theorem}[Non-CSS Logical Threshold]
    \label{thm:non-CSS-logical-threshold}
    Suppose that $T+1\ge d_1(G)$ where $d_1(G)$ is the 1-systolic distance of the deformed code $G$. Let fault errors
    \begin{equation}
        \be(F_1) =\be(G_1) \oplus \be(G_0)
    \end{equation}
    occur with probability $p$.
    Suppose that readouts $\partial^F \be(F_1)$ and min-weight correction $\partial^F \be(F_1)\mapsto \hat{\be}(F_1)$ relative to the fault complex $F=G\otimes R$ are provided.
    Then there exists threshold $p_c$ such that if $p<p_c$, then the accumulated physical error of $\be(P^D)+\hat{\be}(P^D)$ is nontrivial with probability
    \begin{equation}
        \le O(1) \dim G_1 \, d_1(G) \left(\frac{p}{p_c}\right)^{d_1(G)/2}
    \end{equation}
    Note that the ancilla gadget $A$ was chosen so that the merged code $G$ has distance $d_1(G) = \Omega(1) d(D)$.
\end{theorem}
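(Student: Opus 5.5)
The plan is to mirror the proofs of Theorem \ref{thm:syndrome-threshold} and Theorem \ref{thm:logical-threshold-X}, with the differential $\partial^D$ replaced by $\partial^G$. The steps are:
\begin{enumerate}
\item Bound $d_1(F)$ for the fault complex $F=G\otimes R$.
\item Invoke Lemma \ref{lem:general-threshold} for $F$.
\item Show that a nontrivial accumulated physical error on $P^D$ forces $\be(F_1)+\hat{\be}(F_1)\notin\im\partial^F$.
\end{enumerate}

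\textbf{Step 1 (distance).} By the Künneth formula, $H_1(G\otimes R)\cong H_1(G)\otimes H_0(R)$, since $H_1(R)=0$ and $H_0(R)\cong\F_2$. A nontrivial class of $F_1=G_1V\oplus G_0E$ must project onto a nontrivial class of $H_1(G)$ when summed over time. If its $G_1V$ part is such that every time slice is trivial in $G$, the representative must instead contain a timelike chain of syndrome faults in $G_0E$ spanning all $T$ edges. Hence
\begin{equation}
d_1(F)\ge \min(d_1(G),T)\ge d_1(G)-1 .
\end{equation}
The shift by one is absorbed into $p_c$. This is exactly the tensor-product argument already used in Theorem \ref{thm:syndrome-threshold}. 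Sparsity of $F$ follows from $C$ and $A$ being qLDPC. Counting cells gives $\dim F_1\le (T+1)\dim G_1+T\dim G_0=O(\dim G_1\,d_1(G))$, taking $T=\Theta(d_1(G))$ and using $\dim G_0\le O(1)\dim G_1$. This produces the prefactor in the statement.

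\textbf{Step 3 (recovery reduction).} Two sub-steps are needed.

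First, an analogue of Lemma \ref{lem:non-CSS-fault-stabilizers} for $G$. The claim is that the time-accumulated $\be(G_1)$ lies in $\im\partial^G$ if and only if $\be(G_1)$ lies in the image of $G_2V\oplus G_1E$. The proof is the same "push errors forward in time" argument used for Lemma \ref{lem:fault-stabilizers}. Consequently, if $\be+\hat{\be}\in\im\partial^F$, then there exist $s^D\in S^D$ and $s^A\in X^A$ with
\begin{align}
\mathfrak{e}(P^D)+\hat{\mathfrak{e}}(P^D)&=\sigma^D s^D+g s^A,\\
\mathfrak{e}(Q_X^A)+\hat{\mathfrak{e}}(Q_X^A)&=\delta^A s^A .
\end{align}

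Second, follow Lemma \ref{lem:recovery-logical-msmt}. The true accumulated data error is $\mathfrak{e}(P^D)+g\mathfrak{x}^A$. The accumulated ancilla readouts \eqref{eq:non-CSS-ancilla-readout} satisfy $\mathfrak{r}^A+\hat{\mathfrak{e}}(Q_X^A)=\delta^A\hat{s}^A$ for some computable $\hat{s}^A$. Applying the correction $g\hat{s}^A$ leaves a residual in $\im\sigma^D+g\ker\delta^A$. By Theorem \ref{thm:parsimonious-cone}, $\ker\delta^A=H^0(A)=\operatorname{span}\{\cV\}$, so $g\ker\delta^A=\operatorname{span}\{\ell^{\star}\}$. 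Since $\ell^{\star}$ is the measured logical and, by Remark \ref{rem:recovery-and-threshold}, the post-measurement state is stabilized by it (up to the tracked $Z$-defect $p\cV$), the residual acts trivially.

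Taking the contrapositive: a nontrivial residual implies $\be+\hat{\be}\notin\im\partial^F$, and Lemma \ref{lem:general-threshold} then gives the bound.

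\textbf{Expected obstacle.} The delicate point is justifying that restricting to $G$ loses nothing. The uncontrolled $\tilde{e}(Q_Z^A)$ and the $Z^A$, $\bar{X}^A$ rows of $C$ are discarded, yet the reduced readout \eqref{eq:non-CSS-relevant-readout} must still determine the data error modulo $\im\sigma^D+\operatorname{span}\{\ell^{\star}\}$. I would first check this by showing that the defect term $p$ only ever couples $X^A$ into $Q_Z^A$, so it never enters $\partial^G$. Second, I would confirm that the $\hat{h}$-contribution of $Q_X^A$ errors to $\bar{S}^D$ is consistent with $(g,h)$ being a chain map (Lemma \ref{lem:chain-surgery}), so that $\partial^G\partial^G=0$ and the push-forward argument goes through unchanged.
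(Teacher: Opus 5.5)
Your proposal follows the paper's route. The paper builds the fault complex $F=G\otimes R$, applies Lemma~\ref{lem:general-threshold}, uses the push-forward characterization of fault stabilizers, and then corrects by $g\hat{s}^A$, leaving a residual in $\im\sigma^D+g\ker\delta^A=\im\sigma^D+\mathrm{span}\{\ell^{\star}\}$. That last correction appears in the paper's Recovery remark right after the theorem, and you correctly fold it into the proof. A small correction to your obstacle check: the map $Q_X^A\to\bar{S}^D$ in $G$ is $h$, not $\hat{h}$. The identity $\partial^G\partial^G=0$ is then exactly the chain-map condition $h\delta^A=\hat{\sigma}^D g$.

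One step needs repair. Your bound $d_1(F)\ge\min(d_1(G),T)\ge d_1(G)-1$, with the shift ``absorbed into $p_c$'', does not give the stated exponent. Indeed $(p/p_c)^{(d-1)/2}=(p/p_c)^{d/2}(p_c/p)^{1/2}$, and the extra factor is unbounded as $p\to 0$. So it cannot be absorbed into $p_c$ or into the $O(1)$.

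The detour is also unnecessary, because your own first observation already gives the sharp bound. Time-summation is a chain map $F\to G$ inducing $H_1(F)\cong H_1(G)$. A nontrivial cycle $c=\sum_t c_t\otimes|t\rangle+\sum_t m_{t^+}\otimes|t^+\rangle$ therefore satisfies $|c|\ge\sum_t|c_t|\ge|\sum_t c_t|\ge d_1(G)$. The ``timelike chain'' case never occurs: with $e(G_0)|_{0^-}=e(G_0)|_{T^+}=0$, a string of syndrome faults alone is not a cycle, which is just $H_1(R)=0$. Hence $d_1(F)\ge d_1(G)$ for every $T$. The choice $T=\Theta(d_1(G))$ then matters only through the prefactor $\dim F_1=O(\dim G_1\,d_1(G))$.
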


\begin{remark}[Recovery]
    Note, however, that the accumulated physical error on the data code is given by
    \begin{equation}
        \sum_{t\le T} (e(P^D)|_t+ gx^A|_t)
    \end{equation}
    and thus Theorem \ref{thm:non-CSS-logical-threshold} only guarantees that, with high probability,
    \begin{align}
        \sum_{t\le T} (e(Q_X^A)|_t +\hat{e}(Q_X^A)|_t) &= \delta^A s^A \\
        \sum_{t\le T} (e(P^D)|_t +\hat{e}(P^D)|_t) &= g s^A +\sigma^D s^D
    \end{align}
    for some $s=s^A \oplus s^D\in G_2 = X^A\oplus S^D$.
    Hence, we must utilize the additional readouts in Eq. \eqref{eq:non-CSS-ancilla-readout} to apply a final recovery procedure.
    Specifically, sum over all $t\in [0,T]$ in Eq. \eqref{eq:non-CSS-ancilla-readout}, along with the correction $\hat{e}(Q_X^A)$ to obtain information
    \begin{equation}
        \delta^A\left(\sum_{t\le T} x^A|_t  +s^A\right) \in \im \delta^A
    \end{equation}
    Choose any $\hat{s}^A \in X^A$ such that
    \begin{equation}
        \delta^A\left(\sum_{t\le T} x^A|_t  +s^A +\hat{s}^A\right)=0
    \end{equation}
    And apply the Pauli onto data qubits corresponding to $g\hat{s}^A$.
    Then, in addition to correction $\hat{e}(P^D)$, we see that, with high probability, the accumulated data error belongs to $\im \sigma^D + g\ker \delta^A$, which is simply a stabilizer of $D$ plus the measured logical, and thus the accumulated error is corrected.
\end{remark}
\section{Implementation}
\label{sec:implement}

For the sake of completeness, we provide a detailed derivation of the implementation of measurement, where subsections \ref{sec:low-lvl-CSS-measurement} and \ref{sec:low-lvl-non-CSS-measure} treat the special CSS and more general non-CSS case separately.
In particular, we show that measurement errors arise due to physical errors on check qubits.

\subsection{Low Level CSS Measurement}
\label{sec:low-lvl-CSS-measurement}

We adopt the convention that given a $\F_2$ space $X$, the space is always equipped with basis $\cX$. We also adopt the notation that $Q(X)$ is an (isometric) copy of $X$ with corresponding basis $\cQ(X)$.

\begin{lemma}[Low Level]
\label{lem:low-lvl-measurement}
Algorithm \ref{alg:low-lvl-measurement} is correct.
\end{lemma}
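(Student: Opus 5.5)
The plan is to prove correctness by tracking stabilizers through the circuit of Algorithm \ref{alg:low-lvl-measurement}, phrased in the abelianized language of Section \ref{sec:CSS-prelim}: stabilizer subspaces $\Sigma_X,\Sigma_Z$ together with physical measurements $a_\bullet$ modulo logicals (Lemma \ref{lem:canonical-iso}, Remark \ref{rem:phys-err}). I assume the algorithm has the standard form: check qubits $Q(X)$ (an isometric copy of the check space $X$) are initialized in the $X$-basis, entangled with the data qubits $Q$ by CNOTs dictated by $\partial$, and then measured transversally. Concretely, the input is a state stabilized by $A=X\to Q\to Z$ with physical $a_\bullet$. The goal is that the readout on $Q(X)$, identified with a physical $b_Z$ as in Remark \ref{rem:transversal-measurement}, is exactly the measurement of the $X$-checks $\partial:X\to Q$. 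The output should then be as described by Lemma \ref{lem:overcomplete-physicals}.

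\textbf{Initialization and entangling.} By Example \ref{ex:initialization}, after initialization the joint state is stabilized by $X(Q(X))\xrightarrow{\mathrm{id}}Q(X)$ together with $A$ on $Q$, and the ancilla carries an unknown $Z$-type physical $e_Z(Q(X))$. Next I conjugate the joint stabilizer group by the CNOT layer. Each CNOT with control on the check qubit $x$ and target on a data qubit $q\sim x$ sends $X_x\mapsto X_x\prod_{q\sim x}X_q$. It fixes the data $Z$-stabilizers $\delta$-type elements up to multiplication by $Z$'s on check qubits: $Z_q\mapsto Z_q Z_x$ for each adjacent check, so a stabilizer $z$ acquires the ancilla support $\delta z$ restricted to... which vanishes, because $\partial^{\top}$ applied to an element of $\im \delta$ paired with $\partial$ gives $\delta\delta=0$ in the appropriate sense. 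In block-matrix form, conjugation acts on $Q(X)\oplus Q$ by
\begin{equation}
U_X=\begin{pmatrix} I & 0\\ \partial & I\end{pmatrix},\qquad U_Z=\begin{pmatrix} I & \delta\\ 0 & I\end{pmatrix},
\end{equation}
and checking that $Z$-checks $\delta(\cdot)$ are mapped to themselves reduces to $\delta\delta=0$. In the same picture, the physical $e_Z(Q(X))$ is pushed forward by $U_Z$ and lands on the check qubits unchanged.

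\textbf{Readout.} Transversal $X$-measurement of $Q(X)$ then realizes, via Lemma \ref{lem:overcomplete-physicals} with $\partial^B$ the map $X\to Q(X)\oplus Q$, $x\mapsto x\oplus\partial x$, a readout equal to $\bra a_Z|\partial\,\cdot\ket+e_Z(Q(X))$. That is, it is the true syndrome $\delta a_Z$ plus a term coming only from check-qubit errors. This establishes both correctness and the claim that measurement errors are exactly physical errors on check qubits. The post-measurement state on $Q$ is read off from the second bullet of Lemma \ref{lem:overcomplete-physicals}: the $X$-stabilizers $\im\partial$ carry the new physical $c_Z$, and the $Z$-stabilizers in $\im\delta$ are preserved.

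\textbf{Main obstacle.} I expect the hardest step to be the bookkeeping of phases/physicals through the CNOT conjugation. One must verify that the valid-measurement condition \eqref{eq:valid-measurement} holds automatically, since the measurement is overcomplete on $\im\partial$. One must also check that the identification of the ancilla readout with $b_Z$ is consistent modulo $\ker\delta$. I would handle both by writing everything as the pushforward under $U_X,U_Z$ and invoking Lemma \ref{lem:canonical-iso}, rather than tracking individual Pauli phases.
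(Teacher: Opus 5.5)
There is a genuine gap: you prove a different and much more special statement. Algorithm~\ref{alg:low-lvl-measurement} contains no initialization and no CNOT layer. Its input is a state already stabilized by the enlarged complex $X^A\oplus X^B\to Q^A\oplus Q(X^B)\to Z^A$ of its diagram. That complex is built from an \emph{arbitrary} $\partial^B:X^B\to Q^A$ together with $\mathrm{id}:X^B\to Q(X^B)$ and $g=\partial^A\partial^B$, and the state carries physicals $e_\bullet(Q^A)$, $e_\bullet(X^B)$. The algorithm's only step is the transversal $X$-measurement of $Q(X^B)$. What you analyze is Algorithm~\ref{alg:low-lvl-wCNOTs}, which the paper deduces from this lemma. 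On top of that, you silently take the measured operators to be the $X$-checks of $A$ itself, so at best you recover the syndrome-extraction special case.

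Both of your key claims fail for general $\partial^B$. First, a $Z$-stabilizer $\delta^A z$ is not returned to itself. It acquires support $g^{\top}z=\delta^B\delta^A z$ on the check qubits, which vanishes identically only when every measured operator commutes with the $Z$-checks of $A$; your $\delta\delta=0$ argument covers only $\partial^B=\partial^A$. This nonzero arrow $g$ is exactly why only $\im\delta^A\cap(\im\partial^B)^{\perp}$ survives, rather than all of $\im\delta^A$. Second, your readout omits an undetermined logical contribution. The correct readout is $\delta^B\ell_Z(Q^A)+e_Z(X^B)$ for some $\ell_Z(Q^A)\in\ker\delta^A$; in this algorithm's convention the data error is already folded into the check-qubit physical $e_Z(X^B)$. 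The residual $Z$-physical correspondingly becomes $e_Z(Q^A)+\ell_Z(Q^A)$. The term $\delta^B\ell_Z(Q^A)$ vanishes identically only when $\im\partial^B\subseteq\im\partial^A$, i.e.\ when you measure stabilizers of $A$. In general it is precisely what carries a logical outcome, as in Theorem~\ref{thm:logical-measure}.

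The missing idea is to apply Lemma~\ref{lem:overcomplete-physicals} directly to the enlarged complex. The measured Paulis are the single-qubit $X$'s on $Q(X^B)$, whose $\ker\delta$ is $Q(X^B)^{\perp}$. The map $x\mapsto x\oplus\partial x$ is what these single-qubit measurements become when pulled back through the CNOTs. You may use either picture, but not push the stabilizers forward and pull the measurement back at the same time, which double counts the circuit. The valid-measurement condition is then not something to verify but the source of the answer: the outcome satisfies $e_Z+s_Z\in\ker\delta+Q(X^B)^{\perp}$, so $s_Z=e_Z(X^B)+\ell_Z(X^B)$ for some $\ell_Z\in\ker\delta$. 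Since $\delta(q\oplus y)=\delta^A q\oplus(\delta^B q+y)$ on $Q^A\oplus Q(X^B)$, every such $\ell_Z$ has $\ell_Z(Q^A)\in\ker\delta^A$ and $\ell_Z(X^B)=\delta^B\ell_Z(Q^A)$, which gives the stated readout. The second half of the same lemma, after discarding $Q(X^B)$, gives two stabilizer sets. The $X$-stabilizers are $\im\partial^A+\im\partial^B$ with physical $e_Z(Q^A)+\ell_Z(Q^A)$. The $Z$-stabilizers are $\{\delta^A z:\delta^B\delta^A z=0\}=\im\delta^A\cap(\im\partial^B)^{\perp}$ with physical $e_X(Q^A)$.
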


\begin{remark}[Physical $\lr$ Measurement Error]
    \label{rem:phys-meas-errs}
    Note that at the low-level implementation, the $Z$-type physical error $e_Z(X^B)$ on discarded qubits converts to a measurement error in the readout.
\end{remark}

\begin{proof}
    By Lemma \ref{lem:overcomplete-physicals}, the valid measurement condition is equivalent to
    \begin{equation}
        e_Z + s_Z \in \ker\delta +Q(X^B)^\perp
    \end{equation}
    where $s_Z(X^B) \in Q(X^B)$ is the readout and $e_Z =e_Z(X^B) \oplus e_Z(Q^A)$ where $e_Z(X^B) \in Q(X^B)$ and $e_Z(Q^A) \in Q^A$. Note that if $\ell_Z\in \ker \delta$, we have $\ell_Z(Q^A) \in \ker \delta^A$ and $\ell_Z(X^B) = \delta^B \ell_Z(Q^A)$. Hence,
    \begin{equation}
        s_Z = \delta^B \ell_Z(Q^A) + e_Z(X^B)
    \end{equation}
    Again, by Lemma \ref{lem:overcomplete-physicals}, the residual $Z$-type physical error, modulo logical, is then $e_Z(Q^A) +\ell_Z(Q^A)$, since qubits $Q(X^B)$ are discarded.
    The $X$-type residual physical errors are straightforward.
\end{proof}

\begin{lemma}[High Level Measurement]
    Algorithm \ref{alg:high-lvl-measurement} is correct.
\end{lemma}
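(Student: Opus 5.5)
The plan is to reduce Algorithm \ref{alg:high-lvl-measurement} to the already-verified low-level routine of Lemma \ref{lem:low-lvl-measurement}, and then to read off the post-measurement stabilizer group and physicals from Lemma \ref{lem:overcomplete-physicals}. I am assuming that the high-level algorithm measures the $X$-type Paulis $\im\partial^B$ (with $\partial^B:X^B\to Q^A$) on a state stabilized by $A=X^A\to Q^A\to Z^A$ with physical $a_\bullet$. I also assume it returns a readout in $X^B$ (equivalently a physical $b_Z$ via Remark \ref{rem:transversal-measurement}) together with the updated code.

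The key steps, in order, are as follows.
\begin{enumerate}
\item Write the high-level step as a single call of Algorithm \ref{alg:low-lvl-measurement}. The ancilla check qubits $Q(X^B)$ are initialized in the $X$-basis, as in Example \ref{ex:initialization}. One then measures the enlarged $X$-type group generated by $Q(X^B)$ and the coupling $\partial^B$, and finally discards $Q(X^B)$.
\item Invoke Lemma \ref{lem:low-lvl-measurement} to obtain the readout
\begin{equation}
s_Z=\delta^B\ell_Z(Q^A)+e_Z(X^B), \qquad \ell_Z(Q^A)\in\ker\delta^A .
\end{equation}
The same lemma gives the residual $Z$-physical $e_Z(Q^A)+\ell_Z(Q^A)$ modulo logicals. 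The $X$-physical $a_X$ is unchanged.
\item Check the valid-measurement condition \eqref{eq:valid-measurement}. Setting $b_Z:=s_Z$ (viewed through the Riesz identification) and $c_Z=a_Z+\ell_Z(Q^A)$, one has $a_Z+b_Z\in\ker\delta^A+\ker\delta^B$ up to the measurement error $e_Z(X^B)$, which Remark \ref{rem:phys-meas-errs} reinterprets as a readout error.
\item Apply the ``moreover'' part of Lemma \ref{lem:overcomplete-physicals}. It shows that the output is stabilized by the $X$-Paulis $\im\partial^A+\im\partial^B$ with physical $c_Z$, and by the $Z$-Paulis $\im\delta^A\cap\ker\delta^B$ with physical $a_X$. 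This is precisely what the high-level algorithm claims to output.
\end{enumerate}

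The main obstacle is the bookkeeping in the third and fourth steps. One must verify that the same $\ell_Z(Q^A)$ appears both in the readout and in the residual physical, so that $c_Z$ is consistently defined. One must also verify that the discarded-qubit error $e_Z(X^B)$ enters only the readout and never the post-measurement state. I would handle this by decomposing any $\ell_Z\in\ker\delta$ of the enlarged code as $\ell_Z(Q^A)\oplus\delta^B\ell_Z(Q^A)$, exactly as in the proof of Lemma \ref{lem:low-lvl-measurement}, and then tracking each component through the discard step. The $X$-type statements are routine and follow because $X$-measurements do not alter $a_X$.
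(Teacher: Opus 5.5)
Your step~4 correctly gives the two bullet points, but the readout half of the algorithm's guarantee does not follow from steps~1--3. The ensured syndrome $\delta^B(e_Z(Q^A)+\ell_Z(Q^A))$ contains $\delta^B e_Z(Q^A)$, the syndrome of the data error. In the low-level picture this term reaches the check qubits only through the CNOT circuit of Algorithm~\ref{alg:low-lvl-wCNOTs}: the rule $Z_t\mapsto Z_cZ_t$ makes the check-qubit $Z$-physical $e_Z(X^B)+\delta^B e_Z(Q^A)$ before Lemma~\ref{lem:low-lvl-measurement} is invoked. Your step~1 has no such circuit; ``measuring the enlarged $X$-type group'' with the ancillas in $X$-eigenstates amounts to performing the very measurement being implemented. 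So your step~2 readout lacks this term, and once step~3 discards $e_Z(X^B)$ you are left with readout $\delta^B\ell_Z(Q^A)$ against residual physical $e_Z(Q^A)+\ell_Z(Q^A)$. These are consistent, with the same $\ell_Z$ as you yourself require, only when $\delta^B e_Z(Q^A)=0$. For repeated syndrome extraction ($\partial^B=\partial^A$) your readout is identically $\delta^A\ell_Z(Q^A)=0$ rather than $\delta^A e_Z(Q^A)$. Step~3 also mistypes $b_Z$. The readout $s_Z$ lies in $X^B$, whereas the $b_Z$ of Lemma~\ref{lem:overcomplete-physicals} lies in $Q^A$. Remark~\ref{rem:transversal-measurement} covers only transversal single-qubit measurements; for general $\partial^B$, Lemma~\ref{lem:canonical-iso} gives readout $\delta^B b_Z$, with $b_Z$ fixed only modulo $\ker\delta^B$. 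Nor can validity hold ``up to'' an error. The high-level algorithm is the error-free idealization (the remark after Algorithm~\ref{alg:low-lvl-wCNOTs} says the two differ only by the measurement error), so $e_Z(X^B)$ must be set to zero, not reinterpreted via Remark~\ref{rem:phys-meas-errs}.

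The paper states this lemma without a separate proof. That remark makes it the $e_Z(X^B)=0$ case of the Relation lemma for Algorithm~\ref{alg:low-lvl-wCNOTs}, whose proof is exactly the CNOT propagation followed by Lemma~\ref{lem:low-lvl-measurement}. A shorter repair bypasses the low level entirely. With $a_Z=e_Z(Q^A)$, Lemma~\ref{lem:overcomplete-physicals} gives $a_Z+b_Z=\alpha_Z+\beta_Z$ with $\alpha_Z\in\ker\delta^A$ and $\beta_Z\in\ker\delta^B$. The measured value on $\partial^B x$ is $\langle b_Z|\partial^B x\rangle=\langle\delta^B b_Z|x\rangle$, so the syndrome is $\delta^B b_Z=\delta^B(a_Z+\alpha_Z)$. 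Setting $\ell_Z(Q^A):=\alpha_Z$, the ``moreover'' clause with $c_Z=a_Z+\alpha_Z$ and $\ker\delta^B=(\im\partial^B)^\perp$ yields both bullet points with the same $\ell_Z$.
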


\begin{algorithm}
\caption{\textsc{Low-Level Measurement}}
\label{alg:low-lvl-measurement}
\begin{algorithmic}[1]
    \Require State stabilized by complex with (co)differential $\partial$ ($\delta$)
    \begin{equation}
        \label{eq:low-lvl-measurement}
        \begin{tikzpicture}[baseline]
        \matrix(a)[matrix of math nodes, nodes in empty cells, nodes={minimum size=25pt},
        row sep=1.5em, column sep=1.5em,
        text height=1.25ex, text depth=0.25ex]
        {&X^B & Q(X^B)\\
         X^A &Q^A & Z^A\\};
        \path[->,font=\scriptsize]
        (a-1-2) edge node[above]{$\rm{id}$}  (a-1-3)
        (a-2-1) edge node[above]{$\partial^A$}  (a-2-2)
        (a-2-2) edge node[above]{$\partial^A$}  (a-2-3)
        (a-1-2) edge node[right]{$\partial^B$}  (a-2-2)
        (a-1-3) edge node[right]{$g=\partial^A \partial^B$}  (a-2-3);
        \end{tikzpicture}
    \end{equation}
    (where the bottom row is also a complex $A$) with (unknown) physical $e_{\bullet}(Q^A)\in Q^A,e_{\bullet}(X^B) \in Q(X^B)$.
    \Ensure Readout
    \begin{equation}
        \delta^B \ell_Z(Q^A) +e_Z(X^B) \in X^B
    \end{equation}
    for some logical $\ell_Z(Q^A) \in \ker \delta^A$ and output state stabilized by
    \begin{itemize}
        \item $X$ Paulis in $\im \partial^A +\im \partial^B$ with physical $e_Z(Q^A)+\ell_Z(Q^A)$
        \item $Z$ Paulis in $\im \delta^A \cap (\im \partial^B)^\perp$ with physical $e_X(Q^A)$
    \end{itemize}
    \State Measure out $Q(X^B)$ in the $X$-basis with physical measurement $s\in Q(X^B) \cong X^B$
\end{algorithmic}
\end{algorithm}

\begin{algorithm}
\caption{\textsc{High-Level Measurement}}
\label{alg:high-lvl-measurement}
\begin{algorithmic}[1]
    \Require State stabilized by CSS code
    \begin{equation}
        A= X^A \xrightarrow{\partial^A} Q^A \xrightarrow{\partial^A} Z^A
    \end{equation}
    with (unknown) physical $e_{\bullet}(Q^A)$.
    Circuit to perform $X$-type measurement $\partial^B:X^B \to Q^A$.
    \Ensure Valid syndrome
    \begin{equation}
        \delta^B (e_Z(Q^A) +\ell_Z(Q^A)) \in X^B
    \end{equation}
    for some logical $\ell_Z(Q^A) \in \ker \delta^A$ and output stabilized by
    \begin{itemize}
        \item $X$ Paulis in $\im \partial^A +\im \partial^B$ with physical $e_Z(Q^A)+\ell_Z(Q^A)$
        \item $Z$ Paulis in $\im \delta^A \cap (\im \partial^B)^\perp$ with physical $e_X(Q^A)$
    \end{itemize}
    \State Perform $X$-type measurement $\partial^B:X^B \to Q^A$ with valid syndrome $\in X^B$
\end{algorithmic}
\end{algorithm}

\begin{algorithm}
\caption{\textsc{Low-Level with CNOTs}}
\label{alg:low-lvl-wCNOTs}
\begin{algorithmic}[1]
    \Require State stabilized by CSS code
    \begin{equation}
        A= X^A \xrightarrow{\partial^A} Q^A \xrightarrow{\partial^A} Z^A
    \end{equation}
    with (unknown) physical $e_{\bullet}(Q^A)$.
    \Ensure Measure $X$-Paulis in $\partial^B:X^B \to Q^A$ with readout
    \begin{equation}
        \delta^B (e_Z(Q^A) +\ell_Z(Q^A)) +e_Z(X^B) \in X^B
    \end{equation}
     where $e_Z(X^B)$ is an unknown initialization error, and $\ell_Z(Q^A) \in \ker \delta^A$. The output state is stabilized by
    \begin{itemize}
        \item $X$ Paulis in $\im \partial^A +\im \partial^B$ with physical $e_Z(Q^A)+\ell_Z(Q^A)$
        \item $Z$ Paulis in $\im \delta^A \cap (\im \partial^B)^\perp$ with physical $e_X(Q^A)$
    \end{itemize}
    \State Initialize check qubits $Q(X^B)$ transversally in $X$-basis with (unknown) physicals $e_Z(X^B)$
    \Comment{Example \ref{ex:initialization}}
    \State Apply CNOT circuits from $Q(X^B)\to Q^A$ based on $\partial^B$
    \State Measure out $Q(X^B)$ in the $X$-basis with physical $\in Q(X^B) \cong X^B$
\end{algorithmic}
\end{algorithm}

\begin{lemma}[Relation]
    Algorithm \ref{alg:low-lvl-wCNOTs} is correct.
\end{lemma}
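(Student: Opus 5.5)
We need to prove that Algorithm \ref{alg:low-lvl-wCNOTs} is correct, i.e., that the CNOT circuit produces exactly the input state of Algorithm \ref{alg:low-lvl-measurement}, so that Lemma \ref{lem:low-lvl-measurement} can be applied. The plan is a chain of reductions: initialization, then conjugation by CNOTs, then the low-level measurement.

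\textbf{Step 1: initialization.} After step 1, by Example \ref{ex:initialization}, the joint state on $Q^A\oplus Q(X^B)$ is stabilized by the direct sum of $A$ and the complex $X(Q(X^B))\xrightarrow{\mathrm{id}} Q(X^B)$. The unknown physical errors are $e_\bullet(Q^A)$ and $e_Z(X^B)$, and the $X$-type error on the check qubits can be taken to be zero.

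\textbf{Step 2: conjugation by the CNOTs.} I will track how the CNOT layer, controlled on $Q(X^B)$ and targeting $Q^A$ according to $\partial^B$, acts on the abelianized Pauli space. It acts as the symplectic linear map
\begin{itemize}
\item $X$-sector: $(u,v)\mapsto(u,v+\partial^B u)$ for $u\in Q(X^B)$, $v\in Q^A$;
\item $Z$-sector: $(u,w)\mapsto(u+\delta^B w,w)$.
\end{itemize}
I then apply these maps to the generators and to the errors.

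The $X$-stabilizer of check qubit $x$ becomes $x\oplus\partial^B x$. This is exactly the image of $X^B$ under $\mathrm{id}\oplus\partial^B$ in Diagram \eqref{eq:low-lvl-measurement}. The $X$-checks $\partial^A$ of $A$ are unchanged.

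A $Z$-check $\delta^A z$ of $A$ becomes $\delta^A z\oplus\delta^B\delta^A z$. In the diagram this is $g^\top=\delta^B\delta^A$, so the conjugated $Z$-checks are precisely the codifferential of the complex in \eqref{eq:low-lvl-measurement}.

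The errors transform in the same way. The $Z$-error $e_Z(Q^A)$ picks up a component $\delta^B e_Z(Q^A)$ on $Q(X^B)$. The error $e_Z(X^B)$ is unchanged, and $e_X(Q^A)$ is unchanged.

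\textbf{Step 3: measurement.} Step 3 of Algorithm \ref{alg:low-lvl-wCNOTs} is now precisely Algorithm \ref{alg:low-lvl-measurement}, with effective physical error $e_Z(X^B)+\delta^B e_Z(Q^A)$ on $Q(X^B)$. By Lemma \ref{lem:low-lvl-measurement}, the readout is
\[
\delta^B\ell_Z(Q^A)+e_Z(X^B)+\delta^B e_Z(Q^A).
\]
Choosing the logical representative appropriately gives the claimed form $\delta^B(e_Z(Q^A)+\ell_Z(Q^A))+e_Z(X^B)$. The output stabilizers and residual physical errors are then read off directly from Algorithm \ref{alg:low-lvl-measurement}.

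\textbf{Main obstacle.} The hardest part is the bookkeeping of phases and error representatives modulo logicals. One must check that the shifted error $\delta^B e_Z(Q^A)$ lands in the discarded sector in a way consistent with the Lemma. In particular, one needs that the residual $Z$-type physical error on $Q^A$ is $e_Z(Q^A)+\ell_Z(Q^A)$ and not something involving $e_Z(X^B)$. This follows because qubits in $Q(X^B)$ are discarded and Lemma \ref{lem:overcomplete-physicals} works modulo $\ker\delta$. I will verify this through the valid-measurement condition, in the same way as in the proof of Lemma \ref{lem:low-lvl-measurement}.
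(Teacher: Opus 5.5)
Your proposal is correct and follows the paper's proof essentially step for step. You initialize, conjugate by the CNOT layer (using $Z_t\mapsto Z_cZ_t$ and $X_c\mapsto X_cX_t$) to reach the stabilizer structure of Diagram \eqref{eq:low-lvl-measurement} with effective check-qubit error $e_Z(X^B)+\delta^B e_Z(Q^A)$, and then invoke Lemma \ref{lem:low-lvl-measurement}. The ``main obstacle'' you flag is already handled by that lemma, since it allows an arbitrary physical error on $Q(X^B)$, and the final readout form follows by linearity rather than by any choice of logical representative.
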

\begin{proof}
    After initialization (line 1), the state is stabilized by
    \begin{equation}
        \begin{tikzpicture}[baseline]
        \matrix(a)[matrix of math nodes, nodes in empty cells, nodes={minimum size=25pt},
        row sep=1.5em, column sep=1.5em,
        text height=1.25ex, text depth=0.25ex]
        {&X^B & Q(X^B)\\
         X^A &Q^A & Z^A\\};
        \path[->,font=\scriptsize]
        (a-1-2) edge node[above]{$\rm{id}$}  (a-1-3)
        (a-2-1) edge node[above]{$\partial^A$}  (a-2-2)
        (a-2-2) edge node[above]{$\partial^A$}  (a-2-3);
        \end{tikzpicture}
    \end{equation}
    By applying a CNOT, the state is stabilized by
    \begin{equation}
        \begin{tikzpicture}[baseline]
        \matrix(a)[matrix of math nodes, nodes in empty cells, nodes={minimum size=25pt},
        row sep=1.5em, column sep=1.5em,
        text height=1.25ex, text depth=0.25ex]
        {&X^B & Q(X^B)\\
         X^A &Q^A & Z^A\\};
        \path[->,font=\scriptsize]
        (a-1-2) edge node[above]{$\rm{id}$}  (a-1-3)
        (a-2-1) edge node[above]{$\partial^A$}  (a-2-2)
        (a-2-2) edge node[above]{$\partial^A$}  (a-2-3)
        (a-1-2) edge node[right]{$\partial^B$}  (a-2-2)
        (a-1-3) edge node[right]{$g=\partial^A \partial^B$}  (a-2-3);
        \path[->, >=stealth, dashed, font=\scriptsize]
        (a-1-3) edge (a-2-2);
        \end{tikzpicture}
    \end{equation}
    where the dashed arrow denotes the CNOT circuit implementing $\partial^B$.
    Note that CNOT maps control $c$ and target qubit Paulis as follows
    \begin{align}
        Z_c &\mapsto Z_c\\
        Z_t &\mapsto Z_c Z_t \\
        X_c &\mapsto X_c X_t \\
        X_t &\mapsto X_t
    \end{align}
    Hence, after line 2, the state has $Z$-type physical $e_Z(Q^A)\in Q^A$ and $e_Z(X^B) +\delta^B e_Z(Q^A) \in Q(X^B)$.
    The statement then follows from Lemma \ref{lem:low-lvl-measurement}.
\end{proof}
\begin{remark}
    Note that the only difference between Algorithm \ref{alg:high-lvl-measurement} and \ref{alg:low-lvl-wCNOTs} is the measurement error as described in Remark \ref{rem:phys-meas-errs}.
\end{remark}


\subsubsection{Syndrome Extraction}
\label{sec:syndrome-extraction}
In this section, we shall apply the low level implementation of a measurement in the context of syndrome extraction of a data code.
At a high-level, we shall utilize the following corollary, which implies that if an input state has physical errors, those error can be detected via subsequential measurement without modifying the stabilizer code.

\begin{corollary}[Subsequent Measurement]
    Suppose that  $X^B$ is a copy of $X^A$ with $\partial^A = \partial^B$.
    Then Algorithm \ref{alg:low-lvl-wCNOTs} has readout $\delta^A e_Z(Q^A) +e_Z(X^B)$ with output state stabilized by $A$ with unknown physical $e_{\bullet}(Q^A)$.
\end{corollary}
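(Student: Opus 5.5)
The plan is to read this off directly from the Relation lemma, i.e.\ the correctness of Algorithm \ref{alg:low-lvl-wCNOTs}, by specializing to $X^B\cong X^A$ with $\partial^B=\partial^A$. Since $\delta^B=(\partial^B)^{\top}=(\partial^A)^{\top}=\delta^A$, the guaranteed readout $\delta^B(e_Z(Q^A)+\ell_Z(Q^A))+e_Z(X^B)$ becomes $\delta^A e_Z(Q^A)+\delta^A\ell_Z(Q^A)+e_Z(X^B)$. The logical $\ell_Z(Q^A)$ is an element of $\ker\delta^A$, so the middle term vanishes. This leaves exactly the claimed readout $\delta^A e_Z(Q^A)+e_Z(X^B)$.

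Next I would identify the output stabilizer group, using the same Relation lemma.

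For the $X$-part, the output is stabilized by $X$-Paulis in $\im\partial^A+\im\partial^B=\im\partial^A$, with physical $e_Z(Q^A)+\ell_Z(Q^A)$. By Remark \ref{rem:phys-err} and Lemma \ref{lem:canonical-iso}, the physical is only defined modulo $\ker\delta^A$. Since $\ell_Z(Q^A)\in\ker\delta^A$, this class equals that of $e_Z(Q^A)$: the pairing $\bra e_Z+\ell_Z|\partial^A x\ket=\bra e_Z|\partial^A x\ket$ for every $x$.

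For the $Z$-part, the output is stabilized by $Z$-Paulis in $\im\delta^A\cap(\im\partial^B)^{\perp}=\im\delta^A\cap(\im\partial^A)^{\perp}$, with physical $e_X(Q^A)$. Here $\im\delta^A$ means the $Z$-check image of the bottom row complex $A$. Because $A$ is a complex ($\partial^A\partial^A=0$), the $Z$-checks are orthogonal to the $X$-checks, so $\im\delta^A\subseteq(\im\partial^A)^{\perp}$ and the intersection is all of $\im\delta^A$.

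Together, the output is stabilized by the full code $A$ with physical $e_\bullet(Q^A)$, unchanged modulo logicals.

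The only step needing care is the $\ell_Z$ absorption. One must check that the "physical modulo logical" convention is exactly the equivalence used in the Relation lemma, namely modulo $\ker\delta^A$ via Lemma \ref{lem:canonical-iso}. One must also check that the validity condition \eqref{eq:valid-measurement} of Lemma \ref{lem:overcomplete-physicals} is automatically satisfied when $\im\partial^B=\im\partial^A$. The latter holds because $\ker\delta^A+\ker\delta^B=\ker\delta^A$, and the readout is by construction consistent with $e_Z$ up to $\ker\delta^A$ and the discarded-qubit error $e_Z(X^B)$. I expect this bookkeeping, rather than any computation, to be the main obstacle.
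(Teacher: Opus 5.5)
Your proposal is correct and takes the same route as the paper. The paper states this corollary without a separate proof, as an immediate specialization of the Relation lemma (correctness of Algorithm~\ref{alg:low-lvl-wCNOTs}). Setting $\delta^B=\delta^A$ removes $\ell_Z(Q^A)\in\ker\delta^A$ from the readout and absorbs it into the physical modulo logicals. The relation $\partial^A\partial^A=0$ then gives $\im\delta^A\subseteq(\im\partial^A)^\perp$. The validity bookkeeping you flag as the main obstacle is already handled inside that lemma, so nothing further needs checking.
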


Let us further elaborate the syndrome extraction as follows.

\begin{definition}[Code Gadget]
    \label{def:code-gadget}
    Given a CSS code $D=X^D\to Q^D\to Z^D$ with basis $\cX^D,\cQ^D,\cZ^D$ and (co)differential $\partial^D$ ($\delta^D$), the corresponding \textbf{code gadget} is the collection of
    \begin{itemize}
        \item Qubits $\cQ^D$ and check qubits $\cQ(X^D),\cQ(Z^D)$
        \item CNOT circuits such that a CNOT from check qubits $Q(X)$ to data qubits $Q$ can be implemented in $O(1)$ circuit depth based on the differential $\partial:Q(X)\cong X\to Q$.
        Similarly, a CNOT  from data qubits $Q$ to check qubits $Q(Z)$ can be implement in $O(1)$ depth based on $\partial:X\to Q(Z)\cong Z$.
    \end{itemize}
    We shall always assume that qubits can be initialized transversally relative to the $X$ or $Z$ basis with unknown but presumably small physical errors, and can be measured transversally perfectly.
\end{definition}

The following algorithm then addresses the initialization and subsequent measurement of a syndrome extraction process.
\begin{algorithm}
\caption{\textsc{Syndrome Extraction}}
\label{alg:syndrome-extract}
\begin{algorithmic}[1]
    \Require Data code $D$ gadget.
    \Ensure Readout
    \begin{align}
        \delta^D \tilde{e}_Z(Q^D) +e_Z(X^D) \in X^D \\
        \partial^D \tilde{e}_X(Q^D)+ e_X(Z^D) \in Z^D
    \end{align}
    for some $\tilde{e}_Z(Q^D),\tilde{e}_X(Q^D),e_Z(X^D),e_X(Z^D)$, and output state stabilized by $D$ with physical error $\tilde{e}_{\bullet}(Q^D)$ mod logicals
    \State If input state is not given, initialize $Q^D$ in $Z$-basis with unknown physical $\tilde{e}_{X}(Q^D)$
    \State Initialize $Q(X^D),Q(Z^D)$ in the $X,Z$ basis, with unknown physical $e_{Z}(X^D),e_{X}(Z^D)$, respectively
    \State Apply $O(1)$ depth CNOT circuit for $\partial$
    \State Measure out $Q(X^D)$ transversally in the $X$-basis with readout
    \begin{equation}
        \delta^D \tilde{e}_Z(Q^D) +e_Z(X^D)
    \end{equation}
    for some $\tilde{e}_Z(Q^D)$ so that output state has $Z$-type physical measurement $\tilde{e}_Z(Q^D)$ mod logicals
    \State Measure out $Q(Z^D)$ transversally in the $Z$-basis with readout
    \begin{equation}
        \partial^D \tilde{e}_X(Q^D) + e_X(Z^D)
    \end{equation}
    so that the output state has $X$-type physical measurement $e_X(Q^D)$ mod logicals
\end{algorithmic}
\end{algorithm}


\begin{lemma}[Syndrome Extraction]
    \label{lem:syndrome-extraction}
    Algorithm \ref{alg:syndrome-extract} is correct.
    Moreover, if an input state stabilized by $D$ with physical $e_{\bullet}(Q^D)$ (e.g., subsequent measurement) is given, then the output is as given by Algorithm \ref{alg:syndrome-extract} with $\tilde{e}_{\bullet}(Q^D) = e_{\bullet}(Q^D)$.
\end{lemma}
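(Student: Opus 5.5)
The statement to prove is Lemma~\ref{lem:syndrome-extraction}: Algorithm~\ref{alg:syndrome-extract} is correct, and its output matches the stated form when an input state stabilized by $D$ is supplied. I would prove it by reducing Algorithm~\ref{alg:syndrome-extract} to two applications of Algorithm~\ref{alg:low-lvl-wCNOTs}, whose correctness is established in the Relation lemma. One application handles the $X$-checks and the other, after exchanging the roles of $X$ and $Z$, handles the $Z$-checks. The subsequent-measurement corollary then takes care of the case where an input state is given.

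First, the initialization. If no input state is provided, line 1 prepares $Q^D$ in the $Z$-basis. By Example~\ref{ex:initialization}, with $X$ and $Z$ interchanged, the resulting state is stabilized by the $Z$-type Paulis of $D$ up to an unknown $X$-type physical $\tilde e_X(Q^D)$. The $Z$-type physical can be taken to be zero modulo logicals. If an input state stabilized by $D$ with physical $e_\bullet(Q^D)$ is given instead, I set $\tilde e_\bullet = e_\bullet$. Line 2 then initializes the check qubits $Q(X^D)$ and $Q(Z^D)$, with unknown physicals $e_Z(X^D)$ and $e_X(Z^D)$ respectively.

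Second, I apply the Relation lemma with $A=D$ and with $X^B$ a copy of $X^D$ satisfying $\partial^B=\partial^D$. By the code-gadget assumption (Definition~\ref{def:code-gadget}), this CNOT layer has depth $O(1)$. The lemma gives the $X$-readout $\delta^D(\tilde e_Z+\ell_Z)+e_Z(X^D)$. Since $\ell_Z\in\ker\delta^D$, this equals $\delta^D\tilde e_Z+e_Z(X^D)$, and the $X$-stabilizers are unchanged because $\im\partial^B=\im\partial^D$. The subsequent-measurement corollary then shows that the output remains stabilized by $D$, with the $Z$-physical modified only by a logical, which is harmless because everything is taken modulo logicals. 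For the $Z$-checks I use the dual version of the Relation lemma, with $X$ and $Z$ swapped and CNOTs running from data to $Q(Z^D)$. Its proof is identical with the roles of control and target exchanged. This gives the readout $\partial^D\tilde e_X+e_X(Z^D)$.

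The main obstacle is that the paper applies both CNOT families within a single $O(1)$-depth circuit before measuring either set of check qubits, rather than performing the two measurements in sequence. I would handle this by arguing that the two circuits can be reordered into sequential blocks. The data-to-$Z$-check CNOTs and the $X$-check-to-data CNOTs commute at the level of Pauli propagation, because $\partial^D\partial^D=0$: each $X$-check overlaps each $Z$-check on an even number of qubits, so the extra $Z$ propagation onto $Q(X^D)$ contributed by the $Z$-check circuit cancels. Once the circuit is written as two sequential blocks, the two measurements can be analyzed one after the other. Finally, because the $X$-measurement leaves the $X$-type physical unchanged and vice versa, the two readouts do not interfere with each other, and the claimed output follows.
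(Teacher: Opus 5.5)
Your proposal is correct and is essentially the paper's argument. The paper's proof is the one-line reduction to Lemma~\ref{lem:low-lvl-measurement}, which is exactly what the Relation lemma you invoke rests on. Your $\partial^D\partial^D=0$ block-commutation step is valid but not needed, since each check-register measurement already commutes with the other CNOT block, which never acts on that register.

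The one slip is the no-input round, where you cannot take $A=D$ because the state is not yet stabilized by the $X$-checks. Take $A$ to be the $Z$-basis product code instead. Then $\ell_Z$ is arbitrary and is absorbed into $\tilde e_Z$, and the post-measurement $Z$-stabilizers $\ker\delta^D\supseteq\im\delta^D$ still contain those of $D$.
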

\begin{remark}[Fault Tolerance with Threshold]
    The physical errors can then be corrected via the general error correction procedure \cite{gottesman2013fault}.
    Specifically, after $T\ge d$ rounds of subsequent syndrome extractions of Algorithm \ref{alg:syndrome-extract} where $d$ is the code distance of $D$, one can build a spacetime fault complex.
    A threshold can then be obtained by applying min-weight correction relative to the fault complex.
    See Section \ref{sec:syndrome-fault-complex} for details.
\end{remark}
\begin{proof}
    The proofs follows from Lemma \ref{lem:low-lvl-measurement}.
\end{proof}


\subsubsection{Logical Measurement}
\label{sec:logical-measure}
In this section, we shall consider the low level implementation of a logical measurement on a data code
\begin{equation}
    D=X^D \to Q^D\to Z^D
\end{equation}
Conventionally, an ancilla $A$ is constructed which is then \textit{merged} with the data code $D$ to form a deformed code.
The checks of the deformed code are then measured in a manner similar to syndrome extraction in Section \ref{sec:syndrome-extraction}, so that the logical measurement can be extracted using the general error correction procedure \cite{gottesman2013fault} after $d(D)$ rounds of subsequential measurement.

Let us first consider the ancilla gadget.
Currently, most ancillas for logical measurements follow the same underlying procedure, which first constructs a measurement graph defined as follow.
\begin{definition}[Measurement Graph]
    Given, say an $X$-type, logical operator $\ell^{\star}_X\in Q^D$ of data code $D$, the corresponding \textbf{measurement (multi) graph} $\cG=(\cE,\cV)$ is constructed with vertices $|q\ket \in \cV$ that are in one-to-one correspondence of qubits in $q\in \ell^{\star}_X\subseteq\cQ^{D}$.
    For every $z$ check of $D$, the support of $z$ must have an even overlap with $\ell^{\star}_X$ and thus can be paired up arbitrarily.
    Define an edge $\|q\bar{q};z \ket\in \cE $ for each pair of overlapping qubits $q,\bar{q}$ and each overlapping check $z$.
\end{definition}

Given the measurement graph of logical $\ell^{\star}_X$ with associated graph complex $G=E\to V$, a corresponding cell (co)complex
\begin{equation}
    \label{eq:ancilla}
    A=X^A \to Q^A \to Z^A
\end{equation}
is then constructed with $X$-checks, qubits and $Z$-checks corresponding to the vertices, edges and faces, such that $A$ is connected as a graph with only trivial (contractible) cycles, qLDPC, has low overhead \cite{yuan2026parsimonious,williamson2026low}, and contains the graph complex $G$ as a subgraph so that a natural projection (chain) map is depicted by the diagram below
\begin{equation}
    \label{eq:subgraph}
    \begin{tikzpicture}[baseline]
    \matrix(a)[matrix of math nodes, nodes in empty cells, nodes={minimum size=25pt},
    row sep=1.5em, column sep=1.5em,
    text height=1.25ex, text depth=0.25ex]
    {X^{A} &Q^A & Z^{A}\\
     V & E & \\};
    \path[->,font=\scriptsize]
    (a-1-1) edge (a-1-2)
    (a-1-2) edge (a-1-3)
    (a-2-1) edge (a-2-2)
    (a-1-1) edge node[right]{$\pi$} (a-2-1)
    (a-1-2) edge node[right]{$\pi$} (a-2-2);
    \end{tikzpicture}
\end{equation}
We refer to $A$ as a \textbf{(logical) ancilla} with respect to the measurement graph of logical $\ell^{\star}_X$.

Further note that by construction, the measurement graph induces a natural inclusion (chain) map $\iota$ depicted as follows.
\begin{equation}
    \label{eq:inclusion}
    \begin{tikzpicture}[baseline]
    \matrix(a)[matrix of math nodes, nodes in empty cells, nodes={minimum size=25pt},
    row sep=1.5em, column sep=1.5em,
    text height=1.25ex, text depth=0.25ex]
    { & V & E \\
     X^{D} & Q^{D} & Z^{D}\\};
    \path[->,font=\scriptsize]
    (a-1-2) edge (a-1-3)
    (a-2-1) edge (a-2-2)
    (a-2-2) edge (a-2-3)
    (a-1-2) edge node[right]{$\iota$} (a-2-2)
    (a-1-3) edge node[right]{$\iota$} (a-2-3);
    \end{tikzpicture}
\end{equation}
such that $\iota$ maps
\begin{align}
    |q\ket &\mapsto q\\
    \|q\bar{q};z\ket &\mapsto z
\end{align}
Hence, the composition $g=\iota \pi$ defines a chain map as depicted below
\begin{equation}
    \label{eq:cone}
    \begin{tikzpicture}[baseline]
    \matrix(a)[matrix of math nodes, nodes in empty cells, nodes={minimum size=25pt},
    row sep=1.5em, column sep=1.5em,
    text height=1.25ex, text depth=0.25ex]
    { & X^{A} & Q^{A} & Z^{A} \\
     X^{D} & Q^{D} & Z^{D} &\\};
    \path[->,font=\scriptsize]
    (a-1-2) edge (a-1-3)
    (a-1-3) edge (a-1-4)
    (a-2-1) edge (a-2-2)
    (a-2-2) edge (a-2-3)
    (a-1-2) edge node[right]{$g$} (a-2-2)
    (a-1-3) edge node[right]{$g$} (a-2-3);
    \end{tikzpicture}
\end{equation}
The corresponding $\cone(g:A\to D)$ is the \textbf{deformed code}, which we label as
\begin{equation}
    \cone(g) = X \to Q\to Z
\end{equation}
equipped with (co)differential $\partial$ ($\delta =\partial^{\top}$).
We refer to $g$ as the \textbf{(relative) connectivity} from the ancilla to the data code. Note that despite the construction of $A$, the map $g$ only depends on its restriction to the measurement graph, and thus $g$ defines the relative connectivity from the measurement graph to the data code.

\begin{definition}[(Relative) Connectivity Gadget]
    The \textbf{relative connectivity gadget} is a $O(1)$ depth circuit which implements a CNOT from ancilla check qubits $Q(X^A)$ to data qubits $Q^D$ based on
    \begin{equation}
        g:Q(X^A) \cong X^A \to Q^D
    \end{equation}
    and similarly for $g:Q^A \to Q(Z^D)$.
\end{definition}


\begin{theorem}[Logical Measurement]
    \label{thm:logical-measure}
    Algorithm \ref{alg:logical-measure} is correct.
    In particular, if no errors occur, then the Algorithm provides readout $s\in X$ such that the $X$-type logical $\ell^{\star}_X$ has measurement
    \begin{equation}
        \bra s|\cX^A\ket \in \F_2
    \end{equation}
    where $\cX^A\in X^A$ denotes the sum over all $X$-checks (vertices) of the ancilla.
    Moreover, if the input state is also stabilized by logical $\ell^{\star}_X$ (e.g., subsequent measurement), then we replace $\tilde{e}_Z(Q^D)\to e_Z(Q^D)$ in the output (but $\tilde{e}_Z(Q^A)$ is still unknown since it captures the logical measurement).
\end{theorem}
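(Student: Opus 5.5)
The plan is to follow the algorithm step by step, tracking the stabilizer group after each stage with the abelianized bookkeeping of Section \ref{sec:CSS-prelim}. This means Lemma \ref{lem:canonical-iso} for physicals modulo logicals, and Lemma \ref{lem:overcomplete-physicals} for each measurement. The low-level implementation is then reduced to the high-level one through Lemma \ref{lem:low-lvl-measurement} and the CNOT relation lemma (Algorithm \ref{alg:low-lvl-wCNOTs}), which together convert physical errors on discarded check qubits into measurement errors.

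\textbf{Initialization.} We start from a state stabilized by $D$ with physicals $\tilde e_\bullet(Q^D)$, and initialize the ancilla qubits $Q^A$ transversally in the $Z$-basis (Example \ref{ex:initialization} with $X\leftrightarrow Z$). The joint state is then stabilized by $D\oplus(Q^A\xrightarrow{\mathrm{id}}Q^A)$ with $Z$-type generators. In particular, the $Z$-checks of $\cone(g)$ are already stabilizers, because they are generated by $Z^D$ (extended through $g^\top$, i.e.\ $\delta^D+g^\top$) and the $Z$ operators on $Q^A$.

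\textbf{Measuring the $X$-checks.} Next we measure the $X$-checks of the deformed code, $\partial=\begin{pmatrix}\partial^A&0\\ g&\partial^D\end{pmatrix}$ on $X^A\oplus X^D$. Applying Lemma \ref{lem:overcomplete-physicals} gives two things. First, the readout is $s=\delta\big(\tilde e_Z(Q)+\ell_Z(Q)\big)$ for some $\ell_Z\in\ker\delta^{\rm init}$, plus measurement errors in the low-level version. Second, the output is stabilized by $X$-Paulis in $\im\partial^D+\im\partial$ and by $Z$-Paulis in the intersection with $(\im\partial)^\perp$, which is exactly the $Z$-part of $\cone(g)$.

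The key identity is that the ancilla complex is a connected cell complex, so every edge has two vertices. Hence $\cX^A\in\ker\partial^A$, and therefore $\partial\cX^A=g\cX^A=\ell^\star_X$. Consequently
\[
\bra s|\cX^A\ket=\bra \tilde e_Z+\ell_Z\,|\,\partial\cX^A\ket=\bra \tilde e_Z+\ell_Z\,|\,\ell^\star_X\ket,
\]
which is precisely the phase (measurement) of $\ell^\star_X$ in the sense of Lemma \ref{lem:equivalence}. This gives the claimed readout formula when no errors occur.

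\textbf{Measuring out the ancilla.} Finally, the ancilla qubits are measured out in the $Z$-basis, and we apply Lemma \ref{lem:overcomplete-physicals} again with the roles of $X$ and $Z$ swapped. The surviving $X$-stabilizers are those in $\im\partial$ whose restriction to $Q^A$ vanishes. This set is $\im\partial^D+g\ker\partial^A$, which equals $\im\partial^D+\langle\ell^\star_X\rangle$ by $H_1(A)=0$ and connectivity. The $X$-readout $\partial^A x^A+e_X(Q^A)$ on $Q^A$ then determines a correction $g x^A$ on the data, as in Eq.~\eqref{eq:ancilla-readout-0}.

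\textbf{Subsequent rounds.} For the last claim, if the input is already stabilized by $\ell^\star_X$, then $\langle \tilde e_Z|\ell^\star_X\rangle$ is already fixed. The $Z$-physical on $Q^D$ therefore needs no logical shift and can be taken as $e_Z(Q^D)$. The unknown $\tilde e_Z(Q^A)$ still absorbs the measured logical, because the ancilla is re-initialized.

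\textbf{Main obstacle.} I expect the hardest step to be the modulo-logical bookkeeping in these last two measurements: showing that the valid-measurement condition holds, and that the residual $X$-physical on the data is exactly $e_X(Q^D)+g x^A$ up to elements of $\im\partial^D+g\ker\partial^A$. My first attempt would be to write the residual explicitly as $c=a+\alpha=b+\beta$ as in Lemma \ref{lem:overcomplete-physicals}, using $\ker\delta^A$-elements to push all ancilla contributions into $g x^A$.
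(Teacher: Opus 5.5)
Your proposal is correct and follows essentially the paper's own route. You step through the algorithm with the overcomplete-measurement lemma, use exactness of the ancilla to identify the surviving $Z$-stabilizers as $\im\delta$ and the ancilla readout as $\partial^A x^A+e_X(Q^A)$, and read off the logical via $\partial^A\cX^A=0$ and $g\cX^A=\ell^\star_X$. The paper instead propagates CNOT errors through the whole coupled system directly, but that difference is only presentational. For the step you flag as the main obstacle: the residual $X$-physical on the data is only defined modulo $\ker\partial^D$, so it suffices to use the chain-map identity $\partial^D\ell(Q^D)=g\,\partial^A x^A=\partial^D g x^A$, exactly as the paper does.
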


\begin{proof}
    After line 3, the coupled state is stabilized by the (CSS code) complex (with some errors) depicted by the following diagram
    \begin{equation}
        \begin{tikzpicture}[baseline]
        \matrix(a)[matrix of math nodes, nodes in empty cells, nodes={minimum size=25pt},
        row sep=1.5em, column sep=1.5em,
        text height=1.25ex, text depth=0.25ex]
        { & \red{X^{A}} & \darkgreen{Q^{A}} & \blue{Z^{A}} \\
         \red{X^{D}} & \darkgreen{Q^{D}} & \blue{Z^{D}} &\\};
        \end{tikzpicture}
    \end{equation}
    where, to keep the diagram simple, we have done some simplifications.
    Specifically, red nodes indicate that the corresponding (check) qubits are initialized in the $X$-basis, e.g., \red{$X^D$} indicates that the system has check qubits stabilized by the CSS code $X^D \to Q(X^D)$.
    Blue nodes indicate the corresponding qubits are initialized in the $Z$-basis, e.g., \blue{$Z^A$} indicates $Q(Z^A) \to Z^A$.
    Green node \darkgreen{$Q^{D}$} indicates the qubits in $Q^D$ are stabilized by the data CSS code $D$, while \darkgreen{$Q^A$} indicates that the qubits in $Q^A$ are initialized in the $Z$-basis.
    In all cases, there are corresponding unknown physical errors $e_{\bullet}(X),e_{\bullet}(Q),e_{\bullet}(Z)$(mod logicals).
    Equivalently, the state is stabilized by the complex with the fully drawn out diagram
    \begin{equation}
    \label{eq:logical-code-line3}
    \begin{tikzpicture}[baseline]
    \matrix(a)[matrix of math nodes, nodes in empty cells, nodes={minimum size=20pt},
    row sep=0.2em, column sep=1.5em,
    text height=1.25ex, text depth=0.25ex]
    {&\red{X} & \red{Q(X)}\\ \\
    &\darkgreen{Q^A} & \darkgreen{Z(Q^A)}\\
     \darkgreen{X'^D} &\darkgreen{Q^D} & \darkgreen{Z'^D}\\ \\
     \blue{Q(Z)} & \blue{Z}&\\};
    \path[->,red,font=\scriptsize]
    (a-1-2) edge node[above]{$\rm{id}$}  (a-1-3);
    \path[->,blue,font=\scriptsize]
    (a-6-1) edge node[above]{$\rm{id}$}  (a-6-2);
    \path[->,green!40!black,font=\scriptsize]
    (a-3-2) edge node[above]{$\rm{id}$}  (a-3-3)
    (a-4-1) edge node[above]{$\partial^D$}  (a-4-2)
    (a-4-2) edge node[above]{$\partial^D$}  (a-4-3);
    \draw[green!40!black]
    (a-3-2) edge node[]{$\oplus$}  (a-4-2)
    (a-3-3) edge node[]{$\oplus$}  (a-4-3);
    \end{tikzpicture}
    \end{equation}
    where $X'^D,Z'^D$ denote copies of $X^D,Z^D$ which were stabilized due to the input.

    After line 4, the system is stabilized (with some errors) by
    \begin{equation}
        \begin{tikzpicture}[baseline]
        \matrix(a)[matrix of math nodes, nodes in empty cells, nodes={minimum size=25pt},
        row sep=1.5em, column sep=1.5em,
        text height=1.25ex, text depth=0.25ex]
        { & \blue{X^{A}} & \darkgreen{Q^{A}} & \red{Z^{A}} \\
         \blue{X^{D}} & \darkgreen{Q^{D}} & \red{Z^{D}} &\\};
        \path[->,font=\scriptsize]
        (a-1-2) edge (a-1-3)
        (a-2-1) edge (a-2-2)
        (a-1-2) edge node[right]{$g$} (a-2-2)
        (a-1-3) edge (a-1-4)
        (a-2-2) edge (a-2-3)
        (a-1-3) edge node[right]{$g$} (a-2-3);
        \end{tikzpicture}
    \end{equation}
    where the colored arrows indicate that we have omitted other arrows so that overall sequence is a complex if the check qubits are also shown.
    Equivalently, the full diagram is given by (compare with \cite{yuan2026unified})
    \begin{equation}
    \label{eq:logical-code-line4}
    \begin{tikzpicture}[baseline]
    \matrix(a)[matrix of math nodes, nodes in empty cells, nodes={minimum size=20pt},
    row sep=0.15em, column sep=1.5em,
    text height=1.25ex, text depth=0.25ex]
    {&\red{X} & \red{Q(X)}\\ \\
    &\darkgreen{Q^A} & \darkgreen{Z(Q^A)}\\
     \darkgreen{X'^D} &\darkgreen{Q^D} & \darkgreen{Z'^D}\\ \\
     \blue{Q(Z)} & \blue{Z}&\\};
    \path[->,red,font=\scriptsize]
    (a-1-2) edge node[above]{$\rm{id}$}  (a-1-3);
    \path[->,blue,font=\scriptsize]
    (a-6-1) edge node[above]{$\rm{id}$}  (a-6-2);
    \path[->,green!40!black,font=\scriptsize]
    (a-3-2) edge node[above]{$\rm{id}$}  (a-3-3)
    (a-4-1) edge node[above]{$\partial^D$}  (a-4-2)
    (a-4-2) edge node[above]{$\partial^D$}  (a-4-3);
    \draw[green!40!black]
    (a-3-2) edge node[]{$\oplus$}  (a-4-2)
    (a-3-3) edge node[]{$\oplus$}  (a-4-3);
    \path[->,font=\scriptsize]
    (a-1-2) edge node[right]{$\partial$} (a-3-2)
    (a-1-3) edge node[right]{$\cdots$} (a-3-3)
    (a-4-2) edge node[right]{$\partial$} (a-6-2)
    (a-4-1) edge node[right]{$\cdots$} (a-6-1);
    \end{tikzpicture}
    \end{equation}
    where are vertical maps form a chain map.
    Note that as shown in Lemma \ref{lem:low-lvl-measurement}, the omitted arrows (which resemble $g$ in Eq. \eqref{eq:low-lvl-measurement}) are irrelevant once we measure out the check qubits.
    Note that CNOT maps control $c$ and target qubit Paulis as follows
    \begin{align}
        Z_c &\mapsto Z_c\\
        Z_t &\mapsto Z_c Z_t \\
        X_c &\mapsto X_c X_t \\
        X_t &\mapsto X_t
    \end{align}
    Therefore, the system maps physical errors on qubits given as follows:
    \begin{align}
        Q(X): e_Z(X) &\mapsto e_Z(X) +\delta e_Z(Q) \\
        Q: e_{\bullet} (Q) &\mapsto e_{\bullet}(Q) \\
        Q(Z): e_X(Z) &\mapsto e_X(Z) +\partial e_X(Q)
    \end{align}
    where
    \begin{align}
        e_{\bullet}(Q) = e_{\bullet}(Q^A) \oplus e_{\bullet}(Q^D)
    \end{align}
    where recall that we can take $e_Z(Q^A) =0$ since $Q^A$ is initialized in the $Z$-basis.

    By Lemma \ref{lem:low-lvl-measurement}, measuring out the check qubits $Q(X)=Q(X^A) \oplus Q(X^D)$ will output the readout
    \begin{equation}
        \delta \tilde{\ell}_Z(Q) + e_Z(X) +\delta e_Z(Q) \in X
    \end{equation}
    for some $Z$-type logical $\tilde{\ell}_Z(Q)$ of the green portion in Eq. \eqref{eq:logical-code-line3}.
    In particular, since ancillas \darkgreen{$Q^A$} are initialized in the $Z$-basis of Eq. \eqref{eq:logical-code-line3}, $\tilde{\ell}_Z(Q^A)$ is arbitrary in \darkgreen{$Q^A$}.
    Similarly, since \darkgreen{$Q^D$} are stabilized by the data code $D$ due to the input, we see that $\tilde{\ell}_Z(Q^D)$ is a logical of data code $D$.
    Let
    \begin{equation}
        \tilde{e}_Z(Q) = e_Z(Q) +\tilde{\ell}_Z(Q)
    \end{equation}
    Then we see that line 5 is correct.
    Specifically, the state after line 5 is stabilized by $X$ Paulis of the form $\im \partial$ on $Q$ with physical errors $\tilde{e}_Z(Q)$ (mod logicals), where we have ignored the stabilizers on the $Z$-check qubits $Q(Z)$, since they will be measured out in the next line.
    Similarly, the state after line 5 is stabilized by $Z$ Pauli of the form
    \begin{equation}
        (Q^A \oplus \im \delta^D) \cap (\im \partial)^\perp
    \end{equation}
    with physical errors $e_X(Q)$. Note that if $q^A \oplus \delta^D z^D \in (\im \partial)^\perp =\ker \delta$, then $\delta^A (q^A +g^{\top}z^D)=0$ and thus $q^A \in g^{\top} z^D +\im \delta^A$ where we utilized the fact that the ancilla has no internal logicals so that $\ker \delta^A = \im \delta^A$. Hence,
    \begin{equation}
        (Q^A \oplus \im \delta^D) \cap (\im \partial)^\perp = \im \delta
    \end{equation}

    By Lemma \ref{lem:low-lvl-measurement}, measuring out the check qubits $Q(Z)=Q(Z^A) \oplus Q(Z^D)$ will output the readout
    \begin{equation}
        \partial \ell_X(Q) + e_X(Z) +\partial e_X(Q) \in Z
    \end{equation}
    for some $X$-type logical $\ell_X(Q)$ of the stabilizer code after line 5. However, note that after line 5, the state is stabilize by $Z$-check of the form $\im \delta$, and thus $\partial \ell_X(Q)=0$ so that the readout is
    \begin{equation}
        e_X(Z) +\partial e_X(Q) \in Z
    \end{equation}
    and thus the state has physical $X$-error $e_X(Q) +\ell_X(Q)=e_X(Q)$ mod logicals so that line 6 is correct.
    More specifically, the state after line 6 is stabilized by the deformed code $\cone(g)$ with errors $\tilde{e}_Z(Q),e_X(Q)$ mod logicals.

    By Lemma \ref{lem:low-lvl-measurement}, we see that measuring ancilla qubits $Q^A$ transversally gives readout
    \begin{equation}
        s_X(Q^A) + e_X(Q^A) \in Q^A
    \end{equation}
    with $s_X(Q^A) \in \ell(Q^A)$ where $\ell(Q^A)$ is the projection of some $\ell(Q)\in \ker \partial$ onto $Q^A$. In particular, we see that $\ell(Q^A) \in \ker \partial^A$ and thus $s_X(Q^A) = \partial ^A x^A$ since the ancilla only has trivial logicals.
    Also note that $\partial^D \ell(Q^D) = g\ell(Q^A)=\partial^D gx^A$.
    The output is then stabilized by
    \begin{itemize}
        \item $X$-Paulis of the form $\im \delta^D$ and $\ell^D$ with error $\tilde{e}_Z(Q^D)$ mod logicals
        \item $Z$-Paulis of the form $\im \partial^D$ with error $e_X(Q^D) +\ell(Q^D)=e_X(Q^D) +gx^A$ mod logicals.
    \end{itemize}

    In particular, note that the measurement of $\ell^D$ is given by
    \begin{align}
        \bra \tilde{e}_Z(Q^D)|\ell^D\ket &= \bra \tilde{e}_Z(Q^D)| g\cX^A\ket \\
        &= \bra \delta \tilde{e}_Z(Q)|\cX^A\ket
    \end{align}
    where we utilized the fact that $\partial^A \cX^A =0$. Note that if no errors occurs, then we obtain readout $\delta \tilde{e}_Z(Q)$ and thus can determine the logical measurement exactly.
\end{proof}

\begin{algorithm}
\caption{\textsc{Logical Measurement}}
\label{alg:logical-measure}
\begin{algorithmic}[1]
    \Require An input state stabilized by data code $D$ with unknown physical $e_{\bullet}(Q^D)$.
    Data code gadget $D$. Ancilla code gadget $A$ with respect to $X$-type logical $\ell^{\star}_X$ and relative connectivity gadget $g$.
    \Ensure Readout
    \begin{align}
        \delta \tilde{e}_Z(Q) +e_Z(X) \in X \\
        \partial e_X(Q)+ e_X(Z) \in Z \\
        \partial^A x^A +e_X(Q^A)\in Q^A
    \end{align}
    where
    \begin{align}
        e_X(Z) & = e_X(Z^A) \oplus e_X(Z^D) \\
        e_X(Q) & = e_X(Q^A) \oplus e_X(Q^D) \\
        e_Z(X) & = e_Z(X^A) \oplus e_Z(X^D) \\
        \tilde{e}_Z(Q) &= \tilde{e}_Z(Q^A) \oplus \tilde{e}_Z(Q^D)
    \end{align}
    where non-tilde elements $e$ denote (presumably small) initialization errors, and
    \begin{equation}
        \delta^D \tilde{e}_Z(Q^D) =\delta^D e_Z(Q^D)
    \end{equation}
    The output state is stabilized by data code $D$ and $X$-type logical $\ell^{\star}_X$ with physical $\tilde{e}_Z(Q^D),e_X(Q^D)+gx^A$.
    In particular, $\ell^{\star}_X$ has measurement
    \begin{equation}
        \bra \tilde{e}_Z(Q)|\ell^{\star}_X\ket = \bra \delta \tilde{e}_Z(Q)| \cX^A\ket \in \F_2
    \end{equation}
    where $\cX^A$ is the sum over all $X$-checks (vertices) in $X^A$.
    \State Initialize $Q^{A}$ in the $Z$ basis with unknown error $e_{X}(Q^A)$, and $e_{Z}(Q^A)=0$
    \State Initialize $Q(X)=Q(X^A)\oplus Q(X^D)$ in the $X$ basis with unknown error
    \begin{equation}
        e_Z(X) = e_Z(X^A) \oplus e_Z(X^D)
    \end{equation}
    \State Initialize $Q(Z)=Q(Z^A) \oplus Q(Z^D)$ in the $Z$ basis with unknown error
    \begin{equation}
        e_X(Z) = e_X(Z^A) \oplus e_X(Z^D)
    \end{equation}
    \State Apply $O(1)$ depth CNOT circuit for $\partial^D,\partial^A,g$
    \State Measure out $Q(X)$ transversally in the $X$-basis with readout
    \begin{equation}
        \delta \tilde{e}_Z(Q) +e_Z(X) \in X
    \end{equation}
    for some $\tilde{e}_Z(Q)\in Q$
    such that $\delta^D \tilde{e}_Z(Q^D) = \delta^D e_Z(Q^D)$ so that the output state has physical error $\tilde{e}_Z(Q)$ mod logicals.
    \State Measure out $Q(Z)$ transversally in the $Z$-basis with readout
    \begin{equation}
        \partial e_X(Q)+ e_X(Z) \in Z
    \end{equation}
    so that the output state has physical error $e_X(Q)$ mod logicals.
    \State Measure out $Q^A$ transversally in the $Z$-basis (opposite of logical type $\ell^{\star}_X$) with readout
    \begin{equation}
        \partial^A x^A +e_X(Q^A)\in Q^A
    \end{equation}
    for some $x^A\in X^A$.
    so that the output state is stabilized by
    \begin{itemize}
        \item $X$-Paulis in $\im \partial^D + \ell^{\star}_X$ with physical $\tilde{e}_Z(Q^D)$ mod logicals (not only commute with $X$-Paulis of $D$ but also $\ell^{\star}_X$)
        \item $Z$-Paulis in $\im \delta^D$ with physical $e_X(Q^D) + gx^A$ mod logicals
    \end{itemize}
\end{algorithmic}
\end{algorithm}

\subsection{Low Level Non-CSS Measurement}
\label{sec:low-lvl-non-CSS-measure}

\begin{lemma}[Low Level]
\label{lem:low-lvl-non-CSS-measurement}
Algorithm \ref{alg:low-lvl-non-CSS-measurement} is correct.
\end{lemma}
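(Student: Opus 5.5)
The plan is to mirror the proof of Lemma \ref{lem:low-lvl-measurement}, replacing the CSS bookkeeping by the symplectic bookkeeping of Lemma \ref{lem:canonical-iso-non-CSS} and Lemma \ref{lem:overcomplete-non-CSS-physicals}. I expect Algorithm \ref{alg:low-lvl-non-CSS-measurement} to have the following shape. The input state is stabilized by a symplectic complex built from the data complex $A=S^A\to P^A\to S^{A*}$ together with check qubits $Q(S^B)$, coupled through the stabilizer map $\sigma^B:S^B\to P^A$. The step of the algorithm is a transversal measurement of the check qubits. Its output is a readout of the form $\hat{\sigma}^B(\ell^A)+e(S^B)$ for some logical $\ell^A\in\ker\hat{\sigma}^A$, together with an output state stabilized by $(\im\sigma^A+\im\sigma^B)\cap\ker\hat{\sigma}^B$ with physical $e(P^A)+\ell^A$ modulo logicals.

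\textbf{Step 1 (record the input).} I would write the total physical error as $e=e(S^B)\oplus e(P^A)$, where $e(S^B)\in Q(S^B)$ and $e(P^A)\in P^A$. By Lemma \ref{lem:canonical-iso-non-CSS}, this element is well defined only modulo logicals of the joint complex.

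\textbf{Step 2 (apply the valid-measurement condition).} The transversal measurement of $Q(S^B)$ is itself a measurement of a commuting (indeed single-qubit, CSS) subgroup. Its readout $s$ can therefore be identified with a physical element of $Q(S^B)$, exactly as in Remark \ref{rem:transversal-measurement}; I would justify this with Lemma \ref{lem:canonical-iso-non-CSS} applied to the trivially symplectic identity complex on the check qubits. Lemma \ref{lem:overcomplete-non-CSS-physicals} then states that $e+s$ lies in the sum of the logical space of the joint complex and the logical space of the transversal measurement.

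\textbf{Step 3 (describe the joint logicals).} The joint complex is a cone-type complex whose block structure encodes the coupling $\sigma^B$ between the check qubits and $P^A$. I would show that any logical $\ell$ of it satisfies two conditions: its $P^A$-component $\ell^A$ lies in $\ker\hat{\sigma}^A$, and its check-qubit component is forced to equal $\hat{\sigma}^B\ell^A$. This is the symplectic analogue of the identities $\ell_Z(Q^A)\in\ker\delta^A$ and $\ell_Z(X^B)=\delta^B\ell_Z(Q^A)$ in the CSS proof. It follows that $s=\hat{\sigma}^B\ell^A+e(S^B)$.

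\textbf{Step 4 (read off the output state).} The check qubits are discarded after measurement. A second application of Lemma \ref{lem:overcomplete-non-CSS-physicals} then gives the residual physical error $e(P^A)+\ell^A$ modulo logicals, and it identifies the new stabilizer group as $(\im\sigma^A+\im\sigma^B)\cap\ker\hat{\sigma}^B$.

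The main obstacle is Step 3. In the CSS case the $X$- and $Z$-sectors decouple, so one only tracks $Z$-type physicals on $X$-check qubits. In the non-CSS setting a single check may act as $X$, $Y$ or $Z$, and the effective coupling between check qubits and data involves $\lambda^A$ through $\hat{\sigma}^B=\phi\sigma^{B\top}\lambda^A$. So the step requiring care is writing the joint complex as a genuine symplectic complex with the correct mixed symplectic form, analogous to the block form used in the proof of Theorem \ref{thm:symplectic-embedding}. To handle this, I would first choose the check-qubit sector so that the measured Pauli is pure $X$ on $Q(S^B)$. I would then verify by direct block computation that its symplectic complement, restricted to $P^A$, is exactly $\ker\hat{\sigma}^A$ with the forced check-qubit component.
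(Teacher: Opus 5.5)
Your route is the paper's: apply Lemma \ref{lem:overcomplete-non-CSS-physicals} to the joint complex and the transversal $X$-measurement of the check qubits, read the outcome off the $Z$-sector $Q(\bar X^A)$, then apply the lemma a second time for the output state. Your $P^A,\sigma^B,\hat\sigma^B$ are the algorithm's $P^D,g,\hat g$.

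\textbf{Step~3 is false as stated.} Write a joint Pauli as $(q_X,p,q_Z)\in Q(X^A)\oplus P^D\oplus Q(\bar X^A)$. The joint syndrome map is $\hat\sigma(q_X,p,q_Z)=(\hat\sigma^D(p+gq_X),\,\hat g p+q_Z)$. So a joint logical has $q_Z=\hat g p$, but only $p+gq_X$ is forced into $\ker\hat\sigma^D$, not $p$ itself.

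The trivial logical $\sigma x^A=(x^A,gx^A,0)$ is a counterexample. Its data part is $gx^A$, and $\hat\sigma^D g\neq 0$ whenever the measured Paulis anticommute with the initial stabilizers, which is exactly the surgery application.

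The missing idea is the normalization the paper performs. Write the element of $\ker\hat\sigma$ as $\ell+\sigma x^A$ with $\ell=(0,\ell(P^D),\ell(\bar X^A))$. Since $\sigma x^A$ has no $Q(\bar X^A)$-component (the defect map vanishes), it does not enter the readout, so $s=e(\bar X^A)+\ell(\bar X^A)$. Only for this normalized $\ell$ do you get $\ell(P^D)\in\ker\hat\sigma^D$ and $\ell(\bar X^A)=\hat g\ell(P^D)$.

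The identity $\hat g p=\hat g(p+gq_X)$ behind this is precisely the hypothesis $\hat g g=0$. Your proposal never uses that hypothesis, yet without it the joint object is not even a complex.

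\textbf{Step~4 is too quick.} The second application of the lemma concerns the joint code, so it gives the joint group $\im\sigma\cap(Q(X^A)\oplus P^D)+Q(X^A)=Q(X^A)\oplus(\im\sigma^D\cap\ker\hat g+\im g)$. You must compute this: $\sigma(x^A,s^D)$ has $Q(\bar X^A)$-component $\hat g\sigma^D s^D$. It matches your $(\im\sigma^D+\im g)\cap\ker\hat g$ only because $\im g\subseteq\ker\hat g$.

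The lemma also gives physical $e+\ell+\sigma x^A$. After discarding the check qubits you must show the leftover $gx^A$ on the data is trivial for the new code. The paper does this with two checks:
\begin{itemize}
\item $\Lambda^D(gx^A,\sigma^D s^D)=(\hat g\sigma^D s^D)(x^A)=0$ for $\sigma^D s^D\in\ker\hat g$;
\item $\Lambda^D(gx^A,g\tilde x^A)=0$, by $\hat g g=0$.
\end{itemize}
Alternatively, since $\sigma x^A\in Q(X^A)\oplus P^D$, you can absorb it into the transversal-measurement term and take the physical to be $e+\ell$. You then still need that each data stabilizer $v$ lifts to $(0,v,0)$ in the joint group, and that the mixed symplectic form does not pair $P^D$ with the check qubits.

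With these two repairs your argument coincides with the paper's proof.
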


\begin{proof}
    By Lemma \ref{lem:overcomplete-non-CSS-physicals}, the transversal measurement outcome $s(\bar{X}^A)\in Q(\bar{X}^A)$ (in the $Z$-sector of the ancilla qubits $P^A$) must satisfy
    \begin{equation}
        e(P) + s(\bar{X}^A) \in \ker \hat{\sigma} +\left(Q(X^{A}) \oplus P^D\right)
    \end{equation}
    Hence, there must exists $\ell =0\oplus \ell (P^D) \oplus \ell(\bar{X}^A)\in \ker \hat{\sigma}$ and $x^{A}\in X^{A}$ such that
    \begin{equation}
        \begin{pmatrix}
            e(X^{A}) \\
            e(P^D) \\
            e(\bar{X}^A)
        \end{pmatrix}
        +
        \underbrace{
        \begin{pmatrix}
            0 \\
            \ell(P^D) \\
            \ell(\bar{X}^A)
        \end{pmatrix}
        +
        \sigma x^A
        }_{\in  \ker\hat{\sigma}}
        =
        \begin{pmatrix}
            0 \\
            0 \\
            s
        \end{pmatrix}
        +
        \begin{pmatrix}
            Q(X^A) \\
            P^D \\
            0
        \end{pmatrix}
    \end{equation}
    Hence,
    \begin{equation}
        s=e(\bar{X}^A) +\ell(\bar{X}^A)
    \end{equation}
    where we utilized the fact that the defect map $p=0$.
    As we shall see in Algorithm \ref{alg:low-lvl-non-CSS-wCNOTs}, this is due to measurement of commuting Paulis so that $\hat{g}g=0$.
    Since $\ell$ is a logical, we see that $\ell(\bar{X}^A) = \hat{g} \ell(P^D)$ where $\ell(P^D)\in \ker \hat{\sigma}^D$ is a logical of the data code $D$ and thus the readout is
    \begin{equation}
        e(\bar{X}^A) + \hat{g} \ell(P^D)
    \end{equation}
    Again, by Lemma \ref{lem:overcomplete-non-CSS-physicals}, we see that the output state is stabilized with physical measurement $e(P)+\ell+\sigma x^{A}$, with stabilizer of the form
    \begin{align}
        &\im \sigma \cap (Q(X^A)\oplus P^D) +Q(X^{A}) \\
        &= Q(X^{A}) \oplus \underbrace{(\im \sigma^D \cap \ker \hat{g} + \im g)}_{P^D}
    \end{align}
    Since we discard ancilla qubits $P^A = Q(X^{A} ) \oplus Q(\bar{X}^A)$, we only care about the phase info in $\im \sigma^D \cap \ker \hat{g}$.
    In particular, we see that if $s^D\in S^D$ is such that $\sigma^D s^D \in \im \sigma^D \cap \ker \hat{g}$, then the output state is stabilized by the Pauli element corresponding to $\sigma^D s^D$ with measurement
    \begin{align}
        &\Lambda^D(e(P^D) +\ell(P^D) +gx^{A},\sigma^D s^D) \\
        &=\Lambda^D(e(P^D)+\ell(P^D),\sigma^D s^D) +(\hat{g} \sigma^D s^D)(x^{A})\\
        &= \Lambda^D(e(P^D)+\ell(P^D),\sigma^D s^D)
    \end{align}
    Similarly, note that
    \begin{equation}
        \Lambda^D (g x^{A} ,g\tilde{x}^A) =  0
    \end{equation}
    where we used the fact that $\hat{g} g=0$.
    Hence, the output state has physical measurement $e(P^D) +\ell(P^D)$ mod logicals.
\end{proof}

\begin{lemma}[High Level]
    Algorithm \ref{alg:high-lvl-non-CSS-measurement} is correct.
\end{lemma}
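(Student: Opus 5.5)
The plan is to derive the correctness of Algorithm \ref{alg:high-lvl-non-CSS-measurement} directly from Lemma \ref{lem:overcomplete-non-CSS-physicals}, in the same way that the CSS high-level statement (Algorithm \ref{alg:high-lvl-measurement}) follows from Lemma \ref{lem:overcomplete-physicals}. Since the algorithm itself is not reproduced above, I assume it has the same shape as its CSS counterpart. The input is a state stabilized by $A=S^A\to P^A\to S^{A*}$ with unknown physical $e\in P^A$ modulo logicals, together with a commuting family $\sigma^B:S^B\to P^A$ to be measured. The claimed readout is a valid syndrome $\hat{\sigma}^B(e+\ell)$ for some logical $\ell\in\ker\hat{\sigma}^A$, and the claimed output stabilizer group has abelianization $(\im\sigma^A+\im\sigma^B)\cap\ker\hat{\sigma}^B$, with physical measurement $e+\ell$ modulo logicals.

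First, I translate the raw measurement outcome into a physical Pauli. The outcome is a functional $\mu^B\in(\im\sigma^B)^*$, and Lemma \ref{lem:canonical-iso-non-CSS}, applied to the symplectic complex $S^B\to P^A\to S^{B*}$, gives a class $[b]\in P^A/\ker\hat{\sigma}^B$ with $\mu^B=\Lambda(b,\cdot)$ on $\im\sigma^B$. In terms of $b$, the readout in $S^{B*}$ is $\hat{\sigma}^Bb$.

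Second, I invoke Lemma \ref{lem:overcomplete-non-CSS-physicals}. Any physically realized outcome is valid, so $e+b\in\ker\hat{\sigma}^A+\ker\hat{\sigma}^B$. Choose $\alpha\in\ker\hat{\sigma}^A$ and $\beta\in\ker\hat{\sigma}^B$ with $e+\alpha=b+\beta=:c$, and set $\ell:=\alpha$. Then
\begin{equation}
\hat{\sigma}^Bb=\hat{\sigma}^B(b+\beta)=\hat{\sigma}^B(e+\ell),
\end{equation}
which is the advertised readout; it lies in $\im\hat{\sigma}^B$, so it is a valid syndrome. The same lemma states that the output is stabilized by the Paulis with abelianization $(\im\sigma^A+\im\sigma^B)\cap\ker\hat{\sigma}^B$, with physical measurement $c=e+\ell$ modulo logicals. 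This matches the Ensure clause.

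The main obstacle is justifying that $\ell$ can be taken to be a genuine logical of $A$ rather than an arbitrary element of $\ker\hat{\sigma}^A+\ker\hat{\sigma}^B$, and that the ambiguity in choosing $\alpha,\beta$ is harmless. The choice of $\ell=\alpha\in\ker\hat{\sigma}^A$ settles the first point. For the second, I would show that two decompositions differ by an element of $\ker\hat{\sigma}^A\cap\ker\hat{\sigma}^B$. Such an element commutes with every generator in $\im\sigma^A+\im\sigma^B$, so it does not change any phase of the output stabilizer group, and $c$ is well-defined modulo logicals of the output code. If the actual Ensure clause also records the case $\hat{g}g=0$, as in Lemma \ref{lem:low-lvl-non-CSS-measurement}, I would add one line: the measured Paulis commute among themselves, $\Lambda(\im\sigma^B,\im\sigma^B)=0$, which is exactly what Lemma \ref{lem:overcomplete-non-CSS-physicals} presupposes.
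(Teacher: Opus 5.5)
Your argument is correct and is essentially the intended one. The paper states this lemma without a separate proof, leaving it as a direct consequence of Lemma~\ref{lem:overcomplete-non-CSS-physicals} (the same lemma that drives the proof of Lemma~\ref{lem:low-lvl-non-CSS-measurement}), with your $A,\sigma^B,\hat{\sigma}^B$ playing the roles of the algorithm's $D,g,\hat{g}$. One detail should be added: the algorithm writes the output stabilizers as $\im\sigma^D\cap\ker\hat{g}+\im g$, and this equals your $(\im\sigma^D+\im g)\cap\ker\hat{g}$ by the modular law, because $\hat{g}g=0$ gives $\im g\subseteq\ker\hat{g}$.
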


\begin{algorithm}
\caption{\textsc{Low-Level non-CSS Measurement}}
\label{alg:low-lvl-non-CSS-measurement}
\begin{algorithmic}[1]
    \Require State stabilized by symplectic complex (so that the diagram commutes and $\hat{g}g=0$) with stabilizer maps $\sigma$
    \begin{equation}
        \label{eq:low-lvl-non-CSS-measurement}
        \begin{tikzpicture}[baseline]
        \matrix(a)[matrix of math nodes, nodes in empty cells, nodes={minimum size=25pt},
        row sep=1.5em, column sep=1.5em,
        text height=1.25ex, text depth=0.25ex]
        {&X^{A} & Q(X^{A})\\
         S^D &P^D & \bar{S}^{D}\\
         Q(\bar{X}^A) & \bar{X}^A &\\};
        \path[->,font=\scriptsize]
        (a-1-2) edge node[above]{$\rm{id}$}  (a-1-3)
        (a-2-1) edge node[above]{$\sigma^{D}$}  (a-2-2)
        (a-2-2) edge node[above]{$\hat{\sigma}^D$}  (a-2-3)
        (a-1-2) edge node[right]{$g$}  (a-2-2)
        (a-1-3) edge node[right]{$\hat{\sigma}^D g$}  (a-2-3)
        (a-2-1) edge node[right]{$\hat{g}\sigma^D$} (a-3-1)
        (a-2-2) edge node[right]{$\hat{g}$} (a-3-2)
        (a-3-1) edge node[above]{$\mathrm{id}$} (a-3-2);
        \end{tikzpicture}
    \end{equation}
    with (unknown) physical
    \begin{equation}
        e(P)= e(P^D) \oplus e(\bar{X}^A)
    \end{equation}
    mod logicals.
    \Ensure Readout
    \begin{equation}
        \hat{g} \ell(P^D)+e(\bar{X}^A) \in \bar{X}^A
    \end{equation}
    for some logical $\ell(P^D) \in \ker \hat{\sigma}^D$ and output state stabilized by
    \begin{equation}
        \im \sigma^D \cap \ker \hat{g} +\im g
    \end{equation}
    with physical $e(P^D)+\ell(P^D)$ mod logicals.
    \State Measure out $P^A=Q(X^{A})\oplus Q(\bar{X}^A)$ (in the $X$-basis correspond to $Q(X^{A})$) with physical measurement (in the $Z$-sector) $\in Q(\bar{X}^A) \cong \bar{X}^A$
\end{algorithmic}
\end{algorithm}

\begin{algorithm}
\caption{\textsc{High-Level non-CSS Measurement}}
\label{alg:high-lvl-non-CSS-measurement}
\begin{algorithmic}[1]
    \Require State stabilized by non-CSS code with symplectic complex
    \begin{equation}
        D= S^D \xrightarrow{\sigma^D} P^D \xrightarrow{\hat{\sigma}^D} {\bar{S}^{D}}
    \end{equation}
    with (unknown) physical $e(P^D)$ mod logicals.
    Circuit to perform (commuting) measurement $g:X^{A} \to P^D$ so that $\hat{g}g=0$
    \Ensure Valid syndrome
    \begin{equation}
        \hat{g}(e(P^D) +\ell(P^D)) \in \bar{X}^A \cong X^A
    \end{equation}
    for some logical $\ell(P^D) \in \ker \hat{\sigma}^D$ and output stabilized by Paulis of the form
    \begin{equation}
        \im \sigma^D \cap \ker \hat{g} +\im g
    \end{equation}
    \State Perform (commuting) measurement $g:X^{A} \to P^D$ with valid syndrome $\in \bar{X}^A\cong X^{A}$
\end{algorithmic}
\end{algorithm}

\begin{algorithm}
\caption{\textsc{Low-Level non-CSS with CNOTs}}
\label{alg:low-lvl-non-CSS-wCNOTs}
\begin{algorithmic}[1]
    \Require State stabilized by non-CSS code with symplectic complex
    \begin{equation}
        D= S^D \xrightarrow{\sigma^D} P^D \xrightarrow{\hat{\sigma}^D} {\bar{S}^{D}}
    \end{equation}
    with (unknown) physical $e(P^D)$ mod logicals.
    \Ensure Measure commuting Paulis in $g:X^{A} \to P^D$ with readout
    \begin{equation}
        \hat{g}\tilde{e}(P^D) +e(\bar{X}^A) \in \bar{X}^A
    \end{equation}
    where $e(\bar{X}^A)$ is unknown, and $\tilde{e}(P^D)+e(P^D) \in \ker \hat{\sigma}^D$ is some unknown logical of $D$, and the output state is stabilized by Paulis of the form
    \begin{equation}
        \im \sigma^D \cap \ker \hat{g} +\im g
    \end{equation}
    with physical $\tilde{e}(P^D)$ mod logicals.
    \State Initialize ancilla qubits $P^A=Q(X^{A})\oplus Q(\bar{X}^A)$ transversally in $X$-basis with (unknown) physicals in the $Z$-sector $e(\bar{X}^A)\in Q(\bar{X}^A)\cong \bar{X}^A$  mod logicals
    \State Apply CNOT and CZ circuits from ancilla qubits to data qubits based on $g_X,g_Z$ (defined as the projection of $g$ onto the $X$- and $Z$-sector), respectively
    \State Measure out $P^A$ in the $X$-basis (correspond to $Q(X^{A})$) with (physical) measurement in the $Z$-sector $\in Q(\bar{X}^A) \cong \bar{X}^A$
\end{algorithmic}
\end{algorithm}

\begin{lemma}[Relation]
    \label{lem:low-lvl-non-CSS-wCNOTs}
    Algorithm \ref{alg:low-lvl-non-CSS-wCNOTs} is correct.
\end{lemma}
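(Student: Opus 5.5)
The claim is that Algorithm \ref{alg:low-lvl-non-CSS-wCNOTs} (initialize, CNOT/CZ couple, measure) realizes the high-level commuting measurement. I would prove it in the same way as the CSS Relation lemma for Algorithm \ref{alg:low-lvl-wCNOTs}. First I track the stabilizer group through the circuit by Clifford conjugation. Then I reduce the final step to Lemma \ref{lem:low-lvl-non-CSS-measurement}.

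The first step is initialization. After line 1 the ancilla qubits $P^A=Q(X^A)\oplus Q(\bar X^A)$ sit in the $X$-basis. The state is therefore stabilized by the direct sum of the data complex $D$ and the trivial complex $X^A\xrightarrow{\mathrm{id}}Q(X^A)$, with physical error $e(P^D)\oplus e(\bar X^A)$. As in Example \ref{ex:initialization}, only the $Z$-sector error on the ancilla needs to be recorded.

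The second step is the coupling, line 2. I would use the conjugation rules
\[
\mathrm{CNOT}: X_c\mapsto X_cX_t,\ Z_t\mapsto Z_cZ_t; \qquad \mathrm{CZ}: X_c\mapsto X_cZ_t,\ X_t\mapsto Z_cX_t,
\]
while $Z$ on either qubit is fixed by CZ.

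For each ancilla check $x\in X^A$, the single-qubit stabilizer $X_x$ is mapped to $X_x$ times the data Pauli whose abelianization is $gx=g_Xx\oplus g_Zx$. This produces the arrow $g$ in Diagram \eqref{eq:low-lvl-non-CSS-measurement}.

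For a data stabilizer $\sigma^Ds$, every data $X$ component picks up $Z$'s on the ancilla through the CZ's. Every data $Z$ component picks up $Z$'s through the CNOT's, since the data qubits are CNOT targets. The total ancilla $Z$-sector contribution is $\Lambda^D(gx,\sigma^Ds)$ summed over $x$, i.e.\ $\hat g\sigma^D s$. This matches the vertical arrow $S^D\to Q(\bar X^A)$.

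Physical errors propagate in the same way. A data error $e(P^D)$ deposits $\hat g\,e(P^D)$ onto the ancilla $Z$-sector, while the ancilla $Z$-errors are unchanged. So the propagated state is exactly the one in Algorithm \ref{alg:low-lvl-non-CSS-measurement}, with ancilla error $e(\bar X^A)+\hat g e(P^D)$. One must also check that the $\hat g g=0$ hypothesis there holds. It does, because $\Lambda^D(gx,gx')=0$: the measured operators commute by assumption.

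The third step is readout. Applying Lemma \ref{lem:low-lvl-non-CSS-measurement} gives readout $\hat g\ell(P^D)+e(\bar X^A)+\hat g e(P^D)$. Setting $\tilde e(P^D)=e(P^D)+\ell(P^D)$, this is the claimed readout $\hat g\tilde e(P^D)+e(\bar X^A)$, with the required $\tilde e+e\in\ker\hat\sigma^D$. The output stabilizer $\im\sigma^D\cap\ker\hat g+\im g$ and the physical error $\tilde e(P^D)$ mod logicals are read off directly from that lemma.

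I expect the main obstacle to be phase and ordering bookkeeping in step two. Two issues arise there. First, CZ's and CNOT's sharing a qubit need not commute when $g_X$ and $g_Z$ both act on the same data qubit, i.e.\ a $Y$ component. Second, the product of the transformed operators must carry the correct sign, so that the abelianized description is faithful.

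My first attempt would be to order the gates so that all CNOT's precede all CZ's, and then check that reordering changes each generator only by a sign. Any such sign is absorbed into the phase data $\mu$ of Lemma \ref{lem:equivalence-non-CSS}, and hence into the unknown physical error mod logicals. Since the abelianized statement only concerns stabilizers up to such phases, this suffices.
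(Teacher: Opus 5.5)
Your route is the same as the paper's: push generators and errors through the CNOT and CZ layers as symplectic matrices, then hand off to Lemma~\ref{lem:low-lvl-non-CSS-measurement}. Two parts of your argument are correct. The data stabilizers are handled correctly: $\sigma^D s$ acquires $\hat g\sigma^D s$ on the ancillas. The error propagation is also correct: $e(P^D)$ deposits $\hat g e(P^D)$.

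The step that fails is ``$X_x$ is mapped to $X_x$ times the data Pauli $P(gx)$ with abelianization $gx$.'' With the CNOT layer first, $X_x\mapsto X_xX^{g_Xx}$. The CZ layer then sends every data $X_t$ to $Z_{x'}X_t$ for each ancilla $x'$ whose CZ touches $t$. Hence $X_x\mapsto X_x\,P(gx)\,Z(cx)$ with $c=g_Z^{\top}g_X$. Equivalently, the product of the two layer matrices has $c$ in the $Q(X^A)\to Q(\bar X^A)$ slot.

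That slot is exactly the defect $p$ of Diagram~\eqref{eq:low-lvl-non-CSS-measurement}. Since $c+c^{\top}=\hat g g$, commutation only makes $c$ symmetric, not zero. Its diagonal entry $c_{xx}$ is the parity of the number of $Y$'s in $gx$. Its off-diagonal entry $c_{xx'}$ equals $1$ for interleaved commuting pairs such as $X_1Z_2$ and $Z_1X_2$.

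This changes the abelianized generator itself, so it is not a sign and cannot be absorbed into $\mu$. The proof of Lemma~\ref{lem:low-lvl-non-CSS-measurement} uses $p=0$ to get $s=e(\bar X^A)+\ell(\bar X^A)$. With $p=c$, the readout picks up $c\,x^A$ for an unconstrained $x^A$. Concretely, take a single $Y_t$: starting from $|+\rangle_a$ and applying CNOT then CZ gives $\tfrac{1}{\sqrt2}(|0\rangle|\psi\rangle+i|1\rangle Y_t|\psi\rangle)$. The $X_a$ outcome is then uniformly random for every $\psi$.

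Reordering does not fix this. Your ordering (all CNOTs first) realizes the full $c$. The CZ-first order gives $g_X^{\top}g_Z=c$ as well. No ordering removes the diagonal part.

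The missing ingredient is a correction layer on the ancillas. Because $c$ is symmetric, there is a diagonal Clifford on the ancillas with symplectic matrix $\bigl(\begin{smallmatrix} I&0\\ c&I\end{smallmatrix}\bigr)$, and it cancels the block. Concretely, apply $S$ on each $x$ with $c_{xx}=1$ and CZ on each ancilla pair with $c_{xx'}=1$. This layer acts trivially on the propagated error, whose $Q(X^A)$ component is zero. After it, the combined matrix is exactly $\bigl(\begin{smallmatrix} I&&\\ g&I&\\ &\hat g&I\end{smallmatrix}\bigr)$, and the rest of your argument goes through verbatim. Alternatively, apply controlled-$P(gx)$ ancilla by ancilla, which removes the off-diagonal part, with an $S$ correction on generators with an odd number of $Y$'s.

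The paper's own proof has the same omission. Multiplying its two displayed layer matrices gives $g_Z^{\top}g_X$ in the $(4,1)$ slot, which its stated combined matrix drops. So the algorithm as literally written (plain CNOT and CZ layers) needs this amendment. You flagged the right spot, but the ``only a sign'' argument does not close it.
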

\begin{proof}
    After initialization (line 1), the state is stabilized by
    \begin{equation}
        \begin{tikzpicture}[baseline]
        \matrix(a)[matrix of math nodes, nodes in empty cells, nodes={minimum size=25pt},
        row sep=1.5em, column sep=1.5em,
        text height=1.25ex, text depth=0.25ex]
        {&X^{{A}} & Q(X^{{A}})\\
         S^D &P^D & \bar{S}^{D}\\
         Q(\bar{X}^A) & \bar{X}^A &\\};
        \path[->,font=\scriptsize]
        (a-1-2) edge node[above]{$\rm{id}$}  (a-1-3)
        (a-2-1) edge node[above]{$\sigma^{D}$}  (a-2-2)
        (a-2-2) edge node[above]{$\hat{\sigma}^D$}  (a-2-3)
        (a-3-1) edge node[above]{$\mathrm{id}$} (a-3-2);
        \end{tikzpicture}
    \end{equation}

    Note that CNOT maps control $c$ and target qubit Paulis as follows
    \begin{align}
        X_c &\mapsto X_c X_t \\
        X_t &\mapsto X_t \\
        Z_t &\mapsto Z_c Z_t \\
        Z_c &\mapsto Z_c
    \end{align}
    Hence, if a CNOT circuit is applied from ancilla qubits $Q(X^{A})\cong X^{A}$ to data qubits $Q^D\cong Q_X^D$ based on $g_X:X^{A}\to Q_X^D$, then the stabilizer map $\sigma$ of the height-2 cone
    \begin{equation}
        \mapsto
        \begin{pmatrix}
            \rm{id} &&& \\
            g_X & \rm{id} && \\
            && \rm{id} &\\
            && g_X^{\top} &\rm{id}
        \end{pmatrix} \sigma
    \end{equation}
    where the matrix is relative to
    \begin{equation}
        Q(X^{A})\oplus Q_X^D\oplus Q_Z^D\oplus Q(\bar{X}^A)
    \end{equation}
    Note that the matrix is block-diagonal with respect to the $X,Z$-decomposition, i.e., $X,Z$ Paulis are mapped to $X,Z$ Paulis, respectively.
    This corresponds to a diagram of the height-2 cone with $p=0$.

    Similarly, a CZ maps control $c$ and target qubit Paulis as follows
    \begin{align}
        X_c &\mapsto X_c Z_t \\
        X_t &\mapsto Z_c X_t \\
        Z_t &\mapsto Z_t \\
        Z_c &\mapsto Z_c
    \end{align}
    Hence, if a CZ is applied from ancilla qubits $Q(X^{A})\cong X^{A}$ to data qubits $Q^D\cong Q_Z^D$ based on $g_Z:X^{A} \to Q_Z^D$, then the stabilizer map $\sigma$ of the height-2 cone
    \begin{equation}
        \mapsto
        \begin{pmatrix}
            \rm{id} &&& \\
             & \rm{id} && \\
            g_Z && \rm{id} &\\
            &  g_Z^{\top} &&\rm{id}
        \end{pmatrix} \sigma
    \end{equation}
    where we note that the matrix representation is no longer block-diagonal, i.e., it mixes $X,Z$ Paulis.
    Note that the matrix is lower-triangular and thus corresponds to a similar height-2 cone diagram with $p=0$.

    In particular, note that the application of CNOT and CZ circuits results in matrix
    \begin{equation}
        \begin{pmatrix}
            \rm{id} &&& \\
            g_X & \rm{id} && \\
            g_Z && \rm{id} &\\
            &  g_Z^{\top} & g_X^{\top} &\rm{id}
        \end{pmatrix}
        =
        \begin{pmatrix}
            \rm{id} && \\
            g & \rm{id} & \\
            &  \hat{g} &\rm{id}
        \end{pmatrix}
    \end{equation}
    Hence, physical $e(P^D)\oplus e(\bar{X}^A)$ maps to
    \begin{equation}
        \mapsto e(P^D)\oplus (\hat{g} e(P^D) +e(\bar{X}^A))
    \end{equation}
    The statement then follows from Lemma \ref{lem:low-lvl-non-CSS-measurement}.
\end{proof}
\begin{remark}
    Note that the only difference between Algorithm \ref{alg:high-lvl-non-CSS-measurement} and \ref{alg:low-lvl-non-CSS-wCNOTs} is the measurement error as described in Remark \ref{rem:phys-meas-errs}.
\end{remark}


\subsubsection{Non-CSS Syndrome Extraction}
\label{sec:non-CSS-syndrome-extraction}
In this section, we shall apply the low level implementation of a measurement in the context of syndrome extraction of a data code.
At a high-level, we shall utilize the following corollary, which implies that if an input state has physical errors, those error can be detected via subsequential measurement without modifying the stabilizer code.

\begin{corollary}[Subsequent Measurement]
    Suppose that  $X^{A}$ is a copy of $S^D$ with $g = \sigma^D$.
    Then Algorithm \ref{alg:low-lvl-non-CSS-wCNOTs} has the same output but with $\tilde{e}(P^D)=e(P^D)$.
\end{corollary}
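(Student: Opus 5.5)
The plan is to specialize Lemma~\ref{lem:low-lvl-non-CSS-wCNOTs} to $g=\sigma^D$ and check that every quantity that could depend on the unknown logical $\ell(P^D)$ collapses. The approach is the non-CSS analogue of the CSS corollary in Section~\ref{sec:syndrome-extraction}: subsequent measurement of the stabilizers of $D$ leaves the code unchanged and simply reports the syndrome of the existing physical error.

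\emph{Step 1 (hypothesis $\hat{g}g=0$).} With $X^A\cong S^D$ and $g=\sigma^D$, the induced map is $\hat{g}=g^{\top}\lambda^D$, composed with the isometry $\bar{X}^A\cong S^D\cong\bar{S}^D$. Thus $\hat{g}$ is exactly the syndrome map $\hat{\sigma}^D=\phi^D\sigma^{D\top}\lambda^D$ up to relabeling of basis elements. Hence $\hat{g}g=\hat{\sigma}^D\sigma^D=0$, because $D$ is a complex (commuting checks). This verifies that Algorithm~\ref{alg:low-lvl-non-CSS-wCNOTs} applies, so the defect map vanishes and Lemma~\ref{lem:low-lvl-non-CSS-measurement} can be invoked.

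\emph{Step 2 (output stabilizer group).} Lemma~\ref{lem:low-lvl-non-CSS-wCNOTs} says the output is stabilized by $\im\sigma^D\cap\ker\hat{g}+\im g$. Since $\ker\hat{g}=\ker\hat{\sigma}^D\supseteq\im\sigma^D$, this equals $\im\sigma^D+\im\sigma^D=\im\sigma^D$. So the output is again stabilized by $D$ itself.

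\emph{Step 3 (removing the logical ambiguity).} The lemma gives a readout $\hat{g}\tilde{e}(P^D)+e(\bar{X}^A)$ with $\tilde{e}(P^D)=e(P^D)+\ell(P^D)$, where $\ell(P^D)\in\ker\hat{\sigma}^D$. First, $\hat{g}\ell(P^D)=\hat{\sigma}^D\ell(P^D)=0$, so the readout equals $\hat{\sigma}^D e(P^D)+e(\bar{X}^A)$. Second, the physical measurement is defined only as a class in $P^D/\ker\hat{\sigma}^D$ (Lemma~\ref{lem:canonical-iso-non-CSS}), and the output code is $D$ by Step~2. Therefore $\tilde{e}(P^D)=e(P^D)+\ell(P^D)$ and $e(P^D)$ represent the same class, so one may take $\tilde{e}(P^D)=e(P^D)$ in the output. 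Concretely, the phases $\Lambda^D(e+\ell,\sigma^D s)=\Lambda^D(e,\sigma^D s)$ agree for every stabilizer $\sigma^D s$.

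The main obstacle is bookkeeping rather than mathematics: one must confirm that the identification $\bar{X}^A\cong S^D\cong\bar{S}^D$ really turns $\hat{g}$ into $\hat{\sigma}^D$, and not into some other copy of it. I would check this directly from the block form of the CNOT/CZ conjugation matrix in the proof of Lemma~\ref{lem:low-lvl-non-CSS-wCNOTs}, where the bottom-row entries are $g_Z^{\top},g_X^{\top}$. These combine to $g^{\top}\lambda^D=\sigma^{D\top}\lambda^D$, which is exactly the syndrome map of $D$ up to $\phi^D$.
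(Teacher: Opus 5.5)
Your proposal is correct and matches the argument the paper leaves implicit: the corollary is stated without proof as an immediate specialization of Lemma~\ref{lem:low-lvl-non-CSS-wCNOTs}, and you rightly confirm via the bottom-row block $(g_Z^{\top},g_X^{\top})=g^{\top}\lambda^D$ that $\hat g=\hat\sigma^D$ up to the isometry $\bar X^A\cong\bar S^D$. Consequently $\hat g g=0$, the output group $\im\sigma^D\cap\ker\hat g+\im g$ collapses to $\im\sigma^D$, and the unknown logical $\ell(P^D)\in\ker\hat\sigma^D$ drops out of both the readout and the physical error modulo the logicals of $D$.
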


Let us further elaborate the syndrome extraction as follows.

\begin{definition}[Non-CSS Code Gadget]
    \label{def:non-CSS-code-gadget}
    Consider a non-CSS code $D=S^D\to P^D\to \bar{S}^D$ in the form of a symplectic complex with (basis) checks $\cS^D$ and ($X,Z$-sector) qubits $\cQ_X^D,\cQ_Z^D\cong \cQ^D$ and stabilizer map $\sigma^D$, and isometry $\phi^D:S^D\to \bar{S}^D$.
    The corresponding \textbf{code gadget} is the collection of
    \begin{itemize}
        \item qubits $\cQ$ and check qubits $\cQ(S^D)$ where $\cQ(S^D)$ is in one-to-one correspondence with checks $\cS^D$.
        For notation consistency, write $Q(S^D)$ to be the $\F_2$ space generated by $\cQ(S^D)$ so that the identity map is an isometry between $Q(S^D)\cong S^D$.
        \item CNOT circuits such that a CNOT from check qubits $Q(S^D)$ to data qubits $Q^D$ can be implemented in $O(1)$ circuit depth based on the $X$-sector generator $\sigma^D_X:Q(S^D)\cong S^D\to Q_X^D$ of $\sigma^D$.
        Similarly, a CZ from check qubits $Q(S^D)$ to data qubits $Q^D$ can be implemented in $O(1)$ depth based on the $Z$-sector generator $\sigma^D_Z:Q(S^D)\cong S^D\to Q_Z^D$.
    \end{itemize}
    We shall always assume that qubits can be initialized transversally relative to the $X$ or $Z$ basis with unknown physical errors (mod logicals), and can be measured transversally perfectly.
\end{definition}

The following two algorithms then addresses the initialization and subsequent measurement of a syndrome extraction process.
\begin{algorithm}
\caption{\textsc{non-CSS Syndrome Extraction}}
\label{alg:non-CSS-syndrome-extract}
\begin{algorithmic}[1]
    \Require Data code $D$ gadget
    \Ensure Readout
    \begin{align}
        \hat{\sigma}^D \tilde{e}(P^D) +e(\bar{S}^D) \in \bar{S}^{D}
    \end{align}
    for some $\tilde{e}(P^D),e(\bar{S}^D)$, and output state stabilized by $D$ with physical $\tilde{e}(P^D)$
    \State Initialize $Q(S^D)$ in the $X$ basis, with unknown physical $e(\bar{S}^D)$ in the $Z$-sector
    \State Apply $O(1)$ depth CNOT and CZ circuit for $\sigma^D:S^D\to P^D$
    \State Measure out $Q(S^D)$ transversally in the $X$-basis with readout
    \begin{equation}
        \hat{\sigma}^D \tilde{e}(P^D) +e(\bar{S}^D)
    \end{equation}
    for some $\tilde{e}(P^D)$ so that the output state has physical measurement $\tilde{e}(P^D)$ mod logicals
\end{algorithmic}
\end{algorithm}


\begin{lemma}[non-CSS Syndrome Extraction]
    \label{lem:non-CSS-syndrome-extraction}
    Algorithm \ref{alg:non-CSS-syndrome-extract} is correct.
    Moreover, if an input state stabilized by $D$ with physical $e(P^D)$ (e.g., subsequent measurement) is given, then the output is as given by Algorithm \ref{alg:non-CSS-syndrome-extract} with $\tilde{e}(P^D) = e(P^D)$.
\end{lemma}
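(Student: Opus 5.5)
The plan mirrors the proof of Lemma \ref{lem:low-lvl-non-CSS-wCNOTs} and of Lemma \ref{lem:syndrome-extraction}: reduce Algorithm \ref{alg:non-CSS-syndrome-extract} to Algorithm \ref{alg:low-lvl-non-CSS-wCNOTs} with the specialization $X^A\mapsto S^D$ and $g\mapsto \sigma^D$.

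First, we identify the check qubits $Q(S^D)$ with the ancilla $Q(X^A)\oplus Q(\bar{X}^A)$ of Algorithm \ref{alg:low-lvl-non-CSS-wCNOTs}. The $X$-sector of these qubits carries $Q(X^A)\cong S^D$, and the $Z$-sector carries $Q(\bar{X}^A)\cong \bar{S}^D$ via the isometry $\phi^D$. Line 1, which initializes $Q(S^D)$ in the $X$-basis with unknown $Z$-sector physical $e(\bar{S}^D)$, is then exactly line 1 of Algorithm \ref{alg:low-lvl-non-CSS-wCNOTs}. Line 2 applies CNOT and CZ according to $\sigma^D_X,\sigma^D_Z$, which are the projections $g_X,g_Z$ of $g=\sigma^D$. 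Line 3 is the final transversal $X$-measurement.

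The only hypothesis of Algorithm \ref{alg:low-lvl-non-CSS-wCNOTs} to check is the commuting condition $\hat{g}g=0$. Here $\hat{g}=g^{\top}\lambda^D$ (up to $\phi^D$) gives
\[
\hat{g}g=\phi^D\sigma^{D\top}\lambda^D\sigma^D=\hat{\sigma}^D\sigma^D=0,
\]
since $D$ is a complex. In particular the defect map may be taken $p=0$, which is what Lemma \ref{lem:low-lvl-non-CSS-measurement} requires.

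Invoking Lemma \ref{lem:low-lvl-non-CSS-wCNOTs}, the readout is $\hat{g}\tilde{e}(P^D)+e(\bar{S}^D)=\hat{\sigma}^D\tilde{e}(P^D)+e(\bar{S}^D)$, where $\tilde{e}(P^D)+e(P^D)\in\ker\hat{\sigma}^D$. The output is stabilized by
\[
\im\sigma^D\cap\ker\hat{g}+\im g=\im\sigma^D\cap\ker\hat{\sigma}^D+\im\sigma^D=\im\sigma^D,
\]
where the last equality uses $\im\sigma^D\subseteq\ker\hat{\sigma}^D$. So the output is stabilized by $D$ itself with physical $\tilde{e}(P^D)$ mod logicals, as claimed.

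When no input state is supplied, an arbitrary initial state of $Q^D$ is projected into the code space with some physical $\tilde{e}(P^D)$, by Lemma \ref{lem:canonical-iso-non-CSS}. This is the case where $\tilde{e}$ is merely "some" element.

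For the "moreover" part, the input is already stabilized by $D$ with physical $e(P^D)$. Then by the Subsequent Measurement corollary, which is the case $X^A=S^D$, $g=\sigma^D$, the logical $\ell(P^D)$ appearing in Lemma \ref{lem:low-lvl-non-CSS-measurement} is invisible. Concretely, $\hat{\sigma}^D\ell(P^D)=0$, so the readout is $\hat{\sigma}^D e(P^D)+e(\bar{S}^D)$. Moreover, $\ell(P^D)$ differs from a stabilizer only by a logical, and the statement is taken mod logicals. Hence we may take $\tilde{e}=e$.

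The main obstacle is bookkeeping rather than substance. One must confirm that the measurement of $Q(S^D)$ does not disturb the phase data on $\im\sigma^D$. In the proof of Lemma \ref{lem:low-lvl-non-CSS-measurement}, this comes from $\Lambda^D(gx^A,\sigma^Ds^D)=(\hat{g}\sigma^Ds^D)(x^A)=0$ together with $\Lambda^D(gx,g\tilde{x})=0$, both of which reduce to $\hat{\sigma}^D\sigma^D=0$. I would spell out this step explicitly and cite the earlier lemmas for the rest.
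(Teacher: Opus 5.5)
Your proposal is correct and follows the paper's route. The paper's proof simply invokes Lemma~\ref{lem:low-lvl-non-CSS-wCNOTs} with the specialization $X^A=S^D$, $g=\sigma^D$ (the Subsequent Measurement corollary), and you make explicit the checks it leaves implicit: that $\hat{g}g=\hat{\sigma}^D\sigma^D=0$, and that the output stabilizer group collapses back to that of $D$.
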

\begin{remark}[Fault Tolerance with Threshold]
    The physical errors can then be corrected via the general error correction procedure \cite{gottesman2013fault}.
    Specifically, after $T\ge d(D)$ rounds of subsequent syndrome extractions of Algorithm \ref{alg:non-CSS-syndrome-extract} where $d(D)$ is the code distance of $D$, one can build a spacetime fault complex.
    A threshold can then be obtained by applying min-weight correction relative to the fault complex.
    See Section \ref{sec:non-CSS-syndrome-fault-complex} for details.
\end{remark}
\begin{proof}
    The proofs follows from Lemma \ref{lem:low-lvl-non-CSS-wCNOTs}.
\end{proof}


\subsubsection{Non-CSS Logical Measurement}
\label{sec:non-CSS-logical-measure}
In this section, we shall consider the low level implementation of a logical measurement on a data code
\begin{equation}
    D=S^D \to P^D\to \bar{S}^D
\end{equation}
Conventionally, an ancilla $A$ is constructed which is then \textit{merged} with the data code $D$ to form a deformed code.
The checks of the deformed code are then measured in a manner similar to syndrome extraction in Section \ref{sec:non-CSS-syndrome-extraction}, so that the logical measurement can be extracted using the general error correction procedure \cite{gottesman2013fault} after $d(D)$ rounds of subsequential measurement.



\begin{theorem}[Non-CSS Logical Measurement]
    \label{thm:non-CSS-logical-measure}
    Algorithm \ref{alg:non-CSS-logical-measure} is correct.
    In particular, if no errors occur, then Algorithm \ref{alg:non-CSS-logical-measure} provides readout
    \begin{align}
        \hat{g}\tilde{e}(P^D) \in \bar{X}^A \\
        \delta^A x^{A} \in Q^{A}_X
    \end{align}
    so that the logical $\ell^{\star}(P^D)$ has measurement
    \begin{equation}
        \bra \hat{g}\tilde{e}(P^D)|\bar{\cX}^{A}\ket +\bra \delta^A x^{A} |p\cX^{A}\ket\in \F_2
    \end{equation}
    where $\cX^{A}\in X^A$ is the sum over all checks (vertices) in $X^{A}$ and $\bar{\cX}^A\in \bar{X}^A$ is the corresponding syndrome.
    Moreover, if the input state is also stabilized by logical $\ell^{\star}(P^D)$ (e.g., subsequent measurement), then we replace $\tilde{e}(P)\to e(P) =e(Q_X^A)\oplus e(P^D)$ in the output.
\end{theorem}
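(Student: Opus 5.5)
The plan mirrors the proof of Theorem \ref{thm:logical-measure}, with Lemmas \ref{lem:overcomplete-physicals} and \ref{lem:low-lvl-measurement} replaced by their symplectic counterparts, Lemmas \ref{lem:overcomplete-non-CSS-physicals} and \ref{lem:low-lvl-non-CSS-wCNOTs}. Throughout, $C$ is the height-2 cone of Theorem \ref{thm:symplectic-embedding} built from $A$ and $D$, with stabilizer map $\sigma$ in the block form given in that proof.

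\textbf{Step 1 (merge).} After the initialization lines, the ancilla qubits are in a product state and the check qubits of $C$ are prepared. The $O(1)$-depth CNOT/CZ circuits then implement the $X$- and $Z$-sector projections of $\sigma$. As in the proof of Lemma \ref{lem:low-lvl-non-CSS-wCNOTs}, this conjugates the stabilizer map by a lower-triangular matrix.

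Measuring out the check qubits and applying Lemma \ref{lem:low-lvl-non-CSS-measurement} block by block gives the readout $\hat{\sigma}\tilde{e}(P)+e(\bar{S})$. Afterwards the state is stabilized by $C$ with physical $\tilde{e}(P)$ mod logicals. Here $\tilde{e}(P)=e(P)+\tilde{\ell}(P)$ for some logical $\tilde{\ell}$ of the pre-merge stabilizer group.

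Two features of that group constrain $\tilde{\ell}$. Because the ancilla qubits start in a product state, $\tilde{\ell}$ is arbitrary on the sector the initialization leaves unconstrained. Because the data qubits start stabilized by $D$, the restriction $\tilde{\ell}(P^D)$ lies in $\ker\hat{\sigma}^D$. The latter gives $\hat{\sigma}^D\tilde{e}(P^D)=\hat{\sigma}^D e(P^D)$, as the algorithm requires.

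\textbf{Step 2 (split).} The ancilla qubits are measured transversally in the $Z$-basis. By Lemma \ref{lem:overcomplete-non-CSS-physicals}, the readout equals the $Q_X^A$-component of some logical $\ell\in\ker\hat{\sigma}$, up to the physical error $e(Q_X^A)$. The first block row of $\hat{\sigma}$ forces $\delta^A\ell(Q_X^A)=0$. Since the parsimonious-cone ancilla of Theorem \ref{thm:parsimonious-cone} has trivial first (co)homology, this gives $\ell(Q_X^A)=\delta^A x^A$ for some $x^A\in X^A$. Hence the readout is $\delta^A x^A$, plus $e(Q_X^A)$ when errors are present.

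The stabilizers of $C$ that survive are those commuting with the ancilla $Z$-measurements. Among them is $\sigma\cV=(0,\ell^{\star},p\cV)$ from Remark \ref{rem:recovery-and-threshold}, together with $\im\sigma^D\cap[\ell^{\star}]^{\perp}$ on the data. The residual data error picks up $g x^A$, exactly as in the CSS case.

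\textbf{Step 3 (read off the logical).} The eigenvalue of the merged stabilizer $\sigma\cV$ is $\Lambda(\tilde{e}(P),\sigma\cV)$. By the symplectic-complex identity $\hat{\sigma}=\phi\sigma^{\top}\lambda$ verified in Theorem \ref{thm:symplectic-embedding}, this equals the sum $\bra\hat{\sigma}\tilde{e}(P)|\bar{\cV}\ket$ of check readouts over all ancilla $X$-checks.

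Expanding with the block form of $\sigma$ splits it into two pieces. The first is a data contribution $\Lambda^D(\tilde{e}(P^D),g\cV)=\bra\hat{g}\tilde{e}(P^D)|\bar{\cX}^A\ket$. The second is an ancilla contribution from the $Z$-dressing $p\cV$ on $Q_Z^A$. The dressing $Z(p\cV)$ has eigenvalue $\bra\delta^A x^A|p\cX^A\ket$, read off from the Step 2 readout. Multiplying it out of $\sigma\cV$ leaves the value of $\ell^{\star}$, which is the claimed formula.

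For the subsequent-measurement clause: if the input is already an eigenstate of $\ell^{\star}$, the logical ambiguity $\tilde{\ell}(P^D)$ has trivial pairing with $g\cV$ and can be absorbed. This allows the replacement $\tilde{e}\to e$.

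\textbf{Main obstacle.} I expect the difficulty to be the Pauli bookkeeping in Step 3. Three things must be checked:
\begin{itemize}
\item the phase of the product $\ell^{\star}\otimes Z(p\cV)$ factors cleanly, with no extra $\pm i$ from $Y$-overlaps;
\item the ancilla $X$-type error $e(Q_X^A)$ enters only through the readout, not through $\tilde{e}(Q_Z^A)$, which is uncontrolled;
\item the compatibility $\hat{g}g=\partial^A p+p^{\top}\delta^A$ of Lemma \ref{lem:compatible-surgery} is exactly what makes the two contributions consistent.
\end{itemize}
I would first verify all three on the $\bar{Y}$ surface-code example of Fig. \ref{fig:surface-surgery}, then argue in general using Lemma \ref{lem:equivalence-non-CSS}.
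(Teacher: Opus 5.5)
Your Steps 1 and 2 follow the paper's proof essentially line by line:
\begin{itemize}
\item the lower-triangular propagation of errors;
\item the form $(0,\ker\hat{\sigma}^D,Q_Z^A)$ of $\tilde{\ell}$;
\item the ancilla readout $\delta^A x^A+e(Q_X^A)$.
\end{itemize}
Your use of $H^1(A)=0$ for the readout is exactly the content of the paper's decomposition of $\ker\hat{\sigma}$ as $\ell(P)+\sigma x^A$, with $\ell(P)$ vanishing on $Q_X^A$. The paper also checks, via $\ker\partial^A=\im\partial^A$, that the merge creates no stabilizers beyond $\im\sigma$. You skip this, harmlessly, since any extra stabilizers would be pure $Z$ on the ancilla.

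The genuine difference is Step 3. The paper's proof never evaluates $\sigma\cV$. Instead it applies Lemma \ref{lem:overcomplete-non-CSS-physicals} with $\alpha=\ell(P)+\sigma x^A$, obtaining the residual physical $\tilde{e}(P^D)+gx^A$ relative to $\im\sigma^D+g\ker\delta^A$. The outcome is then $\Lambda^D(\tilde{e}(P^D)+gx^A,g\cX^A)$. Turning this into the stated formula, which the proof leaves implicit, uses $\hat{g}g=\partial^A p+p^{\top}\delta^A$ and $\delta^A\cX^A=0$ to get $\Lambda^D(gx^A,g\cX^A)=\bra\delta^A x^A|p\cX^A\ket$.

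You instead make Remark \ref{rem:recovery-and-threshold} precise: you read the value off the merged stabilizer and its $Z$-dressing, with compatibility entering only through $\sigma\cV\in\im\sigma$. This gives the algorithm's error-robust formula directly. It does not, however, produce the post-measurement physical $\tilde{e}(P^D)+gx^A$. You only assert this ``as in the CSS case'', yet the recovery remark in Appendix \ref{sec:non-CSS-measure} relies on it. It does follow from the output clause of the lemma you already invoke in Step 2.

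Step 3 also needs a clean-up, because you conflate two different ancilla quantities.
\begin{itemize}
\item The ancilla piece of $\Lambda(\tilde{e}(P),\sigma\cV)$ comes from the corner identity blocks of $\lambda$. It equals $\bra e(Q_X^A)|p\cX^A\ket$, which is zero without errors.
\item The eigenvalue of $Z(p\cV)$, read from the transversal readout, is $\bra\delta^A x^A+e(Q_X^A)|p\cX^A\ket$.
\end{itemize}
The value of $\ell^{\star}$ is $\Lambda(\tilde{e}(P),\sigma\cV)$ plus the second quantity, and $e(Q_X^A)$ cancels between them. The uncontrolled $\tilde{e}(Q_Z^A)$ never enters, because it pairs with $\delta^A\cV=0$. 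This settles your second obstacle.

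Your first obstacle is void. Since $\sigma\cV=(0,g\cV,p\cV)$ has no $Q_X^A$ component, it is a data Pauli times a pure-$Z$ ancilla Pauli on disjoint qubits, so no stray phase arises.

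Finally, $\im\sigma^D\cap[\ell^{\star}]^{\perp}$ mixes $P^D$ with $H_1(D)$ and should simply read $\im\sigma^D$.
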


\begin{proof}
    For simplicity, we shall only consider the case where the input state is only stabilized by the data code (with some errors).
    The proof for when the input state is also stabilized by the intended logical follows similarly, and thus omitted.
    After initialization in line 1, the state is stabilized by the complex with the fully drawn out diagram
    \begin{equation}
    \label{eq:non-CSS-logical-code-line1}
    \begin{tikzpicture}[baseline]
    \matrix(a)[matrix of math nodes, nodes in empty cells, nodes={minimum size=20pt},
    row sep=0.2em, column sep=1.5em,
    text height=1.25ex, text depth=0.25ex]
    {&\red{S} & \red{Q(S)}\\ \\
    &\darkgreen{Q_X^A} & \darkgreen{\bar{Z}(Q^A)}\\
     \darkgreen{S'^D} &\darkgreen{P^D} & \darkgreen{\bar{S}'^D}\\
     \darkgreen{Z(Q^A)} & \darkgreen{Q_Z^A} &\\ \\
     \red{Q(\bar{S})} & \red{\bar{S}}&\\};
    \path[->,red,font=\scriptsize]
    (a-1-2) edge node[above]{$\rm{id}$}  (a-1-3)
    (a-7-1) edge node[above]{$\rm{id}$}  (a-7-2);
    \path[->,green!40!black,font=\scriptsize]
    (a-3-2) edge node[above]{$\rm{id}$}  (a-3-3)
    (a-4-1) edge node[above]{$\sigma^D$}  (a-4-2)
    (a-4-2) edge node[above]{$\hat{\sigma}^D$}  (a-4-3)
    (a-5-1) edge node[above]{$\rm{id}$}  (a-5-2);
    \end{tikzpicture}
    \end{equation}
    where $S'^D,\bar{S}'^D$ denote copies of $S^D,\bar{S}^D$ which were stabilized due to the input, and $Q(S)\oplus Q(\bar{S})$ denote the ($X,Z$-sectors) of check qubits where $S=X^A \oplus S^D \oplus Z^A$.
    The corresponding physicals are given by (mod logicals)
    \begin{equation}
        \begin{pmatrix}
            0 \\
            e(P)\\
            e(\bar{S})
        \end{pmatrix}, \quad
        e(P) =
        \begin{pmatrix}
            e(Q_X^A) \\
            e(P^D) \\
            0
        \end{pmatrix}
    \end{equation}
    Since the check qubits $Q(S)$ will ultimately be measured out, we rewrite the diagram by hiding the irrelevant info
    \begin{equation}
        \begin{tikzpicture}[baseline]
        \matrix(a)[matrix of math nodes, nodes in empty cells, nodes={minimum size=25pt},
        row sep=1.5em, column sep=1.5em,
        text height=1.25ex, text depth=0.25ex]
        { && \red{X^{A}} & \darkgreen{Q^{A}_X} & \red{\bar{Z}^{A}} \\
         &\red{S^{D}} & \darkgreen{P^{D}} & \red{\bar{S}^{D}} & \\
         \red{Z^A} & \darkgreen{Q_Z^A} & \red{\bar{X}^A}\\};
        \end{tikzpicture}
    \end{equation}

    After performing the CNOT and CZ circuits in line 2, the system is stabilized (with some errors) by
    \begin{equation}
    \begin{tikzpicture}[baseline]
    \matrix(a)[matrix of math nodes, nodes in empty cells, nodes={minimum size=25pt},
    row sep=2em, column sep=2em,
    text height=1.25ex, text depth=0.25ex]
    { && \red{X^{A}} & \darkgreen{Q^{A}_X} & \blue{\bar{Z}^{A}} \\
         &\red{S^{D}} & \darkgreen{P^{D}} & \blue{\bar{S}^{D}} & \\
         \red{Z^A} & \darkgreen{Q_Z^A} & \blue{\bar{X}^A}\\};
    \path[->,font=\scriptsize]
    (a-1-3) edge (a-1-4)
    (a-1-4) edge (a-1-5)
    (a-2-2) edge (a-2-3)
    (a-2-3) edge (a-2-4)
    (a-3-1) edge (a-3-2)
    (a-3-2) edge (a-3-3);
    \path[->,font=\scriptsize]
    (a-1-3) edge (a-2-3)
    (a-1-4) edge (a-2-4)
    (a-2-2) edge (a-3-2)
    (a-2-3) edge (a-3-3);
    \path[->,dashed,font=\scriptsize]
    (a-1-3) edge[bend right=80] (a-3-2)
    (a-1-4) edge[bend left=80] (a-3-3);
    \end{tikzpicture}
    \end{equation}
    where the colored nodes indicate that we still have omitted other arrows so that overall sequence is a complex if the check qubits are also shown.
    Equivalently, the full diagram is given by
    \begin{equation}
    \label{eq:non-CSS-logical-code-line3}
    \begin{tikzpicture}[baseline]
    \matrix(a)[matrix of math nodes, nodes in empty cells, nodes={minimum size=20pt},
    row sep=0.2em, column sep=1.5em,
    text height=1.25ex, text depth=0.25ex]
    {&\red{S} & \red{Q(S)}\\ \\
    &\darkgreen{Q_X^A} & \darkgreen{\bar{Z}(Q^A)}\\
     \darkgreen{S'^D} &\darkgreen{P^D} & \darkgreen{\bar{S}'^D}\\
     \darkgreen{Z(Q^A)} & \darkgreen{Q_Z^A} &\\ \\
     \red{Q(\bar{S})} & \red{\bar{S}}&\\};
    \path[->,red,font=\scriptsize]
    (a-1-2) edge node[above]{$\rm{id}$}  (a-1-3)
    (a-7-1) edge node[above]{$\rm{id}$}  (a-7-2);
    \path[->,green!40!black,font=\scriptsize]
    (a-3-2) edge node[above]{$\rm{id}$}  (a-3-3)
    (a-4-1) edge node[above]{$\sigma^D$}  (a-4-2)
    (a-4-2) edge node[above]{$\hat{\sigma}^D$}  (a-4-3)
    (a-5-1) edge node[above]{$\rm{id}$}  (a-5-2);
    \path[->,font=\scriptsize]
    (a-1-2) edge node[right]{$\sigma$} (a-3-2)
    (a-1-3) edge (a-3-3)
    (a-5-2) edge node[right]{$\hat{\sigma}$} (a-7-2)
    (a-5-1) edge (a-7-1);
    \end{tikzpicture}
    \end{equation}
    Note that due to the CNOT and CZ circuits, physicals propagate so that the corresponding physical maps to
    \begin{equation}
        \begin{pmatrix}
            \rm{id} && \\
            \sigma & \rm{id} & \\
            &  \hat{\sigma} &\rm{id}
        \end{pmatrix}
        \begin{pmatrix}
            0 \\
            e(P) \\
            e(\bar{S})
        \end{pmatrix}
        =
        \begin{pmatrix}
            0 \\
            e(P) \\
            \hat{\sigma} e(P) +e(\bar{S})
        \end{pmatrix}
    \end{equation}

    By Lemma \ref{lem:low-lvl-non-CSS-measurement}, after line 4, we obtain readout
    \begin{equation}
        \hat{\sigma} e(P) +e(\bar{S}) +\hat{\sigma} \tilde{\ell}(P) \in \bar{S}
    \end{equation}
    where $\tilde{\ell}(P)$ is some logical of only the green part in Diagram \eqref{eq:non-CSS-logical-code-line3}. Hence, $\tilde{\ell}(P)$ must be of the form
    \begin{equation}
        \begin{pmatrix}
            0 \\
            \ker \hat{\sigma}^D\\
            Q_Z^A
        \end{pmatrix}
    \end{equation}
    Indeed, the ancilla qubits are initialized in the $Z$-basis and thus $\tilde{\ell}(P)$ cannot act as $X$-Paulis on the ancilla.
    Also note that the output state after line 4 is stabilized by (with errors $e(P)+\tilde{\ell}(P)$)
    \begin{equation}
        \begin{pmatrix}
            0 \\
            \im \sigma^D\\
            Q_Z^A
        \end{pmatrix} \cap \ker \hat{\sigma} + \im \sigma
    \end{equation}
    Note that
    \begin{align}
        \hat{\sigma}
        \begin{pmatrix}
            0 \\
            \sigma^D s^D\\
            q_Z^A
        \end{pmatrix} &=0 \\
        \partial^A (\hat{h} s^D +q_Z^A) &=0
    \end{align}
    and thus by the fact that the ancilla has no nontrivial logicals, i.e., $\ker \partial^A =\im \partial^A$, there exists $z^A \in Z^A$ such that $q_Z^A = \hat{h}s^D +\partial^A z^A$ so that
    \begin{equation}
        \begin{pmatrix}
            0 \\
            \im \sigma^D\\
            Q_Z^A
        \end{pmatrix} \cap \ker \hat{\sigma} =
        \sigma
        \begin{pmatrix}
            0 \\
            S^D \\
            Z^A
        \end{pmatrix}
    \end{equation}
    Hence, the output state after line 4 is stabilized by the deformed code $\im\sigma$.
    Note that any Pauli of the form
    \begin{equation}
        \begin{pmatrix}
            0 \\
            \ker \hat{\sigma}^D \\
            Q_Z^A
        \end{pmatrix}
    \end{equation}
    is a logical of the code after line 4, i.e., commutes with $\sigma(S^D\oplus Z^A)$, and thus we can say that the output state after line 4 has physical $e(P)$ (mod logicals) instead of $e(P)+\tilde{\ell}(P)$ (mod logicals).

    Finally, by Lemma \ref{lem:overcomplete-non-CSS-physicals}, line 5 gives us readout in the $X$-sector $s(Q^A_X)\in Q^A_X$ such that
    \begin{equation}
        \begin{pmatrix}
            s(Q^A_X) \\
            0 \\
            0
        \end{pmatrix}
        +\tilde{e}(P) \in \ker \hat{\sigma} +
        \begin{pmatrix}
            0 \\
            P^D \\
            Q_Z^A
        \end{pmatrix}
    \end{equation}
    Note that any element in $\ker \hat{\sigma}$ (say by the Cleaning Lemma) can be written as
    \begin{equation}
        \underbrace{
        \begin{pmatrix}
            0\\
            \ell(P^D)\\
            \ell(Q_Z^A)
        \end{pmatrix}
        }_{\ell(P)} + \sigma x^A
    \end{equation}
    where $x^A\in X^A$ and $\ell(P)\in \ker \hat{\sigma}$. Therefore, we see that the readout satisfies
    \begin{equation}
        s(Q_X^A) = e(Q_X^A) +\delta^A x^A
    \end{equation}
    Moreover, the output state is stabilized by Paulis of the form
    \begin{equation}
        \im \sigma \cap
        \begin{pmatrix}
            0 \\
            P^D \\
            Q_Z^A
        \end{pmatrix}
        +
        \begin{pmatrix}
            0 \\
            0 \\
            Q_Z^A
        \end{pmatrix}
        =
        \begin{pmatrix}
            0 \\
            \im \sigma^D +g\ker \delta^A \\
            Q_Z^A
        \end{pmatrix}
    \end{equation}
    with physical measurement given by (mod logicals)
    \begin{equation}
        \tilde{e}(P)+\ell(P)+ \sigma x^A
    \end{equation}
    Note that $\ell(P)\in \ker \hat{\sigma}$ is also a logical of the final output code and thus can be omitted.
    Moreover, any element in $Q_Z^A$ is a logical of the final output code and thus can be omitted.
    Hence, the final state has physical measures (mod logicals)
    \begin{equation}
        \begin{pmatrix}
            e(Q_X^A) \\
            \tilde{e}(P^D)+gx^A \\
            0
        \end{pmatrix}
    \end{equation}
    Note that we discard the ancilla qubits and thus we are left with physical measuremenet $\tilde{e}(P^D) +gx^A$ relative to stabilizers in $\im \sigma^D +g\ker \delta^A =\im \sigma^D +\ell^{\star}(P^D)$.

\end{proof}

\begin{algorithm}
\caption{\textsc{Non-CSS Logical Measurement}}
\label{alg:non-CSS-logical-measure}
\begin{algorithmic}[1]
    \Require An input state stabilized by data code $D$ with unknown physical errors $e(P^D)$ mod logicals. Data code gadget $D$. Ancilla code gadget $A$ with respect to logical $\ell^{\star}(P^D)$ and relative connectivity gadget $g$.
    \Ensure Readout
    \begin{align}
        \hat{\sigma} \tilde{e}(P) +e(\bar{S}) \in \bar{S} \\
        \delta^A x^{A} +e(Q^{A}_X)\in Q^{A}_X
    \end{align}
    for some
    \begin{align}
        \tilde{e}(P)=e(Q^{A}_X)\oplus \tilde{e}(P^D)\oplus \tilde{e}(Q^A_Z) \\
        e(\bar{S})=e(\bar{Z}^{A})\oplus e(\bar{S}^D) \oplus e(\bar{X}^A)
    \end{align}
    and
    \begin{equation}
        \hat{\sigma}^D \tilde{e}(P^D) =\hat{\sigma}^D e(P^D)
    \end{equation}
    and output state stabilized by data code $D$ and logical $\ell^\star(P^D)$ with physical $\tilde{e}(P^D)+gx^{A}$ mod logicals.
    In particular, $\ell^\star(P^D)$ has measurement
    \begin{align}
        &\Lambda^D( \tilde{e}(P^D)+gx^{A},\ell^\star(P^D)) \\
        &= \bra \hat{\sigma}\tilde{e}(P)|\bar{\cX}^{A}\ket +\bra \delta^A x^{A} +e(Q_X^A) |p\cX^{A}\ket\in \F_2
    \end{align}
    where $\cX^{A}\in X^A$ is the sum over all checks (vertices) in $X^{A}$, and $\bar{\cX}^A\in \bar{X}^A$ is the corresponding syndrome.
    \State Initialize $P^{A}=Q^{A}_X\oplus Q^A_Z$ in the $Z$ basis with unknown error (in the $X$-sector) $e(Q^{A}_X)$
    \State Initialize qubits $Q(S)=Q(X^{A})\oplus Q(S^D) \oplus Q(Z^A)$ in the $X$ basis with unknown error $e(\bar{S})$ in the $Z$-sector
    \State Apply $O(1)$ depth CNOT and CZ circuit for $\sigma:S\to P$
    \State Measure out $Q(S)$ transversally in the $X$-basis with readout
    \begin{equation}
        \hat{\sigma} \tilde{e}(P) +e(\bar{S}) \in \bar{S}
    \end{equation}
    for some $\tilde{e}(P)\in P$ such that
    \begin{align}
        \tilde{e}(Q_X^A)&= e(Q_X^A) \\
        \tilde{e}(P^D) + e(P^D)&\in \ker \hat{\sigma}^D
    \end{align}
    so that the output state has physical error $\tilde{e}(P)$ mod logicals.
    \State Measure out $P^A$ transversally in the $Z$-basis with readout
    \begin{equation}
        \delta^A x^{A} +e(Q_X^{A})\in Q_X^{A}
    \end{equation}
    for some $x^{A}\in X^{A}$,
    so that the output state is stabilized by Paulis in $\im \sigma^D +g\ker \delta^A$ with physical measurement $\tilde{e}(P^D)+gx^{A}$ mod logicals (not only commute with Paulis of $D$ but also $\ell^{\star}(P^D)=g\ker\delta^A$)
\end{algorithmic}
\end{algorithm}

\section{Fast Non-CSS Surgery}
\label{sec:fast-surgery}

In this section, we generalize the framework for fast surgery \cite{baspin2025fast} to the non-CSS scenario using Theorem \ref{thm:symplectic-embedding} in the main text.
Specifically, following the notation of Theorem \ref{thm:symplectic-embedding}, consider a non-CSS code in the form of a symplectic complex
$D$ with the usual objects.
Suppose that there exists an ancilla CSS code in the form of a (co)complex ($A^\top$) $A$ with only trivial logicals, and maps $(g,h,p)$ such that the symplectic height-2 cone $C$ in Theorem \ref{thm:symplectic-embedding} is well-defined.

In contrast to conventional surgery in Section \ref{sec:surgery} of the main text, we must further assume that the ancilla code $A$ has meta-syndromes as follows
\begin{equation}
    A = Z^A \to Q_Z^A \to \bar{X}^A \to \bar{M}^A
\end{equation}
Which we grade as
\begin{equation}
    A = A_2 \to A_1 \to A_0 \to A_{-1}
\end{equation}
where we extend the (co)complex and treat the meta-syndrome in $\bar{M}^A$ as $(-1)$-(co)chains so that $d_0(A)$ is the meta-syndrome distance.
Also assume that the extended (co)complex possesses a map $f$ such that $(f,g,h)$ forms a chain map as follows.
\begin{equation}
    \begin{tikzcd}
    	& {M^A} & {X^A} & {Q_X^A} & {\bar{Z}^A} \\
    	& {S^D} & {P^D} & {\bar{S}^D} \\
    	{Z^A} & {Q_Z^A} & {\bar{X}^A} & {\bar{M}^A}
    	\arrow[from=1-2, to=1-3]
    	\arrow["f"', from=1-2, to=2-2]
    	\arrow["p"'{pos=0.4}, color={rgb,255:red,92;green,214;blue,92}, curve={height=6pt}, dashed, from=1-2, to=3-1]
    	\arrow[from=1-3, to=1-4]
    	\arrow["g"', from=1-3, to=2-3]
    	\arrow["p"'{pos=0.4}, color={rgb,255:red,92;green,214;blue,92}, curve={height=6pt}, dashed, from=1-3, to=3-2]
    	\arrow[from=1-4, to=1-5]
    	\arrow["h"', from=1-4, to=2-4]
    	\arrow["{p^\top}"{pos=0.7}, color={rgb,255:red,92;green,214;blue,92}, curve={height=6pt}, dashed, from=1-4, to=3-3]
    	\arrow["{p^\top}"{pos=0.7}, color={rgb,255:red,92;green,214;blue,92}, curve={height=6pt}, dashed, from=1-5, to=3-4]
    	\arrow[from=2-2, to=2-3]
    	\arrow["{\hat{f}}"', from=2-2, to=3-2]
    	\arrow[from=2-3, to=2-4]
    	\arrow["{\hat{g}}"', from=2-3, to=3-3]
    	\arrow["{\hat{h}}"', from=2-4, to=3-4]
    	\arrow[from=3-1, to=3-2]
    	\arrow[from=3-2, to=3-3]
    	\arrow[from=3-3, to=3-4]
    \end{tikzcd}
\end{equation}
where $M^A$ denotes the space dual to $\bar{M}^A$, so that the top row is the extended cocomplex $A^{\top}$.

Much like in \cite{baspin2025fast}, we will also make use of systolic expansion on the ancilla complex $A$.

\begin{definition}[Relative Expansion]
    \label{def:relative-expansion}
    The ancilla $A$ is said to be $(\epsilon,l)$-\textbf{expanding} relative to the code $D$ if
    \begin{equation}
        \forall f \in X^A: \quad |\delta^A f| \ \geq \ \min\left\{ \epsilon  \cdot \min_{z \in \ker \delta^A} |g (f+z)|, \ l \right\}
    \end{equation}
    where $|\cdot|$ is the Hamming weight of Definition \ref{def:complex-basis}.
\end{definition}

\begin{theorem}[Fast Surgery]
    \label{thm:fast-surgery}
    Take $D$ a code, and $A$ an ancilla that is $(\epsilon, d_1(D))$-expanding relative to $D$ (Definition \ref{def:relative-expansion}). As long as
    \begin{align}
        |p^{\top} e(Q_X^A)| +|\hat{g} e(P^D)| + |e(\bar{X}^A)| < d_0(A)/2 \\
        |e(Q_X^A)| < \min \{ d_1(A)/2,  \epsilon d_1(D)/4, d_1(D)/2 \}
    \end{align}
    then Algorithm \ref{alg:fast-surgery} measures $g \ker \delta^A \subset \ker \hat{\sigma}^D$, such that:
    \begin{enumerate}
        \item there exists a correction $\hat{e}(\bar{X}^A)$ such that
    \begin{equation}
        \bra  \hat{\sigma} \tilde{e}(P) +e(\bar{S}) + \hat{e}(\bar{X}^A) |\ell^\star(X^A)\ket = \bra \hat{g}\ell(P^D)| \ell^\star(X^A)\ket
    \end{equation}
        \item there exists $\hat{e}(P^D)$ and $u^A\in X^A$ with $|g(u^A)|<d_1(D)/2$ such that
        \begin{align}
            \mathfrak{e}_{\mathrm{phys}} + \hat{e}(P^D) \in  \ell(P^D) + g(u^A) + g \ker \delta^A
        \end{align}
        for $\mathfrak{e}_{\mathrm{phys}} = e(P^D)+ \ell(P^D) + gx^A$ the physical error of the system at the output of Algorithm \ref{alg:fast-surgery}.
    \end{enumerate}

\end{theorem}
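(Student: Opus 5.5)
The plan follows the template of the non-CSS logical measurement proof (Theorem \ref{thm:non-CSS-logical-measure} and the Recovery remark after Theorem \ref{thm:non-CSS-logical-threshold}). The new ingredient is the meta-syndrome layer $\bar{M}^A$, which lets a single round replace the usual $O(d)$ repetitions. Algorithm \ref{alg:fast-surgery} runs one round of the height-2 cone $C$ and then reads out $P^A$. First I would record the two readouts exactly as in Theorem \ref{thm:non-CSS-logical-measure}:
\begin{itemize}
\item the syndrome $\hat{\sigma}\tilde{e}(P)+e(\bar{S})$ restricted to $\bar{X}^A$;
\item the transversal readout $\delta^A x^A + e(Q_X^A)$.
\end{itemize}
The output physical is $\mathfrak{e}_{\mathrm{phys}} = e(P^D)+\ell(P^D)+gx^A$. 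The $\bar{X}^A$ component of the syndrome is $p^{\top} e(Q_X^A) + \hat{g}\tilde{e}(P^D) + \partial^A\tilde{e}(Q_Z^A) + e(\bar{X}^A)$. The unknown ancilla $Z$-error therefore enters only through $\im \partial^A$.

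For item 1, I would apply the meta-check map $\bar{X}^A\to\bar{M}^A$. It kills $\im\partial^A$. By the chain-map property of $(f,g,h)$ together with the homotopy extension of $p$ (the dashed arrows $M^A\to Z^A$, $\bar{Z}^A\to\bar{M}^A$), it also sends the noiseless part $\hat{g}\ell(P^D)$ to a meta-syndrome of $\hat{\sigma}^D$-type, which vanishes on logicals. The observed meta-syndrome is then that of the error $\epsilon := p^{\top}e(Q_X^A)+\hat{g}e(P^D)+e(\bar{X}^A)$. By the first hypothesis, $|\epsilon| < d_0(A)/2$, so a minimum-weight meta-syndrome decoder returns $\hat{e}(\bar{X}^A)$ with $\epsilon+\hat{e}(\bar{X}^A)\in \im\partial^A$. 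I would then pair with $\ell^\star(X^A)\in\ker\delta^A$ (the unique class $\cV$ here). This pairing annihilates $\im\partial^A$ by Lemma \ref{lem:cohomology-iso}, which gives item 1. The delicate point is checking that the $p^{\top}$ terms cancel under the meta-check, i.e.\ $\hat{h}\hat{g}$ versus $p^{\top}\delta^A$ at the next level. I would verify this by extending the compatibility $\hat{g}g=\partial^Ap+p^\top\delta^A$ one degree down, using the assumed extended diagram.

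For item 2, I would use the transversal readout. Because $|e(Q_X^A)|<d_1(A)/2$, decoding the readout in $A^\top$ yields $\hat{x}^A$ with $\delta^A(x^A+\hat{x}^A) = e(Q_X^A)+(\text{correction})$ of weight at most $2|e(Q_X^A)|$. Set $u^A := x^A+\hat{x}^A$, reduced modulo $\ker\delta^A$ to the representative minimizing $|g u^A|$.

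Relative expansion (Definition \ref{def:relative-expansion}) with $l=d_1(D)$ gives either $|\delta^A u^A|\ge d_1(D)$ or $|\delta^A u^A|\ge \epsilon\min_z|g(u^A+z)|$. The first alternative is excluded because $2|e(Q_X^A)|<d_1(D)$. The second gives $|gu^A|\le 2|e(Q_X^A)|/\epsilon < d_1(D)/2$, using $|e(Q_X^A)|<\epsilon d_1(D)/4$. Applying $g\hat{x}^A$ to the data, together with a data correction $\hat{e}(P^D)$ inferred from the $\bar{S}^D$ syndrome, leaves $\mathfrak{e}_{\mathrm{phys}}+\hat{e}(P^D)\in \ell(P^D)+gu^A+g\ker\delta^A$, which is item 2.

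I expect the main obstacle to be the meta-syndrome bookkeeping in item 1. Specifically, I need to show that $p^\top$ (now also acting $\bar{Z}^A\to\bar{M}^A$) makes the observed meta-syndrome depend only on $\epsilon$. I would try first to derive the needed identity from the chain-map condition on $(f,g,h)$ and $\hat{f}=f^\top$-type duality via $\phi^D,\lambda^D$, mirroring the proof of Theorem \ref{thm:symplectic-embedding}.
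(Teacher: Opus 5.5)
Your route matches the paper's. You use the same $\bar{X}^A$ readout $p^{\top}e(Q_X^A)+\hat{g}\tilde{e}(P^D)+\partial^A\tilde{e}(Q_Z^A)+e(\bar{X}^A)$, the same meta-syndrome decoding at distance $d_0(A)$ for item~1, and the same transversal decoding plus relative expansion for item~2. Two of your details are more careful than the paper's one-line bound: choosing the representative of $u^A$ that minimizes $|gu^A|$ modulo $\ker\delta^A$, and using $|e(Q_X^A)|<d_1(D)/2$ to exclude the $l=d_1(D)$ branch.

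The obstacle you single out in item~1 does not exist. You already placed $p^{\top}e(Q_X^A)$ inside $\epsilon$, which is exactly why it appears in the first hypothesis. So $\partial^A p^{\top}e(Q_X^A)$ need not cancel, and in general it does not. The observed meta-syndrome equals $\partial^A\epsilon$ using only two facts: $\partial^A\partial^A=0$, and the chain-map square $\partial^A\hat{g}=\hat{h}\hat{\sigma}^D$ applied to $\ell(P^D)\in\ker\hat{\sigma}^D$. No next-level homotopy identity is needed, so do not spend effort trying to prove one.

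The step that is actually missing is in item~2. The paper's proof is built around the mismatch between what the data syndrome records and what the data qubits carry.
\begin{itemize}
\item Up to the measurement error $e(\bar{S}^D)$, the $\bar{S}^D$ block of the readout is $\mathfrak{s}_{\mathrm{rec}}=\hat{\sigma}^D e(P^D)+h\,e(Q_X^A)$.
\item After you apply $g\hat{x}^A$, the residual on the data is $e(P^D)+\ell(P^D)+gu^A$, whose syndrome is $\hat{\sigma}^D e(P^D)+\hat{\sigma}^D g u^A$.
\end{itemize}
These differ. A data correction ``inferred from the $\bar{S}^D$ syndrome'' is therefore inferred from the wrong syndrome; the raw one need not even lie in $\im\hat{\sigma}^D$.

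The fix is to add the known quantity $h\,\hat{e}(Q_X^A)$ (your ``correction'') and use the chain-map square $h\delta^A=\hat{\sigma}^D g$:
\[ \mathfrak{s}_{\mathrm{rec}}+h\,\hat{e}(Q_X^A)=\hat{\sigma}^D e(P^D)+h\,\delta^A u^A=\hat{\sigma}^D\bigl(e(P^D)+gu^A\bigr). \]
This is exactly the syndrome of the cleaned state. Only after this step is the leftover error both small (by your bound on $|gu^A|$) and of known syndrome.

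Since item~2 is phrased existentially, the containment could be met by fiat with $\hat{e}(P^D)=g\hat{x}^A+e(P^D)$. What gives the statement content is this syndrome matching, and your proposal omits it.
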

\begin{proof}
    After applying Algorithm \ref{alg:fast-surgery}, we obtain the following readout from the measurement outcomes:
    \begin{align}
        \hat{\sigma} \tilde{e}(P) +e(\bar{S}) \in \bar{S}, &\quad \delta^A x^{A} +e(Q^{A}_X)\in Q^{A}_X \\
        \tilde{e}(P)=e(Q^{A}_X)\oplus \tilde{e}(P^D)\oplus \tilde{e}(Q^A_Z), &\quad
        e(\bar{S})=e(\bar{Z}^{A})\oplus e(\bar{S}^D) \oplus e(\bar{X}^A)
    \end{align}

    The measured value $m_{\text{raw}}$ of $\ell^\star(X^A) \in \ker \delta^A$ is then
    \begin{align*}
        m_{\text{raw}} = &\bra  \hat{\sigma} \tilde{e}(P) +e(\bar{S}) |\ell^\star(X^A)\ket \\ =&  \bra  p^{\top} e(Q_X^A) +\hat{g} \tilde{e}(P^D) + \partial^A \tilde{e}(Q_Z^A) + e(\bar{X}^A)|\ell^\star(X^A)\ket \\
        =& \bra \hat{g} \ell(P^D) + \partial^A \tilde{e}(Q_Z^A)|\ell^\star(X^A)\ket + \bra  p^{\top} e(Q_X^A) +\hat{g} e(P^D) + e(\bar{X}^A)|\ell^\star(X^A)\ket\\
        =&  \bra \hat{g} \ell(P^D)|\ell^\star(X^A)\ket + \bra  p^{\top} e(Q_X^A) +\hat{g} e(P^D) + e(\bar{X}^A)|\ell^\star(X^A)\ket
    \end{align*}
    The last equality uses that $\ell^\star(X^A)\in \ker(\delta^A)$, and hence $\bra  \partial^A \tilde{e}(Q_Z^A)|\ell^\star(X^A)\ket = 0$; i.e. the quantity $\bra \hat{g} \ell(P^D) + \partial^A \tilde{e}(Q_Z^A)|\ell^\star(X^A)\ket$ depends uniquely on the (co)homological classes of $\ell, \ell^\star$ which is as expected for a logical measurement.
    Further, note that in the absence of errors (or $e(Q_X^A),e(P^D)=0$) the outcome is entirely determined by this quantity, as we have $\bra  p^{\top} e(Q_X^A) +\hat{g} e(P^D) + e(\bar{X}^A)|\ell^\star(X^A)\ket = 0$, which yields the correct measurement $m_{\text{raw}}  = \bra \hat{g} \ell(P^D)|\ell^\star(X^A)\ket$.

    Otherwise, we can compute the meta-check syndrome:
    \begin{align*}
        & \partial^A (\hat{\sigma} \tilde{e}(P) +e(\bar{S}) ) \in \bar{M}^A\\
        =&\partial^A (p^{\top} e(Q_X^A) +\hat{g} \tilde{e}(P^D) + \partial^A \tilde{e}(Q_Z^A) + e(\bar{X}^A)) \\
        =& \partial^A (p^{\top} e(Q_X^A) +\hat{g} e(P^D) + e(\bar{X}^A))
    \end{align*}

    This syndrome is then leveraged into a correction $\hat{e}(\bar{X}^A)$, such that as long as $|p^{\top} e(Q_X^A) +\hat{g} e(P^D) + e(\bar{X}^A)| < d_0(A)/2$, we have $\hat{e}(\bar{X}^A) + p^{\top} e(Q_X^A) +\hat{g} e(P^D) + e(\bar{X}^A) \in \im \partial^A$. This gives us a rectified $m_{\text{corrected}}$:
    \begin{align}
        m_{\text{corrected}} = &\bra  \hat{\sigma} \tilde{e}(P) +e(\bar{S}) + \hat{e}(\bar{X}^A) |\ell^\star(X^A)\ket \\
        = &  \bra \hat{g} \ell(P^D)|\ell^\star(X^A)\ket + \bra  p^{\top} e(Q_X^A) +\hat{g} e(P^D) + e(\bar{X}^A) + \hat{e}(\bar{X}^A)|\ell^\star(X^A)\ket \\
        = & \bra \hat{g} \ell(P^D)|\ell^\star(X^A)\ket
    \end{align}
    Where the last equality is from $p^{\top} e(Q_X^A) +\hat{g} e(P^D) + e(\bar{X}^A) + \hat{e}(\bar{X}^A) \in \im \partial^A$.
    Meanwhile the physical error on the state at the output of Algorithm \ref{alg:fast-surgery} is:
    \begin{equation}
        \mathfrak{e}_{\mathrm{phys}} = \tilde{e}(P^D) + gx^A = e(P^D) + \ell(P^D) + gx^A
    \end{equation}
    for some $\ell(P^D) \in \ker \hat{\sigma}^D$. Our main problem here is we are not assuming that there already exists a decoder for the code complex like in Lemma \ref{lem:recovery-logical-msmt}, so we cannot guarantee something of the form $\mathfrak{e}_{\mathrm{phys}} + \hat{\mathfrak{e}}(P^D) = \sigma^D s^D + g s^A + g \mathfrak{x}^A$. Instead we need to wrestle with the dissonance between the physical system being affected by $gx^A$, while the recorded $\bar{S}^D$ syndrome:
    \begin{equation}
        \mathfrak{s}_{\mathrm{rec}} = \hat{\sigma}^D(e(P^D) + \ell(P^D)) + h(e(Q_X^A)) = \hat{\sigma}^D e(P^D)  + h(e(Q_X^A))
    \end{equation}
    is affected by $e(Q_X^A)$ instead.
    From assumptions, $|e(Q_X^A)| < d_1(A)/2$, so we can find $\hat{e}(Q_X^A)$ and some $u^A \in X^A$ with $\hat{e}(Q_X^A) + e(Q_X^A) = \delta^A u^A$, from the syndrome $\delta^A (\delta^A x^A + e(Q_X^A)) = \delta^A e(Q_X^A)$ that we compute from the destructive readout of $Q_X^A$. We update the recorded syndrome as follows:
    \begin{equation}
        \mathfrak{s}_{\mathrm{clean}} = \mathfrak{s}_{\mathrm{rec}} + h(\hat{e}(Q_X^A)) = \hat{\sigma}^D e(P^D) + h(\delta^A u^A) = \hat{\sigma}^D e(P^D) + \hat{\sigma}^D g (u^A).
    \end{equation}
    In a second step, pick $\hat{s}^A \in X^A$ such that $\delta^A \hat{s}^A = \delta^A (x^A + u^A)$. We also update the physical system by applying $g \hat{s}^A$, which gives
    \begin{equation}
        \mathfrak{e}_{\mathrm{clean}} = \mathfrak{e}_{\mathrm{phys}} + g\hat{s}^A \in e(P^D) + \ell(P^D) + g(u^A) + g \ker \delta^A
    \end{equation}
    This cleaning allows us to resolve the dissonance as we get $\hat{\sigma}^D \mathfrak{e}_{\mathrm{clean}} = \mathfrak{s}_{\mathrm{clean}}$. From here we use the relative expansion of Definition \ref{def:relative-expansion}:
    \begin{equation}
        |g(u^A)| \leq \frac{1}{\epsilon}|\delta^A u^A| \leq \frac{2}{\epsilon} |e(Q_X^A)| < d_1(D)/2
    \end{equation}
    Hence the leftover error is correctable.
\end{proof}

\begin{algorithm}
\caption{\textsc{Fast Non-CSS Surgery}}
\label{alg:fast-surgery}
\begin{algorithmic}[1]
    \Require Input state stabilized by $D$ with unknown physical $e(P^D)$. Ancilla code $A$ and maps $(f,g,h,p)$.
    \Ensure Readout
    \begin{align}
        \label{eq:cone-readout}
        \hat{\sigma} \tilde{e}(P) +e(\bar{S}) \in \bar{S} \\
        \label{eq:transversal-readout}
        \delta^A x^{A} +e(Q^{A}_X)\in Q^{A}_X
    \end{align}
    for some
    \begin{align}
        \label{eq:tilde-e-P}
        \tilde{e}(P)=e(Q^{A}_X)\oplus \tilde{e}(P^D)\oplus \tilde{e}(Q^A_Z) \\
        e(\bar{S})=e(\bar{Z}^{A})\oplus e(\bar{S}^D) \oplus e(\bar{X}^A)
    \end{align}
    where non-tilde elements $e$ denote (presumably small) initialization errors,
    and
    \begin{equation}
        \label{eq:tilde-e-relation}
        \hat{\sigma}^D \tilde{e}(P^D) =\hat{\sigma}^D e(P^D).
    \end{equation}
    The output state is stabilized by $D$ and logicals of $D$ in $g\ker \delta^A$ with physical $\tilde{e}(P^D)+gx^{A}$.
    In particular, if $\ell^{\star}(X^A)\in \ker \delta^A$, then $\ell^\star(P^D)=g\ell^{\star}(X^A)$ has $\F_2$ measurement
    \begin{equation}
        \Lambda^D( \tilde{e}(P^D)+gx^{A},\ell^\star(P^D)) = \bra \hat{\sigma}\tilde{e}(P)|\ell^\star(X^A)\ket+\bra \text{Eq. } \eqref{eq:transversal-readout} |p\ell^\star(X^A)\ket
    \end{equation}
    \State Initialize ancilla qubits $Q^{A}$ in the $Z$-basis with unknown error (in the $X$-sector) $e(Q^{A}_X)$
    \State Initialize check qubits $Q(S)$ in the $X$-basis with unknown error $e(\bar{S})$ in the $Z$-sector
    \State Apply $O(1)$ depth CNOT and CZ circuit for $\sigma:S\to P$ of the height-2 cone $C$
    \State Measure out $Q(S)$ transversally in the $X$-basis with readout
    \begin{equation}
        \hat{\sigma} \tilde{e}(P) +e(\bar{S}) \in \bar{S}
    \end{equation}
    where $\tilde{e}(P),e(\bar{S})$ satisfy Eq. \eqref{eq:tilde-e-P}-\eqref{eq:tilde-e-relation},
    so that the output state has physical $\tilde{e}(P)$.
    \State Measure out ancilla qubits $Q^A$ transversally in the $Z$-basis with readout
    \begin{equation}
        \delta^A x^{A} +e(Q_X^{A})\in Q_X^{A}
    \end{equation}
    for some $x^{A}\in X^{A}$,
    so that the output state is stabilized by Paulis $\in \im \sigma^D +g\ker \delta^A$ with physical $\tilde{e}(P^D)+gx^{A}$.
\end{algorithmic}
\end{algorithm}

\bibliography{main.bbl}

\end{document}